\newcommand{\spi}{\ensuremath{\mathsf{s}\pi}\xspace}
\newcommand{\lamrfail}{\ensuremath{\lambda^{\lightning}_{\oplus}}\xspace}
\newcommand{\lamrfailunres}{\ensuremath{\mathit{u}\lambda^{ \lightning}_{\oplus}}\xspace}
\newcommand{\lamrsharfailunres}{\ensuremath{u\widehat{\lambda}^{\lightning}_{\oplus}}\xspace}
\newcommand{\linvar}[1]{ #1 ^{\ell}}
\newcommand{\banged}[1]{ #1 ^{!}  }
\newcommand{\unvar}[1]{{#1}}
\newcommand{\bagsep}{\star}
\newcommand{\sep}{\ | \ } 
\newcommand{\bag}[1]{\lbag #1 \rbag} 
\newcommand{\oneb}{\mathtt{1}}
\newcommand{\perm}[1]{\ensuremath{\mathsf{PER}(#1)}}
\newcommand{\dom}[1]{\mathtt{dom}(#1)}
\newcommand{\arrt}[2]{\ensuremath{#1 \rightarrow #2}}
\newcommand{\fail}{\mathtt{fail}}
\newcommand{\concat}{\ensuremath{\diamond}}
\newcommand{\relunbag}{\ensuremath{ \varpropto}}
\newcommand{\size}[1]{\mathsf{size}(#1)} 
\newcommand{\headf}[1]{\ensuremath{\mathsf{head}(#1)}}
\newcommand{\headsum}[1]{\ensuremath{\mathsf{head}_{\sum}}(#1)}
\newcommand{\dash}{\text{-}}
\def\subst#1#2{\{ \raisebox{.5ex}{\small$#1$}\! / \mbox{\small$#2$}\}} 
\def\linsub#1#2{\langle \raisebox{.5ex}{\small$#1$}\! / \mbox{\small$#2$}\rangle} 
\def\esubst#1#2{\langle\!\langle \raisebox{.5ex}{\small$#1$}\! / \mbox{\small$#2$}\rangle\!\rangle} 
\newcommand{\headlin}[1]{ { \{\!|} #1 { |\!\} }} 
\newcommand{\linexsub}[1]{{\langle \! |} #1 {| \! \rangle} }
\newcommand{\unexsub}[1]{\llfloor #1 \rrceil } 
\newcommand{\expr}[1]{\ensuremath{\mathbb{#1}}}
\newcommand{\lfv}[1]{\mathsf{fv}(#1)}
\newcommand{\llfv}[1]{\mathsf{lfv}(#1)}
\newcommand{\mfv}[1]{\mathsf{mfv}(#1)}
\newcommand{\mlfv}[1]{\mathsf{mlfv}(#1)}
\newcommand{\unit}{\mathbf{unit}}
\newcommand{\head}[1]{\mathsf{head}(#1)} 
\newcommand{\red}{\longrightarrow}
\newcommand{\tred}{\stackrel{*}{\red}}
\newcommand{\redlab}[1]{\ensuremath{\mathtt{[#1]} }}
\newcommand{\pequiv}{\equiv_\lambda}
\newcommand{\secref}[1]{$\S$\,\ref{#1}\xspace}
\newcommand{\figref}[1]{Fig.\,\ref{#1}\xspace}
\newcommand{\defref}[1]{Def.\,\ref{#1}\xspace}
\newcommand{\appref}[1]{App.\,\ref{#1}\xspace}
\newcommand{\thmref}[1]{Theorem~\ref{#1}\xspace}
\newcommand{\wfdash}{\models}
\newcommand{\outact}[2]{\overline{#1}(#2)}
\newcommand{\outsev}[2]{\overline{#1}?(#2)}
\newcommand{\some}{\mathtt{some}}
\newcommand{\none}{\mathtt{none}}
\newcommand{\case}[2]{{#1}.\mathtt{l}_{#2}}
\newcommand{\choice}[5]{{#1}.\mathtt{case}_{#2 \in #3} \{ \mathtt{l}_{#4} : #5 \}}
\newcommand{\close}{\mathtt{close}}
\newcommand{\sclose}{\ensuremath{[\,]}}
\newcommand{\para}{\mathord{\;\mathbf{|}\;}}
\newcommand{\zero}{{\bf 0}}
\newcommand{\fn}[1]{\mathit{fn}(#1)}
\newcommand{\ampy}{\mathbin{\bindnasrepma}}
\newcommand{\with}{{\binampersand}}
\newcommand{\onef}{\mathbf{1}}
\newcommand{\dual}[1]{\overline{#1}}
\newcommand{\colorone}[1]{\textcolor{RedOrange}{#1}}
\newcommand{\colortwo}[1]{\textcolor{RoyalBlue}{#1}}
\newcommand{\encod}[2]{\llbracket#1\rrbracket_{#2}} 
\newcommand{\recencod}[1]{\{ \!\! \{ #1 \} \!\! \} } 
\newcommand{\recencodf}[1]{\colorone{\llparenthesis}  #1 \colorone{\rrparenthesis^{\bullet}}} 
\newcommand{\recencodopenf}[1]{\colorone{\llparenthesis}  #1 \colorone{\rrparenthesis^{\circ}}}
\newcommand{\piencodf}[1]{\colortwo{\llbracket}  #1 \colortwo{\rrbracket}}
\newcommand{\succp}[2]{\ensuremath{#1 \Downarrow_{#2}}} 
\newcommand{\joehide}[1]{}
\newcommand{\alerthide}[1]{}
\newcommand{\colthree}[1]{\textcolor{blue}{ #1 }}
\newtheorem{notation}{Notation}
\newif\iftypes
\title{Types and Terms Translated:  Unrestricted Resources in Encoding  Functions as Processes}
\titlerunning{Unrestricted Resources in Encoding  Functions as Processes}
\author{Joseph W. N. Paulus}
{University of Groningen, The Netherlands \and \url{https://www.rug.nl/staff/j.w.n.paulus/?lang=en} }
{j.w.n.paulus@rug.nl}
{https://orcid.org/0000-0002-1711-9361?lang=en}{}
\author{Daniele Nantes-Sobrinho}
{University of Bras\'ilia, Bras\'ilia, Brazil \and \url{https://www.mat.unb.br/~dnantes/Welcome.html}}
{daniele.nantes@gmail.com}
{https://orcid.org/0000-0002-1959-8730}{}
\author{Jorge A. P\'{e}rez}
{University of Groningen, The Netherlands \and \url{https://www.jperez.nl}}
{j.a.perez@rug.nl}
{https://orcid.org/0000-0002-1452-6180}{}
\authorrunning{J.\,W. N. Paulus,  D.\,Nantes-Sobrinho, and J.\,A. P\'{e}rez} 
\keywords{Resource $\lambda$-calculus, intersection types, session types, process calculi.} 
\begin{document}

\maketitle

\begin{abstract}
Type-preserving translations are effective rigorous tools in the study of core programming calculi. 
In this paper, we develop a new typed translation that connects sequential and concurrent calculi; it is governed by type systems that control \emph{resource consumption}. 
Our main contribution is the source language, a  new resource $\lambda$-calculus with non-determinism and failures, dubbed  \lamrfailunres. 
In \lamrfailunres, resources are split into linear and unrestricted; failures are explicit and arise from this distinction. 
We define a type system  based on  intersection types
to control resources and fail-prone computation.
The target language is \spi, an existing session-typed $\pi$-calculus that results from a Curry-Howard correspondence between linear logic and session types. 
Our typed translation subsumes our prior work; interestingly, it treats  unrestricted resources in \lamrfailunres as client-server session behaviours in \spi. 
\end{abstract}

\section{Introduction}







\subparagraph{Context}

\emph{Type-preserving translations} are effective rigorous tools in the study of core programming calculi. 
They can be seen as an abstract counterpart to the type-preserving compilers that enable key optimisations in the implementation of programming languages.
The goal of this paper is to develop a new typed translation that connects sequential and concurrent calculi, and is governed by  type systems that control \emph{resource consumption}.

A central idea in the {resource} $\lambda$-calculus is to consider that in an application $M\, N$ the argument $N$ is a \emph{resource} of possibly limited availability. 
This generalisation of the $\lambda$-calculus triggers many fascinating questions, such as typability, solvability, expressiveness power, etc.,  which have been studied in different settings (see, e.g.,~\cite{DBLP:conf/concur/Boudol93,DBLP:conf/birthday/BoudolL00,PaganiR10,DBLP:journals/corr/abs-1211-4097}). 
In established resource $\lambda$-calculi, such as those by Boudol~\cite{DBLP:conf/concur/Boudol93} and by Pagani and Ronchi della Rocca~\cite{PaganiR10}, a more general  form of application is considered:  a term can be applied to a bag of resources $B=\bag{N_1}\cdot \ldots \cdot \bag{N_k}$, where  $N_1, \ldots, N_k$ denote terms; then, an application $M\ B$ must take into account that each $N_i$ may be reusable or not.  Thus, non-determinism is natural in resource $\lambda$-calculi, because a  term has now  multiple ways of consuming resources from the bag. This bears a strong resemblance with  process calculi such as the $\pi$-calculus~\cite{DBLP:journals/iandc/MilnerPW92a}, in which concurrent interactions are intrinsically non-deterministic. 

{There are different flavors of non-determinism. Over two decades ago, Boudol and Laneve~\cite{BoudolL96,DBLP:conf/birthday/BoudolL00} explored connections between a resource $\lambda$-calculus and the $\pi$-calculus. In their setting, an application $M\ B$ would branch, i.e., $M$ could consume a resource $N_j$ in $B$ (with $j \in \{1, \ldots k\}$) and discard the other $k-1$ resources in a non-confluent manner; this is what we call a {\em collapsing} approach to non-determinism.
On a different direction, Pagani and Ronchi della Rocca~\cite{PaganiR10} proposed $\lambda^r$, a resource $\lambda$-calculus that implements  \emph{non-collapsing}  non-determinism, whereby all the possible alternatives for resource consumption are retained together in a sum, ensuring confluence. They investigated typability and characterisations of solvability in $\lambda^r$, but no connection with the $\pi$-calculus was established. In an attempt to address this gap, our previous work~\cite{DBLP:conf/fscd/PaulusN021} identified $\lamrfail$, a resource $\lambda$-calculus with non-collapsing non-determinism, explicit failure, and \emph{linear} resources  (to be used exactly once), and developed a correct typed translation into a session typed $\pi$-calculus~\cite{CairesP17}. } The calculus~$\lamrfail$, however, does not include \emph{unrestricted} resources (to be used zero or many times).

\subparagraph{This Paper}
Here we introduce a new $\lambda$-calculus, dubbed $\lamrfailunres$, its intersection type system, and its translation into session-typed processes.
Our motivation is twofold: to elucidate the status of unrestricted resources in a functional setting with non-collapsing non-determinism, and to characterise unrestricted resources within a translation of functions into processes.
Unlike its predecessors, \lamrfailunres distinguishes between  {linear} and {unrestricted} resources. This distinction determines the semantics of terms and especially the deadlocks (\emph{failures}) that arise due to mismatches in resources. This way,  $\lamrfailunres$ subsumes   $\lamrfail$, which is  purely linear  and cannot express failures related to unrestricted resources.

Distinguishing  linear and unrestricted resources is not a new insight. This idea goes back to Boudol's $\lambda$-calculus with multiplicities~\cite{DBLP:conf/concur/Boudol93}, where arguments can be tagged as unrestricted. 
What is new about $\lamrfailunres$ is that the distinction between linear and unrestricted resources leads to two main differences. 
First,  occurrences of a variable can be linear or unrestricted, depending on the kind of resources they should be substituted with. This way, e.g., a linear occurrence of variable must be substituted with a linear resource. 
In $\lamrfailunres$, a variable can have linear and unrestricted occurrences in the same term.
(Notice that we use the adjective `{linear}' in connection to resources used exactly once, and not to the number of occurrences of a variable in a term.)
Second, failures depend on the nature of the involved resource(s). In $\lamrfailunres$, a linear failure arises from a mismatch between required and available (linear) resources; an unrestricted failure arises when a specific (unrestricted) resource is not available. 

Accordingly, the syntax of $\lamrfailunres$ incorporates linear and unrestricted resources, enabling their consistent separation, within non-collapsing non-determinism. 
The calculus allows for linear and unrestricted occurrences of variables, as just discussed;  
bags comprise two separate zones, linear and unrestricted; 
and the \emph{failure term} $\fail^{x_1, \cdots, x_n}$ explicitly mentions the linear variables $x_1, \ldots, x_n$. 
The (lazy) reduction semantics of $\lamrfailunres$ includes two different rules for ``fetching'' terms from bags, and for consistently handling the failure term. 
\alerthide{reviewer here wants non-deterministic sums instead, I am unsure what you both think would be best here}

 We equip $\lamrfailunres$ with non-idempotent intersection types, extending the approach in~\cite{DBLP:conf/fscd/PaulusN021}:   in $\lamrfailunres$, intersection types  account for more than resource multiplicity, since the elements of the unrestricted bag can have different types. Using intersection types,  we define a class of \emph{well-formed} \lamrfailunres expressions, which includes terms that correctly consume resources but also terms that may reduce to the failure term. Well-formed expressions thus subsume the \emph{well-typed} expressions that can be defined in a sub-language of \lamrfailunres without the failure term.  

The calculus \lamrfailunres can express terms whose dynamic behaviour is not captured by prior works. This way, e.g., the  identity function ${\bf I}$ admits two formulations, depending on whether the variable occurrence  is linear or unrestricted. 
One can have $\lambda x. x$, as usual, but also the unrestricted variant $\lambda x. x[i]$, where `$[i]$' is an index annotation (similar to a qualifier or a tag), which indicates that $x$ should be replaced by the $i$-th element of the unrestricted zone of the bag. The behaviour of these functions will depend on the bags that are provided as their arguments.
Similarly,  we can express variants of $\Delta=\lambda x.xx$ and $\Omega=\Delta\,\Delta$ whose  behaviours again depend on linear or unrestricted occurrences of variables and bags. 
{Consider the term $\Delta_7=\lambda x. (x[1](\oneb \bagsep \bag{x[1]}^!\concat \bag{x[2]}^!))$, 
where we use `$\bagsep$' to separate linear and unrestricted resources in the bag, and `$\concat$' denotes concatenation of unrestricted resources. 
Term $\Delta_7$ is an abstraction on $x$ of an application of an unrestricted occurrence of $x$, which aims to consume the first component of an unrestricted bag, to a bag with an empty linear zone (denoted $\oneb$) and an unrestricted zone with resources $\bag{x[1]}^!$ and $\bag{x[2]}^!$. The self-application $\Delta_7\Delta_7$ produces a non-terminating behaviour and yet $ \Delta_7 $ itself is well-formed}  (see Example~\ref{ex:delta7_wf1}).

  Both \lamrfailunres and \lamrfail are  \emph{logically motivated} resource $\lambda$-calculi, in the following sense: their design has been strongly influenced by $\spi$, a typed $\pi$-calculus resulting from the Curry-Howard correspondence between linear logic and session types in~\cite{CairesP17}, where proofs correspond to processes and cut elimination to process communication. 
As demonstrated in~\cite{CairesP17}, providing primitive support for explicit failures is key to expressing many useful programming idioms (such as exceptions); this insight is a leading motivation in our design for \lamrfailunres.


To attest to the logical underpinnings of \lamrfailunres, we develop a typed translation (or \emph{encoding}) of $\lamrfailunres$ into $\spi$ and establish its correctness with respect to well-established criteria \cite{DBLP:journals/iandc/Gorla10,DBLP:journals/iandc/KouzapasPY19}. 
As in~\cite{DBLP:conf/fscd/PaulusN021}, we encode \lamrfail into \spi by relying on an intermediate language with \emph{sharing} constructs~\cite{DBLP:conf/lics/GundersenHP13,GhilezanILL11,DBLP:journals/iandc/KesnerL07}. 
A key idea in encoding \lamrfailunres is to codify the behaviour of unrestricted occurrences of a variable and their corresponding resources in the bag as \emph{client-server connections}, leveraging the copying semantics for the exponential ``$!A$'' induced by the Curry-Howard correspondence. 
This typed encoding into \spi  justifies the semantics of \lamrfailunres in terms of precise session protocols (i.e., linear logic propositions, because of the correspondence).

 In summary, the \textbf{main contributions} of this paper are:
 (1)~The resource calculus \lamrfailunres of linear and unrestricted resources, and its associated intersection type system. 
    (2)~A typed encoding of \lamrfailunres into \spi,  which connects well-formed expressions (disciplined by intersection types) and well-typed concurrent processes (disciplined by session types,  under the Curry-Howard correspondence with linear logic), subsuming the results in~\cite{DBLP:conf/fscd/PaulusN021}.


\iftypes
\else
\subparagraph{Additional Material}
The appendices contain omitted material. 
\appref{appA} collects  technical details on \lamrfailunres. 
\appref{appB} details the proof of subject reduction for well-formed \lamrfailunres expressions. \appref{appC}--\appref{app:encodingtwo} collect omitted definitions and proofs for our encoding of \lamrfailunres into \spi. 
\fi 

\section[Unrestricted Resources, Non-Determinism, and Failure]{\lamrfailunres: Unrestricted Resources, Non-Determinism, and Failure}
\label{s:lambda}


\subparagraph{Syntax.} We shall use $x, y, \ldots$ to range over \emph{variables}, and $i,j\ldots$, as positive integers, to range over  {\em indices}.
Variable occurrences will be \emph{annotated} to distinguish the kind of resource  they should be substituted with (linear or unrestricted). 
With a slight abuse of terminology, we may write `linear variable' and `unrestricted variable' to refer to linear and unrestricted {occurrences} of a variable. 
As we will see, a variable's annotation will be inconsequential for binding purposes.
We write  $ {\widetilde{x}}$ to abbreviate  $ {x}_1, \ldots,  {x}_n$, for $n\geq 1$ and each $x_i$ distinct.

%

\begin{definition}[$\lamrsharfailunres$]\label{def:rsyntaxfailunres}
We define \emph{terms} ($M,N$), \emph{bags} ($A,B$), and \emph{expressions} ($\expr{M}, \expr{N}$) as:
\begin{align*}
&\mbox{(Annotations)}&[*]&::= [i] \sep [{\ell}] \qquad i\in \mathbb{N} \\
& \mbox{(Terms)} &M,N &::=   x[*] \sep \lambda x . M \sep (M\ B) \sep  M \esubst{B}{x}  \sep   \fail^{ {\widetilde{x}}}   \\
&\mbox{(Linear Bags)} &C, D &::= \oneb \sep \bag{M}  \cdot\, C 
\\
& \mbox{(Unrestricted Bags)} & U, V &::= \banged{\oneb} \sep \banged{\bag{M}} \sep U \concat V \\
& \mbox{(Bags)} &A, B&::=  C \bagsep U \\
&\mbox{(Expressions)} & \expr{M}, \expr{N} &::=  M \sep \expr{M}+\expr{N}
\end{align*}
To lighten up notation, we shall omit the annotation for linear variables. This way, e.g., we write $(\lambda x. x)B$
rather than $(\lambda x. x[{\ell}])B$.
\end{definition}

\Cref{def:rsyntaxfailunres} introduces three syntactic categories: \emph{terms} (in functional position); \emph{bags} (multisets of resources, in argument position), and \emph{expressions}, which are finite formal sums that denote possible results of a computation. Below we describe each category in details. 
\begin{itemize} 
\item Terms (unary expressions):
\begin{itemize}
\item Variables: We write $x[{\ell}]$ to denote a \emph{linear} occurrence of $x$, i.e,  an occurrence that can only be substituted for linear resources. Similarly, $x[i]$ denotes an \emph{unrestricted} occurrence of $x$, i.e., an occurrence that can only be substituted for a resource located at the $i$-th position of an unrestricted bag. 
\item Abstractions  $\lambda x. M$ of a variable $x$ in a term $M$, which may have contain linear or unrestricted occurrences of $x$. 
This way, e.g.,   $\lambda x.x$ and $\lambda x. x[i]$ are linear and unrestricted versions of the identity function. 
Notice that the scope of $x$ is $M$, as usual, and that $\lambda x. (\cdot)$ binds both linear and unrestricted occurrences of $x$. 
\item Applications  of a term $M$ to a bag $B$ (written $M\ B$) and the explicit substitution of a bag $B$ for a variable $x$ (written $\esubst{B}{x}$) are as expected (cf.~\cite{DBLP:conf/concur/Boudol93,DBLP:conf/birthday/BoudolL00}). 
Notice that in $M\esubst{B}{x}$ the occurrences of $x$ in $M$, linear and unrestricted, are bound.
Some conditions apply to $B$: this will be evident later on, after we define our operational semantics (cf. \figref{fig:reductions_lamrfailunres}).
\item The failure term $\fail^{ {\widetilde{x}}}$ denotes a term that will  result from a reduction in which there is a lack or excess of resources, where  $ {\widetilde{x}}$ denotes a multiset of free linear variables that are encapsulated within failure. 
\end{itemize}

\item A bag $B$ is defined as $C\bagsep U$: the concatenation of a  bag of linear resources $C$ with a bag (actually, a list) of unrestricted resources $U$. We write $\bag{M}$  to denote the linear bag that encloses term $M$, and use $\bag{M}^!$ in the unrestricted case.
\begin{itemize}
\item Linear bags ($C, D, \ldots$) are multisets of terms. The empty linear bag is denoted~$\oneb$. We write $C_1 \cdot C_2$ to denote the concatenation of   $C_1$ and $C_2$; this is a commutative and associative operation, where $\oneb$ is the identity. 
\item Unrestricted bags ($U, V, \ldots$) are ordered lists of terms. 
The empty  unrestricted bag is denoted as $\banged{\oneb}$.
 The concatenation of $U_1$ and $U_2$ is denoted by $U_1\concat U_2$; this operation is  associative but not commutative. 
 Given $i \geq 1$, we write $U_i$ to denote the $i$-th element of the unrestricted (ordered) bag $U$.
 \end{itemize}
 \item  Expressions are sums of terms, denoted as $\sum_{i}^{n} N_i$, where $n > 0$. Sums are associative and commutative; reordering of the terms in a sum is performed silently. 
\end{itemize}
\begin{example}\label{ex:var_lin_unr}
Consider the term  
 $M:=\lambda x.( x[1] \bag{x}\bagsep \bag{y[1]}^!)$,
 which has linear and unrestricted occurrences of the same variable.
  This  is an abstraction of an application that contains two bound occurrences of $x$ (one unrestricted with index $1$, and one linear) and one free unrestricted occurrence of $y[1]$, occurring in an unrestricted bag. 
 As we will see, in $M\ (C\bagsep U)$, the unrestricted occurrence `$x[1]$' should be replaced by the first element of $U$. 
 \end{example}


 The salient features of \lamrfailunres ---the explicit construct for failure, the index annotations on unrestricted variables,  the ordering of unrestricted bags---are \emph{design choices} that  will be responsible for interesting behaviours, as   the following examples  illustrate.

\begin{example}\label{ex:id_term}
As already mentioned, 
    $\lamrfailunres$ admits different  variants of the usual $\lambda$-term ${\bf I}=\lambda x. x$.  We could have one in which $x$ is a linear  variable (i.e., $\lambda x. x$), but also several possibilities if $x$ is unrestricted (i.e., $\lambda x. x[i]$, for some positive integer $i$).
    Interestingly, because \lamrfailunres supports {failures}, {non-determinism}, and the {consumption} of arbitrary terms of the unrestricted bag, these two variants of ${\bf I}$ can have  behaviours that may differ from the usual interpretation of ${\bf I}$. In Example~\ref{ex:id_sem} we will show that the six terms below give different behaviours:
    
\begin{minipage}{5cm}
  \begin{itemize}
        \item $M_1= (\lambda x. x )(\bag{N}\bagsep U)$
        \item $M_2= (\lambda x. x )(\bag{N_1}\cdot \bag{N_2}\bagsep U)$
        \item $M_3=(\lambda x. x[1] )(\bag{N}\bagsep \oneb^!)$
    \end{itemize}
\end{minipage} 
\hspace*{0.5cm}
\begin{minipage}{5cm}
  \begin{itemize}
        \item $M_4=(\lambda x. x[1] )( \oneb \bagsep \bag{N}^! \concat U)$
        \item $M_5=(\lambda x. x[1] )( \oneb \bagsep \oneb^! \concat U)$
        \item $M_6=(\lambda x. x[i] )( C\concat U)$
    \end{itemize}
\end{minipage}

\noindent We will see that $M_1$, $M_4$, $M_6$ reduce without failures, whereas $M_2$, $M_3$, $M_5$ reduce to failure. 
\end{example}


\begin{example}\label{ex:delta}
Similarly, \lamrfailunres allows for several forms of the standard $\lambda$-terms such as $\Delta:=\lambda x. xx$ and $\Omega:=\Delta \Delta$, depending on whether the variable $x$ is linear or unrestricted:
\begin{enumerate}
 \item  $\Delta_1:= \lambda x. (x(\bag{x}\bagsep \oneb^!))$ consists of an abstraction of a linear occurrence of $x$ applied to a linear bag containing another linear occurrence of $x$. There are two forms of self-applications of $\Delta_1$, namely: $ \Delta_1(\bag{\Delta_1}\bagsep \oneb^!)$ and $ \Delta_1(1\bagsep \bag{\Delta_1}^!)$.
\item $\Delta_4:= \lambda x. (x[1](\bag{x}\bagsep \oneb^!))$ consists of an unrestricted occurrence of $x$ applied to a linear bag (containing a linear occurrence of $x$) that is composed with an empty unrestricted bag. Similarly, there are two self-applications of $\Delta_4$, namely: $ \Delta_4(\bag{\Delta_4}\bagsep \oneb^!)$ and $ \Delta_4(\oneb\bagsep \bag{\Delta_4}^!)$.

\item We show applications of  an unrestricted variable occurrence ($x[2]$ or $x[1]$) applied to an empty linear bag composed with a non-empty  unrestricted bag (of size two):

\begin{minipage}{6cm}
    \begin{itemize}
        \item $\Delta_3= \lambda x.(x[1](\oneb\bagsep \bag{x[1]}^!\concat \bag{x[1]}^!))$
        \item $\Delta_5:= \lambda x. (x[2](\oneb\bagsep \bag{x[1]}^!\concat \bag{x[2]}^!))$ 
    \end{itemize}
\end{minipage} 
\hspace*{0.5cm}
\begin{minipage}{6cm}
    \begin{itemize}
        \item $\Delta_6:= \lambda x.( x[1](\oneb\bagsep \bag{x[1]}^!\concat \bag{x[2]}^!))$ 
        \item  $\Delta_7:= \lambda x.( x[2](\oneb\bagsep \bag{x[1]}^!\concat \bag{x[1]}^!))$
    \end{itemize}
\end{minipage}

Applications between these terms express behaviour, similar to a lazy evaluation of  $\Omega$: 

\begin{minipage}{6cm}
    \begin{itemize}
        \item $\Omega_5:=\Delta_5(\oneb\bagsep \bag{\Delta_5}^!\concat \bag{\Delta_5}^!)$
        \item $\Omega_{5,6}:=\Delta_5(\oneb\bagsep \bag{\Delta_5}^!\concat \bag{\Delta_6}^!)$
    \end{itemize}
\end{minipage} 
\hspace*{0.5cm}
\begin{minipage}{6cm}
    \begin{itemize}
        \item $\Omega_{6,5}:=\Delta_6(\oneb\bagsep \bag{\Delta_5}^!\concat \bag{\Delta_6}^!)$
        \item $\Omega_7:=\Delta_7(\oneb\bagsep \bag{\Delta_7}^!\concat \bag{\Delta_7}^!)$
    \end{itemize}
\end{minipage}

\end{enumerate}
The behaviour of these terms will be made explicit later on (see Examples~\ref{ex:deltasem} and \ref{ex:deltasemii}).
\end{example}



\subparagraph{Semantics.}
The semantics of \lamrfailunres captures that linear resources can be used only once, and that unrestricted resources can be used {\em ad libitum}. Thus, the evaluation of a function applied to a multiset of linear resources produces 
 different possible behaviours, depending on the way these resources are substituted for the linear variables. This induces non-determinism, which we formalise using a \emph{non-collapsing} approach, in which  expressions keep all the different possibilities open, and do not commit to one of them. This is in contrast to \emph{collapsing} non-determinism, in which selecting one alternative discards the rest.
 
We define a reduction relation $\red$,   which operates lazily on expressions. Informally, a $\beta$-reduction induces an explicit substitution of a bag $B=C\bagsep U$ for a variable $x$, denoted $\esubst{B}{x}$, in a term $M$. 
This explicit substitution is then expanded 
depending on whether the head of $M$
 has a linear or an unrestricted variable. 
 Accordingly, in \lamrfailunres there are \emph{two sources of failure}: one concerns mismatches on linear resources (required vs available resources); the other concerns the unavailability of a required unrestricted resource (an empty bag $\banged{\oneb}$).
 
To formalise reduction, we require a few auxiliary notions.

\begin{definition}\label{d:fvars}
The multiset of free linear variables of  $\mathbb{M}$, denoted $\mlfv{\mathbb{M}}$,  is defined below.
We denote by $[ {x}]$ the multiset containing the linear variable $x$ and $[x_1,\ldots, x_n]$ denotes the multiset containing $x_1,\ldots, x_n$. We write $\widetilde{x}\uplus \widetilde{y}$ to denote the multiset union of $\widetilde{x}$, and $\widetilde{y}$ and $\widetilde{x} \setminus y$ to express that every occurrence of $y$ is removed from $\widetilde{x}$.
{\normalsize
\begin{align*}
\mlfv{ {x}}  &= [  {x} ] &\mlfv{{x}[i]} &= \mlfv{ {\oneb}}  = \emptyset\\
  \mlfv{C \bagsep U}  &= \mlfv{C} &  \mlfv{M\ B}  &=  \mlfv{M} \uplus \mlfv{B}   \\
     \mlfv{ {\bag{M}}}  &= \mlfv{M} & \mlfv{\lambda x . M} & = \mlfv{M}\!\setminus\! \{  {x} \}\\ 
    \mlfv{M \esubst{B}{x}}  &= (\mlfv{M}\setminus \{  {x} \}) \uplus \mlfv{B} & \mlfv{ {\bag{M}}  \cdot C} &= \mlfv{ {M}} \uplus \mlfv{ C} \\
    \mlfv{\expr{M}+\expr{N}}  &= \mlfv{\expr{M}} \uplus \mlfv{\expr{N}} & \mlfv{\fail^{ {x}_1, \cdots ,  {x}_n}}  &= [  {x}_1, \ldots ,  {x}_n ]
\end{align*}
}
\normalsize
A term $M$ (resp. expression $\expr{M}$) is called  \emph{linearly closed} if $\mlfv{M} = \emptyset$ (resp. $\mlfv{\expr{M}} = \emptyset$).
\end{definition}

\begin{notation}
We shall use the following notations. 
\begin{itemize}
\item 
$N \in \expr{M}$ means that 
$N$ occurs in the sum  $\expr{M}$. 
Also, we write $N_i \in C$ to denote that $N_i$ occurs in the linear bag $C$, and $C \setminus N_i$ to denote the linear bag   obtained by removing one occurrence of  $N_i$ from $C$.
\item  $\#( {x}, M)$ denotes the number of (free) linear occurrences of $x$ in $M$. 
Also, $\#(x,\widetilde{y}) $ denotes the number of occurrences of $x$ in the multiset $\widetilde{y}$. 
\item 
   $\perm{C}$  is the set of all permutations of a linear  bag $C$ and $C_i(n)$ denotes the $n$-th term in the (permuted)  $C_i$.
\item $\size{C}$ denotes the number of terms in a linear bag $C$. 
That is, $\size{\oneb} = 0$
and 
$\size{\bag{M}  \cdot\, C} = 1 + \size{C}$. Given a bag $B = C \bagsep U$, we define  $\size{B}$ as $\size{C}$.
\end{itemize}
\end{notation}

\begin{definition}[Head]
\label{def:headfailure}
Given a term $M$, we define $\headf{M}$ inductively as:
{\small \[  
    \begin{array}{l}
        \begin{aligned}
            \headf{x}  &= x & \headf{M\ B}  &= \headf{M}& \headf{\lambda x.M}  &= \lambda x.M \\
            \headf{x[i]}  &= x[i]  & \headf{\fail^{\widetilde{x}}}  &= \fail^{\widetilde{x}}&  \headf{M \esubst{ B }{x}} &=
        \begin{cases}
            \headf{M} & \text{if $\#(x,M) = \size{B}$}\\
            \fail^{\emptyset} & \text{otherwise}
        \end{cases}
        \end{aligned} 
    \end{array}
\]
}
\end{definition}

\begin{definition}[Head Substitution]
\label{def:linsubfail}
Let $M$ be a term such that $\headf{M}=x$. 
The \emph{head substitution} of a term $N$ for $x$ in $M$, denoted  $M\headlin{ N/x }$,  is inductively  defined as follows (where $ x \not = y$):
\[
\begin{aligned}
&x \headlin{ N / x}   = N 
\hspace{2mm} (M\ B)\headlin{ N/x}  = (M \headlin{ N/x })\ B \quad  (M\ \esubst{B}{y})\headlin{ N/x }  = (M\headlin{ N/x })\ \esubst{B}{y} 
\\[1mm]
\end{aligned}
\]
\end{definition}
\noindent When $\headf{M}=x[i]$, the head substitution   $M\headlin{N/x[i]}$ works as expected: $x[i]\headlin{N/x[i]}=N$ as the base case of the definition. 
Finally, we define contexts for terms and expressions:

\begin{definition}[Evaluation Contexts]\label{def:context_lamrfail}
Contexts for terms (CTerm) and expressions (CExpr) are defined by the following grammar:
\[
\begin{array}{l@{\hspace{1cm}}l}
 \text{(CTerm)}\quad  C[\cdot] ,  C'[\cdot] ::=  ([\cdot])B \mid ([\cdot])\esubst{B}{x} &
 \text{(CExpr)}  \quad   D[\cdot] , D'[\cdot] ::= M + [\cdot] 
\end{array}
\]
\end{definition}

\begin{figure*}[!t]

\begin{mdframed}[style=alttight]
	    \centering
	    \smallskip
	    \small
\begin{prooftree}
        \AxiomC{\ }
        \noLine
        \UnaryInfC{\ }
        \LeftLabel{\redlab{R:Beta}}
        \UnaryInfC{\((\lambda x. M) B \red M\esubst{B}{x}\)}
\end{prooftree}
\begin{prooftree}
\AxiomC{$\headf{M} =  {x}$}
    \AxiomC{$C = {\bag{N_1}}\cdot \dots \cdot {\bag{N_k}} \ , \ k\geq 1 $}
    \AxiomC{$ \#( {x},M) = k $}
    \LeftLabel{\redlab{R:Fetch^{\ell}}}
    \TrinaryInfC{\(
    M\esubst{ C \bagsep U  }{x } \red M \headlin{ N_{1}/ {x} } \esubst{ (C \setminus N_1)\bagsep U}{ x }  + \cdots + M \headlin{ N_{k}/ {x} } \esubst{ (C \setminus N_k)\bagsep U}{x}
    \)}
\end{prooftree}
\begin{prooftree}
    \AxiomC{$\headf{M} = {x}[i]
 \quad \#( {x},M) = \size{C}$}
    \AxiomC{$ U_i = \banged{\bag{N}}$}
    \LeftLabel{\redlab{R:Fetch^!}}
    \BinaryInfC{\(
    M\ \esubst{ C \bagsep U  }{x } \red M \headlin{ N/{x}[i] } \esubst{ C \bagsep U}{ x } 
    \)}
    \end{prooftree}

    \begin{prooftree}
\AxiomC{$\#( {x},M) \neq \size{C} \quad \widetilde{y} = (\mlfv{M} \!\setminus x) \uplus \mlfv{C} $}
    \LeftLabel{\redlab{R:Fail^{ \ell }}}
    \UnaryInfC{\(  M\esubst{C \bagsep U}{x } \red {}\!\!\!\!  \displaystyle\sum_{\perm{C}}\!\!\! \fail^{\widetilde{y}} \)}
\end{prooftree}
\vspace{-0.7cm}

\begin{prooftree}
    \AxiomC{$\#( {x},M) = \size{C}$}
    \AxiomC{$U_i = \banged{\oneb} \quad \headf{M} = {x}[i]  $}
    \LeftLabel{\redlab{R:Fail^!}}
    \BinaryInfC{\(  M \esubst{C \bagsep U}{x } \!\red  \! M \headlin{ \fail^{\emptyset} /{x}[i] } \esubst{ C \bagsep U}{ x }\)}
\end{prooftree}

\begin{prooftree}
   \AxiomC{$\widetilde{y} = \mlfv{C} $}
    \LeftLabel{$\redlab{R:Cons_1}$}
    \UnaryInfC{\(  (\fail^{\widetilde{x}})\ C \bagsep U \red {} \!\!\!\!\! \displaystyle\sum_{\perm{C}} \fail^{\widetilde{x} \uplus \widetilde{y}} \)}
\DisplayProof
\hfill
    \AxiomC{$ \#(z , \widetilde{x}) =  \size{C} \  \widetilde{y} = \mlfv{C} $}
    \LeftLabel{$\redlab{R:Cons_2}$}
    \UnaryInfC{$\fail^{\widetilde{x}}\ \esubst{C \bagsep U}{z}  \red {}\!\!\!\!\!\displaystyle\sum_{\perm{C}} \fail^{(\widetilde{x} \setminus z) \uplus\widetilde{y}}$}
\end{prooftree}
\vspace{-0.7cm}
\begin{prooftree}
        \AxiomC{$ \expr{M}  \red \expr{M}'  $}
        \LeftLabel{\redlab{R:ECont}}
        \UnaryInfC{$D[\expr{M}]  \red D[\expr{M}']  $}
        \DisplayProof
        \qquad 
    \AxiomC{$   M \red   \sum_{i=1}^k M'_{i}  $}
        \LeftLabel{\redlab{R:TCont}}
        \UnaryInfC{$ C[M] \red  \sum_{i=1}^k C[M'_{i}] $}
\end{prooftree}
\end{mdframed}
  \vspace{-5mm}
    \caption{Reduction rules for $\lamrfailunres$.}
    \label{fig:reductions_lamrfailunres}
        \vspace{-3mm}
    \end{figure*}

\smallskip
\noindent
Reduction  is defined by the rules in \figref{fig:reductions_lamrfailunres}.
 Rule~$\redlab{R:Beta}$ induces explicit substitutions.
Resource consumption is implemented by two fetch rules, which open up explicit substitutions:
\begin{itemize}
	\item 
Rule~$\redlab{R:Fetch^{\ell}}$, the \emph{linear fetch}, ensures that the number of required resources matches the size of the linear bag $C$.
It induces a sum of terms with head substitutions, each denoting the partial evaluation of an element from $C$. 
Thus, the size of $C$ determines the summands in the resulting expression.
\item 
 Rule~$\redlab{R:Fetch^!}$, the \emph{unrestricted fetch}, consumes a resource occurring in a specific position of the unrestricted bag $U$ via a linear head substitution of an unrestricted variable occurring in the head of the term. In this case, reduction results in an explicit substitution with $U$ kept unaltered. Note that we check for the size of the linear bag $C$: in the case $\#(x,M)\neq \size{C}$, the term evolves to a linear failure via Rule~$\redlab{R:\fail^\ell}$ (see Example~\ref{ex:sizeC}). This is another design choice: linear failure is prioritised in $\lamrfailunres$.
 \end{itemize}

Four rules show reduction to failure terms, and accumulate free variables involved in failed reductions.
Rules~$\redlab{R:Fail^{\ell}}$ and $\redlab{R:Fail^!}$ formalise the failure to evaluate an explicit substitution $M\esubst{C\bagsep U}{x}$.
The former rule targets a linear failure, which  occurs when the size of $C$ does not match the number of occurrences of ${x}$. The multiset $\widetilde{y}$ preserves all free linear variables in $M$ and $C$. 
The latter rule targets an \emph{unrestricted failure}, which occurs when the head of the term is $x[i]$ and $U_i$ (i.e., the $i$-th element of $U$) is empty. 
In this case, failure preserves the free linear variables in $M$ and $C$ excluding the head unrestricted occurrence $x[i]$ which is replaced by $\fail^\emptyset$.

Rules~$\redlab{R:Cons_1}$ and~$\redlab{R:Cons_2}$ describe reductions that lazily consume the failure term, when a term has $\fail^{\widetilde{x}}$ at its head position. 
The former rule consumes bags attached to it whilst preserving all its free linear variables; the latter rule consumes explicit substitution attached to it whilst also preserving all its free linear variables. The side condition $\#(z,\widetilde{x})=\size{C}$ is necessary in Rule~$\redlab{R:Cons_2}$ to avoid a clash with the premise of Rule~$\redlab{R:Fail^{\ell}}$.
Finally, 
Rules $\redlab{R:ECont}$ and $\redlab{R:TCont}$ state  closure by  the $C$ and $D$ contexts (cf. \defref{def:context_lamrfail}).


    Notice that the left-hand sides of the reduction rules in $\lamrfailunres$  do not interfere with each other. 
As a result, reduction in \lamrfailunres satisfies a \emph{diamond property}: for all $\expr{M}\in \lamrfailunres$, if there exist $\expr{M}_1,\expr{M}_2\in \lamrfailunres$ such that $\expr{M}\red \expr{M}_1$ and  $\expr{M}\red \expr{M}_2$, then there exists $\expr{N}\in \lamrfailunres$ such that $\expr{M}_1\red \expr{N}\longleftarrow \expr{M}_2$
\iftypes 
(see \cite{DBLP:journals/corr/abs-2111} for more details).
\else 
(see \appref{appA}).
\fi 

\begin{notation}
As usual,  
  $\red^*$ denotes  the reflexive-transitive closure of $\red$. 
We write $\expr{N}\red_{\redlab{R}} \expr{M}$ to denote that $\redlab{R}$ is the last (non-contextual) rule used in the step from $\expr{N}$ to $\expr{M}$.
\end{notation}


\begin{example}[Cont. Example~\ref{ex:id_term}] 
\label{ex:id_sem}
We illustrate different reductions for $\lambda x. x$ and $\lambda x. x[i]$.  
\begin{enumerate}
    \item $M_1= (\lambda x. x )(\bag{N}\bagsep U)$ concerns a linear variable $x$ with an linear bag containing one element. This is similar to the usual {meaning} of applying an identity function to a term:
    
\(
\begin{aligned} 
(\lambda x. x )(\bag{N}\bagsep U)&\red_{\redlab{R:Beta}} x\esubst{\bag{N}\bagsep U}{x} \red_{\redlab{R:Fetch^{\ell}}} x\headlin{N/x}\esubst{\oneb\bagsep U}{x}=N\esubst{\oneb\bagsep U}{x},    
 \end{aligned}
\)

with a ``garbage collector'' that collects unused unrestricted resources.
    \item $M_2= (\lambda x. x )(\bag{N_1}\cdot \bag{N_2}\bagsep U)$ concerns the case in which a linear variable $x$ has a single occurrence but the linear bag has size two. Term $M_2$ reduces to a sum of failure terms: 
     \(\begin{aligned} 
    (\lambda x. x )(\bag{N_1}\cdot \bag{N_2}\bagsep U)&\red_{\redlab{R:Beta}} x\esubst{\bag{N_1}\cdot \bag{N_2}\bagsep U}{x}\red_{\redlab{R:Fail^{\ell}}} \sum_{\perm{C}}\fail^{\widetilde{y}}
    \end{aligned}\)
    
    for $C=\bag{N_1}\cdot \bag{N_2}$ and $\widetilde{y}=\mlfv{C}$.
    
    \item $M_3=(\lambda x. x[1] )(\bag{N}\bagsep \oneb^!)$ represents  an abstraction of an unrestricted variable, which aims to consume the first element of the unrestricted bag. Because this bag is empty, $M_3$ reduces to failure:
       
        \(\begin{aligned}
    (\lambda x. x[1])(\bag{N}\bagsep \oneb^!)&\red_{\redlab{R:Beta}}x[1]\esubst{\bag{N}\bagsep \oneb^!}{x}\red_{\redlab{R:\fail^{\ell}}}\fail^{\widetilde{y}},
    \end{aligned}\)
    
        for   $\widetilde{y}=\mlfv{N}$. Notice that $0= \#(x,x[1])\neq \size{\bag{N}}=1$, since there are no linear occurrences of  $x$ in $x[1]$.

\end{enumerate}

\end{example}

\begin{example}
\label{ex:sizeC}
To illustrate the need to check `$\size{C}$' in $\redlab{R:Fail^!}$, consider the term $ x[1] \esubst{\bag{M} \bagsep \oneb^!}{x}$, which features both a mismatch of linear bags for the linear variables to be substituted and an empty unrestricted bag with the need for the first element to be substituted. 
We check the size of the linear bag because we wish to prioritise the reduction of Rule~$\redlab{R:Fail^{\ell}}$. Hence, in case of a mismatch of linear resources  we wish not to  perform a reduction via Rule~$\redlab{R:Fail^!}$.
 This is a design choice: our semantics collapses linear failure at the earliest moment it arises.
\end{example}



    
    \begin{example}[Cont. Example~\ref{ex:delta}]
    \label{ex:deltasem}
    Self-applications of $\Delta_1$ do not behave as an expected variation of a lazy reduction from $\Omega$. Both
    $ \Delta_1(\bag{\Delta_1}\bagsep \oneb^!)$ and  $ \Delta_1(\oneb \bagsep \bag{\Delta_1}^!)$ reduce to failure  since the number of linear occurrences of $x$ does not match the number of  resources in the linear bag:
    \(
    \begin{aligned}
     \Delta_1(\bag{\Delta_1}\bagsep \oneb^!) 
     \red(x(\bag{x}\bagsep \oneb))\esubst{\bag{\Delta_1}\bagsep \oneb^!}{x} \red\fail^\emptyset.
     \end{aligned}
     \)
     
     The term $\Delta_4(\oneb \bagsep \bag{\Delta_4}^!)$ also fails: the linear bag is empty and there is one linear occurrence of $x$ in $\Delta_4$.  Note that $\Delta_4(\bag{\Delta_4}\bagsep \bag{\Delta_4}^!)$ reduces to another application of $\Delta_4$ before failing:
     {
     \[
     \begin{aligned}
     \Delta_4(\bag{\Delta_4}\bagsep \bag{\Delta_4}^!) &= (\lambda x. (x[1] (\bag{x}\bagsep \oneb^!)))(\bag{\Delta_4}\bagsep \bag{\Delta_4}^!)\\
     &\red_{\redlab{R{:}Beta}} (x[1] (\bag{x}\bagsep \oneb^!))\esubst{\bag{\Delta_4}\bagsep \bag{\Delta_4}^!}{x}\\
     &\red_{\redlab{R{:}Fetch^!}} (\Delta_4 (\bag{x}\bagsep \oneb^!))\esubst{\bag{\Delta_4}\bagsep \bag{\Delta_4}^!}{x}\\
     &\red^*    \fail^\emptyset \esubst{\bag{x}\bagsep \oneb^!}{y}\esubst{\bag{\Delta_4}\bagsep \bag{\Delta_4}^!}{x}
     \end{aligned}
     \]
     }
     \end{example} 
\noindent Differently from~\cite{DBLP:conf/fscd/PaulusN021}, there are terms in $\lamrfailunres$ that when applied to each other behave similarly to $\Omega$, namely $\Omega_{5,6}$, $\Omega_{6,5}$, and $\Omega_7$ (Example~\ref{ex:delta}).

\begin{example}[Cont. Example~\ref{ex:delta}]  

\label{ex:deltasemii}
The following reductions illustrate different behaviours  provided that subtle changes are made within $\lamrfailunres$-terms:
\begin{itemize}

\item An interesting behaviour of $\lamrfailunres$ is that variations of $\Delta$ can be applied to each other and appear alternately (highlighted in \textcolor{blue}{blue}) in the functional position throughout the computation---this behaviour is illustrated in \figref{fig:diag_delta56}:

  \(
  \begin{aligned}
  \Omega_{5,6} &= \Delta_5 ( \oneb\bagsep \banged{\bag{\Delta_5}} \concat \banged{\bag{\Delta_6}} )\\
 & =(\lambda x. ( x[2] ( 1\bagsep \banged{\bag{x[1]}} \concat \banged{\bag{x[2]}})))
  ( 1\bagsep \banged{\bag{\Delta_5}} \concat \banged{\bag{\Delta_6}})\\ 
&  \red_\redlab{R{:}Beta} (x[2] ( \oneb\bagsep \banged{\bag{x[1]}} \concat \banged{\bag{x[2]}})) \esubst{\oneb\bagsep \banged{\bag{\Delta_5}} \concat \banged{\bag{\Delta_6}} }{ x}  \\
    &\red_\redlab{R:Fetch^!} (\colthree{\Delta_6} ( {\oneb\bagsep \banged{\bag{x[1]}} \concat \banged{\bag{x[2]}}})) \esubst{ \oneb\bagsep \banged{\bag{\Delta_5}} \concat \banged{\bag{\Delta_6}} }{ x} \\ 
     &\red_\redlab{R{:}Beta} (y[1](\oneb\bagsep \bag{y[1]}^!\concat \bag{y[2]}^!) \esubst{(\oneb\bagsep \banged{\bag{x[1]}} \concat \banged{\bag{x[2]}})}{y} \esubst{\oneb\bagsep \banged{\bag{\Delta_5}} \concat \banged{\bag{\Delta_6}} }{ x }\\ 
    &\red_\redlab{R{:}Fetch^!} (x[1](1\bagsep \bag{y[1]}^!\concat \bag{y[2]}^!) \esubst{(\oneb\bagsep \banged{\bag{x[1]}} \concat \banged{\bag{x[2]}})}{y} \esubst{\oneb\bagsep \banged{\bag{\Delta_5}} \concat \banged{\bag{\Delta_6}}}{ x}   \\
     &\red_\redlab{R{:}Fetch^!} (\colthree{\Delta_5}(1\bagsep \bag{y[1]}^!\concat \bag{y[2]}^!) \esubst{(\oneb\bagsep \banged{\bag{x[1]}} \concat \banged{\bag{x[2]}})}{y} \esubst{\oneb\bagsep \banged{\bag{\Delta_5}} \concat \banged{\bag{\Delta_6}}}{ x}   \\
&\red \ldots
        \end{aligned}
        \)
 \begin{figure}[!t]
\begin{mdframed}

\begin{center}
\begin{tikzpicture}[scale=0.8]
\draw[->] (2,1.3) --node[anchor= south,xshift=1.75mm,yshift=-0.5mm] {*}  (4.5,2.7); 
\draw[->] (4.75,2.7) --node[anchor= south,xshift=1.75mm,yshift=-1.75mm] {*} (7.75,1.3); 
\node[right] at (-1,1){$\Delta_{5}( \oneb\bagsep \banged{\bag{\Delta_5}} \concat \banged{\bag{\Delta_6}})$};
\node at (7.0,1){$\Delta_{5}\ldots \esubst{\oneb\bagsep \banged{\bag{\Delta_5}} \concat \banged{\bag{\Delta_6}}}{x}$};
\draw[->]  (8.4,1.3) --node[anchor= south,xshift=1.75mm,yshift=-0.5mm] {*} (11.5,2.7); 
\node at (4,3){$\Delta_{6}\ldots  \esubst{\oneb\bagsep \banged{\bag{\Delta_5}} \concat \banged{\bag{\Delta_6}}}{x}$};
\node at (11,3){$\Delta_{6}\ldots \esubst{\oneb\bagsep \banged{\bag{\Delta_5}} \concat \banged{\bag{\Delta_6}}}{x}$};
\draw[->]  (11.7,2.7)--node[anchor= south,xshift=1.75mm,yshift=-1.75mm] {*} (15,1.3); 
\node at (15,1) {\ldots };
\end{tikzpicture}
\end{center}
\end{mdframed}
\caption{An $\Omega$-like behaviour in $\lamrfailunres$ (cf. Example~\ref{ex:deltasemii}).}\label{fig:diag_delta56}
\end{figure}       
  \item Applications of $\Delta_7$ into two unrestricted copies of $\Delta_7$
  behave as $\Omega$ producing a non-terminating behaviour.
  Letting $B = 1\bagsep \bag{x[1]}^!\concat \bag{x[1]}^!$, we have:
  
\(
\begin{aligned} 
&\Omega_7=(\lambda x. (x[2](1\bagsep \bag{x[1]}^!\concat \bag{x[1]}^!)))(1\bagsep \bag{\Delta_7}^!\concat \bag{\Delta_7}^!)\\
&\red_{\redlab{R{:}Beta}} (x[2](1\bagsep \bag{x[1]}^!\concat \bag{x[1]}^!))\esubst{1\bagsep \bag{\Delta_7}^!\concat \bag{\Delta_7}^!}{x}\\
&\red_{\redlab{R{:}Fetch^!}} (\Delta_7({1\bagsep \bag{x[1]}^!\concat \bag{x[1]}^!}))\esubst{1\bagsep \bag{\Delta_7}^!\concat \bag{\Delta_7}^!}{x}\\
&\red_{\redlab{R{:}Beta}} (y[2](1\bagsep \bag{y[1]}^!\concat \bag{y[1]}^!))\esubst{B}{y}\esubst{1\bagsep \bag{\Delta_7}^!\concat \bag{\Delta_7}^!}{x}\\
&\red_{\redlab{R{:}Fetch^!}} (x[1](1\bagsep \bag{y[1]}^!\concat \bag{y[1]}^!))\esubst{B}{y}\esubst{1\bagsep \bag{\Delta_7}^!\concat \bag{\Delta_7}^!}{x}\\
&\red_{\redlab{R{:}Fetch^!}} (\Delta_7(1\bagsep \bag{y[1]}^!\concat \bag{y[1]}^!))\esubst{B}{y}\esubst{1\bagsep \bag{\Delta_7}^!\concat \bag{\Delta_7}^!}{x}  \\
&\red \ldots
\end{aligned}
\) 

Later on we will show that this term is well-formed (see Example~\ref{ex:delta7_wf1}) with respect to the intersection type system introduced in \secref{sec:lam_types}.
\end{itemize}
\end{example}


\section[Intersection Types]{Well-Formed Expressions via Intersection Types}\label{sec:lam_types}

We define  \emph{well-formed} $\lamrfailunres$-expressions by relying on a non-idempotent intersection type system,  based on the system by Bucciarelli et al.~\cite{BucciarelliKV17}.
Our system for well-formed expressions  subsumes the one in \cite{DBLP:conf/fscd/PaulusN021}: it uses {\em strict} and {\em multiset} types to check  linear bags;  moreover, it uses  {\em list} and {\em tuple} types to check unrestricted bags. 
As in \cite{DBLP:conf/fscd/PaulusN021}, we write
``well-formedness'' (of terms, bags, and expressions) to stress that, unlike usual type systems, our system can account for terms that may reduce to the failure term (cf. Remark~\ref{r:types}).

\begin{definition}[Types for \lamrfailunres]
\label{d:typeslamrfail}
We define {\em strict}, {\em multiset}, \emph{list}, and \emph{tuple types}.
\[
\begin{array}{l@{\hspace{.8cm}}l}
\begin{array}{l@{\hspace{.2cm}}rcl}
  \text{(Strict)} & \quad  \sigma, \tau, \delta &::=& \unit \sep \arrt{( \pi , \eta)}{\sigma}   
  \\
  \text{(Multiset)} & \quad \pi, \zeta &::=& \bigwedge_{i \in I} \sigma_i \sep \omega
\end{array}
&
\begin{array}{l@{\hspace{.2cm}}rcl}
  \text{(List)}& \quad \eta, \epsilon  &::=&  \sigma  \sep \epsilon \concat \eta 
  \\
  \text{(Tuple)}&\quad   ( \pi , \eta)
\end{array}
\end{array}
\]
\end{definition}

A strict type can be the   $\unit$ type  or a functional type  \arrt{( \pi , \eta)}{\sigma}, where 
$( \pi , \eta)$ is a tuple type and $\sigma$ is  a strict type. Multiset types can be 
either the empty type $\omega$ or 
an intersection  of strict types $\bigwedge_{i \in I} \sigma_i$, with $I$ non-empty. The operator $\wedge $ is commutative, associative, non-idempotent, that is, $\sigma \wedge \sigma \neq \sigma$, with identity $\omega$. The intersection type $\bigwedge_{i\in I} \sigma_i$ is the type of a linear bag; the cardinality of $I$ corresponds to its size. 

A list type can be either an strict type $\sigma$ or the composition $\epsilon\concat \eta$ of two list types $\epsilon$ and $\eta$. We use the list type  $\epsilon\concat \eta$ to type the concatenation of two unrestricted bags.
A tuple type $(\pi,\eta)$ types the concatenation of a linear bag of type $\pi$ with an unrestricted bag of type $\eta$.
Notice that a list type $\epsilon \concat \eta$ can be recursively unfolded into a finite composition of strict types $\sigma_1\concat   \ldots \concat \sigma_n$, for some $n\geq 1$. In this case the length of $\epsilon\concat \eta$ is $n$ and that $\sigma_i$ is its $i$-th strict type, for $1\leq i \leq n$.



\begin{notation}
\label{not:lamrrepeat}
 Given $k \geq 0$, we  write $\sigma^k$ to stand for $\sigma \wedge \cdots \wedge \sigma$
($k$ times, if $k>0$) or for $\omega$ (if $k=0$).
Similarly,  $\hat{x}:\sigma^k$
stands for $x:\sigma , \cdots , x:\sigma$ ($k$ times, if $k>0$)
 or for $x:\omega$  (if $k=0$).  Given $k \geq 1$, we  write $\banged{x}:\eta   $ to stand for $x[1]:\eta_1 , \cdots , x[k]:\eta_k$.
\end{notation}

\begin{notation}[$\eta \relunbag \epsilon$]
\label{not:ltypes}
 Let $\epsilon$ and $\eta$ be two list types, with the length of $\epsilon$ greater or equal to that of $\eta$. 
Let us write $\epsilon_i$ and $\eta_i$ to denote the $i$-th strict type in $\epsilon$ and $\eta$, respectively.
We write $\eta \relunbag \epsilon$ meaning the initial sublist, whenever there exist $  \epsilon'$ and $ \epsilon''$   such that: i) $ \epsilon = \epsilon' \concat \epsilon''$; ii)  the size of $\epsilon' $ is that of $\eta$; iii)  for all $i$, $\epsilon'_i = \eta_i $.
\end{notation}



\noindent {\em Linear contexts} range over  $\Gamma , \Delta, \ldots $ and {\em  unrestricted contexts} range over $\Theta, \Upsilon, \ldots$. They are defined by the following grammar:
$$
\begin{array}{c@{\hspace{.9cm}}c}
   \Gamma, \Delta ::= \dash ~|~  x:\sigma ~\mid ~\Gamma, x:\sigma & \Theta, \Upsilon ::= \dash ~|~  x^!:\eta ~|~ \Theta,x^!:\eta
\end{array}
$$

The empty linear/unrestricted type assignment is  denoted `$\dash$'.
Linear variables can occur more than once in a linear context; they are assigned only strict types. 
For instance,  $x:(\tau , \sigma) \rightarrow \tau , x: \tau$  is a valid context: it means that $x$ can be of both type $(\tau , \sigma) \rightarrow \tau$ and $\tau$. 
 In contrast, unrestricted variables can occur at most once  in unrestricted contexts; they are assigned only list types.
The multiset of linear variables in $\Gamma$ is denoted as $\dom{\Gamma}$; similarly, $\dom{\Theta}$ denotes the set of unrestricted variables in $\Theta$. 

Judgements are of the form $\Theta;\Gamma \wfdash \expr{M}:\sigma$, where the left-hand side contexts are separated by ``;'' and  $\expr{M}:\sigma$ means that    $\expr{M}$ has type $\sigma$. We write $\wfdash \expr{M}:\sigma$ to denote $\dash \,;\, \dash \wfdash \expr{M}:\sigma$.

\begin{figure*}[!t]
\small
\begin{mdframed}[style=alttight]
	    \centering    
    \begin{prooftree}
        \AxiomC{}
        \LeftLabel{\redlab{F{:}var^{ \ell}}}
        \UnaryInfC{\( \Theta ;  {x}: \sigma \wfdash  {x} : \sigma\)}
        \DisplayProof
\hfill
        \AxiomC{\( \Theta , \banged{x}: \eta; {x}: \eta_i , \Delta \wfdash  {x} : \sigma\)}
        \LeftLabel{\redlab{F{:}var^!}}
        \UnaryInfC{\( \Theta , \banged{x}: \eta; \Delta \wfdash {x}[i] : \sigma\)}
        \DisplayProof
        \hfill
        \AxiomC{\(  \)}
        \LeftLabel{\redlab{F{:}\oneb^{\ell}}}
        \UnaryInfC{\( \Theta ; \dash \wfdash \oneb : \omega \)}
    \end{prooftree}
  \vspace{-5mm}
  \begin{prooftree}
        \AxiomC{\(  \)}
        \LeftLabel{\redlab{F{:}\oneb^!}}
        \UnaryInfC{\( \Theta ;  \dash  \wfdash \banged{\oneb} : \sigma \)}
          \DisplayProof
    \hfill
  \AxiomC{\(\Theta , \banged{z} : \eta ; \Gamma ,  {\hat{z}}: \sigma^{k} \wfdash M : \tau \quad z\notin \dom{\Gamma} \)}
        \LeftLabel{\redlab{F{:}abs}}
        \UnaryInfC{\( \Theta ;\Gamma \wfdash \lambda z . M : (\sigma^{k} , \eta )   \rightarrow \tau \)}
\end{prooftree}
  \vspace{-6mm}
\begin{prooftree}
     \AxiomC{\quad \( \quad \Theta ;\Gamma \wfdash M : (\sigma^{j} , \eta ) \rightarrow \tau \)}
\noLine
         \UnaryInfC{\(  \Theta ;\Delta \wfdash B : (\sigma^{k} , \epsilon )  \)}
      \AxiomC{\( \eta \relunbag \epsilon \)}
            \LeftLabel{\redlab{F{:}app}}
        \BinaryInfC{\( \Theta ; \Gamma, \Delta \wfdash M\ B : \tau\)}
 \DisplayProof
 \hfill
 \AxiomC{\qquad \( ~\quad \Theta ,\banged{x} : \eta ; \Gamma ,  {\hat{x}}: \sigma^{j} \wfdash M : \tau  \)}
    \noLine
   \UnaryInfC{\( \Theta ; \Delta \wfdash B : (\sigma^{k} , \epsilon ) \)}
     \AxiomC{\( \eta \relunbag \epsilon \)}
        \LeftLabel{\redlab{F{:}ex \dash sub}}  
        \BinaryInfC{\( \Theta ; \Gamma, \Delta \wfdash M \esubst{ B }{ x } : \tau \)}
   \end{prooftree}
  \vspace{-5mm}
    \begin{prooftree}
  \AxiomC{\( \Theta ; \Gamma\wfdash C : \sigma^k\)}
    \AxiomC{\(  \Theta ;\dash \wfdash  U : \eta \)}
        \LeftLabel{\redlab{F{:}bag}}
        \BinaryInfC{\( \Theta ; \Gamma \wfdash C \bagsep U : (\sigma^{k} , \eta  ) \)}
\DisplayProof
\hfill
   \AxiomC{\( \Theta ; \Gamma \wfdash M : \sigma\)}
        \AxiomC{\( \Theta ; \Delta \wfdash C : \sigma^k\)}
        \LeftLabel{\redlab{F{:}bag^{ \ell}}}
        \BinaryInfC{\( \Theta ; \Gamma, \Delta \wfdash \bag{M}\cdot C:\sigma^{k+1}\)}
    \end{prooftree}
  \vspace{-6mm}
    \begin{prooftree}
        \AxiomC{\( \Theta ; \dash \wfdash M : \sigma\)}
        \LeftLabel{\redlab{F{:}bag^!}}
        \UnaryInfC{\( \Theta ; \dash  \wfdash \banged{\bag{M}}:\sigma \)}
   \DisplayProof
   \hfill
        \AxiomC{\( \Theta ; \dash \wfdash U : \epsilon\)}
        \AxiomC{\( \Theta ; \dash \wfdash V : \eta\)}
        \LeftLabel{\redlab{F{:}\concat bag^!}}
        \BinaryInfC{\( \Theta ; \dash  \wfdash U \concat V :\epsilon \concat \eta \)}
    \DisplayProof
    \hfill
  \AxiomC{\( \dom{\Gamma} = \widetilde{x} \)}
        \LeftLabel{\redlab{F{:}fail}}
        \UnaryInfC{\( \Theta ; \Gamma \wfdash  \fail^{\widetilde{x}} : \tau  \)}
    \end{prooftree}
     \vspace{-6mm}
    \begin{prooftree}
         \AxiomC{$ \Theta ;\Gamma \wfdash \expr{M} : \sigma \qquad \Theta ;\Gamma \wfdash \expr{N} : \sigma$}
        \LeftLabel{\redlab{F{:}sum}}
        \UnaryInfC{$ \Theta ;\Gamma \wfdash \expr{M}+\expr{N}: \sigma$}
          \DisplayProof
        \qquad 
                \AxiomC{$ \Theta; \Gamma \wfdash M: \sigma \quad  x \not \in \dom{\Gamma}$}
        \LeftLabel{\redlab{F{:}weak}}
        \UnaryInfC{$ \Theta ; \Gamma, x:\omega \wfdash M: \sigma $}
    \end{prooftree}
\end{mdframed}
\vspace{-5mm}
    \caption{Well-formedness rules for \lamrfailunres (cf. Def.~\ref{d:wellf_unres}). In Rules~\redlab{F{:}app} and \redlab{F{:}ex\dash sub}: $k, j \geq 0$.}\label{fig:wf_rules_unres}
\vspace{-2mm}
\end{figure*}

\begin{definition}[Well-formed \lamrfailunres expressions]
\label{d:wellf_unres}
An expression $ \expr{M}$ is well-formed (wf, for short) if  there exist  $\Gamma$, $\Theta$ and  $\tau$ such that  $ \Theta ; \Gamma \wfdash  \expr{M} : \tau  $ is entailed via the rules in \figref{fig:wf_rules_unres}.
\end{definition}


\noindent We describe the well-formedness rules in \figref{fig:wf_rules_unres}. 
\begin{itemize}
\item Rules~$\redlab{F:var^{ \ell}}$ and $\redlab{F:var^!}$ assign types to linear and unrestricted variables, respectively. 
\item Rule~$\redlab{F:var^!}$ resembles the {\em copy} rule~\cite{CairesP10} where we use a linear copy of an unrestricted variable $x[i]$ of type $\sigma$, typed with $x^!:\eta$, and type the linear copy with the corresponding strict type $\eta_i$ which in this case the linear copy $x$ would have type equal to $\sigma$. 
\item Rules~$\redlab{F:\oneb^{ \ell}}$ and $\redlab{F:\oneb^!}$ assign types to the empty linear/unrestricted bag: $\oneb$ has type $\omega$, whereas  $\oneb^!$ has an arbitrary strict type $\sigma$. Arbitrariness is allowed since  the substitution of an unrestricted variable for $1^!$ leads to a \(\fail\) term (Rule \redlab{R:Fail^!}), which has an arbitrary strict type.

\item Rule $\redlab{F:abs}$ assigns type $(\sigma^k,\eta)\to \tau $ to an abstraction $\lambda z. M$, provided that the unrestricted occurrences of $z$ may be typed by the unrestricted context containing $z^!:\eta$, the linear  occurrences of $z$ are typed with the linear context containing $\hat{z}: \sigma^k$, for some $k\geq 0$, and there are no other linear occurrences of $z$ in the linear context $\Gamma$.
\item Rules $\redlab{F{:}app}$ and  $\redlab{F{:}ex \dash sub}$ (for application  and explicit substitution, resp.) use the condition $\eta \relunbag \epsilon $ (cf. Notation~\ref{not:ltypes}), which 
captures the portion of the unrestricted bag that is effectively used in a term: it 
ensures that $\epsilon$ can be decomposed into some $\epsilon'$ and $\epsilon''$, such that each  type component $\epsilon_i'$ matches with $\eta_i$. If this requirement is satisfied,  Rule~$\redlab{F{:}app}$ types an application $M\ B$  given that $M$ has a functional type in which the left of the arrow is  a  tuple type $ (\sigma^{j} , \eta ) $ whereas the bag $ B$ is typed with tuple  $ (\sigma^{k} , \epsilon ) $. Similarly,  Rule~$\redlab{F{:}ex \dash sub}$ types the term $ M \esubst{ B }{ x }$ provided that  $ B$ has the tuple type $ (\sigma^{k} , \epsilon ) $ and $M$ is typed with the variable $x$ having linear type assignment $ \sigma^{j} $ and unrestricted type assignment $ \eta $.
 
\end{itemize}

\begin{remark}
 Differently from  intersection type systems~\cite{BucciarelliKV17, PaganiR10},  in Rules  $\redlab{F{:}app}$ and   $\redlab{F{:}ex \dash sub}$ there is no equality requirement  between $j$ and $k$,  as we would like to capture terms that  fail due to a mismatch in resources: we only require that the linear part of the tuples are composed of the same strict type, say  $\sigma$. 
As a term can take an unrestricted bag with arbitrary size we only require that the elements of the unrestricted bag that are used have a ``consistent'' type, i.e., the type of the unrestricted bag satisfies the relation $\relunbag$ with the unrestricted fragment of the corresponding tuple type.
\end{remark}

\noindent There are four rules for bags:
\begin{itemize}
\item Rule~$\redlab{F:bag^!}$ types an unrestricted bag $\bag{M}^!$ with  the  type $\sigma$ of $M$. Note that $\bag{x}^!$, an unrestricted bag containing a linear variable $x$, is not well-formed, whereas $\bag{x[i]}^!$ is well-formed. 
\item Rule $\redlab{F:bag}$ assigns the tuple type $(\sigma^k,\eta)$ to the concatenation of a linear bag of type $\sigma^k$ with an unrestricted bag of type $\eta$. \item Rules $\redlab{F:bag^{ \ell}}$
and $\redlab{F{:}\concat bag^!}$ type the concatenation of linear and unrestricted bags. 
\item Rule~$ \redlab{F{:}\oneb^!} $ allows an empty unrestricted bag to have an arbitrary $\sigma$ type since it may be referred to by a variable for substitution: we must be able to compare its type with the type of unrestricted variables that may  consume the empty bag (this reduction would inevitably lead to failure).
\end{itemize}
As in~\cite{DBLP:conf/fscd/PaulusN021}, Rule  $\redlab{F{:}\fail}$ handles the failure term, and is the main difference with respect to standard type systems. Rules for sums and weakening ($\redlab{F:sum}$ and $\redlab{F:weak}$) are standard.

\begin{example}[Cont. Example~\ref{ex:deltasemii}]\label{ex:delta7_wf1} Term $\Delta_7:=\lambda x. x[2]( \oneb \bagsep \banged{\bag{x[1]}} \concat \banged{\bag{x[1]}} ) $ is well-formed, as ensured by the judgement $ \Theta ; \dash \wfdash \Delta_7: (\omega ,  \sigma' \concat (\sigma^{j} , \sigma' \concat   \sigma' ) \rightarrow \tau )    \rightarrow \tau  $, whose derivation is given below:
        
\begin{itemize}
\item $\Pi_3$ is the derivation of 
\(  \Theta , \banged{x} : \eta ;\dash \wfdash  \banged{\bag{x[1]}} : \sigma',\)  for $ \eta = \sigma' \concat (\sigma^{j} , \sigma' \concat   \sigma' ) \rightarrow \tau $.
\item $\Pi_4$ is the derivation:
\( \Theta , \banged{x}: \eta ; \dash \wfdash {x}[2] : (\sigma^{j} , \sigma' \concat   \sigma' ) \rightarrow \tau \)
\item $\Pi_5$ is the derivation: \( \Theta , \banged{x} :  \eta   ; x : \omega \wfdash ( \oneb \bagsep \banged{\bag{x[1]}} \concat \banged{\bag{x[1]}} )  : (\omega , \sigma' \concat   \sigma' ) \)
\end{itemize}
Therefore, 
 \begin{prooftree}
    \AxiomC{\( \Pi_5\)}
    \AxiomC{\( \Pi_4 \)}
       \AxiomC{\( \sigma' \concat   \sigma' \relunbag \sigma' \concat   \sigma' \)}
    \LeftLabel{\redlab{F{:}app}}
            \TrinaryInfC{\( \Theta , \banged{x} :  \eta  ;  x : \omega \wfdash x[2]( \oneb \bagsep \banged{\bag{x[1]}} \concat \banged{\bag{x[1]}} ) : \tau \)}
           \LeftLabel{\redlab{F{:}abs}}
            \UnaryInfC{\( \Theta ; \dash \wfdash \underbrace{\lambda x. (x[2]( \oneb \bagsep \banged{\bag{x[1]}} \concat \banged{\bag{x[1]}} ))}_{\Delta_7}: ( \omega ,  \eta )    \rightarrow \tau  \)}
        \end{prooftree}
\end{example}

Well-formed expressions satisfy subject reduction (SR);  
\iftypes 
see \cite{DBLP:journals/corr/abs-2111} for a full proof.
\else 
see \appref{appB} for a proof.
\fi 

 \begin{theorem}[SR in \lamrfailunres]
\label{t:lamrfailsr_unres}
If $\Theta ; \Gamma \wfdash \expr{M}:\tau$ and $\expr{M} \red \expr{M}'$ then $\Theta ;\Gamma \wfdash \expr{M}' :\tau$.
\end{theorem}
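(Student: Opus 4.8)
The plan is to prove the statement by induction on the derivation of the reduction step $\expr{M} \red \expr{M}'$, following the structure of the reduction rules in Figure~\ref{fig:reductions_lamrfailunres}. The base cases are the non-contextual rules ($\redlab{R:Beta}$, $\redlab{R:Fetch^\ell}$, $\redlab{R:Fetch^!}$, $\redlab{R:Fail^\ell}$, $\redlab{R:Fail^!}$, $\redlab{R:Cons_1}$, $\redlab{R:Cons_2}$), and the inductive cases are the two contextual rules $\redlab{R:ECont}$ and $\redlab{R:TCont}$, which follow routinely from the induction hypothesis together with the observation that the well-formedness rules for sums ($\redlab{F:sum}$) and for the term contexts (application and explicit substitution) are compositional in the relevant subterm.

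Before attacking the base cases, I would isolate a handful of auxiliary lemmas. First, a \emph{generation} (inversion) lemma for each syntactic construct, extracting from a well-formedness judgment $\Theta;\Gamma \wfdash M : \tau$ the shape of the subderivations — e.g.\ if $\Theta;\Gamma \wfdash (\lambda x.M)\,B : \tau$ then $B$ has some tuple type $(\sigma^k,\epsilon)$, $M$ is typed in a context extended with $\hat z:\sigma^j$ and $z^!:\eta$ with $\eta \relunbag \epsilon$, etc. Second, a \emph{linear substitution lemma}: if $\headf{M} = x$, $\#(x,M)$ is as required, and $\Theta;\Gamma,x{:}\sigma \wfdash M:\tau$ with $\Theta;\Delta \wfdash N:\sigma$, then $\Theta;\Gamma,\Delta \wfdash M\headlin{N/x}:\tau$ (and the analogous statement for the unrestricted head substitution $M\headlin{N/x[i]}$, where the unrestricted context is \emph{not} consumed, matching the copy-rule behavior of $\redlab{F:var^!}$). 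Third, a small lemma on $\fail$: since Rule~$\redlab{F:fail}$ types $\fail^{\widetilde x}$ at \emph{any} strict type $\tau$ provided $\dom{\Gamma} = \widetilde x$, and since the contextual rules only ever propagate or grow the multiset of free linear variables in a controlled way, the reductions producing $\fail$ terms preserve typability at the same $\tau$. A technical point worth recording separately is that $\bigwedge$ is non-idempotent, so sizes of linear bags are reflected exactly in the intersection-type cardinalities; this is what makes the side conditions $\#(x,M) = \size{C}$ in the fetch rules line up with the shape of the types.

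With these in hand, each base case is a matching exercise: for $\redlab{R:Beta}$, invert the typing of $(\lambda x.M)B$ and reassemble it via $\redlab{F:ex\dash sub}$ — this is essentially immediate since $\redlab{F:abs}$ and $\redlab{F:ex\dash sub}$ are designed to mirror each other. For $\redlab{R:Fetch^\ell}$, invert $\redlab{F:ex\dash sub}$, use the bag typing rules ($\redlab{F:bag}$, $\redlab{F:bag^\ell}$) to see that $C$ of size $k$ has type $\sigma^k$ hence each $N_i$ has type $\sigma$, apply the linear substitution lemma to each summand $M\headlin{N_i/x}$, note that $C\setminus N_i$ now has type $\sigma^{k-1}$ matching the reduced context on $x$, and close with $\redlab{F:sum}$ over the $\perm{C}$-indexed sum. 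For $\redlab{R:Fetch^!}$, the unrestricted bag $U$ and its type $\epsilon$ are untouched, $U_i = \bag{N}^!$ is typed at $\epsilon_i = \eta_i$, and the unrestricted substitution lemma gives the result with the context unchanged. For $\redlab{R:Fail^\ell}$, $\redlab{R:Fail^!}$, $\redlab{R:Cons_1}$, $\redlab{R:Cons_2}$, apply the $\fail$ lemma, being careful that the multiset $\widetilde y = \mlfv{C}$ (resp.\ $\widetilde x \uplus \widetilde y$, resp.\ $(\widetilde x\setminus z)\uplus\widetilde y$) appearing in the conclusion is exactly the domain of the linear context obtained after inversion — here the bookkeeping of free linear variables in Definition~\ref{d:fvars} and the side conditions on the rules are precisely what is needed.

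The main obstacle, I expect, is the unrestricted-bag machinery: keeping the relation $\eta \relunbag \epsilon$ consistent across a reduction step, especially in $\redlab{R:Fetch^!}$ where an occurrence $x[i]$ is replaced by the $i$-th component of the bag while the bag (and its list type) is retained for possible future fetches. One must check that the list type $\eta$ assigned to the residual unrestricted variable still satisfies $\eta \relunbag \epsilon$ after the step, i.e.\ that consuming one copy does not disturb the prefix-matching condition — this works because $\redlab{F:var^!}$ draws a \emph{linear copy} $\eta_i$ without removing $x^!{:}\eta$ from the context, so the type of the bag genuinely does not change. A secondary subtlety is the interaction flagged in Example~\ref{ex:sizeC}: the $\size{C}$ side condition in $\redlab{R:Fetch^!}$/$\redlab{R:Fail^!}$ must be compatible with the fact that $\redlab{F:app}$/$\redlab{F:ex\dash sub}$ impose no equality between $j$ and $k$, so one has to verify that whenever the typing derivation exists, the head of $M$ is of the form dictated by the applicable reduction rule, rather than of a form that would block it. I would handle this by a case analysis on $\headf{M}$ combined with the generation lemma, showing the reduction rule whose left-hand side matches is always the one whose preservation obligation we have already discharged.
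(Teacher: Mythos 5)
Your proposal matches the paper's proof essentially step for step: the paper proves SR by case analysis on the last reduction rule applied, relying on exactly the two substitution lemmas you identify (a linear head-substitution lemma that consumes the context $\Delta$ of the substituted term, and an unrestricted one that leaves the contexts unchanged, mirroring the copy-style Rule~$\redlab{F{:}var^!}$), handles the $\fail$-producing rules by re-typing via $\redlab{F{:}fail}$ and $\redlab{F{:}sum}$ after checking the free-variable bookkeeping, and dispatches the contextual rules by the induction hypothesis. The only cosmetic difference is that you package the inversions as an explicit generation lemma where the paper inverts the derivations inline in each case; the substance is the same.
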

\begin{proof}By structural induction on the reduction rules. 
We proceed by analysing the rule applied in $\expr{M}$. 
An interesting case occurs when the rule is $\redlab{F:Fetch^!}$:  Then $ \expr{M} = M\ \esubst{ C \bagsep U }{x }$, where $U = \banged{\bag{N_1}} \concat \cdots \concat \banged{\bag{N_l}} $ and $\headf{M} =  {x}[i]$. The reduction is as follows:

    \begin{prooftree}
        \AxiomC{$\headf{M} =  {x}[i]$}
        \AxiomC{$ U_i = \banged{\bag{N_i}} $}
        \LeftLabel{\redlab{R:Fetch^!}}
        \BinaryInfC{\(
        M\ \esubst{ C \bagsep U  }{x } \red M \headlin{ N_i/ {x}[i] } \esubst{ C \bagsep U}{ x } 
        \)}
    \end{prooftree}
    
     By hypothesis, one has the derivation:
     \begin{prooftree}
            \AxiomC{\( \Theta , \banged{x} : \eta ; \Gamma' , \hat{x}: \sigma^{j} \wfdash M : \tau \)}
               \AxiomC{$\Pi$}
               \noLine
               \UnaryInfC{\(  \Theta ; \cdot\wfdash U : \epsilon\)}
                \AxiomC{\( \Theta ; \Delta \wfdash  {C} : \sigma^k\)}
            \LeftLabel{\redlab{F{:}bag}}
            \BinaryInfC{\( \Theta ; \Delta \wfdash C \bagsep U : (\sigma^{k} , \epsilon ) \)}
            \AxiomC{\(  \eta \relunbag \epsilon  \)}
        \LeftLabel{\redlab{F{:}ex \dash sub}}  
        \TrinaryInfC{\( \Theta ; \Gamma', \Delta \wfdash M \esubst{ C \bagsep U }{ x } : \tau \)}
    \end{prooftree}
    where $\Pi$ has the form
    \begin{prooftree}
            \AxiomC{\( \Theta ; \cdot \wfdash N_1 : \epsilon_1\)}
            \LeftLabel{\redlab{F{:}bag^!}}
            \UnaryInfC{\( \Theta ; \cdot \wfdash \banged{\bag{N_1}}  : \epsilon_1\)}
            \AxiomC{\( \cdots \)}
            \AxiomC{\( \Theta ; \cdot \wfdash N_l : \epsilon_l \)}
            \LeftLabel{\redlab{F{:}bag^!}}
            \UnaryInfC{\( \Theta ; \cdot \wfdash \banged{\bag{N_l}} : \epsilon_l \)}
        \LeftLabel{\redlab{F{:}\concat bag^!}}
        \TrinaryInfC{\( \Theta ; \cdot  \wfdash \banged{\bag{N_1}} \concat \cdots \concat \banged{\bag{N_l}}  :\epsilon \)}
    \end{prooftree}
    with $\Gamma = \Gamma' , \Delta $. Notice that if $\epsilon_i = \delta$ and $\eta \relunbag \epsilon$ then $\eta_i = \delta$.
    \iftypes 
It can be shown that there 
\else 
     By Lemma~\ref{lem:subt_lem_failunres_un}, there 
\fi 
exists a derivation
 $\Pi_1$ of  $
    \Theta , \banged{x} : \eta ; \Gamma' , \hat{x}: \sigma^{j} \wfdash   M \headlin{ N_{i}/  {x}[i] } : \tau $. Therefore, we have: 
         \begin{prooftree}
            \AxiomC{\( \Theta , \banged{x} : \eta ; \Gamma' , \hat{x}: \sigma^{j} \wfdash M \headlin{ N_{1}/  {x}[i] } : \tau \)}
               \AxiomC{\(  \Theta ; \cdot\wfdash U : \epsilon\)}
                \AxiomC{\( \Theta ; \Delta \wfdash  {C} : \sigma^k\)}
            \LeftLabel{\redlab{F{:}bag}}
            \BinaryInfC{\( \Theta ; \Delta \wfdash C \bagsep U : (\sigma^{k} , \epsilon ) \)}
            \AxiomC{\(  \eta \relunbag \epsilon  \)}
        \LeftLabel{\redlab{F{:}ex \dash sub}}  
        \TrinaryInfC{\( \Theta ; \Gamma', \Delta \wfdash M \headlin{ N_{i}/ {x[i]} } \esubst{  C \bagsep U }{ x } : \tau \)}
    \end{prooftree}

\end{proof}

\begin{remark}[Well-Formed vs Well-Typed Expressions]
\label{r:types}
Our type system (and Theorem \ref{t:lamrfailsr_unres}) can be specialised to the case of \emph{well-typed} expressions that do not contain (and never reduce to) the failure term.
In particular, Rules  $\redlab{F{:}app}$ and   $\redlab{F{:}ex \dash sub}$ would need to check that $\sigma^k = \sigma^j$, as failure can be caused due to a mismatch of linear resources. {A difference between well typed and well formed expressions is that the former satisfy subject expansion, but the latter do not: expressions that lead to failure can be ill-typed yet failure itself is well-formed.}

\end{remark}



\section[A Translation into Processes]{A Typed Encoding of \lamrfailunres into Concurrent Processes}
\label{s:encoding}
We encode \lamrfailunres into \spi,
a session $\pi$-calculus that stands on a Curry-Howard correspondence between linear logic and session types (\secref{s:pi}). We 
extend the two-step approach  that we devised in~\cite{DBLP:conf/fscd/PaulusN021} for the sub-calculus \lamrfail  (with linear resources only) (cf.~\figref{fig:two-step-enc}).
    First, in \secref{ssec:first_enc}, we define an encoding $\recencodopenf{\cdot}$ from well-formed expressions in \lamrfailunres to well-formed expressions in  a variant of \lamrfailunres with \emph{sharing}, dubbed \lamrsharfailunres (\secref{ssec:lamshar}).
    Then, in \secref{ssec:second_enc}, we define an encoding $\piencodf{\cdot}_u$ (for a name $u$) from well-formed expressions in \lamrsharfailunres to well-typed processes in \spi.

We prove that  $\recencodopenf{\cdot}$ and $\piencodf{\cdot}_u$ satisfy well-established correctness criteria~\cite{DBLP:journals/iandc/Gorla10,DBLP:journals/iandc/KouzapasPY19}: \emph{type preservation}, \emph{operational  completeness},  \emph{operational soundness}, and \emph{success sensitiveness} 
\iftypes 
(cf.~\cite{DBLP:journals/corr/abs-2111}).
\else 
(cf.~\appref{ss:criteria}).
\fi 
Because \lamrfailunres includes unrestricted resources, the results given here strictly  generalise those in~\cite{DBLP:conf/fscd/PaulusN021}.

\subsection[Session-Typed Calculus]{\spi: A Session-Typed $\pi$-Calculus}
\label{s:pi}
\spi is a $\pi$-calculus with \emph{session types}~\cite{DBLP:conf/concur/Honda93,DBLP:conf/esop/HondaVK98}, which  ensure that the endpoints of a channel perform matching actions.
We consider the full process  framework in~\cite{CairesP17}, including constructs for specifying labelled choices and client/server connections;   they will be useful to codify   unrestricted resources and variables in \lamrfailunres.
Following~\cite{CairesP10,DBLP:conf/icfp/Wadler12},
\spi stands on a Curry-Howard correspondence between session types and a linear logic with dual modalities/types ($\with A$ and $\oplus A$),  which define \emph{non-deterministic} session behaviour.
As in~\cite{CairesP10,DBLP:conf/icfp/Wadler12},
in \spi, cut elimination corresponds to communication, proofs to processes, and propositions to session types.

 \tikzstyle{mynode1} = [rectangle, rounded corners, minimum width=1.75cm, minimum height=0.8cm,text centered, draw=black, fill=gray!30]
\tikzstyle{arrow} = [thick,->,>=stealth]

\tikzstyle{mynode2} = [rectangle, rounded corners, minimum width=1.75cm, minimum height=0.8cm,text centered, draw=black, fill=RedOrange!20]

\tikzstyle{mynode3} = [rectangle, rounded corners, minimum width=1.75cm, minimum height=0.8cm,text centered, draw=black, fill=RoyalBlue!20]
\tikzstyle{arrow} = [thick,->,>=stealth]

\begin{figure}[!t]
\begin{center}
\begin{tikzpicture}[node distance=2cm, scale=.7pt]
\node (source) [mynode1] {$\lamrfailunres$};
\node (interm) [mynode2, right of=source,  xshift=3cm] {$\lamrsharfailunres$};
\node (target) [mynode3, right of=interm,  xshift=3cm] {$\spi$};
\draw[arrow] (source) --  node[anchor=south] {$ \recencodopenf{\cdot}$} (interm);
\node (enc1) [right of= source, xshift=.4cm, yshift=-.35cm] {\secref{ssec:first_enc}};
\node (enc2) [right of= interm, xshift=.4cm, yshift=-.35cm] {\secref{ssec:second_enc}};
\draw[arrow] (interm) -- node[anchor=south] {$ \piencodf{\cdot }_u $ } (target);
\end{tikzpicture}
\end{center}
\vspace{-3mm}
\caption{Our two-step approach to encode  $\lamrfailunres$ into $\spi$.}\label{fig:two-step-enc}
\end{figure}

\subparagraph{Syntax.}
\emph{Names}  $x, y,z, w \ldots$ denote the endpoints of protocols specified by session types. We write $P\subst{x}{y}$ for the capture-avoiding substitution of $x$ for $y$ in process $P$.

\begin{definition}[Processes]\label{d:spi}
The syntax of \spi processes is given by the grammar below. 
	\begin{align*}
P,Q ::=  ~&  \zero \sep \overline{x}(y).P \sep  x(y).P \sep \case{x}{i};P \sep \choice{x}{i}{I}{i}{P_i} \sep  x.\overline{\close} \sep x.\close;P \\
 \sep &  (P \para Q)
  \sep  [x \leftrightarrow y]  \sep (\nu x)P \sep ~!x(y).P \sep  \outsev{x}{y}.P 
  \\
   \sep &  x.\overline{\some};P \sep x.\overline{\none}
\sep      x.\some_{(w_1, \cdots, w_n)};P \sep  (P \oplus Q)
\end{align*}
\end{definition}
\noindent
Process $\zero$ denotes inaction. 
Process $\overline{x}(y).P$ sends a fresh name $y$ along  $x$ and then continues as $P$. Process 
     $x(y).P$ receives a name $z$ along $x$ and then continues as  $P\subst{z}{y}$. Process $\choice{x}{i}{I}{i}{P_i}$ is a branching construct, with labelled alternatives indexed by the finite set $I$: it awaits a choice on $x$ with continuation $P_j$ for each  $j \in I$. Process $\case{x}{i};P$ selects on $x$ the alternative indexed by $i$ before continuing as $P$. Processes $x.\overline{\close}$ and $x.\close;P$ are complementary actions for  closing session $x$. 
     We sometimes use the  shorthand notations $\overline{y}\sclose$ and $y\sclose;P$ to stand for ${y}.\overline{\close}$ and $y.\close;P$, respectively.
      Process $P \para Q$ is the parallel execution of $P$ and $Q$. 
The forwarder process $[x \leftrightarrow y]$ denotes a bi-directional link between sessions $x$ and $y$. 
Process $(\nu x)P$ denotes the process $P$ in which name $x$  is kept private (local) to $P$. 
Process $!x(y).P$ defines a server that spawns copies of $P$ upon requests on $x$.
Process $\outsev{x}{y}.P$ denotes a client that connects to a server by sending the fresh name $y$ on $x$. 

The remaining constructs  come from~\cite{CairesP17} and introduce non-determi\-nis\-tic sessions which  \emph{may} provide a session protocol  \emph{or} fail.
    Process $x.\overline{\some};P$ confirms that the session  on $x$ will execute and  continues as $P$. Process $x.\overline{\none}$ signals the failure of implementing the session on $x$.
    Process $x.\some_{(w_1, \cdots,w_n)};P$ specifies a dependency on a non-deterministic session $x$. 
    This process can  either (i)~synchronise with an action $x.\overline{\some}$ and continue as $P$, or (ii)~synchronise with an action $x.\overline{\none}$, discard $P$, and propagate the failure on $x$ to $(w_1, \cdots, w_n)$, which are sessions implemented in $P$.
    When $x$ is the only session implemented in $P$, there is no tuple of dependencies $(w_1, \cdots,w_n)$ and so we write simply $x.\some;P$.
        Finally, process $P \oplus Q$ denotes a non-deterministic choice between $P$ and $Q$. We shall often write $\bigoplus_{i \in \{1 , \cdots , n \} } P_i$ to stand for $P_1 \oplus \cdots \oplus P_n$.
In  $(\nu y)P$ and $x(y).P$ the   occurrence of name $y$ is binding, with scope $P$.
The set of free names of $P$ is denoted by $\fn{P}$.

\subparagraph{Semantics.}
The   \emph{reduction relation} of \spi specifies the computations that a process performs on its own (cf.~\figref{fig:redspi}). 
It is closed by  \emph{structural congruence}, denoted $\equiv$, which expresses basic identities for processes and the non-collapsing nature of non-determinism  
\iftypes 
(cf.~\cite{DBLP:journals/corr/abs-2111}).
\else 
(cf. \appref{appC}).
\fi


\begin{figure}[!t]
\begin{mdframed}[style=alttight]
\vspace{-2mm}
\small
\centering
\begin{align*}
 & \overline{x}{(y)}.Q \para x(y).P  \red (\nu y) (Q \para P) &
 & x.\overline{\some};P \para x.\some_{(w_1, \cdots, w_n)};Q  \red P \para Q\\
 & Q \red Q' \Rightarrow P \oplus Q   \red P \oplus Q' &
 & x.\overline{\close} \para x.\close;P  \red P \\ 
 & \case{x}{i};Q \para \choice{x}{i}{I}{i}{P_i}   \red  Q \para P_i &   
 & !x(y).Q \para  \outsev{x}{y}.P   \red (\nu x)( !x (y).Q \para  (\nu y)( Q \para P ) ) \\
 & (\nu x)( [x \leftrightarrow y] \para P) \red P \subst{y}{x}  \quad (x \neq y) &
 & P\equiv P'\wedge P' \red Q' \wedge Q'\equiv Q \Rightarrow P   \red Q\\
 & Q \red Q' \Rightarrow P \para Q   \red P \para Q'&
 & P \red Q  \Rightarrow (\nu y)P  \red (\nu y)Q \\
 & 
\end{align*}
\vspace{-1.2cm}
\[x.\overline{\none} \para x.\some_{(w_1, \cdots, w_n)};{Q} \red w_1.\overline{\none} \para \cdots \para w_n.\overline{\none}\]
\end{mdframed}
\vspace{-5mm}
    \caption{Reduction for \spi}
    \label{fig:redspi}
        \vspace{-3mm}
\end{figure}

The first reduction rule formalises communication, which concerns bound names only (internal mobility), as $y$ is bound in $\overline{x}{(y)}.Q$ and $x(y).P$.
 Reduction  for the forwarder process leads to a substitution.
The reduction rule for closing a session is self-explanatory, as is the rule in which prefix $x.\overline{\some}$ confirms the availability of a non-deterministic session. When a non-deterministic session is not available,   $x.\overline{\none}$ triggers this failure to all dependent sessions $w_1, \ldots, w_n$; this may in turn trigger further failures (i.e., on sessions that depend on $w_1, \ldots, w_n$).
The remaining rules define contextual reduction with respect to restriction,  composition, and non-deterministic choice.




\subparagraph{Type System}
Session types govern the behaviour of the names of a process.
An assignment $x:A$ enforces the use of name $x$ according to the  protocol specified by $A$.

\begin{definition}[Session Types]
Session types are given by 
\[
\begin{array}{rl}
  A,B ::= & \bot \sep   \onef \sep 
A \otimes B  \sep A \ampy B  
\sep  \oplus_{i \in I} \{ \mathtt{l}_i : A_i  \}  
 \sep \with_{i \in I} \{ \mathtt{l}_i : A_i \}  \sep ! A \sep   ? A 
 \sep  \with A \sep \oplus A  
\end{array}
\]
\end{definition}

\noindent
The multiplicative units  $\bot$ and  $\onef$ are used to type closed session endpoints.
We use $A \otimes B$ to type a name that first outputs a name of type $A$ before proceeding as specified by $B$.
Similarly, $A \ampy B $ types a name that first inputs a name of type $A$ before proceeding as specified by $B$.
Then, $! A$ types a name that repeatedly provides a service specified by $A$.
Dually, $ ? A $ is the type of a name that can connect to a server offering $A$.
Types 
$\oplus_{i \in I} \{ \mathtt{l}_i : A_i  \}$ and $\with_{i \in I} \{ \mathtt{l}_i : A_i \}$
are assigned to names that can select and offer a labelled choice, respectively.
Then we have the two modalities introduced in~\cite{CairesP17}.
We use $\with A$ as the type of a (non-deterministic) session that \emph{may  produce} a behaviour of type $A$.
Dually, $\oplus A$ denotes the type of a session that \emph{may consume} a behaviour of type $A$.

The two endpoints of a  session should be \emph{dual} to ensure  absence of communication errors. The dual of a type $A$ is denoted $\dual{A}$. 
Duality corresponds to negation $(\cdot)^\bot$ in  linear logic~\cite{CairesP17}. 

\begin{definition}[Duality]
\label{def:duality}
Duality on types is given by:
\[
\small
\begin{array}{l}
\begin{array}{c@{\hspace{.5cm}}c@{\hspace{.5cm}}c@{\hspace{.5cm}}c@{\hspace{.5cm}}c}
\dual{\onef}  =  \bot 
&
\dual{\bot}    =  \onef
&
\dual{A\otimes B}   = \dual{A} \ampy \dual{B}
&
\dual{\oplus_{i \in I} \{ \mathtt{l}_i : A_i  \}}   = \with_{i \in I} \{ \mathtt{l}_i : \dual{A_i} \}
&
\dual{\oplus A}    =    \with \dual{A}
\\
\dual{!A}  =  ?A 
 &
 \dual{?A}    =  !A
 &
\dual{A \ampy B}   = \dual{A} \otimes \dual{B} 
&
\dual{\with_{i \in I} \{ \mathtt{l}_i : A_i \}}   = \oplus_{i \in I} \{ \mathtt{l}_i : \dual{A_i}  \}
&
 \dual{\with A}   =   \oplus \overline {A}
\end{array}
\end{array}
\]
\end{definition}

\noindent
Judgements are of the form $P \vdash \Delta; \Theta$, where $P$ is a process, $\Delta$ is the linear context, and $\Theta$ is the unrestricted context.
Both $\Delta$ and $\Theta$ contain assignments of types to names, but satisfy different substructural principles: while $\Theta$ satisfies weakening, contraction and exchange, $\Delta$ only satisfies exchange. The empty context is denoted `$\cdot$'. 
We write $\with \Delta$ to denote that all assignments in $\Delta$ have a non-deterministic type, i.e., $\Delta = w_1{:}\with A_1, \ldots, w_n{:}\with A_n$, for some $A_1, \ldots, A_n$. 
The typing judgement $P \vdash \Delta$ corresponds to the logical sequent for classical linear logic, which can be recovered by erasing processes and name assignments. 

Typing rules for processes in \figref{fig:trulespi} correspond to proof rules in linear logic;
we discuss some of them.
Rule~$\redlab{Tid}$ interprets the identity axiom using the forwarder process. 
Rules~$\redlab{T \onef}$ and $\redlab{T \bot}$ type the process constructs for session termination.
 Rules~$\redlab{T\otimes}$ and $\redlab{T \ampy }$ type output and input of a name along a session, resp. 
The last four rules are used to type process constructs related to non-de\-ter\-mi\-nism and failure. 
Rules~$ \redlab{T \with_d^x}$ and $ \redlab{T \with^x}$ introduce a session of type $\with A$, which may produce a behaviour of type $A$: while the former rule covers the case in which $x:A$ is indeed available, the latter rule formalises the case in which $x:A$ is not available (i.e., a failure).
Given a sequence of names $\widetilde{w} = w_1, \ldots, w_n$,  Rule~$\redlab{T \oplus^x_{\widetilde{w}}}$ accounts for the possibility of not being able 
to consume the session $x:A$  by considering sessions different from $x$ as potentially not available. 
Rule~$\redlab{T \oplus }$ expresses non-deterministic choice of processes $P$ and $Q$ that implement non-deterministic behaviours only.
Finally, Rule~$\redlab{T\oplus_i} $ and $ \redlab{T\with} $ correspond, resp., to selection and branching: the former provides a selection of behaviours along $x$ as long as  $P$ is guarded with the $i$-{th} behaviour; the latter offers a labelled choice where each behaviour $A_i$ is matched to a corresponding $P_i$.

\begin{figure*}[!t]
\small
\begin{mdframed}[style=alttight]
\centering
\begin{prooftree}
\AxiomC{\mbox{\ }}
    \LeftLabel{\redlab{Tid}}
    \UnaryInfC{$[x \leftrightarrow y] \vdash x{:}A, y{:}\dual{A}; \,\Theta$}
    \DisplayProof
\hfill
    \AxiomC{\mbox{\ }}
    \LeftLabel{\redlab{T\onef}}
    \UnaryInfC{$x.\dual{\close} \vdash x: \onef; \,\Theta$}
    \DisplayProof
\hfill
    \AxiomC{$P\vdash \Delta ; \Theta $}
    \LeftLabel{\redlab{T\bot}}
    \UnaryInfC{$x.\close;P \vdash x{:}\bot, \Delta; \,\Theta$}
\end{prooftree}
  \vspace{-5mm}
  \begin{prooftree}
    \AxiomC{$P \vdash  \Delta, y:{A}; \,\Theta \quad Q \vdash \Delta', x:B; \,\Theta $}
    \LeftLabel{\redlab{T\otimes}}
    \UnaryInfC{$\dual{x}(y). (P \para Q) \vdash  \Delta, \Delta', x: A\otimes B; \,\Theta$}
\DisplayProof
\hfill 
    \AxiomC{$P \vdash \Delta, y:C, x:D; \,\Theta$}
    \LeftLabel{\redlab{T\ampy}}
    \UnaryInfC{$x(y).P \vdash \Delta, x: C\ampy D; \,\Theta $}
    \end{prooftree}
  \vspace{-6mm}
\begin{prooftree}
       \AxiomC{$P \;{ \vdash} \widetilde{w}:\with\Delta, x:A; \,\Theta$}
    \LeftLabel{\redlab{T\oplus^x_{\widetilde{w}}}}
    \UnaryInfC{$x.\some_{\widetilde{w}};P \vdash \widetilde{w}{:}\with\Delta, x{:}\oplus A; \,\Theta$}
    \DisplayProof
    \hfill
    \AxiomC{$P \vdash \Delta, x:A; \,\Theta$}
    \LeftLabel{\redlab{T\with_d^x}}
    \UnaryInfC{$x.\dual{\some};P \vdash \Delta, x :\with A; \,\Theta$}
    \end{prooftree}
      \vspace{-6mm}
    \begin{prooftree}
    \AxiomC{}
    \LeftLabel{\redlab{T\with^x}}
    \UnaryInfC{$x.\dual{\none} \vdash x :\with A; \,\Theta$}
    \DisplayProof
    \hfill 
    \AxiomC{$P \vdash \with\Delta; \,\Theta \qquad Q  \;{\vdash} \with\Delta; \,\Theta$}
    \LeftLabel{\redlab{T\oplus}}
    \UnaryInfC{$P\oplus Q \vdash \with\Delta; \,\Theta$}
 \end{prooftree}
   \vspace{-6mm}
   \begin{prooftree}
    \AxiomC{$ P  \vdash  \Delta, x: A_i ; \Theta $}
    \LeftLabel{\redlab{T\oplus_i}}
    \UnaryInfC{$\case{x}{i};P \vdash  \Delta,  x: \oplus_{i \in I} \{ \mathtt{l}_i : A_i  \}; \Theta $}
\DisplayProof
\hfill
    \AxiomC{$ P_i \vdash \Delta , x: A_i ; \Theta \quad (\forall i \in I)$}
    \LeftLabel{\redlab{T\with}}
    \UnaryInfC{$ \choice{x}{i}{I}{i}{P_i} \vdash \Delta , x: \with_{i \in I} \{ \mathtt{l}_i : A_i \} ; \Theta $}
\end{prooftree}
  \vspace{-6mm}
\begin{prooftree}
    \AxiomC{$P \vdash \Delta ; x : A , \Theta$}
    \LeftLabel{\redlab{T?}}
    \UnaryInfC{$P \vdash \Delta ,x :? A ; \Theta$}
\DisplayProof
\hfill
\AxiomC{$ P \vdash y: A; \Theta$}
    \LeftLabel{\redlab{T!}}
    \UnaryInfC{$!x(y).P \vdash x: !A; \Theta $}
\DisplayProof
\hfill
       \AxiomC{$P \;{ \vdash} \Delta , y: A ; x: A , \Theta $}
    \LeftLabel{\redlab{Tcopy}}
    \UnaryInfC{$ \outsev{x}{y}.P \vdash \Delta ; x: A , \Theta $}
\end{prooftree}
\end{mdframed}
  \vspace{-5mm}
\caption{Typing rules for \spi.}
\label{fig:trulespi}
\end{figure*}

The type system enjoys type preservation, a result that
follows from the cut elimination property in linear logic; it ensures that the observable interface of a system is invariant under reduction.
The type system also ensures other properties for well-typed processes (e.g. global progress, strong normalisation,  and confluence); see~\cite{CairesP17} for details.

\begin{theorem}[Type Preservation~\cite{CairesP17}]
If $P \vdash \Delta; \,\Theta$ and $P \red Q$ then $Q \vdash \Delta; \,\Theta$.
\end{theorem}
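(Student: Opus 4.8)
The statement is the standard type preservation (subject reduction) property for \spi. Since a judgment $P \vdash \Delta;\Theta$ is precisely a classical linear logic sequent decorated with a process, the result is, in essence, an instance of cut elimination, and the plan is to follow the development of \cite{CairesP17}. First I would establish an auxiliary lemma stating that typing is preserved under structural congruence: if $P \vdash \Delta;\Theta$ and $P \equiv P'$ then $P' \vdash \Delta;\Theta$. This is a routine induction on the derivation of $P \equiv P'$, inspecting each axiom of $\equiv$ (associativity and commutativity of $\para$, scope extrusion for $\nu$, the laws for $\oplus$, garbage collection of $\zero$); every case is a local rearrangement of the typing derivation, using commutativity of the cut-like rules and the exchange structure of $\Delta$. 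I would also record a renaming lemma: typing is stable under injective renaming of free names, which is needed for the forwarder reduction.

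The main argument is an induction on the derivation of $P \red Q$, split on the last rule used. The \emph{contextual cases} ($P\para R \red Q\para R$ from $P \red Q$; $(\nu y)P \red (\nu y)Q$; $P \oplus R \red Q \oplus R$) follow directly from the induction hypothesis, because the corresponding typing rules (cut/composition, restriction, $\redlab{T\oplus}$) are compositional in their premises and thread the contexts unchanged; the congruence-closure clause appeals to the lemma above. The \emph{principal communication cases} each match a pair of dual connectives meeting at a cut: for each such reduction I would invert the typing derivation of the redex to expose $(\nu x)(P_1 \para P_2)$ with $P_1$ ending in the right-introduction of some $A$ and $P_2$ in the left-introduction of $\dual A$, and then reassemble a derivation of the reduct from smaller cuts. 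Concretely, $\overline{x}(y).Q \para x(y).P$ pairs $\redlab{T\otimes}$ with $\redlab{T\ampy}$ (yielding cuts on $A$ and $B$); $x.\overline{\close} \para x.\close;P$ pairs $\redlab{T\onef}$ with $\redlab{T\bot}$; $\case{x}{i};Q \para \choice{x}{i}{I}{i}{P_i}$ pairs $\redlab{T\oplus_i}$ with $\redlab{T\with}$; $!x(y).Q \para \outsev{x}{y}.P$ pairs $\redlab{T!}$ with $\redlab{Tcopy}$, duplicating the server under the exponential $\Theta$; and the forwarder rule $(\nu x)([x \leftrightarrow y]\para P) \red P\subst{y}{x}$ uses $\redlab{Tid}$ together with the renaming lemma.

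The genuinely new work is in the two reductions involving the non-deterministic modalities $\with A / \oplus A$. For $x.\overline{\some};P \para x.\some_{(w_1,\dots,w_n)};Q \red P \para Q$, inversion gives $\redlab{T\with_d^x}$ typing the left premise with $x:\with A$ and $\redlab{T\oplus^x_{\widetilde w}}$ typing the right with $x:\oplus A$ and $\widetilde w:\with\Delta'$; the reduct is then typed by a cut on $A$, all other assignments being carried over unchanged. For $x.\overline{\none} \para x.\some_{(w_1,\dots,w_n)};Q \red w_1.\overline{\none} \para \cdots \para w_n.\overline{\none}$, the left premise must be typed by $\redlab{T\with^x}$ (so its only session is $x:\with A$) and the right by $\redlab{T\oplus^x_{\widetilde w}}$, whose premise forces the residual linear context of $Q$ to be exactly $\widetilde w:\with\Delta'$; hence each $w_i.\overline{\none}$ is typable by $\redlab{T\with^x}$ and their parallel composition has precisely the context $\with\Delta'$ exposed by the cut. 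I expect this to be the main obstacle: one must check that the side condition ``$\widetilde w:\with\Delta$'' in $\redlab{T\oplus^x_{\widetilde w}}$ is exactly the invariant needed for failure propagation to stay well-typed, and, in tandem with the principal cases, that the commuting conversions which push $\overline{\none}$ and $\overline{\some}$ past other prefixes also preserve typing — i.e. the usual non-principal steps of the cut-elimination argument must be carried out alongside the principal ones.
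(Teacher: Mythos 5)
The paper does not actually prove this theorem: it is imported by citation from the work of Caires and P\'erez, so there is no in-paper proof to compare against. Your sketch is a faithful reconstruction of the standard argument used there---subject reduction as cut reduction, with auxiliary lemmas for structural congruence and renaming, and with the two extra principal cases for the non-deterministic modalities $\with A/\oplus A$ handled exactly as you describe (the only implicit dependency worth flagging is that the cut/composition rule is not displayed in the paper's typing figure, although it is used in the appendix, so the inversion step you rely on presupposes that implicit rule).
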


\subsection[An Auxiliary Calculus With Sharing]{\lamrsharfailunres: An Auxiliary Calculus With Sharing}\label{ssec:lamshar}

To facilitate the encoding of \lamrfailunres into \spi, we define \lamrsharfailunres: an auxiliary calculus whose constructs are inspired by the work of Gundersen et al.~\cite{DBLP:conf/lics/GundersenHP13}, Ghilezan et al.~\cite{GhilezanILL11}, and Kesner and Lengrand~\cite{DBLP:journals/iandc/KesnerL07}. 
The syntax of \lamrsharfailunres only modifies the syntax of terms, which is defined by the grammar below; variables ${x}[*]$, bags $B$, and expressions $\expr{M}$ are  as in \Cref{def:rsyntaxfailunres}.
\begin{align*}
\mbox{(Terms)} \quad  M,N, L ::= &
~~  {x}[*] 
\sep \lambda x . (M[ {\widetilde{x}} \leftarrow  {x}]) 
\sep (M\ B) \sep M \linexsub{N /  {x}} 
\sep M \unexsub{U / x}
\\
    \sep &~~
\fail^{\widetilde{x}}\sep 
M [  {\widetilde{x}} \leftarrow  {x} ] 
\sep (M[ {\widetilde{x}} \leftarrow  {x}])\esubst{ B }{ x } 
\end{align*}
\noindent
We consider the {\it sharing  construct} $M[\widetilde{x}\leftarrow x]$ and two   kinds of explicit substitutions: the {\it explicit linear substitution}, written $M\linexsub{N/ {x}}$, and the {\it explicit unrestricted substitution}, written $M\unexsub{U/\unvar{x}}$. 
The term $M[ {\widetilde{x}}\leftarrow  {x}]$ defines the sharing of variables $ {\widetilde{x}}$ occurring in $M$ using the linear variable $ {x}$. 
We shall refer to $ {x}$ as \emph{sharing variable} and to $ {\widetilde{x}}$ as \emph{shared variables}. A linear variable is only allowed to appear once in a term. 
Notice that $ {\widetilde{x}}$ can be empty: $M[\leftarrow  {x}]$ expresses that $ {x}$ does not share any variables in $M$. 
As in $\lamrfailunres$, the term $\fail^{ {\widetilde{x}}}$ explicitly accounts for failed attempts at substituting the variables in $ {\widetilde{x}}$.

We summarise some requirements. 
In  $M [ \widetilde{x} \leftarrow x ]$, we require: (i)~every $x_i \in \widetilde{x}$ occurs exactly once in $M$ and that (ii)~$x_i$ is not a sharing variable.
The occurrence of $x_i$ can appear within the fail term $\fail^{\widetilde{y}}$, if $x_i \in \widetilde{y}$.
In the explicit linear substitution $M \linexsub{ N /  {x}}$, we require: the variable $x$ has to occur in $M$; $x$ cannot be a sharing variable; and $x$ cannot be in an explicit linear substitution occurring in $M$;  {all free {\em linear} occurrences of $x$ in $M$ are bound}. In the explicit unrestricted substitution $M \unexsub{ U / \unvar{x}}$, we require: all free {\em unrestricted} occurrences of $x$  in $M$ are bound; 
$\unvar{x}$ cannot be in an explicit unrestricted substitution occurring in $M$. 
This way, e.g.,  $M'\linexsub{L/ {x}}\linexsub{N/ {x}}$ 
and $M'\linexsub{U'/\unvar{x}}\linexsub{U/\unvar{x}}$ 
are not valid terms in $\lamrsharfailunres$.

The following congruence will be important when proving encoding correctness.

\begin{definition}\label{def:rsPrecongruencefailure}
The congruence $\pequiv$ for \lamrsharfailunres on terms and expressions  is given by the identities below.
{\small
\[
\begin{array}{rcl}
    M \unexsub{U / \unvar{x}}  &\pequiv& M, \ x \not \in M\\
    
    (MB) \linexsub{N/x} & \pequiv& (M\linexsub{N/x})B, \ x \not \in \lfv{B} \\
    
    (MB) \unexsub{U / \unvar{x}} & \pequiv & (M\unexsub{U / \unvar{x}})B, \ x \not \in \lfv{B} \\
    
    (MA)[\widetilde{x} \leftarrow x]\esubst{B}{x}  &\pequiv& (M[\widetilde{x} \leftarrow x]\esubst{B}{x})A, \   x_i \in \widetilde{x} \Rightarrow x_i \not \in \lfv{A}\\
    
    M[\widetilde{y} \leftarrow y]\esubst{A}{y}[\widetilde{x} \leftarrow x]\esubst{B}{x} & \pequiv & (M[\widetilde{x} \leftarrow x]\esubst{B}{x})[\widetilde{y} \leftarrow y]\esubst{A}{y},\  
    \begin{aligned}
    x_i \in \widetilde{x} \Rightarrow x_i \not \in \lfv{A}, \\
       y_i \in \widetilde{y} \Rightarrow y_i \not \in \lfv{B}
    \end{aligned}
     \\[1mm]
    
    M \linexsub{N_2/y}\linexsub{N_1/x} &\pequiv & M\linexsub{N_1/x}\linexsub{N_2/y}, x \not \in \lfv{N_2}, y\notin \lfv{N_1} \\[1mm]
    
    M \unexsub{U_2 / \unvar{y}}\unexsub{U_1 / \unvar{x}} & \pequiv & M\unexsub{U_1 / \unvar{x}}\unexsub{U_2 / \unvar{y}}, x \not \in \lfv{U_2}, y\notin \lfv{U_1} \\[1mm]
    
    C[M]   &\pequiv& C[M'], \ \text{with } M \pequiv M' \\
    
    D[\expr{M}]  &\pequiv& D[\expr{M}'], \ \text{with } \expr{M} \pequiv \expr{M}'
\end{array}
\]
}
\end{definition}
The first rule states that we may remove unneeded unrestricted substitutions when the variable in concern does not appear within the term. The next three identities enforce that bags can always be moved in and out of all forms of explicit substitution, which are  useful manipulate expressions and to form a redex for Rule \redlab{R:Beta}. The other rules deal with permutation of explicit substitutions and contextual closure.

Well-formedness for \lamrsharfailunres, based on intersection types, is defined as in \secref{sec:lam_types}; 
\iftypes
see \cite{DBLP:journals/corr/abs-2111}.
\else 
see  \appref{app:ssec:lamshar}.
\fi 


\subsection[First Step]{Encoding \lamrfailunres into \lamrsharfailunres} \label{ssec:first_enc}
We define an encoding  $\recencodopenf{\cdot}$ from well-formed terms in $\lamrfailunres$ into $\lamrsharfailunres$. This encoding relies on an intermediate encoding $\recencodf{\cdot}$ on $\lamrfailunres$-terms. 
 
 \begin{notation}
Given a term $M$ such that  $\#(x , M) = k$
and a sequence of pairwise distinct fresh variables $\widetilde{x} = x_1, \ldots, x_k$ we write $M \linsub{\widetilde{x}}{x}$ or $M\linsub{x_1,\cdots, x_k}{x}$ to stand for 
$M\linsub{x_1}{x}\cdots \linsub{x_k}{x}$, i.e.,  a simultaneous linear substitution whereby each distinct linear occurrence of $x$ in $M$ is replaced by a distinct $x_i \in \widetilde{x}$. 
Notice that each $x_i$ has the same type as $x$.
We use (simultaneous) linear substitutions to force all bound linear variables in \lamrfailunres to become shared variables in \lamrsharfailunres.
\end{notation}

\begin{definition}[From $\lamrfailunres$ to $\lamrsharfailunres$]
\label{def:enctolamrsharfailunres}
    Let $M \in \lamrfailunres$.
    Suppose $\Theta ; \Gamma \wfdash {M} : \tau$, with
    $\dom{\Gamma} = \llfv{M}=\{ {x}_1,\cdots, {x}_k\}$ and  $\#( {x}_i,M)=j_i$.  
    We define $\recencodopenf{M}$ as
    \begin{equation*}
     \recencodopenf{M} = 
    \recencodf{M\linsub{ {\widetilde{x}_{1}}}{ {x}_1}\cdots \linsub{ {\widetilde{x}_k}}{ {x}_k}}[ {\widetilde{x}_1}\leftarrow  {x}_1]\cdots [ {\widetilde{x}_k}\leftarrow  {x}_k] 
     \end{equation*}
       where  $ {\widetilde{x}_i}= {x}_{i_1},\cdots,  {x}_{{j_i}}$
       and the encoding $\recencodf{\cdot}: \lamrfailunres \to \lamrsharfailunres$  is defined in~\figref{fig:auxencfailunres} on $\lamrfailunres$-terms. 
       The encoding $\recencodopenf{\cdot}$ extends homomorphically to expressions.
    \end{definition}

The encoding $\recencodopenf{\cdot}$  converts $n$ occurrences of   $x$ in a term into $n$ distinct variables $x_1, \ldots, x_n$.
The sharing construct coordinates them by constraining each to occur exactly once within a term. 
We proceed in two stages. 
First, we share all linear free linear variables using $\recencodopenf{\cdot}$: this ensures that free variables are replaced by shared variables which are then bound by the sharing construct. 
Second, we apply the encoding $\recencodf{\cdot}$ on the corresponding term. 
The encoding is presented in \figref{fig:auxencfailunres}:
$ \recencodf{\cdot}$ maintains $x[i]$ unaltered, and acts homomorphically over  concatenation  of bags   and explicit substitutions. {The encoding renames bound variables with bound shared variables.}
As we will see, this will enable a tight operational correspondence result with $\spi$.
\iftypes 
In \cite{DBLP:journals/corr/abs-2111}
\else 
In \appref{app:encodingone} 
\fi 
we  establish the correctness of   $\recencodopenf{\cdot}$.

\begin{figure*}[t]
\begin{mdframed}[style=alttight]
\small
\[
\begin{array}{c@{\hspace{.5cm}}c@{\hspace{.5cm}}c@{\hspace{.5cm}}c}
   \recencodf{  {x}  }  =   {x}  &
    \recencodf{ {x}[i]  }  =  {x}[i]  & \recencodf{  \oneb  } =  \oneb   &\\
     \recencodf{\banged{\oneb}} = \banged{\oneb}&  \recencodf{  \fail^{\widetilde{x}} } = \fail^{\widetilde{x}} & \recencodf{  M\ B }  =  \recencodf{M}\ \recencodf{B} 
& \\
    \recencodf{\banged{\bag{M}}} = \banged{\bag{M}}&  \recencodf{  \bag{M}\cdot C}  =  \bag{ \recencodf{M}} \cdot \recencodf{C}& \recencodf{  C \bagsep U  }  = \recencodf{C} \bagsep \recencodf{ U }&\\
  \recencodf{U \concat V} = U \concat V &  \recencodf{  M \linexsub{N /  {x}} }  =  \recencodf{M} \linexsub{ \recencodf{N} /  {x}}
     &
     \recencodf{ M \unexsub{U / \unvar{x}} }  =  \recencodf{M} \unexsub{ \recencodf{U} / \unvar{x}}&
\end{array}
\]
\vspace{-5mm}
\begin{align*}
	    \recencodf{  \lambda x . M  }  &=   \lambda x . (\recencodf{M\langle  {x}_1 , \cdots ,  {x}_n /  {x}  \rangle} [ {x}_1 , \cdots ,  {x}_n \leftarrow  {x}])
  \quad  \text{ $\#( {x},M) = n$, each $ {x}_i$ is fresh}
\\
   \recencodf{  M \esubst{ C \bagsep U }{ x }  } &= 
   \begin{cases}
   \displaystyle\sum_{C_i \in \perm{\recencodf{ C  } }}\hspace{-5mm}\recencodf{ M \langle  {\widetilde{x}}/  {x}  \rangle } \linexsub{C_i(1)/ {x}_1} \cdots \linexsub{C_i(k)/ {x}_k}\unexsub{ U/ \unvar{x}}, ~\text{if $\#( {x},M) = \size{C} = k$} \\
    \recencodf{M\langle  {x}_1, \cdots ,  {x}_k /  {x}  \rangle} [  {x}_1, \cdots ,  {x}_k \leftarrow  {x}] \esubst{ \recencodf{ C \bagsep U } }{ x }, ~ \text{if }\#( {x},M) = k\geq 0
     \end{cases} 
\end{align*}
\end{mdframed}
\vspace{-5mm}
    \caption{Auxiliary Encoding: \lamrfailunres into \lamrsharfailunres}
    \label{fig:auxencfailunres}
\end{figure*}


\begin{example}\label{ex:id_enc} We apply the encoding $\recencodf{\cdot}$ in some of the $\lamrfailunres$-terms  from Example~\ref{ex:id_term}: for simplicity, we assume that $N$ and $U$ have no free variables.
 \[{\small 
    \begin{aligned}
    \recencodf{(\lambda x.x)\ \bag{N}\bagsep U}&=\recencodf{\lambda x. x}\recencodf{\bag{N}\bagsep U}=\lambda x. x_1[x_1\leftarrow x]\bag{\recencodf{N}}\bagsep \recencodf{U}\\
   \recencodf{(\lambda x. x[1] ) \oneb \bagsep \bag{N}^! \concat U}&=\recencodf{(\lambda x. x[1]}\recencodf{\oneb \bagsep \bag{N}^! \concat U}=(\lambda x. x[1] [ \leftarrow x] ) \oneb \bagsep \bag{\recencodf{N}}^! \concat \recencodf{U}
    \end{aligned}
    }
\]

\end{example}




\subsection[Second Step]{Encoding $\lamrsharfailunres$ into $\spi$}\label{ssec:second_enc}



We now define our encoding of $\lamrsharfailunres$ into $\spi$, and establish its correctness. 

\begin{notation}
To help illustrate the behaviour of the encoding, 
 we use the names $x$, $\linvar{x}$, and $\banged{x}$ to denote three distinct channel names: while $\linvar{x}$ is the channel that performs the linear substitution behaviour of the encoded term, channel $\banged{x}$ performs the unrestricted behaviour. 
\end{notation}

\begin{definition}[From $\lamrsharfailunres$ into $\spi$: Expressions]
\label{def:enc_lamrsharpifailunres}
Let $u$ be a name.
The encoding $\piencodf{\cdot }_u: \lamrsharfailunres \rightarrow \spi$ is defined in \figref{fig:encfailunres}.
\end{definition}

Every (free) variable $x$  in an $\lamrsharfailunres$ expression becomes a name $x$ in its corresponding \spi process.
As customary in encodings of $\lambda$ into $\pi$, we use a name $u$ to provide the behaviour of the encoded expression. 
In our case, $u$ is a non-deterministic session: the encoded expression can be effectively available or not; this is signalled by prefixes $u.\overline{\some}$
and 
$u.\overline{\none}$, respectively. 




We discuss the most salient aspects of the encoding in \figref{fig:encfailunres}.
   \begin{itemize}
   \item 
    While linear variables are encoded as in \cite{DBLP:conf/fscd/PaulusN021}, the encoding of an unrestricted variable $x[j]$, not treated in \cite{DBLP:conf/fscd/PaulusN021}, is much more interesting: it first connects to a server along channel $x$ via a request $ \outsev{\banged{x}}{{x_i}}$ followed by a selection on $ {x}_i.l_{j}$, which takes the $j$-{th} branch.
   
    \item The encoding of   $\lambda x. M[\widetilde{x} \leftarrow x] $ confirms its behaviour first followed by the receiving of a channel $x$. The channel $x$ provides a linear channel $\linvar{x}$ and an unrestricted channel $\banged{x}$ for dedicated substitutions of the linear and unrestricted bag components.

    \item We encode $M\, (C \bagsep U)$ as  a non-deterministic sum: an application involves a choice in the order in which the elements of $C$ are substituted. 
    \item 
    The encoding of $C \bagsep U$ synchronises with the encoding of ${\lambda x . M[\widetilde{x} \leftarrow x]}$. The channel $ \linvar{x}$ provides the linear behaviour of the bag $C$ while $\banged{x}$ provides the behaviour of   $U$; this is done by guarding the encoding of $U$ with a server connection such that every time a channel synchronises with $!\banged{x} (x_i)$ a fresh copy of $U$ is spawned.
    
    \item The encoding of ${\bag{M} \cdot\, C}$ synchronises with the encoding of ${M[\widetilde{x} \leftarrow x]}$, just discussed. The name $y_i$ is used to trigger a failure in the computation if there is a lack of elements in the encoding of the bag.
    \item
    The encoding of ${M[\widetilde{x} \leftarrow x]}$ first confirms the availability of the linear behaviour along $\linvar{x}$. Then it sends a name $y_i$,  which is used to collapse the process in the case of a failed reduction. 
    Subsequently, for each shared variable, the encoding receives a name, which will act as an occurrence of the shared variable. 
    At the end, a failure prefix on $x$ is used to signal that there is no further  information to send over.
    \item 
    The encoding of $U$ synchronises with the last half encoding of ${x[j]}$; the name $x_i$ selects the $j$-th term in the unrestricted bag.
    \item 
    The encoding of ${ M \linexsub{N / x}} $ is the composition of the encodings of $M$ and $N$, where we await a confirmation of a behaviour along the variable that is being substituted.
    \item 
     ${ M \unexsub{U / \unvar{x}}} $ is encoded as the composition of the encoding of $M$ and a server guarding the encoding of $U$: in order for $\piencodf{M}_u$ to gain access to $\piencodf{U}_{x_i}$ it must first synchronise with the server channel $\banged{x}$ to spawn a fresh copy of $U$.
    \item The encoding of ${\expr{M}+\expr{N} }$ homomorphically preserves non-determinism. 
    Finally, the encoding of $\fail^{x_1 , \cdots , x_k}$ simply triggers failure on $u$ and on each of $x_1 , \cdots , x_k$.
\end{itemize}


\begin{figure*}[t!]
\begin{mdframed}[style=alttight]
\vspace{-2mm}
\small
\[
	\begin{array}{rl}
	   \piencodf{ {x}}_u &\hspace{-2mm} =  
	   {x}.\overline{\some} ; [ {x} \leftrightarrow u]  
	   \\[1mm]
   \piencodf{{x}[j]}_u & \hspace{-2mm}= 
   \outsev{\banged{x}}{{x_i}}. {x}_i.l_{j}; [{x_i} \leftrightarrow u] 
   \\[1mm]
    \piencodf{\lambda x.M[ {\widetilde{x}} \leftarrow  {x}]}_u & \hspace{-2mm}= 
    u.\overline{\some}; u(x). x.\overline{\some}; x(\linvar{x}). x(\banged{x}). x. \close ; \piencodf{M[ {\widetilde{x}} \leftarrow  {x}]}_u
    \\[1mm]
          \piencodf{ M[ {\widetilde{x}} \leftarrow  {x}] \esubst{ C \bagsep U }{ x} }_u &\hspace{-2mm} =\!\! \displaystyle  
      \bigoplus_{C_i \in \perm{C}}\!\! (\nu x)( x.\overline{\some}; x(\linvar{x}). x(\banged{x}). x. \close ;\piencodf{ M[ {\widetilde{x}} \leftarrow  {x}]}_u \para \piencodf{ C_i \bagsep U}_x )  
      \\[1mm]
  \piencodf{M (C \bagsep U)}_u &\hspace{-2mm} =  \displaystyle
  \bigoplus_{C_i \in \perm{C}} (\nu v)(\piencodf{M}_v \para v.\some_{u , \llfv{C}} ; \outact{v}{x} . ([v \leftrightarrow u] \para \piencodf{C_i \bagsep U}_x ) )  
  \\[1mm]    
  \piencodf{ C \bagsep U }_x & \hspace{-2mm}=  
  x.\some_{\llfv{C}} ; \outact{x}{\linvar{x}}.\big( \piencodf{ C }_{\linvar{x}} \para \outact{x}{\banged{x}} .( !\banged{x} (x_i). \piencodf{ U }_{x_i} \para x.\overline{\close} ) \big)
    \\[1mm]
        \piencodf{{\bag{M}} \cdot {C}}_{\linvar{x}} & \hspace{-2mm}=
       \linvar{x}.\some_{\llfv{\bag{M} \cdot C} } ; \linvar{x}(y_i). \linvar{x}.\some_{y_i, \llfv{\bag{M} \cdot C}};\linvar{x}.\overline{\some} ; \outact{\linvar{x}}{x_i}
       .\\[1mm]
       & \qquad (x_i.\some_{\llfv{M}} ; \piencodf{M}_{x_i} \para \piencodf{C}_{\linvar{x}} \para y_i. \overline{\none}) 
 \\[1mm]
    \piencodf{{\oneb}}_{\linvar{x}} &\hspace{-2mm} = 
   \linvar{x}.\some_{\emptyset} ; \linvar{x}(y_n). ( y_n.\overline{\some};y_n . \overline{\close} \para \linvar{x}.\some_{\emptyset} ; \linvar{x}. \overline{\none}) 
   \\[1mm]
   \piencodf{\banged{\oneb}}_{x} &\hspace{-2mm} =  x.\overline{\none} 
   \\[1mm]
   \piencodf{\banged{\bag{N}}}_{x}  &\hspace{-2mm} =  \piencodf{N}_{x}
   \\[1mm]
       \piencodf{ U }_{x}  &\hspace{-2mm}=  \choice{x}{U_i}{U}{i}{\piencodf{U_i}_{x}} 
       \\[1mm] 
      \piencodf{ M \linexsub{N /  {x}}  }_u    &\hspace{-2mm} =   
      (\nu  {x}) ( \piencodf{ M }_u \para    {x}.\some_{\llfv{N}};\piencodf{ N }_{ {x}}  )
      \\[1mm] 
      \piencodf{ M \unexsub{U / \unvar{x}}  }_u   &\hspace{-2mm} =   
      (\nu \banged{x}) ( \piencodf{ M }_u \para   ~!\banged{x}(x_i).\piencodf{ U }_{x_i} ) 
      \\[1mm]
       \piencodf{M[  \leftarrow  {x}]}_u &\hspace{-2mm} =
  \linvar{x}. \overline{\some}. \outact{\linvar{x}}{y_i} . ( y_i . \some_{u,\llfv{M}} ;y_{i}.\close; \piencodf{M}_u \para \linvar{x}. \overline{\none})
  \\[1mm]
          \piencodf{M[ {x}_1, \cdots ,  {x}_n \leftarrow  {x}]}_u &\hspace{-2mm} =
      \linvar{x}.\overline{\some}. \outact{\linvar{x}}{y_1}. \big(y_1 . \some_{\emptyset} ;y_{1}.\close;\zero   \para 
      \\[1mm]
      & \qquad \linvar{x}.\overline{\some};\linvar{x}.\some_{u, (\llfv{M} \setminus  {x}_1 , \cdots ,  {x}_n )};\linvar{x}( {x}_1) . \piencodf{M[ {x}_2,\cdots ,  {x}_n \leftarrow  {x}]}_u \big)
   \\[1mm] 
     \piencodf{\expr{M}+\expr{N} }_u    &\hspace{-2mm} =  \piencodf{ \expr{M} }_u \oplus \piencodf{ \expr{N} }_u
     \\[1mm]
   \piencodf{\fail^{x_1 , \cdots , x_k}}_u  &\hspace{-2mm} = 
   u.\overline{\none} \para x_1.\overline{\none} \para \cdots \para x_k.\overline{\none} 
\end{array}
\]
\end{mdframed}
\vspace{-2.0ex}
    \caption{Encoding \lamrsharfailunres into \spi (cf. Def.~\ref{def:enc_lamrsharpifailunres}).}
    \label{fig:encfailunres}
\end{figure*}

\begin{example}\label{ex:id_pienc}[Cont. \cref{ex:id_term}]
We illustrate the encoding $\piencodf{\cdot }$ on the $\lamrsharfailunres$-terms/bags occurring in $ M_1=\lambda x.x_1[x_1\leftarrow x](\bag{\recencodf{N}}\bagsep \recencodf{U})$ as below:


\(
\begin{aligned}
\piencodf{\lambda x.x_1[x_1\leftarrow x]}_v&=v.\overline{\some}; v(x). x.\overline{\some}; x(\linvar{x}).x(\banged{x}). x\sclose ;\piencodf{x_1[x_1\leftarrow x]}_v
\end{aligned}
\)

 \(
\begin{aligned}
\piencodf{\bag{\recencodf{N}}\bagsep \recencodf{U}}_x& =x.\some_{\llfv{\bag{\recencodf{N}}}}; \outact{x}{\linvar{x}} .( \piencodf{ \recencodf{N}}_{\linvar{x}} \para\outact{x}{\banged{x}} .( !\banged{x} (x_i). \piencodf{ \recencodf{U} }_{x_i} \para \overline{x}\sclose ) )
\end{aligned}
\)



{
\(
\begin{aligned}
  \piencodf{\recencodf{M_1}}_{u}&= 
\piencodf{\lambda x. x_1[x_1\leftarrow x]\bag{\recencodf{N}}\bagsep \recencodf{U}}_{u}\\
 &=(\nu v)(\piencodf{\lambda x.x_1[x_1\leftarrow x]}_v\para v.\some_{u,\llfv{\recencodf{N}}};\outact{v}{x}.([v\leftrightarrow u]\para \piencodf{\bag{\recencodf{N}}\bagsep \recencodf{U}}_x))            
 \\
&=(\nu v)(  v.\overline{\some}; v(x). x.\overline{\some}; x(\linvar{x}). x(\banged{x}). x\sclose ;  \linvar{x}.\overline{\some}. \outact{\linvar{x}}{y_1}. (y_1 . \some_{\emptyset} ;y_{1}\sclose;\zero \para
\\
& \quad  \linvar{x}.\overline{\some};\linvar{x}.\some_{u}; \linvar{x}( {x}_1) . \linvar{x}. \overline{\some}. \outact{\linvar{x}}{y_2} .  ( y_2 . \some_{u, x_1 } ;y_{2}\sclose; \piencodf{\colthree{x_1}}_v \para \linvar{x}. \overline{\none}) )
            \para 
            \\ 
& \quad  v.\some_{u , \llfv{\recencodf{N}}};  \outact{v}{x} . ([v \leftrightarrow u] \para \\
& \quad 
 x.\some_{\llfv{\recencodf{N}}} ;  \outact{x}{\linvar{x}} .( \linvar{x}.\some_{\llfv{\recencodf{N}} } ; \linvar{x}.\some_{y_1, \llfv{ \bag{ \recencodf{N}}}};
\\
            & \quad  \linvar{x}. \overline{\some} ; \outact{\linvar{x}}{x_1} 
            .  (x_1.\some_{\llfv{\recencodf{N}}} ;\piencodf{\recencodf{N}}_{x_1} \para y_1. \overline{\none} \para  \linvar{x}.\some_{\emptyset} ;\linvar{x}(y_2).
            \\
            & \quad   ( y_2.\overline{\some};  \overline{y_2} \sclose \para \linvar{x}.\some_{\emptyset} ; \linvar{x}. \overline{\none}) )  \para  \outact{x}{\banged{x}} .( !\banged{x} (x_i). \piencodf{ \recencodf{U} }_{x_i} \para \overline{x}\sclose ) ) )
            )   
            \end{aligned}
            \)
}
\end{example}

We now encode intersection types (for  \lamrsharfailunres) into session types (for \spi).

\begin{figure}[!t]
\begin{mdframed}[style=alttight]
\vspace{-2mm}
	{\small 
\begin{align*}
 \piencodf{\unit} &= \with \onef 
 \\ 
  \piencodf{ \eta } &= \&_{\eta_i \in \eta} \{ \mathtt{l}_i ; \piencodf{\eta_i} \}  
  \\
  \piencodf{(\sigma^{k} , \eta )   \rightarrow \tau} &= \with( \dual{\piencodf{ (\sigma^{k} , \eta  )  }_{(\sigma, i)}} \ampy \piencodf{\tau}) 
  \\
  \piencodf{ (\sigma^{k} , \eta  )  }_{(\sigma, i)} &= \oplus( (\piencodf{\sigma^{k} }_{(\sigma, i)}) \otimes ((!\piencodf{\eta}) \otimes (\onef))  )
  \\[1mm]
 \piencodf{ \sigma \wedge \pi }_{(\sigma, i)} &= \overline{   \with(( \oplus \bot) \otimes ( \with  \oplus (( \with  \overline{\piencodf{ \sigma }} )  \ampy (\overline{\piencodf{\pi}_{(\sigma, i)}})))) } \\
     & = \oplus(( \with \onef) \ampy ( \oplus  \with (( \oplus \piencodf{\sigma} ) \otimes (\piencodf{\pi}_{(\sigma, i)}))))
     \\
 \piencodf{\omega}_{(\sigma, i)} & =  \quad \begin{cases}
     \overline{\with(( \oplus \bot )\otimes ( \with \oplus \bot )))} &  \text{if $i = 0$}
     \\
\overline{   \with(( \oplus \bot) \otimes ( \with  \oplus (( \with  \overline{\piencodf{ \sigma }} )  \ampy (\overline{\piencodf{\omega}_{(\sigma, i - 1)}})))) } & \text{if $i > 0$}
\end{cases}
\end{align*}
}
\vspace{-2mm}
\end{mdframed}
  \vspace{-5mm}
    \caption{Encoding of intersection types into session types  (cf. Def.~\ref{def:enc_sestypfailunres})}
    \label{fig:enc_types}
        \vspace{-5mm}
\end{figure}

\begin{definition}[From $\lamrsharfailunres$ into $\spi$: Types]
\label{def:enc_sestypfailunres}
The translation  $\piencodf{\cdot}$  in Figure~\ref{fig:enc_types} extends as follows to a  context 
$\Gamma =  {x}_1{:} \sigma_1, \cdots,  {x}_m {:} \sigma_m,  {v}_1{:} \pi_1 , \cdots ,  {v}_n{:} \pi_n$
and a context 
$\Theta =\banged{x}_1 {:} \eta_1 , \cdots , \banged{x}_k {:} \eta_k$:
\[
\begin{aligned}
\piencodf{\Gamma} &= {x}_1 : \with \overline{\piencodf{\sigma_1}} , \cdots ,   {x}_m : \with \overline{\piencodf{\sigma_m}} ,
  {v}_1:  \overline{\piencodf{\pi_1}_{(\sigma, i_1)}}, \cdots ,  {v}_n: \overline{\piencodf{\pi_n}_{(\sigma, i_n)}}\\
  \piencodf{\Theta}&=\banged{x}_1 : \dual{\piencodf{\eta_1}} , \cdots , \banged{x}_k : \dual{\piencodf{\eta_k}} 
\end{aligned}
\]

\end{definition}

\noindent
This encoding formally expresses how non-de\-termi\-nis\-tic session protocols (typed with `$\with$') capture linear and unrestricted resource consumption in \lamrsharfailunres. 
Notice that the encoding of the multiset type $\pi$ depends on two arguments (a strict type $\sigma$ and a number $i \geq 0$) which are left unspecified above.
This is crucial to represent failures in \lamrsharfailunres as typable processes in \spi. 
For instance, given $(\sigma^{j} , \eta ) \rightarrow \tau$ and $ ( \sigma^{k} , \eta)$, the well-formedness rule for application admits a mismatch 
\iftypes
($j \neq k$, see \cite{DBLP:journals/corr/abs-2111}).
\else 
($j \neq k$, cf. Rule \redlab{FS{:}app} in \figref{app_fig:wfsh_rulesunres}, \appref{app:ssec:lamshar}).
\fi 
 In our proof of type preservation, the two arguments of the encoding are instantiated appropriately. 
Notice also how the client-server behaviour of unrestricted resources appears as `$!\piencodf{\eta}$' in the encoding of the tuple type
$(\sigma^{k} , \eta)$.
With our encodings of expressions and types in place, we can now define our encoding of judgements:

\begin{definition}
If $\expr{M}$ is an \lamrsharfailunres expression such that 
$\Theta ; \Gamma \wfdash \expr{M} : \tau$
then we define the encoding of the judgement to be: 
$\piencodf{\expr{M}}_u \vdash 
\piencodf{\Gamma}, 
u : \piencodf{\tau} ; \piencodf{\Theta} $.

\end{definition}



 The correctness  of  our encoding $\piencodf{\cdot}_u:\lamrfailunres\to \spi$, stated in Theorem~\ref{thm:op_correct} 
 \iftypes
 (and detailed in~\cite{DBLP:journals/corr/abs-2111}), 
\else
 (and detailed in~\appref{app:encodingtwo}), 
 \fi 
 relies on a notion of {\it success}  for both $\lamrfailunres $  and $\spi$, given by the $\checkmark$ construct:




\begin{definition}
\label{def:Suc3}We extend the syntax of terms for $\lamrsharfailunres$  and processes for \spi with $\checkmark$:
\begin{itemize}
    \item {\bf (In \lamrsharfailunres)} \succp{\expr{M}}{\checkmark} iff
there exist  $M_1, \cdots, M_k$ such that 
$\expr{M} \red^*  M_1 + \cdots + M_k$ and
$\headf{M_j'} = \checkmark$, for some  $j \in \{1, \ldots, k\}$ and term $M_j'$ such that $M_j\pequiv  M_j'$.
\item {\bf (In \spi)} $\succp{P}{\checkmark}$ holds whenever there exists a $P'$
such that 
$P \red^* P'$
and $P'$ contains an unguarded occurrence of $\checkmark$
(i.e., an occurrence that does not occur behind a prefix).
\end{itemize}

\end{definition}


\noindent
We now state operational correctness.
 \figref{fig:my_label} illustrates the relation between completeness and soundness that the encoding satisfies: solid arrows denote reductions assumed, dashed arrows denote the application of $\piencodf{\cdot }_u$, and dotted arrows denote the existing reductions that can be implied from the results.
 
We remark that since \lamrsharfailunres satisfies the diamond property, it suffices to consider  completeness based on a single reduction ($ \expr{N}\red \expr{M}$). Soundness uses the congruence $\pequiv$ in \defref{def:rsPrecongruencefailure}.
We write $N \red_{\pequiv} N'$ iff $N \pequiv N_1 \red N_2 \pequiv N' $, for some $N_1, N_2$. Then, $\red_{\pequiv}^*$ is the reflexive, transitive closure of $\red_{\pequiv}$. For success sensitivity,  
we decree $\piencodf{\checkmark}_u = \checkmark$. We  have:

\begin{figure}
\begin{center}
	\begin{tikzpicture}[scale=.9pt]
\draw[rounded corners, color=black,fill=RedOrange!20] (0,5.5) rectangle (14.5,6.5);
\node (opcom) at (4.0,6.8){Operational Completeness};
\draw[rounded corners, color=black, fill=RoyalBlue!20] (0,2.5) rectangle (14.5,3.5);
\node (lamrfail) at (0.7,6) {$\lamrsharfailunres$:};
\node (expr1) [right of=lamrfail, xshift=.3cm] {$\mathbb{N}$};
\node (expr2) [right of=expr1, xshift=2cm] {$\mathbb{M}$};
\draw[arrow] (expr1) --  (expr2);
\node (lamrsharfail) at (0.7,3.0) {$\spi$:};
\node (transl1) [right of=lamrsharfail, xshift=.3cm] {$\piencodf{\mathbb{N}}_u$};
\node (transl2) [right of=transl1, xshift=2cm] {$Q \equiv \piencodf{\mathbb{M}}_u$};
\draw[arrow, dotted] (transl1) -- node[anchor= south,yshift=-1.25mm] {*}(transl2);
\node (enc1) at (1.65,4.5) {$\piencodf{\cdot }_u$};
\node  at (6.0,4.5) {$\piencodf{\cdot }_u$};
\node (refcomp) [right of=enc1, xshift=.9cm]{Thm~\ref{thm:op_correct} (b)};
\draw[arrow, dashed] (expr1) -- (transl1);
\draw[arrow, dashed] (expr2) -- (transl2);
\node (opcom) at (10.65,6.8){Operational Soundness};
\node (expr1shar) [right of=expr2, xshift=1.8cm] {$\mathbb{N}$};
\node (expr2shar) [right of=expr1shar, xshift=2.8cm] {$\mathbb{N'}$};
\draw[arrow, dotted] (expr1shar) -- node[anchor= south, yshift=-1.25mm] {*} (expr2shar);
\node (transl1shar) [right of=transl2, xshift=1.8cm] {$\piencodf{\mathbb{N}}_u$};
\node (transl2shar) [right of=transl1shar, xshift=1cm] {$Q$};
\node (expr3shar) [right of=transl2shar, xshift=.8cm]{$Q'\equiv \piencodf{\mathbb{N'}}_u$};
\draw[arrow,dashed] (expr2shar) --  (expr3shar);
\draw[arrow,dashed] (expr1shar) -- (transl1shar);
\node (enc2) at (8.1,4.5) {$\piencodf{\cdot }_u$};
\node  at (13.35,4.5) {$\piencodf{\cdot }_u$};
\node (refsound) [right of=enc2, xshift=1.5cm]{Thm~\ref{thm:op_correct}(c)};
\draw[arrow] (transl1shar) -- node[anchor= south, yshift=-1.25mm] {*} (transl2shar);
\draw[arrow,dotted] (transl2shar) -- node[anchor= south, yshift=-1.25mm] {*}(expr3shar);
\node at (12.45,5.7) {\small $\pequiv$ };
\end{tikzpicture}
\end{center}
    \caption{An overview  of operational soundness and completeness for $\piencodf{\cdot }_u$.}
    \label{fig:my_label}
\end{figure} 

 \begin{theorem}[Operational Correctness] \label{thm:op_correct} Let $\expr{N} $ and $ \expr{M} $ be well-formed $\lamrsharfailunres $ closed expressions.
 \begin{enumerate}[(a)]
 \item (Type Preservation) Let $B$  be a bag. We have:
         \begin{enumerate}[(i)]
        \item If $\Theta ; \Gamma \wfdash B :(\sigma^{k} , \eta )$
        then 
        $\piencodf{B}_u \wfdash  \piencodf{\Gamma}, u : \piencodf{(\sigma^{k} , \eta )}_{(\sigma, i)} ; \piencodf{\Theta}$.
        
        \item If $\Theta ; \Gamma \wfdash \expr{M} : \tau$
        then 
        $\piencodf{\expr{M}}_u \wfdash  \piencodf{\Gamma}, u :\piencodf{\tau} ; \piencodf{\Theta}$.
        \end{enumerate}
  \item (Completeness)  If $ \expr{N}\red \expr{M}$ then there exists $Q$ such that $\piencodf{\expr{N}}_u  \red^* Q \pequiv \piencodf{\expr{M}}_u$.
 \item \label{thm:op_correct_sound} (Soundness) If $ \piencodf{\expr{N}}_u \red^* Q$
then 
$Q \red^* Q'$, $\expr{N}  \red^*_{\pequiv} \expr{N}'$ 
and 
$\piencodf{\expr{N}'}_u \equiv Q'$, 
for some $Q', \expr{N}'$.
\item (Success Sensitivity)  $\succp{\expr{M}}{\checkmark}$ if, and only if, $\succp{\piencodf{\expr{M}}_u}{\checkmark}$.
 \end{enumerate}
 \end{theorem}
 
  \iftypes
 \else
  \begin{proof}
  Below we illustrate the most interesting case of the proof of soundness. Detailed proof can be found in \appref{app:encodingtwo}.
  \end{proof}
  \fi 

\begin{proof}
 All items are proven by structural induction; a detailed proof can be  
 \iftypes 
found in~\cite{DBLP:journals/corr/abs-2111}.
\else 
found in~\appref{app:encodingtwo}.
\fi 
 Below we present the most interesting case in the proof of {\em soundness}: the case when $\expr{N} =  M (C \bagsep U) $. Then, 
  $$ \piencodf{\expr{N}}_u=\piencodf{M (C \bagsep U)}_u = \bigoplus_{C_i \in \perm{C}} (\nu v)(\piencodf{M}_v \para v.\some_{u , \llfv{C}} ; \outact{v}{x} . ([v \leftrightarrow u] \para \piencodf{C_i \bagsep U}_x ) ).$$ 
  
  The proof then proceeds by induction on the number of reduction steps $k$ that can be taken from $\piencodf{\expr{N}}_u$, i.e, $\piencodf{\expr{N}}_u \red^k Q$. We will consider the case when $k \geq 1$, where for some process $R$ and non-negative integers $n, m$ such that $k = n+m$, we have the following:
            \[
            \begin{aligned}
               \piencodf{\expr{N}}_u & \red^m  \bigoplus_{C_i \in \perm{C}} (\nu v)(R \para v.\some_{u , \llfv{C}} ; \outact{v}{x} . ([v \leftrightarrow u] \para \piencodf{C_i \bagsep U}_x ) ) \red^n  Q\\
            \end{aligned}
            \]
There are several cases to analyse depending on the values of $m$ and $n$, and the shape of $M$. We consider  $m = 0$, $n \geq 1$ and $M=(\lambda x . (M'[ {\widetilde{x}} \leftarrow  {x}])) \linexsub{N_1 / y_1} \cdots \linexsub{N_p / y_p} \unexsub{U_1 / \unvar{z_1}} \cdots \unexsub{U_q / \unvar{z_q}}$, where  $p, q \geq 0$.  Then,  $\piencodf{\expr{N}}_u$ can perform the following reduction:
\[
 \begin{aligned}
 \piencodf{\expr{N}}_u & \red^* \bigoplus_{C_i \in \perm{C}} (\nu  \widetilde{y},\widetilde{z}, x)( x.\overline{\some}; x(\linvar{x}). x(\banged{x}). x \sclose ; \piencodf{M'[ {\widetilde{x}} \leftarrow  {x}]}_u \para Q''\para  \piencodf{C_i \bagsep U}_x ) ~ (:= Q_3) \\
\end{aligned}
     \]
where $Q''$ defines the encoding of explicit substitutions within the encoded subterm $M$.
Notice that:
 \[
                    \begin{aligned}
                        \expr{N} &=(\lambda x . (M'[ {\widetilde{x}} \leftarrow  {x}])) \linexsub{N_1 / y_1} \cdots \linexsub{N_p / y_p} \unexsub{U_1 / \unvar{z_1}} \cdots \unexsub{U_q / \unvar{z_q}}(C\bagsep U)\\
                        &\pequiv (\lambda x . (M'[ {\widetilde{x}} \leftarrow  {x}]) (C \bagsep U)) \linexsub{N_1 / y_1} \cdots \linexsub{N_p / y_p} \unexsub{U_1 / \unvar{z_1}} \cdots \unexsub{U_q / \unvar{z_q}} \\
                      & \red   M'[ {\widetilde{x}} \leftarrow  {x}] \esubst{(C \bagsep U)}{x} \linexsub{N_1 / y_1} \cdots \linexsub{N_p / y_p} \unexsub{U_1 / \unvar{z_1}} \cdots \unexsub{U_q / \unvar{z_q}} = \expr{M}
                    \end{aligned}
                    \]
where the  congruence holds assuming the necessary $\alpha$-renaming of variables. 
  Finally, one can verify that  $\piencodf{\expr{M}}_u = Q_3$, and the result follows. 
    \end{proof}


\begin{example}\label{ex:opcorr}
Recall again term $M_1$  from Example~\ref{ex:id_term}.
It can be shown that $\recencodf{M_1}\red^* \recencodf{N}\unexsub{\recencodf{U}/x^!}$. 
To illustrate operational completeness, we can verify preservation of  reduction, via $\piencodf{\cdot}$: reductions below use the  rules for $\spi$ in Figure~\ref{fig:redspi}---see Figure~\ref{fig:opcorr}.
          \end{example}

\begin{figure}[t!]
\begin{mdframed}[style=alttight]
{\small 	
   \[
         \begin{aligned}
            & \piencodf{\recencodf{M_1}}  = 
            \\
            & (\nu v)(  \colthree{ v.\overline{\some}; v(x).} x.\overline{\some}; x(\linvar{x}). x(\banged{x}). x\sclose ; \linvar{x}.\overline{\some}. \outact{\linvar{x}}{y_1}.  (y_1 . \some_{\emptyset} ; y_{1}\sclose;\zero \para\\
            &\qquad  \linvar{x}.\overline{\some}; \linvar{x}.\some_{u}; \linvar{x}( {x}_1) . \linvar{x}. \overline{\some}. \outact{\linvar{x}}{y_2} . ( y_2 . \some_{u, x_1 } ;y_{2}.\sclose; \piencodf{x_1}_v \para \linvar{x}. \overline{\none}) )
            \para \\ 
            &\qquad\colthree{v.\some_{u , \llfv{\recencodf{N}}} ;\outact{v}{x}.} (\colthree{[v \leftrightarrow u] }\para   x.\some_{\llfv{\recencodf{N}}} ; \outact{x}{\linvar{x}} .(\linvar{x}.\some_{\llfv{\recencodf{N}}} ;\linvar{x}(y_1). \\
            & \qquad  \linvar{x}. \some_{y_1, \llfv{ \bag{ \recencodf{N}}}}; \linvar{x}. \overline{\some} ; \outact{\linvar{x}}{x_1}.  (x_1.\some_{\llfv{\recencodf{N}}}; \piencodf{\recencodf{N}}_{x_1} \para y_1. \overline{\none} \para  \linvar{x}.\some_{\emptyset} ;\\ 
            & \qquad     \linvar{x}(y_2). ( y_2.\overline{\some};\overline{y_2}  \sclose \para \linvar{x}.\some_{\emptyset} ; \linvar{x}. \overline{\none}) )  \para \outact{x}{\banged{x}} .( !\banged{x} (x_i). \piencodf{ \recencodf{U} }_{x_i} \para \overline{x}\sclose ) ) )
            )
            \\[1mm]
            & \red^{3}
            (\nu x)( \colthree{ x.\overline{\some}; x(\linvar{x}).} x(\banged{x}). x \sclose ; \linvar{x}.\overline{\some}. \outact{\linvar{x}}{y_1}.  (y_1 . \some_{\emptyset} ; y_{1}\sclose;\zero \para \linvar{x}.\overline{\some};\linvar{x}.\some_{u}; \\ 
            &\qquad  \linvar{x}( {x}_1). \linvar{x}. \overline{\some}. \outact{\linvar{x}}{y_2} . ( y_2 . \some_{u, x_1 } ;y_{2}\sclose; \piencodf{x_1}_u \para \linvar{x}. \overline{\none}) )
            \para  (\colthree{x.\some_{\llfv{\recencodf{N}}} ; \outact{x}{\linvar{x}}}. \\
            & \qquad  (\linvar{x}.\some_{\llfv{\bag{\recencodf{N}}}};\linvar{x}(y_1). \linvar{x}.  \some_{y_1, \llfv{ \bag{ \recencodf{N}}}}; \linvar{x}. \overline{\some} ; \outact{\linvar{x}}{x_1}. (x_1.\some_{\llfv{\recencodf{N}}} ; \piencodf{\recencodf{N}}_{x_1} \para \\
           & \qquad  y_1. \overline{\none} \para  \linvar{x}.  \some_{\emptyset} ;\linvar{x}(y_2). ( y_2.\overline{\some};\overline{y_2} \sclose \para \linvar{x}.\some_{\emptyset} ; \linvar{x}. \overline{\none}) )  \para \outact{x}{\banged{x}}.( !\banged{x} (x_i). \piencodf{ \recencodf{U} }_{x_i} \para \overline{x}\sclose ) ) )
            )
            \\[1mm]
                    & \red^{2}
            (\nu x, \linvar{x})( \colthree{x(\banged{x}).} x. \sclose ; \linvar{x}.\overline{\some}. \outact{\linvar{x}}{y_1}.  (y_1 . \some_{\emptyset} ; y_{1}\sclose;\zero  \para \linvar{x}.\overline{\some}; \linvar{x}.\some_{u}; \linvar{x}( {x}_1) .  \\ 
            &\qquad  \linvar{x}. \overline{\some}. \outact{\linvar{x}}{y_2} . ( y_2 . \some_{u, x_1 } ;y_{2}\sclose; \piencodf{x_1}_u \para \linvar{x}. \overline{\none}) )
            \para   (\linvar{x}.\some_{\llfv{\recencodf{N}}}  ;\linvar{x}(y_1).  \\ 
            & \qquad \linvar{x}.  \some_{y_1, \llfv{  \recencodf{N}}}; \linvar{x}.\overline{\some} ;\outact{\linvar{x}}{x_1}.    (x_1.\some_{\llfv{\recencodf{N}}} ;\piencodf{\recencodf{N}}_{x_1} \para y_1. \overline{\none} \para  \linvar{x}.\some_{\emptyset} ;  \linvar{x}(y_2).  \\
            & \qquad  ( y_2.\overline{\some};\overline{y_2}\sclose  \para \linvar{x}.\some_{\emptyset} ; \linvar{x}. \overline{\none}) )\para \colthree{\outact{x}{\banged{x}}.}( !\banged{x} (x_i). \piencodf{ \recencodf{U} }_{x_i} \para \overline{x}\sclose ) ) )
            )
            \\[1mm]
            & \red
            (\nu x, \linvar{x}, x^!)( \colthree{x \sclose }; \linvar{x}.\overline{\some}. \outact{\linvar{x}}{y_1}.  (y_1 . \some_{\emptyset} ; y_{1}\sclose;\zero  \para \linvar{x}.\overline{\some}; \linvar{x}.\some_{u}; \linvar{x}( {x}_1) .  \linvar{x}. \overline{\some}. \\ 
            &\qquad \outact{\linvar{x}}{y_2} .( y_2 . \some_{u, x_1 } ;y_{2}\sclose; \piencodf{x_1}_u \para \linvar{x}. \overline{\none}) )
            \para   (\linvar{x}.\some_{\llfv{\recencodf{N}}}  ;\linvar{x}(y_1). \linvar{x}. \some_{y_1, \llfv{  \recencodf{N}}}; \\ 
            & \qquad \linvar{x}. \overline{\some} ;   \outact{\linvar{x}}{x_1}. (x_1.\some_{\llfv{\recencodf{N}}} ; \piencodf{\recencodf{N}}_{x_1} \para y_1. \overline{\none} \para  \\
            &
             \qquad  \linvar{x}.\some_{\emptyset} ; \linvar{x}(y_2). ( y_2.\overline{\some};\overline{y_2}  \sclose \para 
             \linvar{x}.\some_{\emptyset} ; \linvar{x}. \overline{\none}) )  \para !\banged{x} (x_i). \piencodf{ \recencodf{U} }_{x_i} \para \colthree{\overline{x}\sclose}  ) )
            )
           \\[1mm]
             & \red
            (\nu  \linvar{x}, x^!)( \colthree{\linvar{x}.\overline{\some}. \outact{\linvar{x}}{y_1}. } (y_1 . \some_{\emptyset} ; y_{1}.\sclose;\zero  \para \linvar{x}.\overline{\some}; \linvar{x}.\some_{u}; \linvar{x}( {x}_1) . \\ 
            &\qquad  \linvar{x}. \overline{\some}. \outact{\linvar{x}}{y_2} . ( y_2 . \some_{u, x_1 } ;y_{2}.\sclose; \piencodf{x_1}_u \para \linvar{x}. \overline{\none}) )
            \para   (\colthree{\linvar{x}.\some_{\llfv{\recencodf{N}}}  ;\linvar{x}(y_1).}\\
            & \qquad \linvar{x}.  \some_{y_1, \llfv{  \recencodf{N}}}; \linvar{x}. \overline{\some} ; \outact{\linvar{x}}{x_1}. (x_1.\some_{\llfv{\recencodf{N}}} ; \piencodf{\recencodf{N}}_{x_1} \para y_1. \overline{\none} \para \\ 
            & \qquad  \linvar{x}.\some_{\emptyset} ;  \linvar{x}(y_2). ( y_2.\overline{\some};\overline{y_2}  \sclose \para \linvar{x}.\some_{\emptyset} ; \linvar{x}. \overline{\none}) )  \para  !\banged{x} (x_i). \piencodf{ \recencodf{U} }_{x_i}  ) )
            ) 
           \\[1mm]
            & \red
            (\nu  \linvar{x}, y_1,  x^!)( y_1 . \some_{\emptyset};
            y_{1}\sclose;\zero \para \linvar{x}.\overline{\some}; \linvar{x}.\some_{u}; \linvar{x}( {x}_1) .  \linvar{x}. \overline{\some}. \outact{\linvar{x}}{y_2}.   \\ 
            & \qquad ( y_2 . \some_{u, x_1 } ; y_{2}\sclose; \piencodf{x_1}_u \para \linvar{x}. \overline{\none})\para   (             \linvar{x}. \some_{y_1, \llfv{  \recencodf{N}}}; \linvar{x}. \overline{\some} ;  \outact{\linvar{x}}{x_1}. (x_1.\some_{\llfv{\recencodf{N}}} ; \\ 
            & \qquad  \piencodf{\recencodf{N}}_{x_1} \para y_1. \overline{\none} \para  \linvar{x}.\some_{\emptyset} ; \linvar{x}(y_2). ( y_2.\overline{\some};\overline{y_2}\sclose \para \linvar{x}.\some_{\emptyset} ; \linvar{x}. \overline{\none}) )  \para   !\banged{x} (x_i). \piencodf{ \recencodf{U} }_{x_i}  ) )
            \\[1mm]
        & \red^*  (\nu  x_1,  x^!)({x_1}.\overline{\some} ; [ {x_1} \leftrightarrow u]
            \para  x_1.\some_{\llfv{\recencodf{N}}} ; \piencodf{\recencodf{N}}_{x_1} \para    !\banged{x} (x_i). \piencodf{ \recencodf{U} }_{x_i})
           \\[1mm]
            & \red^* (\nu x^!)(
             \piencodf{\recencodf{N}}_{u} \para    !\banged{x} (x_i). \piencodf{ \recencodf{U} }_{x_i})
             \\
&              =\piencodf{\recencodf{N}\unexsub{\recencodf{U}/x^!}}_u
         \end{aligned}
         \]
         }
         \end{mdframed}
         \caption{Illustrating operational correspondence, following Example~\ref{ex:opcorr}.\label{fig:opcorr}}
         \end{figure}

 \section{Concluding Remarks}
\subparagraph{Summary} 
We have extended the line of work we developed in \cite{DBLP:conf/fscd/PaulusN021}, on resource $\lambda$-calculi with firm logical foundations via typed concurrent processes. We presented \lamrfailunres, a resource calculus with
 non-determinism and explicit failures, with dedicated treatment for linear and unrestricted resources. By means of examples, we illustrated the expressivity, (lazy) semantics, and design decisions underpinning   \lamrfailunres, and introduced a class of well-formed expressions based on intersection types, which includes fail-prone expressions. To bear witness to the logical foundations of \lamrfailunres, we defined and proved correct a typed encoding into the  concurrent calculus \spi, which subsumes the one in \cite{DBLP:conf/fscd/PaulusN021}.
 We plan to study key properties for \lamrfailunres (such as solvability and normalisation) by leveraging our  typed encoding into \spi.
 
\subparagraph{Related Work}
With respect to previous resource calculi, a distinctive feature of 
 \lamrfailunres  is its support of explicit failures, which may arise depending on the interplay between (i)~linear and unrestricted occurrences of variables in a term and (ii)~associated resources in the bag. This feature allows \lamrfailunres to express variants of usual $\lambda$-terms ($\mathbf{I}$, $\Delta$, $\Omega$) not expressible in other resource calculi. 


Related to \lamrfailunres is Boudol's work on a $\lambda$-calculus  in which  multiplicities can be infinite~\cite{DBLP:conf/concur/Boudol93,DBLP:conf/birthday/BoudolL00}. 
An intersection type system is used to prove {\em adequacy} with respect to a testing semantics. However, failing behaviours as well as typability are not explored. Multiplicities can be expressed in $\lamrfailunres$:  a linear resource is available $m$ times when the linear bag contains $m$ copies of it; the term fails if the corresponding number of linear variables is different from $m$.

Also related is the resource $\lambda$-calculus by Pagani and Ronchi della Rocca~\cite{PaganiR10}, which includes linear and reusable resources; the latter are available in multisets, also called bags. In their setting, $M[N^!]$ denotes an application of a term $M$ to a resource $N$ that can be  used {\em ad libitum}.  Standard terms such as {\bf I}, $\Delta$ and $\Omega$ are expressed as $\lambda x.x$, $\Delta:=\lambda x. x[x^!]$,  and $\Omega:=\Delta[\Delta^!]$, respectively;  different variants are possible but cannot express the desired behaviour. A lazy reduction semantics is based on {\em baby} and {\em giant} steps: whereas the first consume one resource at each time, the second comprises several baby steps; combinations of the use of resources (by permuting resources in bags) are considered. A (non-idempotent) intersection type system is proposed: normalisation and a characterisation of solvability are investigated. Unlike our work, encodings into the $\pi$-calculus are not explored in~\cite{PaganiR10}.


\bibliography{references}

\iftypes
\else
\newpage
\tableofcontents 
\newpage
\appendix

\section{Appendix to Section~\ref{s:lambda}}\label{appA}

\subsection{Diamond Property for \texorpdfstring{$\lamrfailunres$}{}}

\begin{proposition}[Diamond Property for \lamrfailunres]
\label{app:lambda}
     For all $\expr{N}$, $\expr{N}_1$, $\expr{N}_2$ in $\lamrfailunres$ s.t. $\expr{N} \red \expr{N}_1$, $\expr{N} \red \expr{N}_2$ { with } $\expr{N}_1 \neq \expr{N}_2$ { then } $\exists \expr{M}$ s.t. $\expr{N}_1 \red \expr{M}$, $\expr{N}_2 \red \expr{M}$.
\end{proposition}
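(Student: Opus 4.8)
The plan is to proceed by induction on the structure of $\expr{N}$ (equivalently, on the derivation of $\expr{N} \red \expr{N}_1$), with a case analysis on the last non-contextual rule used in each of the two reductions. The guiding observation, already anticipated in the discussion of \figref{fig:reductions_lamrfailunres}, is that the left-hand sides of the reduction rules are pairwise non-overlapping: at any fixed redex exactly one rule applies, and its contractum is uniquely determined. Hence $\expr{N}_1 \neq \expr{N}_2$ forces the two steps to contract two \emph{distinct} redexes, and the task is to show these contractions always admit a common one-step successor.

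I would first dispatch the structural cases. If $\expr{N}$ is a sum, then --- up to associativity and commutativity of $+$ --- both steps go through rule $\redlab{R:ECont}$, contracting a redex inside some summand; if the two redexes lie in different summands, the diamond closes by performing both contractions in either order, and if they lie in the same summand $T$, we apply the induction hypothesis to the term $T$ and lift the resulting reductions back through $\redlab{R:ECont}$. If $\expr{N}$ is one of $x$, $x[i]$, $\fail^{\widetilde{x}}$, or $\lambda x.M$, it is irreducible and the statement is vacuous. If $\expr{N} = M\,B$: when $M$ is an abstraction or a failure term, $\expr{N}$ has a \emph{unique} redex, so $\expr{N}_1 = \expr{N}_2$, contradicting the hypothesis; otherwise both steps are instances of $\redlab{R:TCont}$ reducing $M$, and we conclude by the induction hypothesis on $M$ followed by re-application of $\redlab{R:TCont}$ (the same lifting works for a $\redlab{R:ECont}$ or $\redlab{R:TCont}$ context wrapped around a sum-producing contractum).

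The crux is $\expr{N} = M\,\esubst{C \bagsep U}{x}$. Here the candidate top-level rules are $\redlab{R:Fetch^{\ell}}$, $\redlab{R:Fetch^!}$, $\redlab{R:Fail^{ \ell }}$, $\redlab{R:Fail^!}$, and $\redlab{R:Cons_2}$, together with $\redlab{R:TCont}$ on $M$. I would first argue that at most one top-level rule can fire: the premises constrain $\headf{M}$ to be $x$, $x[i]$, or $\fail^{\widetilde{z}}$ (and literally $M = \fail^{\widetilde{z}}$ for $\redlab{R:Cons_2}$), which are mutually exclusive since the head of a term is unique; where heads could coincide, the remaining side conditions are incompatible --- $\#(x,M) = \size{C}$ versus $\#(x,M) \neq \size{C}$; $U_i = \unvar{\bag{N}}$ versus $U_i = \unvar{\oneb}$; and the guard $\#(z,\widetilde{z}) = \size{C}$ in $\redlab{R:Cons_2}$, which was included precisely to avoid the clash with $\redlab{R:Fail^{ \ell }}$. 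Consequently the two steps cannot both be top-level, so at least one of them is a $\redlab{R:TCont}$ step on $M$. If both are, we close by the induction hypothesis on $M$. If one is top-level via $\redlab{R:Fetch^{\ell}}$, $\redlab{R:Fetch^!}$, $\redlab{R:Fail^!}$, or $\redlab{R:Cons_2}$, then $\headf{M} \in \{x, x[i], \fail^{\widetilde{z}}\}$, which forces $M$ itself to be irreducible --- its spine is a chain of applications and head-preserving explicit substitutions over a variable or a failure term, and bags are never contracted --- contradicting the assumed $\redlab{R:TCont}$ step. The only genuine overlap is $\redlab{R:TCont}$ on $M$ against a top-level $\redlab{R:Fail^{ \ell }}$ step: here I would use the auxiliary fact that reduction preserves the multiset $\mlfv{\cdot}$ of free linear variables, so that every summand $M'$ of the reduct of $M$ still satisfies $\#(x,M') = \#(x,M) \neq \size{C}$ and $(\mlfv{M'}\setminus x)\uplus\mlfv{C} = (\mlfv{M}\setminus x)\uplus\mlfv{C}$; hence each residual $M'\,\esubst{C \bagsep U}{x}$ again fires $\redlab{R:Fail^{ \ell }}$, producing exactly the expression $\sum_{\perm{C}}\fail^{\widetilde{y}}$ obtained as $\expr{N}_1$, so the diamond closes.

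I expect the main obstacle to be precisely this interplay between the contextual rules and the sum-producing rules ($\redlab{R:Fetch^{\ell}}$, $\redlab{R:Fail^{ \ell }}$, $\redlab{R:Cons_1}$, $\redlab{R:Cons_2}$): contracting one redex can replicate the other across several summands, so the bookkeeping of summands and of the $\redlab{R:ECont}$ and $\redlab{R:TCont}$ liftings has to be done carefully to land on a single common expression rather than on expressions that merely agree up to the number of copies of identical summands. Establishing preservation of $\mlfv{\cdot}$ and exploiting the absence of free-variable interference between the subterms discarded or duplicated in the overlapping cases and their surrounding context is where the real effort lies.
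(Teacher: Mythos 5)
Your strategy is genuinely different from the paper's: the paper disposes of this proposition in three lines, asserting that an individual term admits at most one applicable reduction and then checking only the case where $\expr{N}$ is a sum whose two steps contract different summands. You instead carry out the full critical-pair analysis that this assertion presupposes, and most of it is sound --- the mutual exclusivity of the top-level rules on an explicit substitution, and the observation that $\headf{M}\in\{x,x[i],\fail^{\widetilde z}\}$ forces $M$ to be irreducible (because evaluation contexts never enter bags), are exactly the facts the paper leaves implicit.

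The gap is in the one overlap you isolate, $\redlab{R:TCont}$ on $M$ against a top-level $\redlab{R:Fail^{\ell}}$, and it is precisely the summand-bookkeeping problem you flag at the end. Take $M = z\esubst{\bag{N_1}\cdot\bag{N_2}\bagsep\unvar{\oneb}}{z}$ with $N_1,N_2$ closed, so that $M\red_{\redlab{R:Fail^{\ell}}}\fail^{\emptyset}+\fail^{\emptyset}$ (two summands, one per permutation), and let $\expr{N}=M\esubst{\bag{L}\bagsep\unvar{\oneb}}{x}$ with $\#(x,M)=0\neq 1=\size{\bag{L}}$. The top-level step yields $\expr{N}_1=\fail^{\widetilde y}$ with $\widetilde y=\mlfv{L}$, a normal form consisting of a single summand; the contextual step yields $\expr{N}_2=\fail^{\emptyset}\esubst{\bag{L}\bagsep\unvar{\oneb}}{x}+\fail^{\emptyset}\esubst{\bag{L}\bagsep\unvar{\oneb}}{x}$, which needs \emph{two} $\redlab{R:ECont}$-steps (not one) to fire $\redlab{R:Fail^{\ell}}$ in each summand, and then lands on $\fail^{\widetilde y}+\fail^{\widetilde y}$. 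Since sums are associative and commutative but not idempotent, this is not $\expr{N}_1$, and $\expr{N}_1$ cannot take a step. So your claim that ``each residual again fires $\redlab{R:Fail^{\ell}}$, producing exactly the expression obtained as $\expr{N}_1$'' only holds when the reduct of $M$ is a single term; in general this critical pair does not close in one step (nor even in many), and no care in the lifting will fix it without either identifying sums up to idempotence, weakening the statement to a reflexive or multi-step closure, or restricting $\redlab{R:Fail^{\ell}}$ to irreducible $M$. The paper's own argument never reaches this case because it takes term-level determinism for granted.
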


\begin{proof}
    We give a short argument to convince the reader of this. Notice that an expression can only perform a choice of reduction steps when it is a nondeterministic sum of terms in which multiple terms can perform independent reductions. For simplicity sake we will only consider an expression $\expr{N}$ that consist of two terms where $ \expr{N} = N + M $. We also have that $N \red N'$ and $M \red M'$. Then we let $\expr{N}_1 = N' + M$ and $\expr{N}_2 = N + M'$ by the $\redlab{R:ECont}$ rules. Finally we prove that $\expr{M}$ exists by letting $\expr{M} = N' + M'$.
\end{proof}

\section{Appendix to Section~\ref{sec:lam_types}}\label{appB}

Here we prove subject reduction (SR) for \lamrfailunres (Theorem~\ref{t:lamrfailsr_unres}). It follows from two substitution lemmas: one for substituting a linear variable (Lemma~\ref{lem:subt_lem_failunres_lin}) and another for an unrestricted variable (Lemma~\ref{lem:subt_lem_failunres_un}). Proofs of both lemmas are standard, by structural induction; we give a complete proof of SR in  Theorem~\ref{t:app_lamrfailsr}.

\begin{lemma}[Linear Substitution Lemma for \lamrfailunres]
\label{lem:subt_lem_failunres_lin}
If $\Theta ; \Gamma ,  {x}:\sigma \wfdash M: \tau$, $\headf{M} =  {x}$, and $\Theta ; \Delta \wfdash N : \sigma$ 
then 
$\Theta ; \Gamma , \Delta \wfdash M \headlin{ N /  {x} }$.
\end{lemma}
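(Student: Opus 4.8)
The plan is to prove the Linear Substitution Lemma by structural induction on the term $M$, restricted to the shape of terms for which $\headf{M} = x$ is possible. Recall from Definition~\ref{def:headfailure} that $\headf{M} = x$ forces $M$ to be one of three forms: the variable $x$ itself, an application $M'\ B$ with $\headf{M'} = x$, or an explicit substitution $M'\esubst{B}{y}$ with $\headf{M'} = x$, $y \neq x$, and the side condition $\#(y,M') = \size{B}$. These are exactly the cases appearing in the definition of head substitution $M\headlin{N/x}$ (Definition~\ref{def:linsubfail}), so the induction follows the recursive structure of $\headlin{\cdot/\cdot}$ precisely.

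For the base case $M = x$: the only way to derive $\Theta; \Gamma, x{:}\sigma \wfdash x : \tau$ is via Rule~$\redlab{F{:}var^\ell}$ (possibly followed by $\redlab{F{:}weak}$ applications), which forces $\tau = \sigma$ and $\Gamma$ to consist only of $\omega$-assignments. Since $x\headlin{N/x} = N$, we must show $\Theta; \Gamma, \Delta \wfdash N : \sigma$; this follows from the hypothesis $\Theta; \Delta \wfdash N : \sigma$ by re-applying weakening ($\redlab{F{:}weak}$) to reintroduce the $\omega$-assignments of $\Gamma$. For the inductive case $M = M'\ B$: the derivation of $\Theta; \Gamma, x{:}\sigma \wfdash M'\ B : \tau$ ends in Rule~$\redlab{F{:}app}$, splitting the linear context so that $M'$ gets some $\Gamma_1, x{:}\sigma$ (the occurrence of $x$ lies in the functional position since $\headf{M'} = x$) with type $(\delta^j,\eta)\to\tau$, and $B$ gets $\Gamma_2$ with type $(\delta^k,\epsilon)$, where $\eta \relunbag \epsilon$ and $\Gamma = \Gamma_1, \Gamma_2$. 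By the induction hypothesis applied to $M'$ (noting $\headf{M'}=x$), we get $\Theta; \Gamma_1, \Delta \wfdash M'\headlin{N/x} : (\delta^j,\eta)\to\tau$, and then re-applying $\redlab{F{:}app}$ with the unchanged derivation for $B$ yields $\Theta; \Gamma_1,\Gamma_2,\Delta \wfdash (M'\headlin{N/x})\ B : \tau$, which is $\Theta;\Gamma,\Delta \wfdash M\headlin{N/x} : \tau$ since $(M'\ B)\headlin{N/x} = (M'\headlin{N/x})\ B$. The case $M = M'\esubst{B}{y}$ is entirely analogous, using Rule~$\redlab{F{:}ex\dash sub}$ instead and carrying along the side condition $\#(y,M') = \size{B}$ (which is unaffected by substituting $N$ for the distinct variable $x$ in the head position).

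The main technical point to handle carefully is the bookkeeping of the linear context: in the $\redlab{F{:}var^\ell}$ base case the context carrying $x{:}\sigma$ contains no other genuine linear assignments (only $\omega$'s from weakening), so $\Gamma, \Delta$ reconstructs correctly; and in the inductive cases the occurrence of $x$ whose head substitution we perform lies entirely in the left subterm, so the context split provided by $\redlab{F{:}app}$ / $\redlab{F{:}ex\dash sub}$ already isolates it. One should also note that the unrestricted context $\Theta$ and the hypothesis $\headf{M}=x$ are preserved verbatim by $\headlin{\cdot/\cdot}$ on these three term shapes, so no extra reasoning about $\Theta$ or about the head of the rewritten term is needed. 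I expect the only mild subtlety — and the part worth stating explicitly rather than waving through — is the interaction with Rule~$\redlab{F{:}weak}$ in the base case, ensuring that weakening assignments on both sides are matched up so that the resulting context is literally $\Gamma,\Delta$ and the resulting type is literally $\tau$ (here $=\sigma$).
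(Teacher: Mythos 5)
Your proposal is correct and follows essentially the same route as the paper's own proof: structural induction on $M$ restricted to the three shapes compatible with $\headf{M}=x$, with the base case closing by the hypothesis on $N$ and the two inductive cases re-applying $\redlab{F{:}app}$ / $\redlab{F{:}ex\dash sub}$ after invoking the induction hypothesis on the left subterm. Your extra attention to $\redlab{F{:}weak}$ in the base case is a slightly more careful treatment than the paper's (which simply takes $\Gamma=\emptyset$ there), but it does not change the argument.
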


\begin{proof}
By structural induction on $M$ with $\headf{M}= {x}$. There are three cases to be analyzed: 

\begin{enumerate}
\item $M= {x}$.

In this case, $\Theta ;  {x}:\sigma \wfdash  {x}:\sigma$ and $\Gamma=\emptyset$.  Observe that $ {x}\headlin{N/ {x}}=N$, since $\Theta ; \Delta\wfdash N:\sigma$, by hypothesis, the result follows.



    \item $M = M'\ B$.
    
    In this case, $\headf{M'\ B} = \headf{M'} =  {x}$, and one has the following derivation:
    
    \begin{prooftree}
         \AxiomC{\( \Theta ;\Gamma_1 ,   {x}:\sigma\wfdash M : (\delta^{j} , \eta ) \rightarrow \tau \quad \Theta ; \Gamma_2 \wfdash B : (\delta^{k} , \epsilon ) \)}
         \AxiomC{\( \eta \relunbag \epsilon \)}
            \LeftLabel{\redlab{F{:}app}}
        \BinaryInfC{\( \Theta ;\Gamma_1 , \Gamma_2 ,  {x}:\sigma \wfdash M\ B : \tau\)}
    \end{prooftree}

 where $\Gamma=\Gamma_1,\Gamma_2$, $\delta$ is a strict type and $j,k$ are non-negative  integers, possibly different.
 
 By IH, we get $\Theta ;\Gamma_1,\Delta\wfdash M'\headlin{N/ {x}}:(\delta^{j} , \eta ) \rightarrow \tau $, which gives the following derivation: 
    \begin{prooftree}
        \AxiomC{$\Theta ;\Gamma_1,\Delta\wfdash M'\headlin{N/ {x}}:(\delta^{j} , \eta ) \rightarrow \tau $}\
        \AxiomC{$\Theta ; \Gamma_2 \wfdash B : (\delta^{k} , \epsilon ) $}
        \AxiomC{\( \eta \relunbag \epsilon \)}
    	\LeftLabel{\redlab{F{:}app}}
        \TrinaryInfC{$\Theta ;\Gamma_1 , \Gamma_2 , \Delta \wfdash ( M'\headlin{ N /  {x} } ) B:\tau $}    
    \end{prooftree}
    Therefore, from \defref{def:linsubfail}, one has $\Theta ;\Gamma_1 , \Gamma_2 , \Delta \wfdash ( M'\headlin{ N /  {x} } ) B:\tau  $, and the result follows.
\item $M = M'\esubst{B }{ y}$.

In this case,  $\headf{M'\esubst{B }{ y}} = \headf{M'} =  {x}$, with $ {x} \not = y$, and one has the following derivation:

    \begin{prooftree}
            \AxiomC{\( \Theta , \banged{y} : \eta ; \Gamma_1 , \hat{y}: \delta^{k}, x:\sigma \wfdash M : \tau \quad \Theta ; \Gamma_2 \wfdash B : (\delta^{j} , \epsilon ) \)}
            \AxiomC{\( \eta \relunbag \epsilon \)}
        \LeftLabel{\redlab{F{:}ex \dash sub}}  
        \BinaryInfC{\( \Theta ; \Gamma_1, \Gamma_2,  x:\sigma \wfdash M' \esubst{B }{ y} : \tau \)}
    \end{prooftree}

 where $\Gamma=\Gamma_1,\Gamma_2$, $\delta$ is a strict type and $j,k$ are positive integers.
By IH, we get $\Theta , \banged{y} : \eta ; \Gamma_1 , \hat{y}: \delta^{k} , \Delta \wfdash M'\headlin{N/ {x}}:\tau$ and

\begin{prooftree}
        \AxiomC{\( \Theta , \banged{y} : \eta ; \Gamma_1 , \hat{y}: \delta^{k} , \Delta \wfdash M'\headlin{N/ {x}}:\tau \quad \Theta ; \Gamma_2 \wfdash B : (\delta^{j} , \epsilon ) \)}
        \AxiomC{\( \eta \relunbag \epsilon \)}
    \LeftLabel{\redlab{F{:}ex \dash sub}}  
    \BinaryInfC{\( \Theta ; \Gamma_1, \Gamma_2,  \Delta \wfdash M' \headlin{ N / {x} } \esubst{ B }{ y} : \tau  \)}
\end{prooftree}

\end{enumerate}

From \defref{def:linsubfail}, $M' \esubst{ B }{ y} \headlin{ N / {x} } = M' \headlin{ N /  {x}} \esubst{ B }{ y}$, therefore, $\Theta ; \Gamma, \Delta \wfdash (M'\esubst{ B }{ y})\headlin{ N / {x} }:\tau$ and  the result follows.
\end{proof}

\begin{lemma}[Unrestricted Substitution Lemma for \lamrfailunres]
\label{lem:subt_lem_failunres_un}
If $\Theta, \banged{x}: \eta ; \Gamma \wfdash M: \tau$, $\headf{M} = {x}[i]$, $\eta_i = \sigma $, and $\Theta ; \cdot \wfdash N : \sigma$
then 
$\Theta, \banged{x}: \eta ; \Gamma  \wfdash M \headlin{ N /  {x}[i] }$.
\end{lemma}

\begin{proof}
By structural induction on $M$ with $\headf{M}= {x}[i]$. There are three cases to be analyzed: 

\begin{enumerate}
\item $M= {x}[i]$.

In this case, 

    \begin{prooftree}
        \AxiomC{}
        \LeftLabel{ \redlab{F{:}var^{ \ell}}}
        \UnaryInfC{\( \Theta , \banged{x}: \eta;  {x}: \eta_i  \wfdash  {x} : \sigma\)}
        \LeftLabel{\redlab{F{:}var^!}}
        \UnaryInfC{\( \Theta, \banged{x}: \eta ; \cdot \wfdash  {x}[i] : \sigma\)}
    \end{prooftree}

 and $\Gamma=\emptyset$.  Observe that $ {x}[i]\headlin{N/ {x}[i]}=N$, since $\Theta, \banged{x}: \eta  ; \Gamma  \wfdash M \headlin{ N /  {x}[i] }$, by hypothesis, the result follows.

    \item $M = M'\ B$.
    
    In this case, $\headf{M'\ B} = \headf{M'} =  {x}[i]$, and one has the following derivation:
    
    \begin{prooftree}
         \AxiomC{\( \Theta, \banged{x}: \eta ;\Gamma_1 \wfdash M : (\delta^{j} , \epsilon ) \rightarrow \tau \quad \Theta, \banged{x}: \sigma ; \Gamma_2 \wfdash B : (\delta^{k} , \epsilon' )  \)}
         \AxiomC{\( \epsilon \relunbag \epsilon' \)}
            \LeftLabel{\redlab{F{:}app}}
        \BinaryInfC{\( \Theta, \banged{x}: \eta ;\Gamma_1 , \Gamma_2 \wfdash M\ B : \tau\)}
    \end{prooftree}
    
 where $\Gamma=\Gamma_1,\Gamma_2$, $\delta$ is a strict type and $j,k$ are non-negative  integers, possibly different.
 
 By IH, we get $\Theta, \banged{x}: \eta ;\Gamma_1 \wfdash M'\headlin{N/ {x}[i]}:(\delta^{j} , \epsilon ) \rightarrow \tau $, which gives the following derivation: 
    \begin{prooftree}
        \AxiomC{$\Theta , \banged{x}: \eta;\Gamma_1\wfdash M'\headlin{N/ {x}[i]}:(\delta^{j} , \epsilon ) \rightarrow \tau $}\
        \AxiomC{$\Theta , \banged{x}: \eta; \Gamma_2 \wfdash B : (\delta^{k} , \epsilon' ) $}
        \AxiomC{\( \epsilon \relunbag \epsilon' \)}
    	\LeftLabel{\redlab{F{:}app}}
        \TrinaryInfC{$\Theta , \banged{x}: \eta;\Gamma_1 , \Gamma_2  \wfdash ( M'\headlin{ N /  {x}[i] } ) B:\tau $}    
    \end{prooftree}
   From \defref{def:linsubfail}, one has $\Theta, \banged{x}: \eta ;\Gamma_1 , \Gamma_2  \wfdash ( M'\headlin{ N /  {x}[i] } ) B:\tau  $, and the result follows.
\item $M = M'\esubst{B }{ y}$.

In this case,  $\headf{M'\esubst{B }{ y}} = \headf{M'} =  {x}[i]$, with $x \not = y$, and one has the following derivation:

    \begin{prooftree}
            \AxiomC{\( \Theta , \banged{y} : \epsilon, x:\eta ; \Gamma_1 , \hat{y}: \delta^{k} \wfdash M : \tau \quad \Theta ,  x:\eta ; \Gamma_2 \wfdash B : (\delta^{j} , \epsilon' ) \)}
            \AxiomC{\( \epsilon \relunbag \epsilon' \)}
        \LeftLabel{\redlab{F{:}ex \dash sub}}  
        \BinaryInfC{\( \Theta,  x:\eta ; \Gamma_1, \Gamma_2 \wfdash M' \esubst{B }{ y} : \tau \)}
    \end{prooftree}
 where $\Gamma=\Gamma_1,\Gamma_2$, $\delta$ is a strict type and $j,k$ are positive integers.
By IH, we get $\Theta , \banged{y} : \epsilon ,  x:\eta ; \Gamma_1 , \hat{y}: \delta^{k} \wfdash M'\headlin{N/ {x}[i]}:\tau$ and 
\begin{prooftree}
        \AxiomC{\( \Theta , \banged{y} : \epsilon ,  x:\eta ; \Gamma_1 , \hat{y}: \delta^{k} , \wfdash M'\headlin{N/ {x}[i]}:\tau \quad \Theta ,  x:\sigma ; \Gamma_2 \wfdash B : (\delta^{j} , \epsilon' ) \)}
        \AxiomC{\( \epsilon \relunbag \epsilon' \)}
    \LeftLabel{\redlab{F{:}ex \dash sub}}  
    \BinaryInfC{\( \Theta ,  x:\eta ; \Gamma_1, \Gamma_2\wfdash M' \headlin{ N / {x}[i] } \esubst{ B }{ y} : \tau  \)}
\end{prooftree}
Then, $M' \esubst{ B }{ y} \headlin{ N / {x}[i] } = M' \headlin{ N /  {x}[i]} \esubst{ B }{ y}$, and  the result follows.
\end{enumerate}
\end{proof}

\begin{theorem}[SR in \lamrfailunres]
\label{t:app_lamrfailsr}
If $\Theta ; \Gamma \wfdash \expr{M}:\tau$ and $\expr{M} \red \expr{M}'$ then $\Theta ;\Gamma \wfdash \expr{M}' :\tau$.
\end{theorem}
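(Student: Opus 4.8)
The plan is to proceed by induction on the derivation of the reduction step $\expr{M} \red \expr{M}'$, with a case analysis on the last reduction rule applied (the rules in \figref{fig:reductions_lamrfailunres}). The two substitution lemmas already stated---Lemma~\ref{lem:subt_lem_failunres_lin} for linear head substitution and Lemma~\ref{lem:subt_lem_failunres_un} for unrestricted head substitution---will do the heavy lifting for the fetch rules; the failure and consumption rules will require only bookkeeping about how free linear variables are collected into the failure annotation; and the two contextual rules $\redlab{R:ECont}$ and $\redlab{R:TCont}$ follow by a straightforward inner induction once we know that the term/expression contexts $C[\cdot]$ and $D[\cdot]$ preserve typing (which is immediate by inspecting Rules $\redlab{F{:}app}$, $\redlab{F{:}ex\dash sub}$, and $\redlab{F{:}sum}$).

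Concretely, I would organize the cases as follows. For $\redlab{R:Beta}$: given $\Theta;\Gamma \wfdash (\lambda x.M)B:\tau$, invert Rule $\redlab{F{:}app}$ to get $\Theta;\Gamma_1 \wfdash \lambda x.M:(\sigma^j,\eta)\to\tau$ and $\Theta;\Gamma_2 \wfdash B:(\sigma^k,\epsilon)$ with $\eta\relunbag\epsilon$; invert $\redlab{F{:}abs}$ on the first premise to recover $\Theta,x^!{:}\eta;\Gamma_1,\hat x{:}\sigma^j \wfdash M:\tau$; then reassemble with Rule $\redlab{F{:}ex\dash sub}$ to type $M\esubst{B}{x}:\tau$. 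For $\redlab{R:Fetch^\ell}$: from $\Theta;\Gamma \wfdash M\esubst{C\bagsep U}{x}:\tau$ with $\#(x,M)=\size{C}=k$, invert $\redlab{F{:}ex\dash sub}$ and $\redlab{F{:}bag}$, $\redlab{F{:}bag^\ell}$ to distribute the type $\sigma$ of each $N_\ell\in C$ to the corresponding linear occurrence of $x$; apply Lemma~\ref{lem:subt_lem_failunres_lin} once per summand $M\headlin{N_\ell/x}\esubst{(C\setminus N_\ell)\bagsep U}{x}$; and close with Rule $\redlab{F{:}sum}$ over all $k$ summands (all of type $\tau$). For $\redlab{R:Fetch^!}$: similar, but invert $\redlab{F{:}var^!}$/$\redlab{F{:}bag^!}$ to see that $U_i$ has the strict type $\eta_i=\sigma$ matching the head occurrence $x[i]$, then apply Lemma~\ref{lem:subt_lem_failunres_un}; note $U$ is returned unaltered, so the unrestricted context is unchanged.

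For the four failure-producing rules, the key observation is Rule $\redlab{F{:}fail}$: $\fail^{\widetilde y}$ can be given \emph{any} strict type $\tau$ provided $\dom{\Gamma}=\widetilde y$. So for $\redlab{R:Fail^\ell}$ and $\redlab{R:Fail^!}$ one checks that the multiset $\widetilde y=(\mlfv{M}\setminus x)\uplus\mlfv{C}$ (resp.\ the analogous set for the unrestricted case) coincides with the domain of the linear context obtained after inverting $\redlab{F{:}ex\dash sub}$ and discarding the now-defunct assignment for $x$; then each summand $\fail^{\widetilde y}$ types at $\tau$, and $\redlab{F{:}sum}$ closes the sum over $\perm{C}$. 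Cases $\redlab{R:Cons_1}$ and $\redlab{R:Cons_2}$ are analogous, using that $\fail^{\widetilde x}$ applied to (resp.\ substituted by) a bag of size matching the relevant count again yields $\fail$ with the enlarged/reduced annotation, and that the side condition $\#(z,\widetilde x)=\size{C}$ in $\redlab{R:Cons_2}$ is exactly what makes the linear-context domains match. I expect the main obstacle to be the $\redlab{R:Fetch^!}$/$\redlab{R:Fail^!}$ interaction: one must carefully track the $\eta\relunbag\epsilon$ condition and the indexing of the unrestricted bag $U$ through the inversion of Rules $\redlab{F{:}var^!}$, $\redlab{F{:}bag^!}$, $\redlab{F{:}\concat bag^!}$, and $\redlab{F{:}\oneb^!}$, making sure that the $i$-th component of the list type is the strict type actually assigned to the head occurrence $x[i]$ and that the unchanged bag $U$ keeps the unrestricted context intact; the rest of the proof is routine inversion-and-reassembly bookkeeping.
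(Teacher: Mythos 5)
Your overall strategy coincides with the paper's: induction on the last reduction rule applied, with the two substitution lemmas (Lemma~\ref{lem:subt_lem_failunres_lin} and Lemma~\ref{lem:subt_lem_failunres_un}) carrying the fetch cases, Rule~$\redlab{F{:}fail}$ (which types $\fail^{\widetilde y}$ at any $\tau$ once the domain of the linear context matches $\widetilde y$) carrying the failure and consumption cases, and the contextual rules discharged by an inner induction. The $\redlab{R:Beta}$, $\redlab{R:Fetch^\ell}$, $\redlab{R:Fetch^!}$, $\redlab{R:Fail^\ell}$, $\redlab{R:Cons_1}$, $\redlab{R:Cons_2}$, and contextual cases are all described exactly as the paper carries them out.

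There is, however, one concrete misstep: you group $\redlab{R:Fail^!}$ with $\redlab{R:Fail^\ell}$ and claim that for both ``each summand $\fail^{\widetilde y}$ types at $\tau$, and $\redlab{F{:}sum}$ closes the sum over $\perm{C}$.'' That is not the shape of the $\redlab{R:Fail^!}$ reduct. Rule~$\redlab{R:Fail^!}$ does not collapse the term to a sum of failures; it produces the single term $M\headlin{\fail^{\emptyset}/x[i]}\esubst{C\bagsep U}{x}$, i.e.\ only the head occurrence $x[i]$ is replaced by $\fail^{\emptyset}$ and the explicit substitution (with $U$ unaltered) survives. The correct treatment --- and the one the paper uses --- is to observe that $\Theta;\cdot \wfdash \fail^{\emptyset}:\sigma$ for the strict type $\sigma=\eta_i$ by Rule~$\redlab{F{:}fail}$ (empty linear context, so $\dom{\cdot}=\emptyset$), apply Lemma~\ref{lem:subt_lem_failunres_un} with $N=\fail^{\emptyset}$ to retype $M\headlin{\fail^{\emptyset}/x[i]}$ in the same contexts, and then reassemble the explicit substitution with Rule~$\redlab{F{:}ex\dash sub}$. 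Your closing remark about needing care with the $\redlab{R:Fetch^!}$/$\redlab{R:Fail^!}$ interaction suggests you sensed this, but as written the case would not go through; the fix is local and does not affect the rest of the argument.
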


\begin{proof} By structural induction on the reduction rules. We proceed by analysing the rule applied in $\expr{M}$. There are seven cases:

\begin{enumerate}

	\item {\bf Rule $\redlab{R:Beta}$.}
	
	Then $\expr{M} = (\lambda x . M)B \red M\ \esubst{B}{x}=\expr{M}'$.
 	Since $\Theta ; \Gamma\wfdash \expr{M}:\tau$, one has the  derivation:
	\begin{prooftree}
			\AxiomC{$\Theta , \banged{x} : \eta ; \Gamma' , \hat{ {x}}: \sigma^{j} \wfdash M : \tau $}
			\LeftLabel{\redlab{F{:}abs}}
            \UnaryInfC{$\Theta ; \Gamma' \wfdash \lambda x. M: (\sigma^{j} , \eta ) \rightarrow \tau $}
            \AxiomC{$\Theta ;\Delta \wfdash B: (\sigma^{k} , \epsilon ) $}
            \AxiomC{\( \eta \relunbag \epsilon \)}
			\LeftLabel{\redlab{F{:}app}}
		\TrinaryInfC{$\Theta ; \Gamma' , \Delta \wfdash (\lambda x. M) B:\tau $}
	\end{prooftree}
	for $\Gamma = \Gamma' , \Delta $. Notice that

    \begin{prooftree}
            \AxiomC{\( \Theta , \banged{x} : \eta ; \Gamma' , \hat{ {x}}: \sigma^{j} \wfdash M : \tau \quad \Theta ;\Delta \wfdash B: (\sigma^{k} , \epsilon ) \)}
            \AxiomC{\( \eta \relunbag \epsilon \)}
        \LeftLabel{\redlab{F{:}ex \dash sub}}  
        \BinaryInfC{\( \Theta ; \Gamma, \Delta \wfdash M \esubst{ B }{ x } : \tau \)}
    \end{prooftree}
    
    Therefore, $ \Theta ; \Gamma \wfdash \expr{M}':\tau$ and the result follows.

    \item {\bf  Rule $\redlab{R:Fetch^{\ell}}$.}
    
    Then $ \expr{M} = M\ \esubst{ C \bagsep U  }{x }$, where $C = {\bag{N_1}}\cdot \dots \cdot {\bag{N_k}}$ , $k\geq 1$, $ \#( {x},M) = k $ and $\headf{M} =  {x}$. The reduction is as: 
     \begin{prooftree}
    \AxiomC{$\headf{M} =  {x}$}
    \AxiomC{$C = {\bag{N_1}}\cdot \dots \cdot {\bag{N_k}} \ , \ k\geq 1 $}
    \AxiomC{$ \#( {x},M) = k $}
    \LeftLabel{\redlab{R:Fetch^{\ell}}}
    \TrinaryInfC{\(
    M\ \esubst{ C \bagsep U  }{x } \red M \headlin{ N_{1}/ {x} } \esubst{ (C \setminus N_1)\bagsep U}{ x }  + \cdots + M \headlin{ N_{k}/x } \esubst{ (C \setminus N_k)\bagsep U}{x}
    \)}
\end{prooftree}

    To simplify the proof we take $k=2$, as the case $k>2$ is similar. Therefore, $c = {\bag{N_1}}\cdot {\bag{N_2}}$ 
    
    \begin{prooftree}
            \AxiomC{\( \Theta , \banged{x} : \eta ; \Gamma' , \hat{x}: \sigma^{2} \wfdash M : \tau \)}
               \AxiomC{\(  \Theta ; \cdot\wfdash  U : \epsilon\)}
                \AxiomC{\(\Pi\)}
                \noLine
                \UnaryInfC{\( \Theta ; \Delta\wfdash {\bag{N_1}}\cdot {\bag{N_2}} : \sigma^2\)}
            \LeftLabel{\redlab{F{:}bag}}
            \BinaryInfC{\( \Theta ; \Delta \wfdash C \bagsep U : (\sigma^{2} , \epsilon ) \)}
            \AxiomC{\(  \eta \relunbag \epsilon  \)}
        \LeftLabel{\redlab{F{:}ex \dash sub}}  
        \TrinaryInfC{\( \Theta ; \Gamma', \Delta \wfdash M \esubst{ C \bagsep U }{ x } : \tau \)}
    \end{prooftree}
      with  $\Pi$ the derivation
    \begin{prooftree}
            \AxiomC{\( \Theta ; \Delta_1 \wfdash N_1 : \sigma\)}
                \AxiomC{\( \Theta ; \Delta_2 \wfdash N_2 : \sigma\)}
                \AxiomC{\(  \)}
                \LeftLabel{\redlab{F{:}\oneb^{\ell}}}
                \UnaryInfC{\( \Theta ; \dash \wfdash {\oneb} : \omega \)}
            \LeftLabel{\redlab{F{:}bag^{ \ell}}}
            \BinaryInfC{\( \Theta ; \Delta_2 \wfdash {\bag{N_2}} : \sigma\)}
        \LeftLabel{\redlab{F{:}bag^{ \ell}}}
        \BinaryInfC{\( \Theta ; \Delta \wfdash {\bag{N_1}}\cdot {\bag{N_2}} : \sigma^2\)}
    \end{prooftree}
     where $\Delta= \Delta_1,\Delta_2$ and $\Gamma = \Gamma' , \Delta $. By Lemma~\ref{lem:subt_lem_failunres_lin}, there exist derivations $\Pi_1$ of  $
    \Theta , \banged{x} : \eta ; \Gamma' , x: \sigma,  \Delta_1 \wfdash   M \headlin{ N_{1}/  {x} } : \tau $ and  $\Pi_2$ of $\Theta , \banged{x} : \eta ; \Gamma' , x: \sigma,  \Delta_2 \wfdash   M \headlin{ N_{2}/  {x} } : \tau $. Therefore, one has the following derivation where we omit the second case of the sum:

    \begin{prooftree}
    				\AxiomC{\( \Pi_1 \) }
                            \AxiomC{\(  \Theta ; \cdot\wfdash  U : \epsilon\)}
                        \AxiomC{\( \Theta ; \Delta\wfdash  {\bag{N_2}} : \sigma\)}
                    \LeftLabel{\redlab{F{:}bag}}
                    \BinaryInfC{\( \Theta ; \Delta \wfdash \bag{N_2} \bagsep U : (\sigma , \epsilon ) \)}
                \LeftLabel{\redlab{F{:}ex \dash sub}}
    			\BinaryInfC{\( \Theta ; \Gamma' ,  \Delta_1   \wfdash M \headlin{ N_{1}/x } \esubst{ {\bag{N_2}} \bagsep U}{ x }  : \tau\ \) }
    			\AxiomC{\( \vdots \) }
    	\LeftLabel{\redlab{F{:}sum}}
        \BinaryInfC{\(\Theta ; \Gamma' , \Delta  \wfdash     M \headlin{ N_{1}/x } \esubst{ {\bag{N_2}} \bagsep U }{ x } +  M \headlin{ N_{2}/x } \esubst{ {\bag{N_1}} \bagsep U }{x } : \tau\)}
    \end{prooftree}
    
    Assuming  $ \expr{M}'  =    M \headlin{ N_{1}/x } \esubst{ {\bag{N_2}} \bagsep \banged{B} }{ x } +  M \headlin{ N_{2}/x } \esubst{ {\bag{N_1}} \bagsep \banged{B} }{x }$, the result follows.

    \item {\bf Rule $\redlab{R:Fetch^!}$.}
    
    Then $ \expr{M} = M\ \esubst{ C \bagsep U }{x }$, where $U = \banged{\bag{N_1}} \concat \cdots \concat \banged{\bag{N_l}} $ and $\headf{M} =  {x}[i]$. The reduction is as:

    \begin{prooftree}
        \AxiomC{$\headf{M} =  {x}[i]$}
        \AxiomC{$ U_i = \banged{\bag{N_i}} $}
        \LeftLabel{\redlab{R:Fetch^!}}
        \BinaryInfC{\(
        M\ \esubst{ C \bagsep U  }{x } \red M \headlin{ N_i/ {x}[i] } \esubst{ C \bagsep U}{ x } 
        \)}
    \end{prooftree}
    
     By hypothesis, one has the derivation:
     \begin{prooftree}
            \AxiomC{\( \Theta , \banged{x} : \eta ; \Gamma' , \hat{x}: \sigma^{j} \wfdash M : \tau \)}
               \AxiomC{$\Pi$}
               \noLine
               \UnaryInfC{\(  \Theta ; \cdot\wfdash U : \epsilon\)}
                \AxiomC{\( \Theta ; \Delta \wfdash  {C} : \sigma^k\)}
            \LeftLabel{\redlab{F{:}bag}}
            \BinaryInfC{\( \Theta ; \Delta \wfdash C \bagsep U : (\sigma^{k} , \epsilon ) \)}
            \AxiomC{\(  \eta \relunbag \epsilon  \)}
        \LeftLabel{\redlab{F{:}ex \dash sub}}  
        \TrinaryInfC{\( \Theta ; \Gamma', \Delta \wfdash M \esubst{ C \bagsep U }{ x } : \tau \)}
    \end{prooftree}
    Where $\Pi$ has the form
    \begin{prooftree}
            \AxiomC{\( \Theta ; \cdot \wfdash N_1 : \epsilon_1\)}
            \LeftLabel{\redlab{F{:}bag^!}}
            \UnaryInfC{\( \Theta ; \cdot \wfdash \banged{\bag{N_1}}  : \epsilon_1\)}
            \AxiomC{\( \cdots \)}
            \AxiomC{\( \Theta ; \cdot \wfdash N_l : \epsilon_l \)}
            \LeftLabel{\redlab{F{:}bag^!}}
            \UnaryInfC{\( \Theta ; \cdot \wfdash \banged{\bag{N_l}} : \epsilon_l \)}
        \LeftLabel{\redlab{F{:}\concat bag^!}}
        \TrinaryInfC{\( \Theta ; \cdot  \wfdash \banged{\bag{N_1}} \concat \cdots \concat \banged{\bag{N_l}}  :\epsilon \)}
    \end{prooftree}
    where $\Gamma = \Gamma' , \Delta $. Notice that if $\epsilon_i = \delta$ and $\eta \relunbag \epsilon$ then $\eta_i = \delta$ By Lemma~\ref{lem:subt_lem_failunres_un}, there exists a derivation $\Pi_1$ of  $
    \Theta , \banged{x} : \eta ; \Gamma' , \hat{x}: \sigma^{j} \wfdash   M \headlin{ N_{i}/  {x}[i] } : \tau $. Therefore, one has the following derivation:

         \begin{prooftree}
            \AxiomC{\( \Theta , \banged{x} : \eta ; \Gamma' , \hat{x}: \sigma^{j} \wfdash M \headlin{ N_{1}/  {x}[i] } : \tau \)}
               \AxiomC{\(  \Theta ; \cdot\wfdash U : \epsilon\)}
                \AxiomC{\( \Theta ; \Delta \wfdash  {C} : \sigma^k\)}
            \LeftLabel{\redlab{F{:}bag}}
            \BinaryInfC{\( \Theta ; \Delta \wfdash C \bagsep U : (\sigma^{k} , \epsilon ) \)}
            \AxiomC{\(  \eta \relunbag \epsilon  \)}
        \LeftLabel{\redlab{F{:}ex \dash sub}}  
        \TrinaryInfC{\( \Theta ; \Gamma', \Delta \wfdash M \headlin{ N_{i}/ {x[i]} } \esubst{  C \bagsep U }{ x } : \tau \)}
    \end{prooftree}

	\item  {\bf Rule $\redlab{R:Fail^{ \ell }}$.}
	
	Then $\expr{M} =  M\ \esubst{C \bagsep U}{x } $ where $\#( {x},M) \neq \size{C}$ and we can perform the reduction:
	\begin{prooftree}
        \AxiomC{$\#( {x},M) \neq \size{C}$} 
        \AxiomC{\( \widetilde{y} = (\mlfv{M} \setminus x) \uplus \mlfv{C} \)}
        \LeftLabel{\redlab{R:Fail^{ \ell }}}
        \BinaryInfC{\(  M\ \esubst{C \bagsep U}{x } \red {}  \sum_{\perm{C}} \fail^{\widetilde{y}} \)}
    \end{prooftree}
	
	with $\expr{M}'=\sum_{\perm{B}} \fail^{\widetilde{y}}$. By hypothesis, one has the derivation:
    
    \begin{prooftree}
            \AxiomC{\( \Theta , \banged{x} : \eta ; \Gamma' , \hat{x}: \sigma^{2} \wfdash M : \tau \)}
            \AxiomC{\( \Theta ; \Delta \wfdash C \bagsep U : (\sigma^{2} , \epsilon ) \)}
            \AxiomC{\(  \eta \relunbag \epsilon  \)}
        \LeftLabel{\redlab{F{:}ex \dash sub}}  
        \TrinaryInfC{\( \Theta ; \Gamma', \Delta \wfdash M \esubst{ C \bagsep U }{ x } : \tau \)}
    \end{prooftree}
    From $\#(x,M) \neq \size{B}$  we have that $j\neq k$. Hence $\Gamma = \Gamma' , \Delta $ and we type the following:
    
    \begin{prooftree}
        \AxiomC{\( \)}
        \LeftLabel{\redlab{F{:}fail}}
        \UnaryInfC{$\Theta ; \Gamma \wfdash \fail^{\widetilde{y}} : \tau$}
        \AxiomC{\( \cdots \)}
        \AxiomC{\( \)}
        \LeftLabel{\redlab{F{:}fail}}
        \UnaryInfC{$\Theta ;\Gamma \wfdash \fail^{\widetilde{y}} : \tau$}
        \LeftLabel{\redlab{F{:}sum}}
        \TrinaryInfC{$\Theta ; \Gamma \wfdash \sum_{\perm{B}} \fail^{\widetilde{y}}: \tau$}
    \end{prooftree}

\item {\bf Rule $ \redlab{R:Fail^!}$.}

    Then $\expr{M} =  M\ \esubst{C \bagsep U}{x } $ where $\headf{M} =  {x}[i]$, $U_i = \banged{\oneb}$ and we can perform the reduction:
    \begin{prooftree}
        \AxiomC{$\headf{M} =  {x}[i]$}
        \AxiomC{$ U_i = \banged{\oneb} $}
        \AxiomC{\( \)}
        \LeftLabel{\redlab{R:Fail^!}}
        \TrinaryInfC{\(  M\ \esubst{C \bagsep U}{x } \red  M \headlin{ \fail^{\emptyset} / {x}[i] } \esubst{ C \bagsep U}{ x }\)}
    \end{prooftree}
        with $\expr{M}'=M \headlin{ \fail^{\emptyset} / {x}[i] } \esubst{ C \bagsep U}{ x }$. By hypothesis, one has the derivation:
    
    \begin{prooftree}
            \AxiomC{\( \Theta , \banged{x} : \eta ; \Gamma' , \hat{x}: \sigma^{j} \wfdash M : \tau \)}
            \AxiomC{\( \Theta ; \Delta \wfdash C \bagsep U : (\sigma^{k} , \epsilon ) \)}
            \AxiomC{\(  \eta \relunbag \epsilon  \)}
        \LeftLabel{\redlab{F{:}ex \dash sub}}  
        \TrinaryInfC{\( \Theta ; \Gamma', \Delta \wfdash M \esubst{ C \bagsep U }{ x } : \tau \)}
    \end{prooftree}
    where $\Gamma = \Gamma' , \Delta $. By Lemma~\ref{lem:subt_lem_failunres_un}, there is a derivation $\Pi_1$ of  $
    \Theta , \banged{x} : \eta ; \Gamma' , \hat{x}: \sigma^{j} \wfdash   M \headlin{ \fail^{\emptyset}/ {x[i]} } : \tau $. Therefore, one has the derivation: (the last rule applied is $\redlab{R:ex\dash sub}$)
    \begin{prooftree}
            \AxiomC{\( \Theta , \banged{x} : \eta ; \Gamma' , \hat{x}: \sigma^{j} \wfdash M \headlin{ \fail^{\emptyset}/ {x}[i] } : \tau \)}
               \AxiomC{\(  \Theta ; \cdot\wfdash U : \epsilon\)}
                \AxiomC{\( \Theta ; \Delta \wfdash  {B} : \sigma^k\)}
            \LeftLabel{\redlab{F{:}bag}}
            \BinaryInfC{\( \Theta ; \Delta \wfdash C \bagsep U : (\sigma^{k} , \epsilon ) \)}
            \AxiomC{\(  \eta \relunbag \epsilon  \)}
        \TrinaryInfC{\( \Theta ; \Gamma', \Delta \wfdash M \headlin{ \fail^{\emptyset} / {x}[i] } \esubst{  C \bagsep U }{ x } : \tau \)}
    \end{prooftree}

\item {\bf Rule $\redlab{R:Cons_1}$.}

Then $\expr{M} =   \fail^{\widetilde{x}} \ B $ where $B = C \bagsep U $ , $ C = \bag{N_1}\cdot \dots \cdot \bag{N_k} $ , $k \geq 0$ and we can perform the following reduction:
\begin{prooftree}
    \AxiomC{$\size{C} = k$} 
    \AxiomC{\( \widetilde{y} = \mlfv{C} \)}
    \LeftLabel{$\redlab{R:Cons_1}$}
    \BinaryInfC{\(\fail^{\widetilde{x}}\ C \bagsep U \red \sum_{\perm{C}} \fail^{\widetilde{x} \uplus \widetilde{y}} \)}
\end{prooftree}
where $\expr{M}'=\sum_{\perm{C}} \fail^{\widetilde{x} \uplus \widetilde{y}}$. By hypothesis, one has
    \begin{prooftree}
         \AxiomC{\( \)}
        \LeftLabel{\redlab{F{:}fail}}
        \UnaryInfC{\( \Theta ;\Gamma' \wfdash \fail^{\widetilde{x}}: (\sigma^{j} , \eta ) \rightarrow \tau \)}
         \AxiomC{\(  \Theta ;\Delta \wfdash B : (\sigma^{k} , \epsilon )  \)}
         \AxiomC{\( \eta \relunbag \epsilon \)}
            \LeftLabel{\redlab{F{:}app}}
        \TrinaryInfC{\( \Theta ; \Gamma', \Delta \wfdash \fail^{\widetilde{x}}\ B : \tau\)}
    \end{prooftree}

Hence $\Gamma = \Gamma' , \Delta $ and we may type the following:

    \begin{prooftree}
        \AxiomC{\( \)}
        \LeftLabel{\redlab{F{:}fail}}
        \UnaryInfC{$ \Theta ;\Gamma \wfdash \fail^{\widetilde{x} \uplus \widetilde{y}} : \tau$}
        \AxiomC{\( \cdots \)}
        \AxiomC{\( \)}
        \LeftLabel{\redlab{F{:}fail}}
        \UnaryInfC{$\Theta ;\Gamma \wfdash \fail^{\widetilde{x} \uplus \widetilde{y}} : \tau$}
        \LeftLabel{\redlab{F{:}sum}}
        \TrinaryInfC{$\Theta ; \Gamma \wfdash \sum_{\perm{C}} \fail^{\widetilde{x} \uplus \widetilde{y}}: \tau$}
    \end{prooftree}

\item {\bf Rule $\redlab{R:Cons_2}$.}

Then $\expr{M} =   \fail^{\widetilde{z}}\ \esubst{B}{x} $ where $B = \bag{N_1}\cdot \dots \cdot \bag{N_k} $ , $k \geq 1$ and 
one has the  reduction:

\begin{prooftree}
    \AxiomC{$ \#(z , \widetilde{x}) =  \size{C}\quad \widetilde{y} = \mlfv{C} $} 
    \AxiomC{\( \widetilde{y} = \mlfv{C} \)}
    \LeftLabel{$\redlab{R:Cons_2}$}
    \BinaryInfC{\( \fail^{\widetilde{x}}\ \esubst{C \bagsep U}{z}  \red \sum_{\perm{C}} \fail^{(\widetilde{x} \setminus z) \uplus\widetilde{y}}  \)}
\end{prooftree}

where $\expr{M}'=\sum_{\perm{B}} \fail^{(\widetilde{z} \setminus x) \uplus \widetilde{y}}$. By hypothesis, there exists a derivation:

     \begin{prooftree}
            \AxiomC{\( \dom{\Gamma' , \hat{x}:\sigma^{j}}=\widetilde{z} \)}
            \LeftLabel{\redlab{F{:}fail}}
            \UnaryInfC{\( \Theta , \banged{x} : \eta ; \Gamma' , \hat{x}: \sigma^{j} \wfdash M : \tau  \)}
            \AxiomC{\( \Theta ; \Delta \wfdash B : (\sigma^{k} , \epsilon ) \)}
            \AxiomC{\( \eta \relunbag \epsilon \)}
        \LeftLabel{\redlab{F{:}ex \dash sub}}  
        \TrinaryInfC{\( \Theta ; \Gamma', \Delta \wfdash \fail^{\widetilde{z}} \esubst{ B }{ x } : \tau \)}
    \end{prooftree}

Hence $\Gamma = \Gamma' , \Delta $ and we may type the following:

    \begin{prooftree}
        \AxiomC{\( \)}
        \LeftLabel{\redlab{F{:}fail}}
        \UnaryInfC{$\Theta ; \Gamma \wfdash \fail^{(\widetilde{z} \setminus x) \uplus\widetilde{y}} : \tau$}
        \AxiomC{$ \cdots $}
         \AxiomC{\( \)}
        \LeftLabel{\redlab{F{:}fail}}
        \UnaryInfC{$\Theta ; \Gamma \wfdash \fail^{(\widetilde{z} \setminus x) \uplus\widetilde{y}} : \tau$}
        \LeftLabel{\redlab{F{:}sum}}
        \TrinaryInfC{$\Theta ; \Gamma \wfdash \sum_{\perm{B}} \fail^{(\widetilde{z} \setminus x) \uplus\widetilde{y}} : \tau$}
    \end{prooftree}
    
    \item {\bf Rule $\redlab{R:TCont}$.}

Then $\expr{M} = C[M]$ and the reduction is as follows:

\begin{prooftree}
        \AxiomC{$   M \red  M'_{1} + \cdots +  M'_{l} $}
        \LeftLabel{\redlab{R:TCont}}
        \UnaryInfC{$ C[M] \red  C[M'_{1}] + \cdots +  C[M'_{l}] $}
\end{prooftree}

where $\expr{M}'= C[M'_{1}] + \cdots +  C[M'_{l}]$. 
The proof proceeds by analysing the context $C$:

    \begin{enumerate}
    \item $C=[\cdot]\ B$.
    
    In this case $\expr{M}=M \ B$, for some $B$, and the following derivation holds:

    \begin{prooftree}
         \AxiomC{\( \Theta ;\Gamma' \wfdash M: (\sigma^{j} , \eta ) \rightarrow \tau \)}
         \AxiomC{\(  \Theta ;\Delta \wfdash B : (\sigma^{k} , \epsilon )  \)}
         \AxiomC{\( \eta \relunbag \epsilon \)}
            \LeftLabel{\redlab{F{:}app}}
        \TrinaryInfC{\( \Theta ; \Gamma', \Delta \wfdash M\ B : \tau\)}
    \end{prooftree}

where $\Gamma = \Gamma' , \Delta $.
Since $ \Theta ;\Gamma' \wfdash M: (\sigma^{j} , \eta ) \rightarrow \tau $ and $M\red M_1'+\ldots + M_l'$, it follows by IH that $\Gamma'\wfdash M_1'+\ldots + M_l':(\sigma^{j} , \eta ) \rightarrow \tau $. By applying \redlab{F{:}sum}, one has $\Theta ;\Gamma' \wfdash M_i': (\sigma^{j} , \eta ) \rightarrow \tau $, for $i=1,\ldots, l$.  Therefore, we may type the following:
\begin{prooftree}
\AxiomC{\(  \forall i \in {1 , \ldots , l} \)}
         \AxiomC{\( \Theta ;\Gamma' \wfdash M_i': (\sigma^{j} , \eta ) \rightarrow \tau \)}
         \AxiomC{\(  \Theta ;\Delta \wfdash B : (\sigma^{k} , \epsilon )  \)}
         \AxiomC{\( \eta \relunbag \epsilon \)}
            \LeftLabel{\redlab{F{:}app}}
        \TrinaryInfC{\( \Theta ; \Gamma', \Delta \wfdash M_i'\ B : \tau\)}
 \LeftLabel{\redlab{F{:}sum}}
    \BinaryInfC{\( \Theta ; \Gamma', \Delta \wfdash (M'_{1}\ B) + \cdots +  (M'_{l} \ B) : \tau\)}
\end{prooftree}

Thus, $\Gamma\wfdash \expr{M'}: \tau$, and the result follows.

    \item $C=([\cdot])\esubst{B}{x}$.
    
    This case is similar to the previous.
\end{enumerate}
	\item {\bf Rule $ \redlab{R:ECont}.$}
	
Then $\expr{M} = D[\expr{M}'']$ where $\expr{M}'' \rightarrow \expr{M}'''$ then we can perform the following reduction:

\begin{prooftree}
        \AxiomC{$ \expr{M}''  \red \expr{M}'''  $}
        \LeftLabel{$\redlab{R:ECont}$}
        \UnaryInfC{$D[\expr{M}'']  \red D[\expr{M}''']  $}
\end{prooftree}

Hence $\expr{M}' =  D[\expr{M}'''] $.
The proof proceeds by analysing the context $D$ ($D= [\cdot] + \expr{N}$ or $D= \expr{N} + [\cdot]$), and follows easily by induction hypothesis.



\end{enumerate}
\end{proof}

\subsection{Examples}\label{wf:examples}

This section contains examples illustrating the constructions and results given in Section~\ref{sec:lam_types}.

\begin{example}\label{ex:bag_delta4}
The following is a wfderivation $\Pi_2$ for the bag concatenation $\bag{x}\bagsep \oneb^!:$
{\small 
        \begin{prooftree} 
                \AxiomC{}
                        \LeftLabel{\redlab{F{:}var^{ \ell}}}
                        \UnaryInfC{\(  \Theta' ; x : \sigma \wfdash x : \sigma\)}
                        
                        \AxiomC{\(  \)}
                        \LeftLabel{\redlab{F{:}\oneb^{\ell}}}
                        \UnaryInfC{\( \Theta' ; \dash \wfdash \oneb : \omega \)}
                    
                    \LeftLabel{\redlab{F{:}bag^{ \ell}}}
                    \BinaryInfC{\(  \Theta' ; x : \sigma \wfdash \bag{x}\cdot \oneb :\sigma\)}
                            
                    \AxiomC{\(  \)}
                    \LeftLabel{\redlab{F{:}\oneb^!}}
                    \UnaryInfC{\(   \Theta' ;\dash \wfdash  \banged{\oneb} : \sigma' \)}
                
                \LeftLabel{\redlab{F{:}bag}}
                \BinaryInfC{\( \Theta' ;  x : \sigma \wfdash (\bag{x} \bagsep \banged{\oneb} ) : (\sigma , \sigma' ) \)}
                
        \end{prooftree}
        }
\end{example}

\begin{example}[Cont.\ref{ex:bag_delta4}] The following is a well-formedness derivation (labels of the rules being applied are omitted) for term $\Delta_4=\lambda x. x[1] (\bag{x} \bagsep \banged{\oneb} )$:
     {\small    
    \begin{prooftree}
    \AxiomC{}
    \UnaryInfC{\( \Theta , \banged{x}: (\sigma^{j} , \eta ) \rightarrow \tau ; {x}: (\sigma^{j} , \eta ) \rightarrow \tau  \wfdash  {x} : (\sigma^{j} , \eta ) \rightarrow \tau \)}
    \UnaryInfC{\( \Theta , \banged{x}: (\sigma^{j} , \eta ) \rightarrow \tau ; \dash \wfdash {x}[1] : (\sigma^{j} , \eta ) \rightarrow \tau \)}
\AxiomC{\( \Pi_2 \)}
     \AxiomC{\( \eta \relunbag \sigma' \)}
      \TrinaryInfC{\( \Theta , \banged{x} : (\sigma^{j} , \eta ) \rightarrow \tau ;  x : \sigma \wfdash x[1] (\bag{x} \bagsep \banged{\oneb} ) : \tau \)}
    \UnaryInfC{\( \Theta ; \dash \wfdash \lambda x. (x[1] (\bag{x} \bagsep \banged{\oneb} )) : ( \sigma , (\sigma^{j} , \eta ) \rightarrow \tau )   \rightarrow \tau \)}
        \end{prooftree}
        }
\end{example}

\begin{example}\label{ex:bag_a}Below  we show the wf-derivation for the bag
$A=(\bag{x[1]} \cdot  \bag{x}  )  \bagsep \banged{\bag{x[2]}}$.

First, let $\Pi$ be the following derivation:

\begin{prooftree}
 \AxiomC{}
 \LeftLabel{\redlab{F{:}var^{ \ell}}}
  \UnaryInfC{\( \Theta , \banged{x}: \eta;   {x}: \sigma_3  \wfdash  {x} : \sigma_3\)}
 \LeftLabel{\redlab{F{:}var^!}}
\UnaryInfC{\( \Theta , \banged{x}: \eta; \dash \wfdash {x}[1] : \sigma_3\)}
\AxiomC{}
\LeftLabel{\redlab{F{:}var^{ \ell}}}
 \UnaryInfC{\( \Theta , \banged{x}:\eta ;  x: \sigma_3  \wfdash x : \sigma_3\)}
 \AxiomC{\(  \)}
 \LeftLabel{\redlab{F{:}\oneb^{ \ell}}}
\UnaryInfC{\( \Theta , \banged{x}:\eta ; \dash \wfdash \oneb : \omega  \)}
 \LeftLabel{\redlab{F{:}bag^{\ell}}}
\BinaryInfC{\( \Theta , \banged{x}:\eta ;  x: \sigma_3  \wfdash \bag{x} \cdot \oneb : \sigma_3 \)}
\LeftLabel{\redlab{F{:}bag^{\ell}}}
\BinaryInfC{\( \Theta , \banged{x}:\eta ;  x: \sigma_3 \wfdash \bag{x[1]} \cdot  \bag{x}  : \sigma_3^2  \)}
\end{prooftree}


    
From $\Pi$ we can obtain the well-formedness derivation $\Pi_A$ for $A$:

   \begin{prooftree}
  \AxiomC{\(\Pi\)}
  \noLine
  \UnaryInfC{\( \Theta , \banged{x}:\eta ;  x: \sigma_3 \wfdash \bag{x[1]} \cdot  \bag{x}   : \sigma_3^2\)}
 \AxiomC{}
 \LeftLabel{\redlab{F{:}var^{\ell}}}
 \UnaryInfC{\( \Theta , \banged{x}:\eta ; x:\sigma_2  \wfdash  x : \sigma_2 \)}
  \LeftLabel{\redlab{F{:}var^!}}
            \UnaryInfC{\( \Theta , \banged{x}:\eta;\dash \wfdash  x[2] : \sigma_2 \)}
            \LeftLabel{\redlab{F{:}bag^!}}
            \UnaryInfC{\( \Theta , \banged{x}:\eta  ;\dash \wfdash  \banged{\bag{x[2]}} : \sigma_2 \)}
        \LeftLabel{\redlab{F{:}bag}}
        \BinaryInfC{\( \Theta , \banged{x}:\eta ;  x: \sigma_3 \wfdash \underbrace{(\bag{x[1]} \cdot  \bag{x}  )  \bagsep \banged{\bag{x[2]}}}_{A}  : (\sigma_3^{2} , \sigma_2 ) \)}
    \end{prooftree}
  where $\eta = \sigma_3 \concat \sigma_2$.

    \end{example}

\begin{example}\label{ex:bag_b}
    Below we present the wf-derivation $\Pi_B$ of the bag $B=\bag{x} \bagsep \banged{\oneb}$:

    \begin{prooftree}
 \AxiomC{}
\LeftLabel{\redlab{F{:}var^{\ell}}}
\UnaryInfC{\( \Theta , \banged{x}:\eta ;  x: \sigma_3  \wfdash x : \sigma_3\)}
  \AxiomC{\(  \)}
\LeftLabel{\redlab{F{:}\oneb^{\ell}}}
  \UnaryInfC{\( \Theta , \banged{x}:\eta ; \dash \wfdash \oneb : \omega  \)}
   \LeftLabel{\redlab{F{:}bag^{\ell}}}
  \BinaryInfC{\( \Theta , \banged{x}:\eta ;  x: \sigma_3  \wfdash \bag{x} \cdot \oneb : \sigma_3^1\)}
\AxiomC{\(  \)}
 \LeftLabel{\redlab{F{:}\oneb^!}}
 \UnaryInfC{\(  \Theta , \banged{x}:\eta  ;\dash \wfdash  \banged{\oneb} : \sigma' \)}
 \LeftLabel{\redlab{F{:}bag}}
 \BinaryInfC{\( \Theta , \banged{x}:\eta ; x: \sigma_3 \wfdash (\bag{x} \bagsep \banged{\oneb} ) : (\sigma_3 , \sigma' ) \)}
 \end{prooftree}
    
\end{example}

\begin{example} To illustrate our well-formed rules, let $M$ be the following $\lamrfailunres$-term:  $$ M= \lambda x. (y ( \underbrace{(\bag{x[1]} \cdot  \bag{x}  )  \bagsep \banged{\bag{x[2]}})}_{A} \underbrace{(\bag{x} \bagsep \banged{\oneb} )}_{B}).$$
    To ease the notation $M$ is an abstraction $\lambda x. ((y A)\ B)$, where $A=(\bag{x[1]} \cdot  \bag{x}  )  \bagsep \banged{\bag{x[2]}}$  and $B=\bag{x} \bagsep \banged{\oneb}$.
      From the derivation $\Pi_A$ (Example~\ref{ex:bag_a}) we obtain the wf-derivation $\Pi_A'$ for the  application $y A$:
           \begin{prooftree}
         \AxiomC{}
        \LeftLabel{\redlab{F{:}var^{\ell}}}
        \UnaryInfC{\(  \Theta , \banged{x}:\eta ; \Delta \wfdash y : (\sigma_3^{k} , \eta'' ) \rightarrow ((\sigma_3^{j} , \eta' ) \rightarrow \tau) \)}
         \AxiomC{$\Pi_A$}
         \noLine
         \UnaryInfC{\(  \Theta , \banged{x}:\eta ;  x: \sigma_3 \wfdash A  : (\sigma_3^{2} , \sigma_2 )  \)}
         \AxiomC{\( \eta'' \relunbag \sigma_2 \)}
            \LeftLabel{\redlab{F{:}app}}
        \TrinaryInfC{\( \Theta , \banged{x}:\eta ; x: \sigma_3 , \Delta \wfdash y A: (\sigma_3^{j} , \eta' ) \rightarrow \tau  \)}
    \end{prooftree}
   where  $\eta=\sigma_3\bagsep \sigma_2$, for some list type $\eta'$ and integers $k,j$. From the premise $\eta''\relunbag \sigma_2$ it follows that $\eta'' =  \sigma_2 \concat \eta'''$ for an arbitrary $\eta'''$. From the derivation $\Pi_B$ (Example~\ref{ex:bag_b}) we obtain the well-formed derivation for term $M$:
        \begin{prooftree}
        \AxiomC{$\Pi_A'$}
        \noLine
        \UnaryInfC{\(\Theta , \banged{x}:\eta ;  x: \sigma_3, \Delta \wfdash y \ A: (\sigma^j_3,\eta') \to \tau \)}
        \AxiomC{$\Pi_B$}
        \noLine
        \UnaryInfC{\(\Theta , \banged{x}:\eta ; x:\sigma_3\wfdash B:(\sigma_3,\sigma')\)}
        \AxiomC{\( \eta' \relunbag \sigma'\)}
        \LeftLabel{$\redlab{F:app}$}
        \TrinaryInfC{\(\Theta , \banged{x}:\eta ;  x: \sigma_3, x: \sigma_3,\Delta\wfdash (y A  ) B  : \tau  \qquad x\notin \dom{\Delta} \)}
        \LeftLabel{\redlab{F{:}abs}}
        \UnaryInfC{\( \Theta ;  \Delta  \wfdash  \lambda x. ((y  A)B)   : (\sigma_3^{2} , \eta )   \rightarrow \tau \)}
    \end{prooftree}
    where $\Delta = y : (\sigma_3^{k} , \eta'' ) \rightarrow ((\sigma_3^{j} , \eta' ) \rightarrow \tau)$. From the premise $\eta' \relunbag \sigma'$  we obtain that $\eta' = \sigma' \concat \eta'''' $, where $\sigma'$ is an arbitrary strict type and $\eta''''$ is an arbitrary list type.
    \end{example}

\section{Appendix to Subsection~\ref{s:pi}}\label{appC}

 \begin{definition}[Structural Congruence]
 Structural congruence 
is defined as the least congruence relation on processes such that:
\[
\begin{array}{c}
\begin{array}{c@{\hspace{1.5cm}}c@{\hspace{1.5cm}}c}
    P \equiv_\alpha Q \Rightarrow P  \equiv Q
& 
P \para \zero  \equiv  P
&
P \para Q \equiv Q \para P 
\\
(\nu x)\zero  \equiv \zero
&
(P \para Q) \para R  \equiv P \para (Q \para R)
&
[x \leftrightarrow y]
 \equiv
[y \leftrightarrow x]
\end{array}
\\
\begin{array}{c@{\hspace{1.5cm}}c}
 x \not \in \fn{P} \Rightarrow ((\nu x )P) \para Q  \equiv (\nu x)(P \para Q) 
 &
 (\nu x)(\nu y)P  \equiv (\nu y)(\nu x)P
 \\
P \oplus (Q \oplus R) \equiv (P \oplus Q) \oplus R
&
P \oplus Q \equiv Q \oplus P
\\
(\nu x)(P \para (Q \oplus R))  \equiv (\nu x)(P \para Q) \oplus (\nu x)(P \para R)
&
\zero \oplus \zero  \equiv  \zero
\end{array}
\end{array}
\]
\end{definition}






\section{Appendix to Subsection~\ref{ssec:lamshar}} 
\label{app:ssec:lamshar}
We need a few auxiliary notions to formalize reduction for \lamrsharfailunres.

\begin{definition}[Head]
We amend Definition \ref{def:headfailure} for the case of terms in \lamrsharfailunres:
\[
\begin{array}{l}
\begin{array}{l@{\hspace{3cm}}l}
\headf{ {x}}  =  {x}   & \headf{{x}[i]}  = {x}[i] 
\\
\headf{M\ B}  = \headf{M} & \headf{\lambda x . (M[  {\widetilde{x}} \leftarrow  {x} ])}  = \lambda x . (M[  {\widetilde{x}} \leftarrow  {x} ])\\
\headf{M \linexsub{N / {x}}}  = \headf{M} & \headf{M \unexsub{U / \unvar{x}}}  = \headf{M}
\end{array}\\
\headf{(M[ {\widetilde{x}} \leftarrow  {x}])\esubst{ B }{ x }} = (M[ {\widetilde{x}} \leftarrow  {x}])\esubst{ B }{ x } \hspace{1cm} \headf{\fail^{\widetilde{x}}}  = \fail^{\widetilde{x}}
\\
\headf{M[ {\widetilde{x}} \leftarrow  {x}]} = 
\begin{cases}
     {x} & \text{If $\headf{M} = y \text{ and } y \in  {\widetilde{x}}$}\\
    \headf{M} & \text{Otherwise}
\end{cases}
\\
\end{array}
\]

\end{definition}
\begin{definition}[Linear Head Substitution]\label{def:headlinfail}
Given an $M$ with $\headf{M} = x$, the linear substitution of a term $N$ for the head variable $x$ of the term $M$, written $M\headlin{ N / x}$ is inductively defined as:
\small{
\begin{align*}
x \headlin{ N / x}   &= N & 
(M\ B)\headlin{ N/x }  = (M \headlin{ N/x })\ B\\
(M \unexsub{U / \unvar{y}} ) \headlin{ N/x } &= (M\headlin{ N/x })\ \unexsub{U / \unvar{y}}  & x \not = y\\
(M \linexsub{L / {y}} ) \headlin{ N/x } &= (M\headlin{ N/x })\ \linexsub{L / {y}}  & x \not = y\\
((M[ {\widetilde{y}} \leftarrow  {y}])\esubst{ B }{  { {y}} })\headlin{ N/x } &= (M[\widetilde{y} \leftarrow  {y}]\headlin{ N/x })\ \esubst{ B }{ y }  
& x \not = y \\
(M[ {\widetilde{y}} \leftarrow  {y}]) \headlin{ N/x } &=  (M\headlin{ N/x }) [ {\widetilde{y}} \leftarrow  {y}] & x \not = y
\end{align*}
}
\end{definition}

Following \defref{def:context_lamrfail}, we define contexts  for terms and expressions.
While expression contexts are as in \defref{def:context_lamrfail}; the term contexts for \lamrsharfailunres involve  explicit linear and unrestricted substitutions, rather than an explicit substitution: this is due to the reduction strategy we have chosen to adopt, as we always wish to evaluate explicit substitutions first. 
We assume that the terms that fill in the holes respect the conditions on explicit linear substitutions (i.e., variables appear in a term only once, shared variables must occur in the context), similarly for explicit unrestricted substitutions.

\begin{definition}[Evaluation Contexts]
Contexts for terms and expressions are defined by the following grammar:
{\small
\[
\begin{array}{rl}
     C[\cdot] ,  C'[\cdot] &::=([\cdot])B \mid ([\cdot])\linexsub{N /  {x}}  \mid ([\cdot])\unexsub{U / \unvar{x}} \mid ([\cdot])[ {\widetilde{x}} \leftarrow  {x}]\mid ([\cdot])[ \leftarrow  {x}]\esubst{\oneb}{ x} \\
 D[\cdot] , D'[\cdot] & ::= M + [\cdot] \mid [\cdot] + M
\end{array}
\]}
The result of replacing a hole with a \lamrsharfailunres-term $M$ in a context $C[\cdot]$, denoted with $C[M]$, has to be a term in $\lamrsharfailunres$. 
\end{definition}

This way, e.g., the hole in context $C[\cdot ]= ([\cdot])\linexsub{N/ {x}}$ cannot be filled with $y$, since  $C[y]= (y)\linexsub{N/ {x}}$ is not a well-defined term. 
Indeed, $M\linexsub{N/ {x}}$ requires that $x$ occurs exactly once within $M$.
Similarly, we cannot fill the hole with $\fail^{z}$ with $z\neq x$, since $C[\fail^{z}]= (\fail^{z})\linexsub{N/ {x}}$ is also not a well-defined term, for the same reason.

\begin{figure}[!t]
\begin{mdframed}
\small
   \centering
   \[
   \begin{array}{ll}
   \begin{array}{l}
   \llfv{ {x}}  = \{  {x} \} \\
 \llfv{{x}[i]}  = \emptyset \\
 \llfv{ {\oneb}}  = \emptyset \\
      \llfv{ {\bag{M}}}  = \llfv{M}  \\
 \llfv{\banged{\bag{M}}}  = \llfv{M} \\
 \llfv{C \bagsep U}  = \llfv{C} \\
 \llfv{ {\bag{M}}  \cdot C} = \llfv{ {M}} \cup \llfv{ C}  
   \end{array}
        &  
    \begin{array}{l}
    \llfv{M\ B}  =  \llfv{M} \cup \llfv{B} \\
\llfv{\lambda x . M[ {\widetilde{x}} \leftarrow  {x}]}  = \llfv{M[ {\widetilde{x}} \leftarrow  {x}]}\!\setminus\! \{  {x} \}\\
    \llfv{M[ {\widetilde{x}} \leftarrow  {x}] \esubst{B}{x}}  = (\llfv{M[ {\widetilde{x}} \leftarrow  {x}]}\setminus \{  {x} \}) \uplus \llfv{B}  \\
    \llfv{M \linexsub{N /  {x}}} = \llfv{ {M}} \cup \llfv{ N} \\
    \llfv{M \unexsub{U / \unvar{x}}} = \llfv{ {M}} \\
    \llfv{\expr{M}+\expr{N}}  = \llfv{\expr{M}} \cup \llfv{\expr{N}} \\
    \llfv{\fail^{ {x}_1, \cdots ,  {x}_n}}  = \{  {x}_1, \ldots ,  {x}_n \}
        \end{array}
   \end{array}
   \]
 \end{mdframed}
    \caption{Free Variables for \lamrsharfailunres.}
    \label{fig:fvarsfail}
     \vspace{-2mm}
\end{figure}

\subsection*{Operational Semantics}
As in \lamrfailunres, the reduction relation $\red$ on \lamrsharfailunres operates lazily on expressions; it is defined by the rules in \figref{fig:share-reductfailureunres}, and relies on a notion of linear free variables given in \figref{fig:fvarsfail}.

\begin{figure*}[!t]
\centering
  \begin{prooftree}
    \AxiomC{$\raisebox{17.0pt}{}$}
    \LeftLabel{\redlab{RS{:}Beta}}
    \UnaryInfC{\(  (\lambda x . (M[ {\widetilde{x}} \leftarrow  {x}])) B  \red (M[ {\widetilde{x}} \leftarrow  {x}])\esubst{ B }{ x } \)}
 \end{prooftree}

 \begin{prooftree}
     \AxiomC{$ \headf{M} =  {x}$}
     \LeftLabel{\redlab{RS{:}Fetch^{\ell}}}
     \UnaryInfC{\(  M \linexsub{N /  {x}} \red  M \headlin{ N/ {x} }  \)}
\DisplayProof\hfill
     \AxiomC{$ \headf{M} = {x}[i]$}
     \AxiomC{$ U_i = \banged{\bag{N}}$}
     \LeftLabel{\redlab{RS{:} Fetch^!}}
     \BinaryInfC{\(  M \unexsub{U / \unvar{x}} \red  M \headlin{ N /{x}[i] }\unexsub{U / \unvar{x}} \)}
 \end{prooftree}

 \begin{prooftree}
    \AxiomC{$ C = \bag{M_1}
    \cdots  \bag{M_k} $}
    \AxiomC{$ M \not= \fail^{\widetilde{y}} $}
    \LeftLabel{\redlab{RS{:}Ex \dash Sub}}
    \BinaryInfC{\( \!M[x_1 , \cdots , x_k \leftarrow  {x}]\esubst{ C \bagsep U }{ x } \red \displaystyle \sum_{C_i \in \perm{C}}M\linexsub{C_i(1)/ {x_1}} \cdots \linexsub{C_i(k)/ {x_k}} \unexsub{U / \unvar{x} }   \)}
\end{prooftree}

\begin{prooftree}
     \AxiomC{$ k \neq \size{C} $} 
     \AxiomC{$  \widetilde{y} = (\llfv{M} \setminus \{  \widetilde{x}\} ) \cup \llfv{C} $}
    \LeftLabel{\redlab{RS{:}Fail^{\ell}}}
    \BinaryInfC{\(  M[x_1 , \cdots , x_k \leftarrow  {x}]\ \esubst{C \bagsep U}{ x }  \red \displaystyle \sum_{C_i \in \perm{C}}  \fail^{\widetilde{y}} \)}
    \DisplayProof
\hfill
         \AxiomC{$\headf{M} = {x}[i]$}
    \AxiomC{$ U_i = \banged{\oneb} $}
    \noLine
    \UnaryInfC{\( \widetilde{y} = \llfv{M} \)}
    \LeftLabel{\redlab{RS{:}Fail^!}}
    \BinaryInfC{\(  M \unexsub{U / \unvar{x} } \red 
    M \headlin{ \fail^{\emptyset} /{x}[i] } \unexsub{U / \unvar{x} }  \)}
\end{prooftree}

\begin{prooftree}
    \AxiomC{\( \widetilde{y} = \llfv{C} \)}
    \LeftLabel{$\redlab{RS{:}Cons_1}$}
    \UnaryInfC{\(\fail^{\widetilde{x}}\ C \bagsep U \red \displaystyle \sum_{\perm{C}} \fail^{\widetilde{x} \uplus \widetilde{y}}  \)}
 \DisplayProof
\hfill
    \AxiomC{\(  \size{C} =   |  {\widetilde{x}} |   \)} 
    \AxiomC{\(  \widetilde{z} = \llfv{C} \)}
    \LeftLabel{$\redlab{RS{:}Cons_2}$}
    \BinaryInfC{\(  (\fail^{ {\widetilde{x}} \uplus \widetilde{y}} [ {\widetilde{x}} \leftarrow  {x}])\esubst{ C \bagsep U }{ x }  \red \displaystyle \sum_{\perm{C}} \fail^{\widetilde{y} \uplus \widetilde{z}}  \)}
\end{prooftree}

\begin{prooftree}
    \AxiomC{\( \widetilde{z} = \llfv{N} \)}
    \LeftLabel{$\redlab{RS{:}Cons_3}$}
    \UnaryInfC{\( \fail^{\widetilde{y}\cup x} \linexsub{N/  {x}} \red  \fail^{\widetilde{y} \cup \widetilde{z}}  \)}
\DisplayProof
\hfill
    \AxiomC{\(  \)}
    \LeftLabel{$\redlab{RS{:}Cons_4}$}
    \UnaryInfC{\( \fail^{\widetilde{y}} \unexsub{U / \unvar{x}}  \red  \fail^{\widetilde{y}}  \)}
\end{prooftree}

    \caption{Reduction Rules for \lamrsharfailunres (contextual rules omitted)}
    \label{fig:share-reductfailureunres}
\end{figure*}

 As expected, rule \redlab{RS:Beta} results into an explicit substitution $M[\widetilde{x}\leftarrow x]\esubst{ B }{ x }$, where  $B=C\bagsep U$ is a bag with a linear part $C$ and an unrestricted part $U$.  
 
 In the case $|\widetilde{x}|=k=\size{C}$ and  $M\neq \fail^{\widetilde{y}}$, this explicit substitution  expands into a sum of terms involving explicit linear and unrestricted substitutions $\linexsub {N /x}$ and $\unexsub{U/\unvar{x}}$, which are the ones to reduce into a  head substitution, via rule \redlab{RS{:}Ex\dash Sub}. Intuitively, rule~\redlab{RS{:}Ex \dash Sub} ``distributes'' an  explicit substitution into a sum of terms involving explicit linear substitutions; it considers all possible permutations of the elements in the bag among all shared variables. Explicit linear/unrestricted substitutions evolve either into a head substitution $\headlin{ N / x}$ (with $N \in B$), via rule $\redlab{RS:Fetch^\ell}$, or $\headlin{N/x}\unexsub{U/\unvar{x}}$ (with $U \in B$) via rule $\redlab{RS:Fetch^!}$, depending on whether the head of the term is a linear or an unrestricted variable. 
 
 In the case $|\widetilde{x}|=k\neq \size{C}$ or  $M=\fail^{\widetilde{y}}$, the term $M[\widetilde{x}\leftarrow x]\esubst{ B }{ x }$ will be a redex of either rule $\redlab{RS:Fail^\ell}$  or $\redlab{RS:Cons_2}$. The latter has a side condition $|\widetilde{x}|=\size{C}$, because we want to give priority for application of $\redlab{RS:Fail^\ell}$ when there is a mismatch of linear variables and the number of linear resources.
 Rule $\redlab{RS:Fail^!}$  applies to an unrestricted substitution $M\unexsub{U/\unvar{x}}$  when the head of $M$ is an unrestricted variable, say $x[i]$, that aims to consume the $i$-th component of the bag $U$ which is empty, i.e., $U_i=1^!$; then the term reduces to a term where all the head of $M$ is substituted by  $\fail^\emptyset$, the explicit unrestricted substitution is not consumed and continues in the resulting term. Consuming rules $\redlab{RS:Cons_1}, \redlab{RS:Cons_3}$ and $\redlab{RS:Cons_4}$ the term $\fail$ consume either a bag, or an explicit linear substitution, or an explicit unrestricted substitution, respectively.
 
Notice that the left-hand sides of the reduction rules in $\lamrsharfailunres$  do not interfere with each other. 
 Similarly to $\lamrfailunres$, reduction in \lamrsharfailunres satisfies a \emph{diamond property}. 


\begin{example}
\label{ex:id_sem_intermed}

We continue to illustrate the different behaviors of the terms below w.r.t. the reduction rules for $\lamrsharfailunres$ (\figref{fig:share-reductfailureunres}):
\begin{enumerate}
    \item The case with a linear variable $x$ in which the linear bag has size one, is close to the standard {\it meaning} of applying an identity function to a term:
    \[
    \begin{aligned}
        &(\lambda x. x_1 [x_1 \leftarrow x] ) \bag{N'}\bagsep U'\red_{\redlab{R:Beta}}  x_1 [x_1 \leftarrow x] \esubst{\bag{N'}\bagsep U'}{x} \\ 
        &\red_{\redlab{RS{:}Ex \dash Sub}}  x_1 \linexsub{ N' / {x_1}} \unexsub{ U' / \unvar{x} }  \red_{\redlab{R:Fetch^{\ell}}} x_1 \headlin{ N' / x_1 } \unexsub{ U' / \unvar{x} } =N' \unexsub{U' / \unvar{x} }
    \end{aligned}
    \]

\item The case of an abstraction of one unrestricted variable that aims to consume the first element of the unrestricted bag, which fails to contain a resource in the first component. 
    \[
    \begin{aligned}
    &(\lambda x. x[1][ \leftarrow x] ) \oneb \bagsep \oneb^! \concat U'\red_{\redlab{R:Beta}} x[1][ \leftarrow x] \esubst{\oneb \bagsep \oneb^! \concat U' }{x}\red_{\redlab{RS{:}Ex \dash Sub}}x[1]\unexsub{ \oneb^! \concat U' / \unvar{x} }  \\
    &\red_{\redlab{RS{:}Fail^!}}x[1]  \headlin{ \fail^{\emptyset} /{x}[1] } \unexsub{ \oneb^! \concat U' / \unvar{x} } =\fail^{\emptyset} \unexsub{ \oneb^! \concat U' / \unvar{x} }
    \end{aligned}
    \]

  \item The case of an abstraction of one unrestricted variable that aims to consume the $i$th component of the unrestricted bag $U'$. In the case  $C' = \oneb$ and $U'_i\neq 1^!$ the reduction is:
    \[
    \begin{aligned}
    &(\lambda x. x[i][ \leftarrow x] ) C'\concat U'\red_{\redlab{R:Beta}}x[i][ \leftarrow x]\esubst{C'\concat U'}{x}\\
    &\red_{\redlab{RS{:}Ex \dash Sub}}x[i]\unexsub{ C'\concat U' / \unvar{x} }  \\
    &\red_{\redlab{RS{:} Fetch^!}}x[i] \headlin{ N' /{x}[i] } \unexsub{ C'\concat U' / \unvar{x} }=N\unexsub{ C'\concat U' / \unvar{x} }
    \end{aligned}
    \]
    where $ U'_i=\bag{N'}^!$.
    Otherwise, $U'_i=1^!$ and the reduction relies again on the size of the linear bag $C$: if  $\#(x,x[i])=\size{C'}$ the reduction ends with an application of $\redlab{R:\fail^!}$; otherwise, it ends with an application $\redlab{R:\fail^{\ell }}$.
\end{enumerate}
    
\end{example}

\subsection{Well-formedness rules for  \texorpdfstring{$\lamrsharfailunres$}{}}

Similarly to $\lamrfailunres$ we present a set  ``well-formedness'' rules for $\lamrsharfailunres$-terms, -bags and -expressions,  based on  an intersection type system for \lamrsharfailunres, defined upon strict, multiset, list, tuple types, as introduced for \lamrfailunres and presented in \figref{app_fig:wfsh_rulesunres}. Linear contexts $\Gamma,\Delta$ and unrestricted contexts $\Theta, \Upsilon$ are the same as in $\lamrfailunres$, as well as well-formedness judgements $\Theta; \Gamma \vdash \mathbb{M}:\sigma$.

\begin{definition}[Well-formedness in $\lamrsharfailunres$]
An expression $ \expr{M}$ is well formed if there exists a  $\Theta, \Gamma$ and a   $\tau$ such that $\Theta ; \Gamma \wfdash  \expr{M} : \tau $ is entailed via the rules in \figref{app_fig:wfsh_rulesunres}.
\end{definition}

\begin{figure*}[!h]
    \centering
    
\begin{prooftree}
\AxiomC{}
\LeftLabel{\redlab{FS{:}var^{\ell}}}
\UnaryInfC{\( \Theta ;  {x}: \sigma \wfdash  {x} : \sigma\)}
\DisplayProof
\hfill
\AxiomC{\( \Theta , {x}: \eta;  {x}: \eta_i , \Delta \wfdash  {x} : \sigma\)}
\LeftLabel{\redlab{FS{:}var^!}}
\UnaryInfC{\( \Theta , {x}: \eta; \Delta \wfdash {x}[i] : \sigma\)}
  \DisplayProof
\hfill
  \AxiomC{\(  \)}
\LeftLabel{\redlab{FS{:}\oneb^{\ell}}}
\UnaryInfC{\( \Theta ; \dash \wfdash \oneb : \omega \)}
\end{prooftree}
    
\begin{prooftree}
\AxiomC{\(  \)}
\LeftLabel{\redlab{FS{:}\oneb^!}}
\UnaryInfC{\( \Theta ;  \dash  \wfdash \banged{\oneb} : \sigma \)}
\DisplayProof
\hfill
\AxiomC{\( \Theta ; \Gamma  \wfdash M : \tau\)}
\LeftLabel{ \redlab{FS\!:\!weak}}
\UnaryInfC{\( \Theta ; \Gamma ,  {x}: \omega \wfdash M[\leftarrow  {x}]: \tau \)}
\end{prooftree}

\begin{prooftree}
 \AxiomC{$ \Theta ; \Gamma \wfdash \expr{M} : \sigma$}
 \noLine
 \UnaryInfC{$\Theta ; \Gamma \wfdash \expr{N} : \sigma$}
   \LeftLabel{\redlab{FS{:}sum}}
   \UnaryInfC{$ \Theta ; \Gamma \wfdash \expr{M}+\expr{N}: \sigma$}
     \DisplayProof
  \hfill
    \AxiomC{\( \Theta , {x}:\eta ; \Gamma ,  {x}: \sigma^k \wfdash M[ {\widetilde{x}} \leftarrow  {x}] : \tau \quad  {x} \notin \dom{\Gamma} \)}
        \LeftLabel{\redlab{FS{:}abs\dash sh}}
        \UnaryInfC{\(  \Theta ; \Gamma \wfdash \lambda x . (M[ {\widetilde{x}} \leftarrow  {x}])  : (\sigma^k, \eta )  \rightarrow \tau \)}
\end{prooftree}

\begin{prooftree}
         \AxiomC{\( \Theta ;\Gamma \wfdash M : (\sigma^{j} , \eta ) \rightarrow \tau \)}
         \AxiomC{\( \eta \relunbag \epsilon \)}
         \noLine
         \UnaryInfC{\(  \Theta ;\Delta \wfdash B : (\sigma^{k} , \epsilon )  \)}
        \LeftLabel{\redlab{FS{:}app}}
        \BinaryInfC{\( \Theta ; \Gamma, \Delta \wfdash M\ B : \tau\)}
   \end{prooftree}

    \begin{prooftree}
            \AxiomC{\( \Theta ; \Gamma\wfdash C : \sigma^k\)}
            \AxiomC{\(  \Theta ;\dash \wfdash  U : \eta \)}
        \LeftLabel{\redlab{FS{:}bag}}
        \BinaryInfC{\( \Theta ; \Gamma \wfdash C \bagsep U : (\sigma^{k} , \eta  ) \)}
        \DisplayProof
        \hfill 
            \AxiomC{\( \Theta ; \dash \wfdash U : \epsilon\)}
        \AxiomC{\( \Theta ; \dash \wfdash V : \eta\)}
        \LeftLabel{\redlab{FS{:}\concat-bag^{!}}}
        \BinaryInfC{\( \Theta ; \dash  \wfdash U \concat V :\epsilon \concat \eta \)}
          \end{prooftree}

    \begin{prooftree}
\AxiomC{\( \Theta ; \dash \wfdash M : \sigma\)}
        \LeftLabel{\redlab{FS{:}bag^!}}
        \UnaryInfC{\( \Theta ; \dash  \wfdash \banged{\bag{M}}:\sigma \)}
        \DisplayProof
\hfill
  \AxiomC{\( \Theta ; \Gamma \wfdash M : \sigma\)}
        \AxiomC{\( \Theta ; \Delta \wfdash C : \sigma^k\)}
        \LeftLabel{\redlab{FS{:}bag^{\ell}}}
        \BinaryInfC{\( \Theta ; \Gamma, \Delta \wfdash \bag{M}\cdot C:\sigma^{k+1}\)}
    \end{prooftree}

    \begin{prooftree}
          \AxiomC{\({x}\notin \dom{\Gamma} \quad k \not = 0\)}
        \noLine
        \UnaryInfC{\( \Theta ;  \Gamma ,  {x}_1: \sigma, \cdots,  {x}_k: \sigma \wfdash M : \tau \)}
        \LeftLabel{ \redlab{FS{:}share}}
        \UnaryInfC{\( \Theta ;  \Gamma ,  {x}: \sigma^{k} \wfdash M [  {x}_1 , \cdots ,  {x}_k \leftarrow  {x} ]  : \tau \)}  
        \DisplayProof \hfill
        \AxiomC{\( \Theta ; \Gamma  ,  {x}:\sigma \wfdash M : \tau \)}
        \noLine
        \UnaryInfC{\( \Theta ; \Delta \wfdash N : \sigma \)}
            \LeftLabel{\redlab{FS{:}Esub^{\ell}}}
        \UnaryInfC{\( \Theta ; \Gamma, \Delta \wfdash M \linexsub{N /  {x}} : \tau \)}
         \end{prooftree}
         
\begin{prooftree}
        \AxiomC{\( \Theta , {x} : \eta; \Gamma  \wfdash M : \tau \)}
        \AxiomC{\( \Theta ; \dash \wfdash U : \epsilon \)}
        \AxiomC{\( \eta \relunbag \epsilon \)}
            \LeftLabel{\redlab{FS{:}Esub^!}}
        \TrinaryInfC{\( \Theta ; \Gamma \wfdash M \unexsub{U / \unvar{x}}  : \tau \)}
    \end{prooftree}

    \begin{prooftree}
    \AxiomC{\( \Theta ; \Delta \wfdash B : (\sigma^{k} , \epsilon ) \)}
            \noLine
            \UnaryInfC{\( \Theta , {x} : \eta ; \Gamma ,  {x}: \sigma^{j} \wfdash M[ {\widetilde{x}} \leftarrow  {x}] : \tau  \)}
            \AxiomC{\( \eta \relunbag \epsilon \)}
        \LeftLabel{\redlab{FS{:}Esub}}    
        \BinaryInfC{\( \Theta ; \Gamma, \Delta \wfdash (M[ {\widetilde{x}} \leftarrow  {x}])\esubst{ B }{ x }  : \tau \)}
   \DisplayProof
\hfill
        \AxiomC{\( \dom{\Gamma} = \widetilde{x}\)}
  \LeftLabel{\redlab{FS{:}fail}}
  \UnaryInfC{\( \Theta ; \Gamma \wfdash  \fail^{\widetilde{x}} : \tau \)}
   \end{prooftree}

    \caption{Well-Formedness Rules for \lamrsharfailunres}\label{app_fig:wfsh_rulesunres}
    \vspace{-3mm}
\end{figure*}

Well-formed rules for \lamrsharfailunres are essentially the same as the ones for $\lamrfailunres$.  Rules $\redlab{FS:abs\dash sh}$ and $\redlab{FS{:}Esub}$ are modified to  take into account  the sharing construct $[\widetilde{x}\leftarrow x]$. Rule  $\redlab{FS{:}share}$ is exclusive for $\lamrsharfailunres$ and requires, for each $i=1,\ldots, k$, the variable assignment  $x_i:\sigma$, to derive the well-formedness of $M[x_1,\ldots, x_n \leftarrow x]:\tau$ (in addition to variable assignments in $\Theta$ and $\Gamma$).

\begin{lemma}[Linear Substitution Lemma for \lamrsharfailunres]
\label{l:lamrsharfailsubsunres}
If $\Theta ; \Gamma ,  {x}:\sigma \wfdash M: \tau$, $\headf{M} =  {x}$, and $\Theta ; \Delta \wfdash N : \sigma$ 
then 
$\Gamma , \Delta \wfdash M \headlin{ N /  {x} }:\tau$.
\end{lemma}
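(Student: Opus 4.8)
The plan is to prove the Linear Substitution Lemma for \lamrsharfailunres\ by structural induction on the term $M$ subject to the constraint $\headf{M} = x$. This mirrors the proof of Lemma~\ref{lem:subt_lem_failunres_lin} for \lamrfailunres, but we must account for the additional term constructors present in \lamrsharfailunres, namely the sharing construct $M[\widetilde{x}\leftarrow x]$, the explicit linear substitution $M\linexsub{N/x}$, the explicit unrestricted substitution $M\unexsub{U/\unvar{x}}$, and the composite form $(M[\widetilde{x}\leftarrow x])\esubst{B}{x}$. The key observation that keeps the induction tractable is that the condition $\headf{M}=x$ (for a \emph{linear} variable $x$) severely restricts which constructors can appear at the root: by \defref{def:headlinfail}, the head-substitution operation $M\headlin{N/x}$ is only defined when the head variable is linear, and it commutes with all the constructors that can legitimately sit above a linear head occurrence.

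First I would handle the base case $M = x$: here the only applicable well-formedness rule is $\redlab{FS{:}var^{\ell}}$, forcing $\Gamma = \cdot$, and since $x\headlin{N/x} = N$, the conclusion is exactly the hypothesis $\Theta;\Delta\wfdash N:\sigma$. Next I would dispatch the inductive cases where the head sits underneath an ``inert'' constructor. For $M = M'\,B$: inverting $\redlab{FS{:}app}$ splits $\Gamma = \Gamma_1,\Gamma_2$ with $x:\sigma$ landing in the context of $M'$ (since $x$ is linear and occurs at the head of $M'$, not in $B$); the IH applied to $M'$ gives the typing of $M'\headlin{N/x}$, and then $\redlab{FS{:}app}$ reassembles $(M'\headlin{N/x})\,B = (M'\,B)\headlin{N/x}$. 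The cases $M = M'\linexsub{L/y}$ (with $x\neq y$, invert $\redlab{FS{:}Esub^{\ell}}$), $M = M'\unexsub{U/\unvar{x}}$ (invert $\redlab{FS{:}Esub^!}$), and $M = (M'[\widetilde{y}\leftarrow y])\esubst{B}{y}$ (with $x\neq y$, invert $\redlab{FS{:}Esub}$) are all structurally identical: the linear variable $x$ cannot be $y$ (a sharing variable) and cannot lie inside $B$/$U$/$L$ by the side conditions on these constructors, so $x:\sigma$ sits in the context governing $M'$, the IH applies to $M'$, and the relevant rule together with \defref{def:headlinfail} reassembles the substituted term.

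The case that needs the most care is $M = M'[\widetilde{y}\leftarrow y]$, i.e., a sharing node directly above the head. By \defref{def:headlinfail} we have $M\headlin{N/x} = (M'\headlin{N/x})[\widetilde{y}\leftarrow y]$, valid only when $x\notin\widetilde{y}$ (if the head of $M'$ were one of the shared $y_i$, then $\headf{M}$ would be $y$, not $x$). Inverting $\redlab{FS{:}share}$ gives $\Theta;\Gamma'', y_1{:}\rho,\dots,y_k{:}\rho \wfdash M':\tau$ with $\Gamma = \Gamma'', y{:}\rho^k$; since $x\neq y$ and $x\notin\widetilde{y}$, the assignment $x:\sigma$ belongs to $\Gamma''$, so the IH gives $\Theta;\Gamma''_{-x},\Delta, \widetilde{y}{:}\rho \wfdash M'\headlin{N/x}:\tau$, and re-applying $\redlab{FS{:}share}$ yields the result. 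One must also record the degenerate sub-case $M = M'[\leftarrow x]$ where $\widetilde{y}$ is empty and the shared variable \emph{is} $x$: but then $\headf{M'[\leftarrow x]} = \headf{M'}$, and since $x$ does not occur in $M'$ (the sharing binds nothing), $x=\headf{M}$ is impossible unless $M'$ itself has head $x$—a contradiction with the well-formedness side condition; hence this configuration does not arise. The remaining constructors ($x[i]$, $\lambda x.(\cdot)$, $\oneb$, $\unvar{\oneb}$, $\fail^{\widetilde{x}}$, bags, sums) either cannot have a linear-variable head or are not terms, so they are vacuous.

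The main obstacle I anticipate is bookkeeping the interaction between the side conditions on \lamrsharfailunres-terms (a linear variable occurs exactly once; $x$ is not a sharing variable; $x$ does not occur inside the bag/substitution bodies) and the case analysis: one must check in each inductive step that the linear $x$ genuinely propagates to a unique immediate subterm and that \defref{def:headlinfail} is applicable there, so that the syntactic identity $C[M']\headlin{N/x} = C[M'\headlin{N/x}]$ holds for the enveloping constructor $C$. Once this is set up carefully, each case is a routine inversion-IH-reassembly, exactly as in the proof of Lemma~\ref{lem:subt_lem_failunres_lin}.
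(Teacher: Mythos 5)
Your proposal is correct and follows essentially the same route as the paper's proof: structural induction on $M$ under the constraint $\headf{M}=x$, inverting the relevant well-formedness rule in each case, applying the induction hypothesis to the unique subterm carrying the head, and reassembling via the commutation of $\headlin{N/x}$ with the enclosing constructor. The extra bookkeeping you do (ruling out the degenerate $M'[\leftarrow x]$ configuration, and noting that the explicit-substitution form $(M'[\widetilde{y}\leftarrow y])\esubst{B}{y}$ cannot actually have a variable head) is sound and only makes explicit what the paper leaves implicit.
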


\begin{proof}
By structural induction on $M$ with $\headf{M}=  {x}$.
There are six cases to be analyzed:
\begin{enumerate}
\item $M= {x}$

In this case, $\Theta ;  {x}:\sigma \wfdash  {x}:\sigma$ and $\Gamma=\emptyset$.  Observe that $ {x}\headlin{N/ {x}}=N$, since $\Delta\wfdash N:\sigma$, by hypothesis, the result follows.

    \item $M = M'\ B$.
    
    Then $\headf{M'\ B} = \headf{M'} =  {x}$, and the derivation is the following:
    \begin{prooftree}
        \AxiomC{$\Theta ; \Gamma_1 ,  {x}:\sigma \wfdash M': (\delta^{j} , \eta )  \rightarrow \tau$}\
        \AxiomC{$\Theta ; \Gamma_2 \wfdash B : (\delta^{k} , \epsilon ) $}
        \AxiomC{\( \eta \relunbag \epsilon \)}
    	\LeftLabel{\redlab{FS{:}app}}
        \TrinaryInfC{$\Theta ; \Gamma_1 , \Gamma_2 ,  {x}:\sigma \wfdash M'B:\tau $}    
    \end{prooftree}

    where $\Gamma=\Gamma_1,\Gamma_2$, and  $j,k$ are non-negative integers, possibly different.  Since $\Delta \vdash N : \sigma$, by IH, the result holds for $M'$, that is,
    $$\Gamma_1 , \Delta \wfdash M'\headlin{ N /  {x} }: (\delta^{j} , \eta )  \rightarrow \tau$$
    which gives the  derivation:


    \begin{prooftree}
        \AxiomC{$\Theta ; \Gamma_1 , \Delta \wfdash M'\headlin{ N /  {x} }: (\delta^{j} , \eta )  \rightarrow \tau$}\
        \AxiomC{$\Theta ; \Gamma_2 \wfdash B : (\delta^{k} , \epsilon ) $}
        \AxiomC{\( \eta \relunbag \epsilon \)}
    	\LeftLabel{\redlab{FS{:}app}}
        \TrinaryInfC{$\Theta ; \Gamma_1 , \Gamma_2 , \Delta \wfdash ( M'\headlin{ N /  {x} } ) B:\tau $}    
    \end{prooftree}
   
      From \defref{def:headlinfail},   $(M'B) \headlin{ N /  {x} } = ( M'\headlin{ N /  {x} } ) B$, 
      and the result follows.
    
    \item $M = M'[ {\widetilde{y}} \leftarrow  {y}] $.
    
    Then $ \headf{M'[ {\widetilde{y}} \leftarrow  {y}]} = \headf{M'}= {x}$, for  $y\neq x$. Therefore, 
    \begin{prooftree}
        \AxiomC{\(\Theta ; \Gamma_1 ,  {y}_1: \delta, \cdots,  {y}_k: \delta ,  {x}: \sigma \wfdash M' : \tau \quad  {y}\notin \Gamma_1 \quad k \not = 0\)}
        \LeftLabel{ \redlab{FS{:}share}}
        \UnaryInfC{\( \Theta ; \Gamma_1 ,  {y}: \delta^k,  {x}: \sigma \wfdash M'[ {y}_1 , \cdots ,  {y}_k \leftarrow x] : \tau \)}
    \end{prooftree}
    where $\Gamma=\Gamma_1 ,  {y}: \delta^k$. 
    By IH, the result follows for $M'$, that is, 
    $$\Theta ; \Gamma_1 ,  {y}_1: \delta, \cdots,  {y}_k: \delta ,\Delta \wfdash M'\headlin{N/ {x}} : \tau $$
    
    and we have the derivation:
    
    \begin{prooftree}
        \AxiomC{\( \Theta ; \Gamma_1 ,  {y}_1: \delta, \cdots,  {y}_k: \delta , \Delta \wfdash  M' \headlin{ N /  {x}} : \tau \quad  {y}\notin \Gamma_1 \quad k \not = 0\)}
        \LeftLabel{ \redlab{FS{:}share} }
        \UnaryInfC{\( \Theta ; \Gamma_1 ,  {y}: \delta^k, \Delta \wfdash M' \headlin{ N /  {x}} [ {\widetilde{y}} \leftarrow  {y}] : \tau \)}
    \end{prooftree}
    From \defref{def:headlinfail}  $M'[ {\widetilde{y}} \leftarrow  {y}] \headlin{ N /  {x} } = M' \headlin{ N /  {x}} [ {\widetilde{y}} \leftarrow  {y}]$, 
    and the result follows.

    \item $M = M'[ \leftarrow  {y}] $.
    
    Then $ \headf{M'[ \leftarrow  {y}]} = \headf{M'}= {x}$ with  $x \not  = y $, 
    \begin{prooftree}
        \AxiomC{\( \Theta ; \Gamma  ,  {x}: \sigma  \wfdash M : \tau\)}
        \LeftLabel{ \redlab{FS{:}weak} }
        \UnaryInfC{\( \Theta ;  \Gamma  ,  {y}: \omega,  {x}: \sigma  \wfdash M[\leftarrow  {y}]: \tau \)}
    \end{prooftree}
     and $M'[ \leftarrow  {y}] \headlin{ N /  {x} } = M' \headlin{ N /  {x}} [ \leftarrow  {y}]$. Then by the induction hypothesis:
    \begin{prooftree}
        \AxiomC{\( \Theta ;  \Gamma , \Delta  \wfdash M \headlin{ N /  {x}}: \tau\)}
        \LeftLabel{ \redlab{FS{:}weak}}
        \UnaryInfC{\( \Theta ;  \Gamma  ,  {y}: \omega, \Delta \wfdash M\headlin{ N /  {x}}[\leftarrow  {y}]: \tau \)}
    \end{prooftree}

    \item If $M =  M' \linexsub {M'' / {y}} $ then $\headf{M' \linexsub {M'' / {y}}} = \headf{M'} = x \not = y$, 
    
    \begin{prooftree}
        \AxiomC{\( \Theta ;\Gamma  ,  {y}:\delta , x: \sigma \wfdash M : \tau \)}
        \AxiomC{\( \Theta ;\Delta \wfdash M'' : \delta \)}
        \LeftLabel{ \redlab{FS{:}ex \dash sub^{\ell}} }
        \BinaryInfC{\( \Theta ;\Gamma_1, \Gamma_2 , x: \sigma \wfdash M' \linexsub {M'' / {y}} : \tau \)}
    \end{prooftree}
     and $M' \linexsub {M'' / {y}}  \headlin{ N / x } = M'  \headlin{N / x } \linexsub {M'' / {y}}$. Then by the induction hypothesis:
    
    \begin{prooftree}
        \AxiomC{\( \Theta ;\Gamma  ,  {y}:\delta , \Delta  \wfdash M'  \headlin{N / x } : \tau \)}
        \AxiomC{\( \Theta ;\Delta \wfdash M'' : \delta \)}
        \LeftLabel{ \redlab{FS{:}ex \dash sub^{\ell}} }
        \BinaryInfC{\( \Theta ;\Gamma_1, \Gamma_2 , \Delta  \wfdash M'  \headlin{N / x } \linexsub {M'' / {y}} : \tau \)}
    \end{prooftree}

    \item If $M =  M' \unexsub{U / \unvar{y}} $ then $\headf{M' \unexsub{U / \unvar{y}}} = \headf{M'} = x $, and the proofs is similar to the case above.
    
    
    
    \end{enumerate}
\end{proof}

\begin{lemma}[Unrestricted Substitution Lemma for \lamrsharfailunres]
\label{lem:subt_lem_sharefailunres_un}
If $\Theta, \banged{x}: \eta ; \Gamma \wfdash M: \tau$, $\headf{M} = {x}[i]$, $\eta_i = \sigma $, and $\Theta ; \cdot \wfdash N : \sigma$
then 
$\Theta, \banged{x}: \eta ; \Gamma  \wfdash M \headlin{ N / {x}[i] }$.
\end{lemma}

\begin{proof}
By structural induction on $M$ with $\headf{M}= {x}[i]$. There are three cases to be analyzed: 

\begin{enumerate}
\item $M= {x}[i]$.

In this case, 
    \begin{prooftree}
        \AxiomC{}
        \LeftLabel{ \redlab{F{:}var^{ \ell}}}
        \UnaryInfC{\( \Theta , \banged{x}: \eta;  {x}: \eta_i  \wfdash  {x} : \sigma\)}
        \LeftLabel{\redlab{F{:}var^!}}
        \UnaryInfC{\( \Theta, \banged{x}: \eta ; \cdot \wfdash {x}[i] : \sigma\)}
    \end{prooftree}
 and $\Gamma=\emptyset$.  Observe that ${x}[i]\headlin{N/{x}[i]}=N$, since $\Theta, \banged{x}: \eta  ; \Gamma  \wfdash M \headlin{ N / {x}[i] }$, by hypothesis, the result follows.

    \item $M = M'\ B$.
    
    In this case, $\headf{M'\ B} = \headf{M'} =  {x}[i]$, and one has the following derivation:
    
    \begin{prooftree}
         \AxiomC{\( \Theta, \banged{x}: \eta ;\Gamma_1 \wfdash M : (\delta^{j} , \epsilon ) \rightarrow \tau \quad \Theta, \banged{x}: \sigma ; \Gamma_2 \wfdash B : (\delta^{k} , \epsilon' )  \)}
         \AxiomC{\( \epsilon \relunbag \epsilon' \)}
            \LeftLabel{\redlab{F{:}app}}
        \BinaryInfC{\( \Theta, \banged{x}: \eta ;\Gamma_1 , \Gamma_2 \wfdash M\ B : \tau\)}
    \end{prooftree}
    
 where $\Gamma=\Gamma_1,\Gamma_2$, $\delta$ is a strict type and $j,k$ are non-negative  integers, possibly different.
 
 By IH, we get $\Theta, \banged{x}: \eta ;\Gamma_1 \wfdash M'\headlin{N/ {x[i]}}:(\delta^{j} , \epsilon ) \rightarrow \tau $, which gives the derivation: 
    \begin{prooftree}
        \AxiomC{$\Theta , \banged{x}: \eta;\Gamma_1\wfdash M'\headlin{N/ {x}[i]}:(\delta^{j} , \epsilon ) \rightarrow \tau $}\
        \AxiomC{$\Theta , \banged{x}: \eta; \Gamma_2 \wfdash B : (\delta^{k} , \epsilon' ) $}
        \AxiomC{\( \epsilon \relunbag \epsilon' \)}
    	\LeftLabel{\redlab{F{:}app}}
        \TrinaryInfC{$\Theta , \banged{x}: \eta;\Gamma_1 , \Gamma_2  \wfdash ( M'\headlin{ N / {x}[i] } ) B:\tau $}    
    \end{prooftree}
From \defref{def:linsubfail}, $M' \esubst{ B }{ y} \headlin{ N / {x}[i] } = M' \headlin{ N / {x}[i] } \esubst{ B }{ y}$,
and  the result follows.

    \item $M = M'[ {\widetilde{y}} \leftarrow  {y}] $.
    
    Then $ \headf{M'[ {\widetilde{y}} \leftarrow  {y}]} = \headf{M'}= {x}[i]$, for  $y\neq x$. Therefore, 
    \begin{prooftree}
        \AxiomC{\(\Theta , \banged{x}: \eta; \Gamma_1 ,  {y}_1: \delta, \cdots,  {y}_k: \delta  \wfdash M' : \tau \quad  {y}\notin \Gamma_1 \quad k \not = 0\)}
        \LeftLabel{ \redlab{FS{:}share}}
        \UnaryInfC{\( \Theta , \banged{x}: \eta; \Gamma_1 ,  {y}: \delta^k \wfdash M'[ {y}_1 , \cdots ,  {y}_k \leftarrow y] : \tau \)}
    \end{prooftree}
    where $\Gamma=\Gamma_1 ,  {y}: \delta^k$. 
    By IH, the result follows for $M'$, that is, 
    $$\Theta, \banged{x}: \eta ; \Gamma_1 ,  {y}_1: \delta, \cdots,  {y}_k: \delta  \wfdash M'\headlin{N/ {x}[i] } : \tau $$
    
    and we have the derivation:
    
    \begin{prooftree}
        \AxiomC{\( \Theta , \banged{x}: \eta; \Gamma_1 ,  {y}_1: \delta, \cdots,  {y}_k: \delta  \wfdash  M' \headlin{ N / {x}[i] } : \tau \quad  {y}\notin \Gamma_1 \quad k \not = 0\)}
        \LeftLabel{ \redlab{FS{:}share} }
        \UnaryInfC{\( \Theta , \banged{x}: \eta; \Gamma_1 ,  {y}: \delta^k \wfdash M' \headlin{ N / {x}[i]} [ {\widetilde{y}} \leftarrow  {y}] : \tau \)}
    \end{prooftree}
    From \defref{def:headlinfail}  $M'[ {\widetilde{y}} \leftarrow  {y}] \headlin{ N / {x}[i] } = M' \headlin{ N / {x}[i]} [ {\widetilde{y}} \leftarrow  {y}]$, 
    and the result follows.

\item $M = M'[ \leftarrow  {y}] $.
    
    Then $ \headf{M'[ \leftarrow  {y}]} = \headf{M'}= {x}[i]$ with  $x \not  = y $, 
    \begin{prooftree}
        \AxiomC{\( \Theta , \banged{x}: \eta; \Gamma   \wfdash M : \tau\)}
        \LeftLabel{ \redlab{FS{:}weak} }
        \UnaryInfC{\( \Theta , \banged{x}: \eta;  \Gamma  ,  {y}: \omega \wfdash M[\leftarrow  {y}]: \tau \)}
    \end{prooftree}
     and $M'[ \leftarrow  {y}] \headlin{ N / {x}[i] } = M' \headlin{ N / {x}[i] } [ \leftarrow  {y}]$. Then by the induction hypothesis:
    \begin{prooftree}
        \AxiomC{\( \Theta , \banged{x}: \eta;  \Gamma   \wfdash M \headlin{ N / {x}[i]  }: \tau\)}
        \LeftLabel{ \redlab{FS{:}weak}}
        \UnaryInfC{\( \Theta , \banged{x}: \eta;  \Gamma  ,  {y}: \omega \wfdash M\headlin{ N / {x}[i] }[\leftarrow  {y}]: \tau \)}
    \end{prooftree}

    \item $M =  M' \linexsub {M'' / {y}} $. 
    
    Then $\headf{M' \linexsub {M'' / {y}}} = \headf{M'} = {x}[i]$ with $x \not = y$, 
    
    \begin{prooftree}
        \AxiomC{\( \Theta , \banged{x}: \eta;\Gamma  ,  {y}:\delta \wfdash M : \tau \)}
        \AxiomC{\( \Theta, \banged{x}: \eta ;\Delta \wfdash M'' : \delta \)}
        \LeftLabel{ \redlab{FS{:}ex \dash sub^{\ell}} }
        \BinaryInfC{\( \Theta, \banged{x}: \eta ;\Gamma, \Delta  \wfdash M' \linexsub {M'' / {y}} : \tau \)}
    \end{prooftree}
     and $M' \linexsub {M'' / {y}}  \headlin{ N / {x}[i]  } = M'  \headlin{N / {x}[i]  } \linexsub {M'' / {y}}$. Then by the induction hypothesis:
    
    \begin{prooftree}
        \AxiomC{\( \Theta , \banged{x}: \eta;\Gamma  ,  {y}:\delta \wfdash M'  \headlin{N /  {x}[i] } : \tau \)}
        \AxiomC{\( \Theta , \banged{x}: \eta;\Delta \wfdash M'' : \delta \)}
        \LeftLabel{ \redlab{FS{:}ex \dash sub^{\ell}} }
        \BinaryInfC{\( \Theta ;\Gamma , \Delta  \wfdash M'  \headlin{N /  {x}[i] } \linexsub {M'' / {y}} : \tau \)}
    \end{prooftree}

    \item $M =  M' \unexsub{U / \unvar{y}}$.  
    
    Then $\headf{M' \unexsub {U / \unvar{y}}} = \headf{M'} = {x}[i] $, 
    
    \begin{prooftree}
        \AxiomC{\( \Theta , \banged{x}: \eta, \banged{y} : \epsilon; \Gamma  \wfdash M : \tau \quad  \Theta , \banged{x}: \eta; \dash \wfdash U : \epsilon \)}
            \LeftLabel{\redlab{FS{:}ex \dash sub^!}}
        \UnaryInfC{\( \Theta , \banged{x}: \eta; \Gamma  \wfdash M \unexsub{U / \unvar{y}}  : \tau \)}
    \end{prooftree}
    
     and $M' \unexsub{U /\unvar{y}}  \headlin{ N /  {x}[i] } = M'  \headlin{N /  {x}[i] } \unexsub{U /\unvar{y}}$. Then by the induction hypothesis:
    
    \begin{prooftree}
        \AxiomC{\( \Theta , \banged{x}: \eta, \banged{y} : \epsilon; \Gamma  \wfdash M'  \headlin{N / {x}[i] } : \tau \quad  \Theta , \banged{x}: \eta; \dash \wfdash U : \eta \)}
            \LeftLabel{\redlab{FS{:}ex \dash sub^!}}
        \UnaryInfC{\( \Theta , \banged{x}: \eta ; \Gamma  \wfdash M'  \headlin{N /  {x}[i] } \unexsub {U / \unvar{y}}  : \tau \)}
    \end{prooftree}
\end{enumerate}

\end{proof}

\begin{theorem}[SR in \lamrsharfailunres]
\label{t:app_lamrsharfailsrunres}
If $\Theta ; \Gamma \wfdash \expr{M}:\tau$ and $\expr{M} \red \expr{M}'$ then $\Theta ; \Gamma \wfdash \expr{M}' :\tau$.
\end{theorem}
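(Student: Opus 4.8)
The plan is to prove subject reduction for \lamrsharfailunres (Theorem~\ref{t:app_lamrsharfailsrunres}) by induction on the derivation of $\expr{M}\red\expr{M}'$, following closely the structure of the proof of Theorem~\ref{t:app_lamrfailsr} for \lamrfailunres. The two substitution lemmas just established---Lemma~\ref{l:lamrsharfailsubsunres} for the linear case and Lemma~\ref{lem:subt_lem_sharefailunres_un} for the unrestricted case---are the main workhorses; every fetch rule will consume exactly one of them. I would proceed rule by rule through Figure~\ref{fig:share-reductfailureunres}, namely \redlab{RS:Beta}, \redlab{RS:Fetch^\ell}, \redlab{RS:Fetch^!}, \redlab{RS:Ex-Sub}, \redlab{RS:Fail^\ell}, \redlab{RS:Fail^!}, \redlab{RS:Cons_1}, \redlab{RS:Cons_2}, \redlab{RS:Cons_3}, \redlab{RS:Cons_4}, and then the contextual closure rules for term and expression contexts.

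The easy cases: for \redlab{RS:Beta}, invert the typing of $(\lambda x.(M[\widetilde{x}\leftarrow x]))B$ via \redlab{FS:abs-sh} and \redlab{FS:app}, then reassemble using \redlab{FS:Esub}; the premises line up exactly (the $\eta\relunbag\epsilon$ side-condition transfers unchanged, and no equality of $j$ and $k$ is required). For \redlab{RS:Fetch^\ell} and \redlab{RS:Fetch^!}, invert \redlab{FS:Esub^\ell} resp.\ \redlab{FS:Esub^!}, apply the relevant substitution lemma to get $M\headlin{N/x}$ (resp.\ $M\headlin{N/x[i]}$) well-formed in the same context, and re-wrap (in the unrestricted case the explicit unrestricted substitution is retained, so \redlab{FS:Esub^!} is re-applied). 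The failure-consuming rules \redlab{RS:Cons_1}--\redlab{RS:Cons_4} are handled by noting that the redex is typed through \redlab{FS:fail} (or \redlab{FS:app}/\redlab{FS:Esub} with a \redlab{FS:fail} premise), that $\tau$ is arbitrary for a fail-term, and that the free-linear-variable bookkeeping in the side conditions ($\widetilde{y}=\llfv{C}$, $\size{C}=|\widetilde{x}|$, etc.) makes the domain of the context on the right-hand side match; each summand in the resulting sum is typed by \redlab{FS:fail} and then combined with \redlab{FS:sum}. For \redlab{RS:Fail^\ell} and \redlab{RS:Fail^!}: the former uses that $k\neq\size{C}$ forces $j\neq k$ in the inverted derivation, which is harmless since \redlab{FS:Esub} permits a mismatch, and then each $\fail^{\widetilde{y}}$ is typed via \redlab{FS:fail}; the latter uses Lemma~\ref{lem:subt_lem_sharefailunres_un} with $N=\fail^\emptyset$ (typable at any type by \redlab{FS:fail}), analogously to the \lamrfailunres proof.

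The step I expect to be the main obstacle is rule \redlab{RS:Ex-Sub}, which ``distributes'' $M[x_1,\dots,x_k\leftarrow x]\esubst{C\bagsep U}{x}$ into a sum $\sum_{C_i\in\perm{C}} M\linexsub{C_i(1)/x_1}\cdots\linexsub{C_i(k)/x_k}\unexsub{U/x^!}$. One must invert \redlab{FS:Esub}, then peel apart the bag typing $\Theta;\Delta\wfdash C\bagsep U:(\sigma^k,\epsilon)$ via \redlab{FS:bag} into $\Theta;\Delta\wfdash C:\sigma^k$ and $\Theta;\cdot\wfdash U:\epsilon$, and further decompose the linear-bag derivation (iterated \redlab{FS:bag^\ell}) into a splitting $\Delta=\Delta_1,\dots,\Delta_k$ with $\Theta;\Delta_m\wfdash M_m:\sigma$ for each resource $M_m$; crucially this splitting is invariant under permutation of $C$, so every summand indexed by $C_i\in\perm{C}$ can be typed using the same pieces reassociated. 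Then, for a fixed permutation, one inverts the sharing derivation \redlab{FS:share} to obtain $\Theta,x^!{:}\eta;\Gamma,x_1{:}\sigma,\dots,x_k{:}\sigma\wfdash M:\tau$ and applies \redlab{FS:Esub^\ell} $k$ times (one per $\linexsub{C_i(m)/x_m}$, consuming the corresponding $x_m{:}\sigma$ against $\Theta;\Delta_{?}\wfdash C_i(m):\sigma$) followed by \redlab{FS:Esub^!} (to absorb $\unexsub{U/x^!}$, using $\eta\relunbag\epsilon$). Finally \redlab{FS:sum} glues the summands. The bookkeeping of which $\Delta_m$ matches which permuted slot, and checking that linear contexts are used exactly once across the iterated substitutions, is the delicate part; I would either restrict to $k=1$ or $k=2$ in the written proof (as was done for \redlab{R:Fetch^\ell} in Theorem~\ref{t:app_lamrfailsr}) and remark that the general case is a routine induction on $k$. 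The contextual rules \redlab{RS:TCont}/\redlab{RS:ECont} are then immediate by case analysis on the context shape together with the induction hypothesis and \redlab{FS:sum}, exactly as in the \lamrfailunres proof.
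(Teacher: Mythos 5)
Your proposal is correct and follows essentially the same route as the paper's proof: induction on the reduction rule, with the two substitution lemmas (Lemma~\ref{l:lamrsharfailsubsunres} and Lemma~\ref{lem:subt_lem_sharefailunres_un}) discharging the fetch and \redlab{RS{:}Fail^!} cases, inversion of \redlab{FS{:}Esub}/\redlab{FS{:}bag}/\redlab{FS{:}share} plus iterated \redlab{FS{:}Esub^{\ell}}, \redlab{FS{:}Esub^!} and \redlab{FS{:}sum} for \redlab{RS{:}Ex\dash Sub}, and \redlab{FS{:}fail} plus \redlab{FS{:}sum} for the failure and consumption rules. The paper likewise restricts the \redlab{RS{:}Ex\dash Sub} case to $k=2$ and relegates the general case to a routine extension, exactly as you propose.
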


\begin{proof} By structural induction on the reduction rule from \figref{fig:share-reductfailureunres} applied in $\expr{M}\red \expr{N}$.

\begin{enumerate}

	\item Rule $\redlab{RS{:}Beta}$.
	
	Then $\expr{M} = (\lambda x. M[ {\widetilde{x}} \leftarrow  {x}]) B $  and the reduction is:
	  \begin{prooftree}
        \AxiomC{}
        \LeftLabel{\redlab{RS{:}Beta}}
        \UnaryInfC{\((\lambda x. M[ {\widetilde{x}} \leftarrow  {x}]) B \red M[ {\widetilde{x}} \leftarrow  {x}]\ \esubst{ B }{ x }\)}
     \end{prooftree}

 	where $ \expr{M}'  =  M[ {\widetilde{x}} \leftarrow  {x}]\ \esubst{ B }{ x }$. Since $\Theta ; \Gamma\wfdash \expr{M}:\tau$ we get the following derivation:
	\begin{prooftree}
			\AxiomC{$\Theta , \banged{x} : \eta; \Gamma' ,  {x}_1:\sigma , \cdots ,  {x}_j:\sigma  \wfdash  M: \tau $}
			\LeftLabel{ \redlab{FS{:}share} }
			\UnaryInfC{$\Theta , \banged{x} : \eta;  \Gamma' ,   {x}:\sigma^{j}  \wfdash  M[ {\widetilde{x}} \leftarrow  {x}]: \tau $}
			\LeftLabel{ \redlab{FS{:}abs \dash sh} }
            \UnaryInfC{$\Theta ; \Gamma' \wfdash \lambda x. M[ {\widetilde{x}} \leftarrow  {x}]: (\sigma^{j} , \eta ) \rightarrow \tau $}
            
            \AxiomC{$\Theta ;\Delta \wfdash B: (\sigma^{k} , \epsilon ) $}
            
			\AxiomC{\( \eta \relunbag \epsilon \)}
			\LeftLabel{ \redlab{FS{:}app} }
		\TrinaryInfC{$ \Theta ;\Gamma' , \Delta \wfdash (\lambda x. M[ {\widetilde{x}} \leftarrow  {x}]) B:\tau $}
	\end{prooftree}

	for $\Gamma = \Gamma' , \Delta $ and $x\notin \dom{\Gamma'}$. 
	Notice that: 
    \begin{prooftree}
            \AxiomC{$\Theta , \banged{x} : \eta; \Gamma' ,  {x}_1:\sigma , \cdots ,  {x}_j:\sigma  \wfdash  M: \tau $}
			\LeftLabel{ \redlab{FS{:}share} }
			\UnaryInfC{$\Theta , \banged{x} : \eta;  \Gamma' ,   {x}:\sigma^{j}  \wfdash  M[ {\widetilde{x}} \leftarrow  {x}]: \tau $}
			
            \AxiomC{$\Theta ;\Delta \wfdash B:(\sigma^{k} , \epsilon )  $}
            
            \AxiomC{\( \eta \relunbag \epsilon \)}
            \LeftLabel{ \redlab{FS{:}ex \dash sub} }
        \TrinaryInfC{$ \Theta ;\Gamma' , \Delta \wfdash M[ {\widetilde{x}} \leftarrow  {x}]\ \esubst{ B }{ x }:\tau $}
    \end{prooftree}

    Therefore $ \Theta ; \Gamma',\Delta\wfdash\expr{M}' :\tau$ and the result follows.

    \item Rule $ \redlab{RS{:}Ex \dash Sub}.$
    
    Then $ \expr{M} =  M[ {x}_1, \cdots ,  {x}_k \leftarrow  {x}]\ \esubst{ C \bagsep U }{ x }$ where $C=  \bag{N_1}\cdot \dots \cdot \bag{N_k} $. The reduction is:
    
     \begin{prooftree}
        \AxiomC{$ C = \bag{M_1}
        \cdots  \bag{M_k} $}
        \AxiomC{$ M \not= \fail^{\widetilde{y}} $}
        \LeftLabel{\redlab{RS{:}Ex \dash Sub}}
        \BinaryInfC{\( \!M[ {x}_1, \!\cdots\! ,  {x}_k \leftarrow  {x}]\esubst{ C \bagsep U }{ x } \red \displaystyle \sum_{C_i \in \perm{C}}M\linexsub{C_i(1)/ {x_1}} \cdots \linexsub{C_i(k)/ {x_k}} \unexsub{U / \unvar{x} }   \)}
    \end{prooftree}
    
    and $\expr{M'}= \sum_{C_i \in \perm{C}}M\linexsub{C_i(1)/ {x_1}} \cdots \linexsub{C_i(k)/ {x_k}} \unexsub{U / \unvar{x} }$
    To simplify the proof we take $k=2$, as the case $k>2$ is similar. Therefore,
    \begin{itemize}
        \item $C=\bag{N_1}\cdot \bag{N_2}$; and
        \item $\perm{C}=\{\bag{N_1}\cdot \bag{N_2}, \bag{N_2}\cdot \bag{N_1}\}$.
    \end{itemize} 
    Since $\Theta ; \Gamma\wfdash \expr{M}:\tau$ we get a derivation: (we omit the labels \redlab{FS:ex\dash sub} and \redlab{FS{:}share})
    \begin{prooftree}
            \AxiomC{\( \Theta, \banged{x} : \eta  ;  \Gamma' ,  {x}_1: \sigma,  {x}_2: \sigma \wfdash M : \tau \quad  {x}\notin \dom{\Gamma} \quad k \not = 0\)}
            \UnaryInfC{\(  \Theta , \banged{x} : \eta ; \Gamma' ,  {x}: \sigma^{2} \wfdash M[ {\widetilde{x}} \leftarrow  {x}] : \tau  \)} 
            
            \AxiomC{\( \Theta ; \Delta \wfdash B : (\sigma^{k} , \epsilon ) \)}
            \AxiomC{\( \eta \relunbag \epsilon \)}
        \TrinaryInfC{\( \Theta ; \Gamma', \Delta \wfdash (M[ {\widetilde{x}} \leftarrow  {x}])\esubst{ B }{ x }  : \tau \)}
    \end{prooftree}
    where $\Gamma = \Gamma' , \Delta $. Consider the wf derivation for $\Pi_{1,2}$: (we omit the labels \redlab{FS:ex\dash sub^!} and \redlab{FS:ex\dash sub^{\ell}})
        {\small
        \begin{prooftree}
                    \AxiomC{\( \Theta, \banged{x} : \eta  ;  \Gamma' ,  {x}_1: \sigma,  {x}_2: \sigma \wfdash M : \tau\)}
                    \AxiomC{\( \Theta ; \Delta_1 \wfdash N_1 : \sigma \)}
                \BinaryInfC{\( \Theta, \banged{x} : \eta  ;  \Gamma' ,  {x}_2: \sigma , \Delta_1 \wfdash M\linexsub{N_1/ {x_1}} : \tau \)}
                \AxiomC{\( \Theta ; \Delta_2 \wfdash N_2 : \sigma \)}
            \BinaryInfC{\( \Theta, \banged{x} : \eta  ;  \Gamma' , \Delta \wfdash M\linexsub{N_1/ {x_1}} \linexsub{N_2/ {x_2}} : \tau \)}
            \AxiomC{\( \Theta ; \dash \wfdash U : \epsilon \)}
            \AxiomC{\( \eta \relunbag \epsilon \)}
            \TrinaryInfC{\( \Theta ; \Gamma' , \Delta \wfdash M\linexsub{N_1/ {x_1}} \linexsub{N_2/ {x_2}} \unexsub{U / \unvar{x} }  : \tau \)}
        \end{prooftree}
        }

    Similarly, we can obtain a derivation $\Pi_{2,1}$ of $ \Theta ; \Gamma' , \Delta \wfdash M\linexsub{N_2/ {x_1}} \linexsub{N_1/ {x_2}} \unexsub{U / \unvar{x} }  : \tau$.   Finally, applying \redlab{FS{:}sum}:
     \begin{prooftree}
        \AxiomC{\( \Pi_{1,2} \)}
        \AxiomC{\( \Pi_{2,1} \)}
        \LeftLabel{ \redlab{FS{:}sum} }
        \BinaryInfC{$ \Theta ; \Gamma' , \Delta  \wfdash M\linexsub{N_1/ {x_1}} \linexsub{N_2/ {x_2}} \unexsub{U / \unvar{x} } + M\linexsub{N_2/ {x_1}} \linexsub{N_1/ {x_2}} \unexsub{U / \unvar{x} }   : \tau $}
    \end{prooftree}
    
    and the result follows.

     \item Rule $ \redlab{RS{:}Fetch^{\ell}}$.
    
    Then $ \expr{M} =  M \linexsub{N /  {x}}  $ where  $\headf{M} =  {x}$. The reduction is:
    
        \begin{prooftree}
             \AxiomC{$ \headf{M} =  {x}$}
             \LeftLabel{\redlab{RS{:}Fetch^{\ell}}}
             \UnaryInfC{\(  M \linexsub{N /  {x}} \red  M \headlin{ N/ {x} }  \)}
         \end{prooftree}
          and $\expr{M'}= M \linexsub{N /  {x}} $.     Since $\Theta ; \Gamma\wfdash \expr{M}:\tau$ we get the following derivation:
        \begin{prooftree}
        \AxiomC{\( \Theta ; \Gamma' ,  {x}:\sigma \wfdash M : \tau \quad  \Theta ; \Delta \wfdash N : \sigma \)}
            \LeftLabel{\redlab{FS{:}ex \dash sub^{\ell}}}
        \UnaryInfC{\( \Theta ; \Gamma', \Delta \wfdash M \linexsub{N /  {x}} : \tau \)}
    \end{prooftree}
      where $\Gamma = \Gamma' , \Delta $. By Lemma~\ref{l:lamrsharfailsubsunres}, we obtain the derivation $ \Theta ; \Gamma' , \Delta \wfdash   M \headlin{ N/ {x} } : \tau $. 

    \item Rule $ \redlab{RS{:} Fetch^!}$.
    
    Then $ \expr{M} =  M \unexsub{U /  \unvar{x}}  $ where  $\headf{M} = {x}[i]$. The reduction is:
    
         \begin{prooftree}
             \AxiomC{$ \headf{M} = {x}[i]$}
             \AxiomC{$ U_i = \banged{\bag{N}}$}
             \LeftLabel{}
             \BinaryInfC{\(  M \unexsub{U /  \unvar{x}} \red  M \headlin{ N / {x}[i] }\unexsub{U /  \unvar{x}} \)}
         \end{prooftree}
    
    and $\expr{M'}= M \unexsub{U /  \unvar{x}} $.     Since $\Theta ; \Gamma \wfdash \expr{M}:\tau$ we get the following derivation:

    \begin{prooftree}
        \AxiomC{\( \Theta , \banged{x} : \eta; \Gamma  \wfdash M : \tau \)}
        \AxiomC{\( \Theta ; \dash \wfdash U : \epsilon \)}
        \AxiomC{\( \eta \relunbag \epsilon \)}
            \LeftLabel{\redlab{FS{:}ex \dash sub^!}}
        \TrinaryInfC{\( \Theta ; \Gamma \wfdash M \unexsub{U /  \unvar{x}}  : \tau \)}
    \end{prooftree}
    
    By Lemma~\ref{lem:subt_lem_sharefailunres_un}, we obtain the derivation $ \Theta ; \Gamma \wfdash   M \headlin{ N /{x}[i] }\unexsub{U /  \unvar{x}} : \tau $. 

\item Rule $\redlab{RS{:}TCont}$.

Then $\expr{M} = C[M]$ and the reduction is as follows:

\begin{prooftree}
        \AxiomC{$   M \red M'_{1} + \cdots +  M'_{k} $}
        \LeftLabel{\redlab{RS{:}TCont}}
        \UnaryInfC{$ C[M] \red  C[M'_{1}] + \cdots +  C[M'_{k}] $}
\end{prooftree}
with $\expr{M'}= C[M] \red  C[M'_{1}] + \cdots +  C[M'_{k}] $. 
The proof proceeds by analysing the context $C$.
There are four cases:

\begin{enumerate}
    \item $C=[\cdot]\ B$.
    
    In this case $\expr{M}=M \ B$, for some $B$. Since $\Gamma\vdash \expr{M}:\tau$ one has a derivation:
\begin{prooftree}
        \AxiomC{\( \Theta ;\Gamma' \wfdash M : (\sigma^{j} , \eta ) \rightarrow \tau \)}
         \AxiomC{\(  \Theta ;\Delta \wfdash B : (\sigma^{k} , \epsilon )  \)}
         \AxiomC{\( \eta \relunbag \epsilon \)}
        \LeftLabel{\redlab{FS{:}app}}
        \TrinaryInfC{\( \Theta ; \Gamma', \Delta \wfdash M\ B : \tau\)}
\end{prooftree}

where $\Gamma = \Gamma' , \Delta $. From  $\Gamma'\wfdash M:\sigma^j\rightarrow\tau$ and the reduction $M \red M'_{1} + \cdots +  M'_{k} $, one has by IH that  $\Gamma'\wfdash M_1'+\ldots, M_k':\sigma^j\rightarrow\tau$, which entails $\Gamma'\wfdash M_i':\sigma^j\rightarrow\tau$, for $i=1,\ldots, k$, via rule \redlab{FS{:}sum}. Finally, we may type the following:
 
\begin{prooftree}

            \AxiomC{\(  \forall i \in {1 , \cdots , l} \)}

			 \AxiomC{\( \Theta ;\Gamma' \wfdash M'_{i} : (\sigma^{j} , \eta ) \rightarrow \tau \)}
            \AxiomC{\(  \Theta ;\Delta \wfdash B : (\sigma^{k} , \epsilon )  \)}
            \AxiomC{\( \eta \relunbag \epsilon \)}
            \LeftLabel{\redlab{FS{:}app}}
            \TrinaryInfC{\( \Theta ; \Gamma', \Delta \wfdash M'_{i}\ B : \tau\)}
			
        \LeftLabel{ \redlab{FS{:}sum} }
    \BinaryInfC{\( \Gamma', \Delta \wfdash (M'_{1}\ B) + \cdots +  (M'_{l} \ B) : \tau\)}
\end{prooftree}

Since $ \expr{M}'  =   (C[M'_{1}]) + \cdots +  (C[M'_{l}]) = M_1'B+\ldots+M_k'B$, the result follows.
\item  Cases $C=[\cdot]\linexsub{N/x} $ and $C=[\cdot][\widetilde{x} \leftarrow x]$ are similar to the previous.
\item Other cases proceed similarly.

\end{enumerate}

	\item Rule $ \redlab{RS{:}ECont}$. 
	
	This case is analogous to the previous.
	




\item Rule $ \redlab{RS{:}Fail^{\ell}}.$

Then $\expr{M} =   M[\widetilde{x} \leftarrow  {x}]\ \esubst{C \bagsep U}{ x } $ where $C = \bag{N_1}\cdot \dots \cdot \bag{N_l}  $ and  the reduction is:
\begin{prooftree}
     \AxiomC{$ k \neq \size{C} $} 
     \AxiomC{$  \widetilde{y} = (\llfv{M} \setminus \{  \widetilde{x}\} ) \cup \llfv{C} $}
    \LeftLabel{\redlab{RS{:}Fail^{\ell}}}
    \BinaryInfC{\(  M[x_1 , \cdots , x_k \leftarrow  {x}]\ \esubst{C \bagsep U}{ x }  \red \displaystyle \sum_{C_i \in \perm{C}}  \fail^{\widetilde{y}} \)}
\end{prooftree}

where $\expr{M'}=\sum_{C_i \in \perm{C}}  \fail^{\widetilde{y}}$. Since $\Theta, x: \eta ; \Gamma' , x_1:\sigma,\ldots, x_k:\sigma \wfdash \expr{M}$, one has a derivation:
\begin{prooftree}
            \AxiomC{\( \Theta, x: \eta ; \Gamma' , x_1:\sigma,\ldots, x_k:\sigma \wfdash M: \tau \)}
            \LeftLabel{ \redlab{FS{:}ex \dash sub} }    
            \UnaryInfC{\( \Theta, x: \eta ;\Gamma' , x:\sigma^{k} \wfdash M[x_1, \cdots , x_k \leftarrow x] : \tau \)}
            \AxiomC{\(\Theta ; \Delta \wfdash C \bagsep U : (\sigma^{j} , \epsilon ) \)}
            \AxiomC{\( \eta \relunbag \epsilon \)}
        \LeftLabel{ \redlab{FS{:}ex \dash sub} }    
        \TrinaryInfC{\(\Theta ; \Gamma', \Delta \wfdash M[x_1 , \cdots , x_k] \leftarrow  {x}]\ \esubst{C \bagsep U}{ x }  : \tau \)}
    \end{prooftree}

where $\Gamma = \Gamma' , \Delta $. We may type the following:
    \begin{prooftree}
        \AxiomC{\( \)}
        \LeftLabel{ \redlab{FS{:}fail}}
        \UnaryInfC{\(\Theta ; \Gamma' , \Delta \wfdash  \fail^{\widetilde{y}} : \tau  \)}
    \end{prooftree}
since $\Gamma',\Delta$ contain assignments on the free variables in $M$ and $B$.
Therefore, $\Theta ;\Gamma\wfdash \fail^{\widetilde{y}}:\tau$, by applying \redlab{FS{:}sum}, it follows that $\Theta ;\Gamma\wfdash \sum_{B_i\in \perm{B}}\fail^{\widetilde{y}}:\tau$,as required.

\item Rule $ \redlab{RS{:}Fail^!}$.

Then $ M \unexsub{U / \unvar{x} } $ where $\headf{M} = {x}[i]$ and $B = U_i = \banged{\oneb} $ and  the reduction is:

\begin{prooftree}
     \AxiomC{$\headf{M} = {x}[i]$}
    \AxiomC{$ U_i = \banged{\oneb} $}
    \AxiomC{\( \widetilde{y} = \llfv{M} \)}
    \LeftLabel{\redlab{RS{:}Fail^!}}
    \TrinaryInfC{\(  M \unexsub{U / \unvar{x} } \red 
    M \headlin{ \fail^{\emptyset} /{x}[i] } \unexsub{U / \unvar{x} }  \)}
\end{prooftree}

with $\expr{M}'=M \headlin{ \fail^{\emptyset} /{x}[i] } \unexsub{U / \unvar{x} }$. By hypothesis, one has the derivation:
    
\begin{prooftree}
        \AxiomC{\( \Theta , {x} : \eta; \Gamma  \wfdash M : \tau \)}
        \AxiomC{\( \Theta ; \dash \wfdash U : \epsilon \)}
        \AxiomC{\( \eta \relunbag \epsilon \)}
            \LeftLabel{\redlab{FS{:}Esub^!}}
        \TrinaryInfC{\( \Theta ; \Gamma \wfdash M \unexsub{U / \unvar{x}}  : \tau \)}
\end{prooftree}
By Lemma~\ref{lem:subt_lem_sharefailunres_un}, there exists a derivation $\Pi_1$ of  $
    \Theta , \banged{x} : \eta ; \Gamma'  \wfdash   M \headlin{ \fail^{\emptyset} /{x}[i] }  : \tau $. Thus,
    
\begin{prooftree}
        \AxiomC{\( \Theta , \banged{x} : \eta ; \Gamma \wfdash   M \headlin{ \fail^{\emptyset} /{x}[i] }  : \tau  \)}
        \AxiomC{\( \Theta ; \dash \wfdash U : \epsilon \)}
        \AxiomC{\( \eta \relunbag \epsilon \)}
            \LeftLabel{\redlab{FS{:}Esub^!}}
        \TrinaryInfC{\( \Theta ; \Gamma \wfdash  M \headlin{ \fail^{\emptyset} /{x}[i] } \unexsub{U / \unvar{x}}  : \tau \)}
\end{prooftree}

\item Rule $\redlab{RS{:}Cons_1}$.

Then $\expr{M} =   \fail^{\widetilde{x}}\ B $ where $B = \bag{N_1}\cdot \dots \cdot \bag{N_k} $  and  the  reduction is:

\begin{prooftree}
    \AxiomC{\( \widetilde{y} = \llfv{C} \)}
    \LeftLabel{$\redlab{RS{:}Cons_1}$}
    \UnaryInfC{\(\fail^{\widetilde{x}}\ C \bagsep U \red \displaystyle \sum_{\perm{C}} \fail^{\widetilde{x} \uplus \widetilde{y}}  \)}
\end{prooftree}
and $ \expr{M}'  =  \sum_{\perm{B}} \fail^{\widetilde{x} \cup \widetilde{y}} $.
Since $\Gamma\wfdash \expr{M}:\tau$, one has the derivation: 

    \begin{prooftree}
         \AxiomC{\( \)}
        \LeftLabel{\redlab{F{:}fail}}
        \UnaryInfC{\( \Theta ;\Gamma' \wfdash \fail^{\widetilde{x}}: (\sigma^{j} , \eta ) \rightarrow \tau \)}
         \AxiomC{\(  \Theta ;\Delta \wfdash B : (\sigma^{k} , \epsilon )  \)}
         \AxiomC{\( \eta \relunbag \epsilon \)}
            \LeftLabel{\redlab{F{:}app}}
        \TrinaryInfC{\( \Theta ; \Gamma', \Delta \wfdash \fail^{\widetilde{x}}\ B : \tau\)}
    \end{prooftree}

Hence $\Gamma = \Gamma' , \Delta $ and we may type the following:

    \begin{prooftree}
        \AxiomC{\( \)}
        \LeftLabel{\redlab{F{:}fail}}
        \UnaryInfC{$ \Theta ;\Gamma \wfdash \fail^{\widetilde{x} \uplus \widetilde{y}} : \tau$}
        \AxiomC{\( \cdots \)}
        \AxiomC{\( \)}
        \LeftLabel{\redlab{F{:}fail}}
        \UnaryInfC{$\Theta ;\Gamma \wfdash \fail^{\widetilde{x} \uplus \widetilde{y}} : \tau$}
        \LeftLabel{\redlab{F{:}sum}}
        \TrinaryInfC{$\Theta ; \Gamma \wfdash \sum_{\perm{C}} \fail^{\widetilde{x} \uplus \widetilde{y}}: \tau$}
    \end{prooftree}
    The proof for the cases of $\redlab{RS{:}Cons_2}$, $\redlab{RS{:}Cons_3}$ and $\redlab{RS{:}Cons_4}$ proceed similarly
    \end{enumerate}
\end{proof}

\section{Appendix to Subsection~\ref{ssec:first_enc}}\label{app:encodingone}

\subsection{Encodability Criteria}
\label{ss:criteria}
We follow the criteria in~\cite{DBLP:journals/iandc/Gorla10}, a widely studied abstract framework for establishing the \emph{quality} of encodings.
A \emph{language} $\mathcal{L}$ is a pair: a set of terms and a reduction semantics $\red$ on terms (with reflexive, transitive closure denoted $\tred$).
A correct encoding translates terms of a source language $\mathcal{L}_1= (\mathcal{M}, \red_1)$ into terms of a target language  $\mathcal{L}_2(\mathcal{P}, \red_2)$ by respecting certain criteria. 
The criteria in~\cite{DBLP:journals/iandc/Gorla10} concern \emph{untyped} languages; because we treat \emph{typed} languages,   we follow~\cite{DBLP:journals/iandc/KouzapasPY19} in requiring that encodings satisfy the following criteria:
\begin{enumerate}
\item {\bf Type preservation:} For every well-typed $M$, it holds that $\encod{M}{}$ is well-typed.

    \item {\bf Operational Completeness:} For every ${M}, {M}'$ such that ${M} \tred_1 {M}'$, it holds that $\encod{{M}}{} \tred_2 \approx_2 \encod{{M}'}{}$.
    
    \item {\bf Operational Soundness:} For every $M$ and $P$ such that $\encod{M}{} \tred_2 P$, there exists an $M'$ such that $M \red^*_1 M'$ and $P \tred_2 \approx_2 \encod{M'}{}  $.
    
    \item {\bf Success Sensitiveness:} For every ${M}$, it holds that $M \checkmark_1$ if and only if $\encod{M}{} \checkmark_2$, where $\checkmark_1$ and $\checkmark_2$ denote a success predicate in $\mathcal{M}$ and $\mathcal{P}$, respectively. 
    
\end{enumerate}

In addition to these semantic criteria, we shall also consider \emph{compositionality}:  a composite source term is encoded as the combination of the encodings of its sub-terms. 
Success sensitiveness complements completeness and soundness, giving information about observable behaviors. 
The so-called success predicates $\checkmark_1$ and $\checkmark_2$ serve as a minimal notion of \emph{observables}; the criterion then says that observability of success of a source term implies observability of success in the corresponding target term, and vice-versa.

\subsection{Correctness of  \texorpdfstring{$\recencodopenf{\cdot }$}{}}
The correctness of the encoding from $\recencodopenf{\cdot}$ from $\lamrfailunres$ to $\lamrsharfailunres$ relies on an encoding on contexts (\defref{d:enclamcontfailunres}),  auxiliary propositions (Propositions~\ref{prop:linhed_encfail} and  \ref{prop:wf_linsubunres}) for well-formedness preservation (Theorema~\ref{prop:preservencintolamrfailunres}), operational soundness (Theorem~\ref{l:app_completenessone}) and completeness (Theorem~\ref{l:soundnessoneunres}), and success sensitivity (Theorem~\ref{proof:app_successsensce}).

\begin{definition}[Encoding on Contexts] We define an encoding $\recencod{\cdot }$ on contexts:
\label{d:enclamcontfailunres}
\begin{align*}
\recencod{\Theta} &= \Theta
&
\recencod{\emptyset} = \emptyset
\\
\recencod{ x: \tau , \Gamma }  &=   x:\tau  , \recencod{ \Gamma } &  (x \not \in \dom{\Gamma}) \\
\recencod{ x: \tau , \cdots , x: \tau, \Gamma } &= x : \tau \wedge \cdots \wedge \tau , \recencod{ \Gamma } &  (x \not \in \dom{\Gamma})
    \end{align*}
\end{definition}

\begin{proposition}\label{prop:linhed_encfail}
Let $M, N$ be terms. We have:
 \begin{enumerate}
 \item $ \recencodf{M\headlin{N/x}}=\recencodf{M}\headlin{\recencodf{N}/x}$.
 \item $ \recencodf{M\linsub{\widetilde{x}}{x}}=\recencodf{M}\linsub{\widetilde{x}}{x}$, where $\widetilde{x}=x_1,\ldots, x_k$ is sequence of pairwise distinct fresh variables.
 \end{enumerate}
\end{proposition}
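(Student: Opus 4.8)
\textbf{Proof plan for Proposition~\ref{prop:linhed_encfail}.}
The plan is to prove both items by structural induction on the term $M$, using the definition of $\recencodf{\cdot}$ in \figref{fig:auxencfailunres}. For item~(1), the key observation is that the hypothesis $\headf{M}=x$ restricts the shape of $M$ to only a few cases (as in \defref{def:linsubfail}): either $M=x$, or $M=M'\ B$, or $M=M'\esubst{B}{y}$ with $y\neq x$. (In the sharing calculus there would also be sharing-related cases, but here on the \lamrfailunres side the recursion follows \defref{def:headfailure}.) In the base case $M=x$ we have $x\headlin{N/x}=N$, so $\recencodf{x\headlin{N/x}}=\recencodf{N}$, while $\recencodf{x}\headlin{\recencodf{N}/x}=x\headlin{\recencodf{N}/x}=\recencodf{N}$; the two sides coincide. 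For $M=M'\ B$: by \defref{def:linsubfail}, $(M'\ B)\headlin{N/x}=(M'\headlin{N/x})\ B$, and since $\recencodf{\cdot}$ acts homomorphically on application, $\recencodf{(M'\ B)\headlin{N/x}}=\recencodf{M'\headlin{N/x}}\ \recencodf{B}$; the induction hypothesis on $M'$ (note $\headf{M'}=x$) rewrites this as $(\recencodf{M'}\headlin{\recencodf{N}/x})\ \recencodf{B}$, which by \defref{def:linsubfail} applied to $\recencodf{M'}\ \recencodf{B}=\recencodf{M'\ B}$ equals $\recencodf{M'\ B}\headlin{\recencodf{N}/x}$. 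The explicit-substitution case $M=M'\esubst{B}{y}$ is entirely analogous, using the clause of $\recencodf{\cdot}$ for explicit substitutions (which is homomorphic away from the head variable) together with the last clause of \defref{def:linsubfail}.

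For item~(2), I would also proceed by induction on $M$, now without any head restriction, using the fact (from the Notation preceding the statement) that $M\linsub{\widetilde{x}}{x}$ is a simultaneous renaming of the distinct free linear occurrences of $x$ by fresh distinct variables $x_1,\dots,x_k$. Since $\recencodf{\cdot}$ commutes with all term constructors up to the one nontrivial clause (abstraction, which itself performs an internal linear substitution $M\langle x_1,\dots,x_n/x\rangle$ with fresh variables and inserts a sharing construct), the only delicate case is $M=\lambda z.M'$. Here one must check that the freshness conventions make the internal renaming inside $\recencodf{\lambda z.M'}$ commute with the external renaming $\linsub{\widetilde{x}}{x}$; because $z$ is bound and the $x_i$ are chosen fresh, the two substitutions act on disjoint variables and commute, so one can push $\linsub{\widetilde{x}}{x}$ through the encoding. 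The remaining constructors (variables, indexed variables, bags, application, explicit substitutions, failure, sums) are handled directly by the homomorphic clauses and the induction hypothesis, observing that $x$ (linear) never clashes with indexed occurrences $x[i]$.

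The main obstacle I anticipate is \emph{not} the algebraic manipulation but the careful bookkeeping of variable freshness in the abstraction case of item~(2): one needs to argue that the fresh variables introduced by $\recencodf{\lambda z.M'}$ can be chosen disjoint from $\widetilde{x}$, so that $\recencodf{(\lambda z.M')\linsub{\widetilde{x}}{x}}$ and $\recencodf{\lambda z.M'}\linsub{\widetilde{x}}{x}$ are literally equal rather than merely $\alpha$-equivalent. This is routine if one adopts a Barendregt-style convention, but it is the one place where the proof requires explicit care; everything else is a direct consequence of the homomorphic nature of $\recencodf{\cdot}$ and the definitions of head substitution (\defref{def:linsubfail}) and linear substitution. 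I would state the freshness convention once at the start and then dispatch the abstraction case in a couple of lines.
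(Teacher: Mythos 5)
Your proposal is correct and follows exactly the route the paper takes: the paper's own proof of Proposition~\ref{prop:linhed_encfail} is literally the one-line statement ``By induction of the structure of $M$,'' so your case analysis (base case $M=x$, application, explicit substitution for item~(1); all constructors with freshness bookkeeping for the abstraction case in item~(2)) supplies strictly more detail than the original while using the same structural induction. The only point worth flagging is that the clause of $\recencodf{\cdot}$ for explicit substitutions is not purely homomorphic (it case-splits on $\#(y,M')$ versus $\size{C}$ and may produce a sum), so in that case one should note that the head substitution for $x\neq y$ distributes over the summands and does not alter the occurrence count of $y$ --- a detail the paper also leaves implicit.
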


\begin{proof}
By induction of the structure of $M$.
\end{proof}

\begin{proposition}[Well-formedness Preservation under Linear Substitutions in \lamrfailunres]
\label{prop:wf_linsubunres}
Let ${M} \in \lamrfail$.
If $\Theta ; \Gamma, x:\sigma \wfdash {M} : \tau$
and $\Theta ; \Gamma \wfdash x_i : \sigma$ then 
 $\Theta ; \Gamma, x_i:\sigma \wfdash {M}\linsub{x_i}{x} : \tau$.
 \end{proposition}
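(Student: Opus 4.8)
The plan is to prove Proposition~\ref{prop:wf_linsubunres} by structural induction on the term $M$, following the standard pattern for substitution lemmas in this type system (cf.\ Lemmas~\ref{lem:subt_lem_failunres_lin} and \ref{l:lamrsharfailsubsunres}). The key observation is that $M\linsub{x_i}{x}$ replaces \emph{one} designated linear occurrence of $x$ in $M$ by the fresh variable $x_i$, which by hypothesis is assigned the same strict type $\sigma$ as $x$. Thus, at the level of contexts, we are replacing a single assignment $x{:}\sigma$ by $x_i{:}\sigma$, and every typing rule used in the derivation of $\Theta;\Gamma,x{:}\sigma\wfdash M:\tau$ can be replayed verbatim after this renaming, because none of the well-formedness rules in \figref{fig:wf_rules_unres} inspects the \emph{name} of a linear variable beyond bookkeeping of the linear context.

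First I would set up the induction on the structure of $M$. In the base case $M = x$: here $\Theta;x{:}\sigma\wfdash x:\sigma$ via \redlab{F{:}var^{\ell}}, and $x\linsub{x_i}{x} = x_i$, which is typed by \redlab{F{:}var^{\ell}} as $\Theta;x_i{:}\sigma\wfdash x_i:\sigma$; this matches the claim since $\Gamma$ is empty in this case. For $M = x[j]$ or $M = \fail^{\widetilde{y}}$ the statement is vacuous or immediate, since $x$ (as a linear variable) either does not occur or is merely carried in the $\fail$ annotation, where the renaming commutes with the side condition $\dom{\Gamma}=\widetilde{x}$. The inductive cases ($M=\lambda z.M'$, $M = M'\,B$, $M = M'\esubst{B}{z}$, and the expression-level constructs via \redlab{F:sum} and \redlab{F:weak}) all proceed by: (i) inverting the last rule of the derivation to obtain a splitting of the linear context; (ii) locating which sub-derivation contains the designated occurrence of $x$; (iii) applying the induction hypothesis to that sub-derivation with the correspondingly restricted context; and (iv) reassembling with the same rule. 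One technical point to handle carefully is that in the application and explicit-substitution rules the context splits as $\Gamma,\Delta$, so one must track that the occurrence of $x$ being substituted lies in exactly one side of the split (the designated occurrence is unique), and the fresh $x_i$ goes to that same side.

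The main obstacle, though a mild one, is purely notational: the statement as written uses $M\linsub{x_i}{x}$ for a single occurrence whereas the ambient \emph{Notation} block defines $M\linsub{\widetilde{x}}{x}$ as a \emph{simultaneous} substitution of all $k$ occurrences; so I would first make explicit that Proposition~\ref{prop:wf_linsubunres} is the single-occurrence building block, and that the simultaneous version (used in \defref{def:enctolamrsharfailunres}) then follows by iterating the proposition $k$ times, each step consuming one copy of $x{:}\sigma$ from the linear context $\hat{x}{:}\sigma^k$ (using Notation~\ref{not:lamrrepeat}) and introducing one fresh $x_i{:}\sigma$. Since each $x_i$ is fresh and pairwise distinct, no capture or clash arises between successive applications, and the final context is $\Gamma,\widehat{\widetilde{x}}{:}\sigma^k$ as required. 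The structural induction itself involves no genuinely hard step, as the type system treats renaming of linear variables transparently; I would therefore keep the write-up brief, spelling out only the $M=\lambda z.M'$ case (to show the unrestricted context and the side condition $z\notin\dom{\Gamma}$ are unaffected since $z$ is bound and $x_i$ is fresh) and the $M=M'\,B$ case (to illustrate the context split), and remark that the remaining cases are analogous.
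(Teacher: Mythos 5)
Your plan is correct and matches the paper's approach: the paper's proof of Proposition~\ref{prop:wf_linsubunres} is simply stated as ``standard, by induction on the well-formedness derivation rules in \figref{fig:wf_rules_unres},'' which is precisely the renaming-by-induction argument you spell out (and your remark that the simultaneous version follows by iterating the single-occurrence case is exactly how the paper uses it, via $j+k$ applications in the proof of Theorem~\ref{thm:preservencintolamrfail}). No gaps; your write-up is just a more explicit version of the same proof.
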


\begin{proof}
 Standard, by induction on the well-formedness derivation rules in \figref{fig:wf_rules_unres}.
 \end{proof}

\begin{proposition}[Well-formedness preservation for $\recencodf{-}$]
\label{prop:preservencintolamrfailunres}
Let $B$ and  $\expr{M}$  be a bag and a expression in $\lamrfailunres$, respectively. 
\begin{enumerate}
\item
    If $\Theta ; \Gamma \wfdash B:(\sigma^{k} , \eta  )$ 
and $\dom{\Gamma} = \mlfv{B}$ 
then $ \recencod{\Theta} ; \recencod{\Gamma}\wfdash \recencodf{B}:(\sigma^{k} , \eta  )$ and $\forall \  x:\pi \in \Gamma, \ \pi = \tau $ for some $\tau$.

    \item 
    If $\Theta ; \Gamma \wfdash \expr{M}:\sigma$ 
and $\dom{\Gamma} = \mlfv{\expr{M}}$ 
then $ \recencod{\Theta} ; \recencod{\Gamma}\wfdash \recencodf{\expr{M}}:\sigma$ and $\forall \  x:\pi \in \Gamma, \ \pi = \tau $ for some $\tau$.

\end{enumerate}
\end{proposition}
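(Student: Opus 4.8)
The statement is a pair of well-formedness-preservation results for the encoding $\recencodf{\cdot}$ from closed $\lamrfailunres$-bags and -expressions into $\lamrsharfailunres$. The plan is to proceed by mutual induction on the structure of the well-formedness derivation $\Theta ; \Gamma \wfdash B : (\sigma^k,\eta)$ (resp.\ $\Theta ; \Gamma \wfdash \expr{M} : \sigma$), following the rules in \figref{fig:wf_rules_unres}, and in each case to inspect the corresponding clause in the definition of $\recencodf{\cdot}$ (\figref{fig:auxencfailunres}) and re-derive well-formedness using the rules of \figref{app_fig:wfsh_rulesunres}. Since $\recencodf{\cdot}$ is defined on \emph{closed} terms, in the inductive cases we carry along the invariant $\dom{\Gamma} = \mlfv{B}$ (resp.\ $\dom{\Gamma} = \mlfv{\expr{M}}$), which keeps the linear contexts matched with the linearly free variables; this is needed precisely because the sharing construct and the linear substitutions introduced by $\recencodf{\cdot}$ rename bound linear variables.

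\textbf{Key steps, in order.} First, the straightforward base cases: variables $x$, $x[i]$, the empty bags $\oneb$ and $\unvar{\oneb}$, and the failure term $\fail^{\widetilde{x}}$ are mapped identically by $\recencodf{\cdot}$, and each well-formedness rule ($\redlab{F{:}var^\ell}$, $\redlab{F{:}var^!}$, $\redlab{F{:}\oneb^\ell}$, $\redlab{F{:}\oneb^!}$, $\redlab{F{:}fail}$) has a syntactically identical counterpart in \figref{app_fig:wfsh_rulesunres} (with $\recencod{\Theta} = \Theta$ by \defref{d:enclamcontfailunres}). Second, the homomorphic cases: application $M\,B$, bag concatenations $\bag{M}\cdot C$, $C\bagsep U$, $\unvar{\bag{M}}$, $U\concat V$, the explicit substitutions $M\linexsub{N/x}$ and $M\unexsub{U/\unvar{x}}$, and sums $\expr{M}+\expr{N}$ — here $\recencodf{\cdot}$ distributes over subterms, the side-condition $\eta\relunbag\epsilon$ is preserved verbatim, and one applies the induction hypothesis to the premises and then reassembles with the matching $\lamrsharfailunres$-rule. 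The only subtlety is splitting $\Gamma$ as $\Gamma_1,\Gamma_2$ and checking $\dom{\Gamma_i}=\mlfv{\cdot}$ for the two subterms, which follows from the definition of $\mlfv{\cdot}$ in \defref{d:fvars}. Third, the genuinely interesting case, $\lambda x.M$: here $\recencodf{\lambda x.M} = \lambda x.(\recencodf{M\linsub{x_1,\dots,x_n}{x}}[x_1,\dots,x_n\leftarrow x])$ with $n = \#(x,M)$. One starts from the $\redlab{F{:}abs}$ derivation with premise $\Theta,\unvar{x}:\eta;\Gamma,\hat{x}:\sigma^n \wfdash M:\tau$, applies Proposition~\ref{prop:wf_linsubunres} $n$ times (once per occurrence) to obtain $\Theta,\unvar{x}:\eta;\Gamma,x_1:\sigma,\dots,x_n:\sigma \wfdash M\linsub{x_1,\dots,x_n}{x}:\tau$, then invokes the (inner) induction hypothesis — or more precisely the analogous statement on open terms, cf.\ the second item of Proposition~\ref{prop:preservencintolamrfailunres} applied under the extended context — to push $\recencodf{\cdot}$ inside, and finally wraps with $\redlab{FS{:}share}$ (collecting $x_1,\dots,x_n$ back into $x:\sigma^n$) and $\redlab{FS{:}abs\dash sh}$. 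One must also handle $n=0$ via $\redlab{FS{:}weak}$. Fourth, the explicit-substitution case $M\esubst{C\bagsep U}{x}$, which by \figref{fig:auxencfailunres} splits into two subcases according to whether $\#(x,M)=\size{C}$: in the first subcase $\recencodf{\cdot}$ produces a sum over $\perm{\recencodf{C}}$ of iterated linear substitutions plus an unrestricted substitution, typed by iterating $\redlab{FS{:}Esub^\ell}$/$\redlab{FS{:}Esub^!}$ and closing with $\redlab{FS{:}sum}$ (each permutation yields the same typing by commutativity of $\wedge$); in the second subcase it produces a sharing construct followed by an explicit substitution, typed by $\redlab{FS{:}share}$ and $\redlab{FS{:}Esub}$. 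Proposition~\ref{prop:linhed_encfail} is used here to commute $\recencodf{\cdot}$ past head and linear substitutions.

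\textbf{Main obstacle.} The bookkeeping around the $\redlab{F{:}abs}$ and $\redlab{F{:}ex\dash sub}$ cases is where the real work lies: one must carefully track how a single linearly-bound variable $x$ of type $\sigma$ occurring $n$ times becomes $n$ distinct variables $x_1,\dots,x_n$ each of type $\sigma$, reconciling the linear-context entry $\hat{x}:\sigma^n$ of \lamrfailunres\ with the shared-variable discipline $x:\sigma^n$ plus $x_1{:}\sigma,\dots,x_n{:}\sigma$ of \lamrsharfailunres. This requires the correct interleaving of Proposition~\ref{prop:wf_linsubunres} (repeated linear substitution), Proposition~\ref{prop:linhed_encfail} (commuting the encoding with substitutions), and the induction hypothesis, and — crucially — a strengthening of the statement to open terms (with the invariant $\dom{\Gamma}=\mlfv{\cdot}$) so that the induction goes through under the abstraction. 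Once the open-term version is set up, the closed-term statement of Proposition~\ref{prop:preservencintolamrfailunres} is its special case with $\Gamma=\Theta=\emptyset$ for the outermost term. I expect everything else to be routine rule-matching.
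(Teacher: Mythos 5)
Your plan is correct and follows the route the paper clearly intends: the paper actually states Proposition~\ref{prop:preservencintolamrfailunres} \emph{without} proof, and the technique you describe (mutual structural induction, with the abstraction and explicit-substitution cases handled by iterating Proposition~\ref{prop:wf_linsubunres} to rename the $n$ occurrences of the bound variable into $x_1,\dots,x_n$, invoking the induction hypothesis on the renamed term, and reassembling with $\redlab{FS{:}share}$/$\redlab{FS{:}weak}$ and the matching $\lamrsharfailunres$ rule) is exactly the pattern the paper exhibits in its proof of Theorem~\ref{thm:preservencintolamrfail}, which builds on this proposition. Your observation that the statement must be read as the open-term generalization (with the invariant $\dom{\Gamma}=\mlfv{\cdot}$) for the induction to pass under binders is the right resolution of the paper's slightly inconsistent "closed" preamble, and the remaining cases are, as you say, routine rule-matching.
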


\begin{theorem}[Well-formedness Preservation for $\recencodopenf{-}$]
\label{thm:preservencintolamrfail}
Let $B$ and  $\expr{M}$  be a bag and an expression in $\lamrfail$, respectively. 
\begin{enumerate}
\item
    If $\Theta ; \Gamma \wfdash B:(\sigma^{k} , \eta  )$ 
and $\dom{\Gamma} = \lfv{B}$ 
then $\recencod{ \Theta} ;\recencod{\Gamma}\wfdash \recencodopenf{B}:(\sigma^{k} , \eta  )$.

    \item 
    If $\Theta ;\Gamma \wfdash \expr{M}:\sigma$ 
and $\dom{\Gamma} = \lfv{\expr{M}}$ 
then $\recencod{\Theta} ;\recencod{\Gamma}\wfdash \recencodopenf{\expr{M}}:\sigma$.

\end{enumerate}
\end{theorem}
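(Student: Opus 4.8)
The statement to establish is Theorem~\ref{thm:preservencintolamrfail}, namely that the full encoding $\recencodopenf{\cdot}$ preserves well-formedness, for both bags and expressions. The plan is to reduce this to the already-available preservation result for the auxiliary closed-term encoding $\recencodf{\cdot}$, i.e.\ Proposition~\ref{prop:preservencintolamrfailunres}, by carefully tracking what the prefix of linear substitutions and sharing constructs in Definition~\ref{def:enctolamrsharfailunres} does at the level of types. Concretely, recall that for a term $M$ with $\dom{\Gamma}=\llfv{M}=\{x_1,\dots,x_k\}$ and $\#(x_i,M)=j_i$, we set
\[
\recencodopenf{M} = \recencodf{M\linsub{\widetilde{x}_1}{x_1}\cdots\linsub{\widetilde{x}_k}{x_k}}[\widetilde{x}_1\leftarrow x_1]\cdots[\widetilde{x}_k\leftarrow x_k],
\]
so the proof must show that each outer sharing binder $[\widetilde{x}_i\leftarrow x_i]$ is exactly typed by Rule~$\redlab{FS{:}share}$, consuming the $j_i$ copies of the type $\sigma_i$ of $x_i$ introduced by the linear substitutions and recombining them into $x_i:\sigma_i^{j_i}$ in the final context $\recencod{\Gamma}$.

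The steps, in order, are as follows. First I would prove both items simultaneously by induction on the number $k=|\llfv{\cdot}|$ of free linear variables (the bag case and the expression case are handled uniformly, since $\recencodopenf{\cdot}$ extends homomorphically to expressions and the sharing-prefix machinery is identical). The base case $k=0$ is precisely Proposition~\ref{prop:preservencintolamrfailunres}, because then $\recencodopenf{M}=\recencodf{M}$ and the term is already linearly closed. For the inductive step, assume $\llfv{M}=\{x_1,\dots,x_{k}\}$; apply the simultaneous linear substitution $\linsub{\widetilde{x}_1}{x_1}$ first. By Proposition~\ref{prop:wf_linsubunres} (well-formedness preservation under linear substitutions), repeatedly applied $j_1$ times, the judgement $\Theta;\Gamma\wfdash M:\tau$ with $x_1:\sigma_1$ occurring $j_1$ times becomes $\Theta;\Gamma',x_{1_1}:\sigma_1,\dots,x_{1_{j_1}}:\sigma_1\wfdash M\linsub{\widetilde{x}_1}{x_1}:\tau$, where $\Gamma'$ contains the remaining variables $x_2,\dots,x_k$ and each fresh $x_{1_\ell}$ has the type of $x_1$ (as noted in the Notation preceding Definition~\ref{def:enctolamrsharfailunres}). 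Iterating over all $k$ variables, we land at a judgement for the linearly closed term $M' := M\linsub{\widetilde{x}_1}{x_1}\cdots\linsub{\widetilde{x}_k}{x_k}$ whose context is $\Theta;\widetilde{x}_1:\sigma_1^{\ast},\dots,\widetilde{x}_k:\sigma_k^{\ast}$ (each $\widetilde{x}_i$ a block of $j_i$ distinct variables of type $\sigma_i$). Then Proposition~\ref{prop:preservencintolamrfailunres} gives $\recencod{\Theta};\widetilde{x}_1:\sigma_1^{\ast},\dots,\widetilde{x}_k:\sigma_k^{\ast}\wfdash\recencodf{M'}:\tau$. Finally, I would apply Rule~$\redlab{FS{:}share}$ exactly $k$ times, once per binder $[\widetilde{x}_i\leftarrow x_i]$: each application collapses the block $x_{i_1}:\sigma_i,\dots,x_{i_{j_i}}:\sigma_i$ into $x_i:\sigma_i^{j_i}$, and after all $k$ applications the context is precisely $\recencod{\Theta};\recencod{\Gamma}$ by Definition~\ref{d:enclamcontfailunres} (which sends $k$ repeated assignments $x:\tau$ to $x:\tau\wedge\cdots\wedge\tau=\tau^k$). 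This yields $\recencod{\Theta};\recencod{\Gamma}\wfdash\recencodopenf{M}:\tau$ as required; the bag case is identical with $\tau$ replaced by a tuple type $(\sigma^k,\eta)$ and $\recencodf{\cdot}$ acting homomorphically on $C\bagsep U$.

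Two technical points deserve care and I would isolate them as small lemmas or remarks. The first is a bookkeeping fact about free linear variables: $\llfv{\recencodf{M'}}$ must coincide with $\bigcup_i\widetilde{x}_i$ so that the side conditions of $\redlab{FS{:}share}$ (each shared variable occurs exactly once, and is not itself a sharing variable) are met; this follows because $\recencodf{\cdot}$ preserves the multiset of free linear variables and the linear substitutions replace each occurrence of $x_i$ by a distinct fresh name occurring once. The second is that the order of applying the $k$ sharing rules does not matter up to $\pcong$ (cf.\ the third and fourth identities of Definition~\ref{def:rsPrecongruencefailure}), but since we only need \emph{some} well-formedness derivation this is not strictly necessary — any fixed order works. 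The main obstacle, modest as it is, is getting the indexing of the blocks $\widetilde{x}_i$ and the interaction of $\redlab{FS{:}share}$ with $\redlab{FS{:}weak}$ right when some $j_i=0$ (a free variable appearing zero times after substitution): in that case the binder is $[\leftarrow x_i]$ and one must invoke $\redlab{FS{:}weak}$ rather than $\redlab{FS{:}share}$, introducing $x_i:\omega$, which matches $\recencod{\Gamma}$ since $\sigma_i^0=\omega$. Once that corner case is folded in, the argument is a routine induction resting entirely on Propositions~\ref{prop:wf_linsubunres}, \ref{prop:preservencintolamrfailunres} and the shape of the well-formedness rules for $\lamrsharfailunres$ in Figure~\ref{app_fig:wfsh_rulesunres}.
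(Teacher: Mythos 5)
Your proposal is correct and follows essentially the same route as the paper's proof: both reduce the statement to Proposition~\ref{prop:preservencintolamrfailunres} by first applying Proposition~\ref{prop:wf_linsubunres} once per occurrence of each free linear variable and then closing with Rule~$\redlab{FS{:}share}$ for each binder $[\widetilde{x}_i\leftarrow x_i]$, with sums handled homomorphically via $\redlab{FS{:}sum}$. The only cosmetic difference is that you induct on the number of free linear variables whereas the paper phrases it as a mutual induction on the typing derivation, and your remark about the $j_i=0$ corner case (using $\redlab{FS{:}weak}$) is sound but moot here, since $\dom{\Gamma}=\llfv{M}$ forces every $x_i$ to occur at least once.
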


\begin{proof}
By mutual induction on the typing derivations $\Theta; \Gamma\wfdash B:(\sigma^{k} , \eta  )$ and $\Theta; \Gamma\wfdash \expr{M}:\sigma$, exploiting Proposition~\ref{prop:preservencintolamrfailunres}. The analysis for bags Part 1. follows directly from the IHs and will be omitted. 
As for Part 2. there are two main cases to consider:
\begin{enumerate}
    \item $\expr{M} = M$. 
    
Without loss of generality, assume $\lfv{M} = \{x,y\}$. Then, 
\(\Theta; \hat{x}:\sigma_1^j, \hat{y}:\sigma_2^k \wfdash M : \tau \)
where  $\#(x,M)=j$ and 
$\#(y,M)=k$, for some positive integers $j$ and $k$.

After $j+k$ applications of Proposition~\ref{prop:wf_linsubunres}
we obtain:
\begin{equation*}\label{eq:thmpres2fail}
\Theta; x_1:\sigma_1, \cdots, x_j:\sigma_1, y_1:\sigma_2, \cdots, y_k:\sigma_2 \wfdash M\linsub{\widetilde{x}}{x}\linsub{\widetilde{y}}{y} : \tau    
\end{equation*}
 where  $\widetilde{x}=x_{1},\cdots, x_{j}$
   and $\widetilde{y}=y_{1},\cdots, y_{k}$. 
From Proposition~\ref{prop:preservencintolamrfailunres}
one has
\begin{equation*}\label{eq:thmpres3fail}
\recencod{\Theta}; \recencod{x_1:\sigma_1, \cdots, x_j:\sigma_1, y_1:\sigma_2, \cdots, y_k:\sigma_2} \wfdash \recencodf{M\linsub{\widetilde{x}}{x}\linsub{\widetilde{y}}{y}} : \tau  
\end{equation*}

Since $\recencod{x_1:\sigma_1, \cdots, x_j:\sigma_1, y_1:\sigma_2, \cdots, y_k:\sigma_2}= x_1:\sigma_1, \cdots, x_j:\sigma_1, y_1:\sigma_2, \cdots, y_k:\sigma_2$ and $\recencod{\Theta} = \Theta $, 
we have the following derivation:
\begin{prooftree}
    \AxiomC{$\Theta; {x_1:\sigma_1, \cdots, x_j:\sigma_1, y_1:\sigma_2, \cdots, y_k:\sigma_2} \wfdash \recencodf{M\linsub{\widetilde{x}}{x}\linsub{\widetilde{y}}{y}} : \tau$}
    \LeftLabel{$\redlab{FS:share}$}
    \UnaryInfC{$\Theta; x:\sigma_1^j, y_1:\sigma_2, \cdots, y_k:\sigma_2 \wfdash \recencodf{M\linsub{\widetilde{x}}{x}\linsub{\widetilde{y}}{y}}[\widetilde{x}\leftarrow x] : \tau$}
\LeftLabel{$\redlab{FS:share}$}
    \UnaryInfC{$\Theta; x:\sigma_1^j, y:\sigma_2^k \wfdash \recencodf{M\linsub{\widetilde{x}}{x}\linsub{\widetilde{y}}{y}}[\widetilde{x}\leftarrow x][\widetilde{y}\leftarrow y] : \tau$}
    \end{prooftree}
    
    By expanding \defref{def:enctolamrsharfailunres}, we have 
$$
 \recencodopenf{M} = 
\recencodf{M\linsub{\widetilde{x}}{x}\linsub{\widetilde{y}}{y}}[\widetilde{x}\leftarrow x][\widetilde{y}\leftarrow y] 
$$

    which completes the proof for this case. 
    
    \item $\expr{M} = M_1 + \cdots + M_n$:
    
    This case proceeds easily by IH, using Rule~$\redlab{FS:sum}$.
    \end{enumerate}
\end{proof}

\begin{theorem}[Operational Completeness]
\label{l:app_completenessone}
Let $\expr{M}, \expr{N}$ be well-formed $\lamrfail$ expressions. 
Suppose $\expr{N}\red_{\redlab{R}} \expr{M}$.
\begin{enumerate}
\item If $\redlab{R} =  \redlab{R:Beta}$  then $ \recencodopenf{\expr{N}}  \red^{\leq 2}\recencodopenf{\expr{M}}$;

\item If $\redlab{R} = \redlab{R:Fetch}$   then   $ \recencodopenf{\expr{N}}  \red^+ \recencodopenf{\expr{M}'}$, for some $ \expr{M}$. 
\item If $\redlab{R} \neq \redlab{R:Beta}$    and $\redlab{R} \neq \redlab{R:Fetch}$ then   $ \recencodopenf{\expr{N}}  \red \recencodopenf{\expr{M}}$.
\end{enumerate}
\end{theorem}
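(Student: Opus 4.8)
The plan is to prove the three clauses of operational completeness by case analysis on the last reduction rule $\redlab{R}$ used in $\expr{N} \red_{\redlab{R}} \expr{M}$, tracking carefully how the two-stage encoding $\recencodopenf{\cdot} = \recencodf{\cdot}$ composed with the sharing of free linear variables behaves under each step. The key technical tool will be Proposition~\ref{prop:linhed_encfail}, which tells us that $\recencodf{\cdot}$ commutes with head substitutions and with the simultaneous linear substitutions $\linsub{\widetilde{x}}{x}$ used to set up sharing; this is what lets us push the encoding through a $\lamrfailunres$-reduction and recover a $\lamrsharfailunres$-reduction (or sequence thereof). Since $\recencodopenf{\cdot}$ extends homomorphically to sums and both calculi close reduction under the evaluation contexts $C[\cdot]$ and $D[\cdot]$, the contextual rules $\redlab{R:ECont}$ and $\redlab{R:TCont}$ reduce to the non-contextual cases together with the fact (to be checked) that $\recencodf{\cdot}$ respects the context grammar of Definition~\ref{def:context_lamrfail}; I would handle these last and briefly.

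For clause (1), the rule $\redlab{R:Beta}$: here $\expr{N} = C[(\lambda x. M)B]$ reduces to $C[M\esubst{B}{x}]$. Unfolding Definition~\ref{def:enctolamrsharfailunres} and Figure~\ref{fig:auxencfailunres}, the encoding of $(\lambda x. M)B$ is $(\lambda x.(\recencodf{M\linsub{\widetilde{x}}{x}}[\widetilde{x}\leftarrow x]))\,\recencodf{B}$, which fires $\redlab{RS:Beta}$ in $\lamrsharfailunres$ to give $(\recencodf{M\linsub{\widetilde{x}}{x}}[\widetilde{x}\leftarrow x])\esubst{\recencodf{B}}{x}$. I then compare this with $\recencodf{M\esubst{B}{x}}$ as defined by the two-case clause in Figure~\ref{fig:auxencfailunres}: in the branch $\#(x,M) = \size{C} = k$ we need one further $\redlab{RS:Ex\dash Sub}$ step to distribute into the sum of explicit linear/unrestricted substitutions (hence $\red^{\leq 2}$), while in the branch $\#(x,M) = k$ general the encoding target is syntactically already the post-$\redlab{RS:Beta}$ term, so one step suffices. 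Clause (2), $\redlab{R:Fetch^\ell}$ and $\redlab{R:Fetch^!}$: here the source term has an explicit substitution at top level whose encoding in the $\#(x,M)=\size{C}$ branch is the sum $\sum_{C_i}\recencodf{M\linsub{\widetilde{x}}{x}}\linexsub{C_i(1)/x_1}\cdots\linexsub{C_i(k)/x_k}\unexsub{U/x^!}$; each summand reduces by repeated $\redlab{RS:Fetch^\ell}$ (resp.\ a single $\redlab{RS:Fetch^!}$) performing head substitutions, and Proposition~\ref{prop:linhed_encfail}(1) identifies the result with $\recencodopenf{\cdot}$ applied to the corresponding summand of $\expr{M}$ --- so at least one step, possibly several, is needed, giving $\red^+$. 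Clause (3) covers $\redlab{R:Fail^\ell}$, $\redlab{R:Fail^!}$, $\redlab{R:Cons_1}$, $\redlab{R:Cons_2}$: in each case the source redex's encoding is, by inspection of Figure~\ref{fig:auxencfailunres} and the reduction rules of Figure~\ref{fig:share-reductfailureunres}, a redex of the matching $\lamrsharfailunres$ failure/consume rule ($\redlab{RS:Fail^\ell}$, etc.), and a single step lands exactly on the encoding of $\expr{M}$; here I must check that the free-variable bookkeeping ($\mlfv{\cdot}$ versus $\llfv{\cdot}$) matches up so that the resulting $\fail^{\widetilde{y}}$ terms carry the same index, and that permutation sums $\sum_{\perm{C}}$ on the source side correspond to $\sum_{C_i \in \perm{\recencodf{C}}}$ on the target side --- this uses that $\recencodf{\cdot}$ acts homomorphically on linear bags and hence commutes with $\perm{\cdot}$.

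The main obstacle I anticipate is the interaction between the outer sharing layer $[\widetilde{x}_1 \leftarrow x_1]\cdots[\widetilde{x}_k \leftarrow x_k]$ (which $\recencodopenf{\cdot}$ prepends for free linear variables) and the reduction happening at a subterm: when $\expr{N}\red\expr{M}$ under a context, the free linear variables of $\expr{N}$ and $\expr{M}$ can differ (reductions like $\redlab{R:Fail^\ell}$ rename/absorb variables into $\fail$), so the sharing-prefix that $\recencodopenf{\expr{M}}$ puts on is not literally the same prefix as on $\recencodopenf{\expr{N}}$. Reconciling these requires either invoking the precongruence $\pcong$ of Definition~\ref{def:rsPrecongruencefailure} to commute and garbage-collect sharing constructs, or --- preferably, since the statement claims a literal $\red$ target without a trailing $\pcong$ --- arguing that $\recencodf{\cdot}$ is defined on \emph{closed} terms and that the free-variable sets on source and target sides are forced to agree by the well-formedness hypothesis, so the bookkeeping is in fact exact. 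I would isolate this as a small lemma about how $\recencodopenf{\cdot}$ commutes with evaluation contexts (mirroring the context-encoding already introduced as Definition~\ref{d:enclamcontfailunres} for the type side), prove it by induction on the context, and then the three clauses follow cleanly.
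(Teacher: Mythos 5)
Your proposal follows essentially the same route as the paper's proof: a case analysis on the last reduction rule, unfolding the two-stage encoding, using the commutation of $\recencodf{\cdot}$ with head and linear substitutions (Proposition~\ref{prop:linhed_encfail}), splitting the $\redlab{R{:}Beta}$ case on whether $\#(x,M)=\size{C}$, and deferring the contextual rules to a brief final argument. The obstacle you flag about the outer sharing prefix is resolved exactly as you suspect in your second option: reduction preserves the multiset of free linear variables (e.g.\ $\redlab{R{:}Fail^{\ell}}$ accumulates them into $\fail^{\widetilde{y}}$), so the prefixes on $\recencodopenf{\expr{N}}$ and $\recencodopenf{\expr{M}}$ coincide literally and no appeal to $\pcong$ is needed.
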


\begin{proof}
We proceed by induction on the the rule from \figref{fig:reductions_lamrfailunres} applied to infer $\expr{N}\red \expr{M}$, distinguishing the three cases: (below $\widetilde{[x_1\leftarrow x_n]}$ abbreviates  
            $[\widetilde{x_1}\leftarrow x_1]\cdots [\widetilde{x_n}\leftarrow x_n]$).

\begin{enumerate}

    \item The rule applied is $\redlab{R}=\redlab{R:Beta}$. 
    
    In this case, 
        $\expr{N}= (\lambda x. M') B$, where  $B = C \bagsep U$, the reduction is 
    \begin{prooftree}
        \AxiomC{}
        \LeftLabel{\redlab{R:Beta}}
        \UnaryInfC{\((\lambda x. M) B \red M\ \esubst{B}{x}\)}
    \end{prooftree}

          and $\expr{M}= M'\esubst{B}{x}$. Below we assume $\llfv{\expr{N}}=\{x_1,\ldots, x_k\}$ and $\widetilde{x_i}=x_{i_1},\ldots, x_{i_{j_i}}$, where $j_i= \#(x_i, N)$, for $1\leq i\leq k$.
        On the one hand, we have:
        \begin{equation}\label{eq:beta1fail}
            \begin{aligned}
            \recencodopenf{\expr{N}}&= \recencodopenf{(\lambda x. M')B}
            =  \recencodf{((\lambda x. M')B) \langle{ {\widetilde{x_1}}/ {x_1}}\rangle\cdots \langle{ {\widetilde{x_k}}/ {x_k}}\rangle}\widetilde{[x_1\leftarrow x_k]}
            \\
            &=  \recencodf{(\lambda x. M^{''})B'}\widetilde{[x_1\leftarrow x_k]}
            =  (\recencodf{\lambda x. M^{''}}\recencodf{B'})\widetilde{[x_1\leftarrow x_k]} \\
            &=  ((\lambda x.\recencodf{ M^{''}\langle{ {\widetilde{y}}/ {x}}\rangle}[ {\widetilde{y}}\leftarrow  {x}])\recencodf{B'})\widetilde{[x_1\leftarrow x_k]} \\
            &\red_{\redlab{RS:Beta}} (\recencodf{ M^{''} \langle{ {\widetilde{y}}/ {x}} \rangle} [ {\widetilde{y}} \leftarrow  {x}] \esubst{\recencodf{B'}}{x}) \widetilde{[x_1\leftarrow x_k]}=\expr{L}
            \end{aligned}
        \end{equation}
      
        On the other hand, we have:
        \begin{equation}\label{eq:beta2fail}
            \begin{aligned}
               \recencodopenf{\expr{M}}&=\recencodopenf{M'\esubst{B}{x}}=\recencodf{M'\esubst{B}{x}\langle{ {\widetilde{x_1}}/ {x_1}}\rangle\cdots \langle{ {\widetilde{x_k}}/ {x_k}}\rangle} \widetilde{[x_1\leftarrow x_n]}\\
               &=\recencodf{M^{''}\esubst{B'}{x}} \widetilde{[x_1\leftarrow x_k]}
            \end{aligned}
        \end{equation}

        We need to analyze two sub-cases: either $\#( {x},M) = \size{C} $ or $\#( {x},M) = k \geq 0$ and our first sub-case is not met.
        \begin{enumerate}
            \item If $\#( {x},M) = \size{C}  $ then we can reduce $\expr{L}$ as: (via {\redlab{RS:Ex-sub}})
            \begin{equation*}
                \begin{aligned}
            \expr{L}\red & \sum_{C_i\in \perm{\recencodf{C}}}\recencodf{M^{''}\linsub{ {\widetilde{y}}}{ {x}}}\linexsub{C_i(1)/ {y_1}}\cdots \linexsub{C_i(n)/ {y_n}} \unexsub{U/ \unvar{x}} \widetilde{[x_1\leftarrow x_k]}         =\recencodopenf{\expr{M}}
                \end{aligned}
            \end{equation*}
            
            From \eqref{eq:beta1fail} and \eqref{eq:beta2fail} and $ {\widetilde{y}}= {y_1}\ldots  {y_n}$, one has the result.

            \item Otherwise,  $\#( {x},M) = n \geq 0$.
            


            Expanding  the encoding in \eqref{eq:beta2fail} :
            \begin{align*}
                \recencodopenf{M}&= \recencodf{M^{''}\esubst{B'}{x}} \widetilde{[x_1\leftarrow x_k]} 
                = (\recencodf{ M^{''} \langle{ {\widetilde{y}}/ {x}} \rangle} [ {\widetilde{y}} \leftarrow  {x}] \esubst{\recencodf{B'}}{x}) \widetilde{[x_1\leftarrow x_k]}
            \end{align*}
           Therefore  $\recencodopenf{M} =\expr{L}$ and $\recencodopenf{\expr{N}}\red \recencodopenf{\expr{M}}$.
        
        \end{enumerate}

    \item The rule applied is $\redlab{R}=\redlab{R:Fetch^{\ell}}$. 
    
    Then $\expr{N}=M\esubst{ C \bagsep U  }{x } $ and  the reduction is 

    \begin{prooftree}
        \AxiomC{$\headf{M} =  {x}$}
        \AxiomC{$C = {\bag{N_1}}\cdot \dots \cdot {\bag{N_k}} \ , \ k\geq 1 $}
        \AxiomC{$ \#( {x},M) = k $}
        \LeftLabel{\redlab{R:Fetch^{\ell}}}
        \TrinaryInfC{\(
        M\esubst{ C \bagsep U  }{x } \red M \headlin{ N_{1}/ {x} } \esubst{ (C \setminus N_1)\bagsep U}{ x }  + \cdots + M \headlin{ N_{k}/ {x} } \esubst{ (C \setminus N_k)\bagsep U}{x}
        \)}
    \end{prooftree}

    with $\expr{M}= M \headlin{ N_{1}/ {x} } \esubst{ (C \setminus N_1)\bagsep U}{ x }  + \cdots + M \headlin{ N_{k}/ {x} } \esubst{ (C \setminus N_k)\bagsep U}{x}$.

    Below we assume $\lfv{\expr{N}}=\{x_1,\ldots, x_k\}$ and $\widetilde{x_i}=x_{i_1},\ldots, x_{i_{j_i}}$, where $j_i= \#(x_i, N)$, for $1\leq i\leq k$.
    On the one hand, we have: (last rule is {\redlab{RS{:}Fetch^{\ell}}})
        \begin{equation*}
            \begin{aligned}
            \recencodopenf{\expr{N}}&= \recencodopenf{M\esubst{ C \bagsep U  }{x}}=  \recencodf{M\esubst{ C \bagsep U  }{x } \langle{ {\widetilde{x_1}}/ {x_1}}\rangle\cdots \langle{ {\widetilde{x_k}}/ {x_k}}\rangle}\widetilde{[x_1\leftarrow x_k]}
            \\
            &=  \recencodf{ M'\esubst{ C' \bagsep U  }{x } }\widetilde{[x_1\leftarrow x_k]}\\
            &=   \sum_{C_i \in \perm{\recencodf{ C'  } }}  (\recencodf{ M' \langle  {\widetilde{y}}/  {x}  \rangle } \linexsub{C_i(1)/ {y}_1} \cdots \linexsub{C_i(k)/ {y}_k}\unexsub{U/\unvar{x}} )\widetilde{[x_1\leftarrow x_k]} \\
            &=  \sum_{C_i \in \perm{\recencodf{ C'  } }}  (\recencodf{ M'' } \linexsub{C_i(1)/ {y}_1} \cdots \linexsub{C_i(k)/ {y}_k}\unexsub{U/\unvar{x}} )\widetilde{[x_1\leftarrow x_k]} \\
            &\red \sum_{C_i \in \perm{\recencodf{ C'  } }}  (\recencodf{ M''\headlin{ C_i(1) /  {y}_1 } } \linexsub{C_i(2)/ {y}_2} \cdots \linexsub{C_i(k)/ {y}_k}\unexsub{U/\unvar{x}} )\widetilde{[x_1\leftarrow x_k]} \\
             &=\expr{L}
            \end{aligned}
        \end{equation*}
        We assume for simplicity that $  \headf{M''} =  {y}_1$
        On the other hand, we have:
        \begin{equation*}
            \begin{aligned}
               \recencodopenf{\expr{M}}&=\recencodopenf{M \headlin{ N_{1}/ {x} } \esubst{ (C \setminus N_1)\bagsep U}{ x }  + \cdots + M \headlin{ N_{k}/ {x} } \esubst{ (C \setminus N_k)\bagsep U}{x}}\\
               &=\sum_{C_i \in \perm{\recencodf{ C'  } }}  (\recencodf{ M''\headlin{ C_i(1) /  {y}_1 } } \linexsub{C_i(2)/ {y}_2} \cdots \linexsub{C_i(k)/ {y}_k}\unexsub{U/\unvar{x}} )\widetilde{[x_1\leftarrow x_k]} \\
               &=\expr{L}
            \end{aligned}
        \end{equation*}
        From these developments from $\recencodopenf{\mathbb{N}}$ and $\recencodopenf{\mathbb{M}}$ ,  and $ {\widetilde{y}}= {y_1}\ldots  {y_n}$, one has the result.

    \item The rule applied is $\redlab{R}=\redlab{R:Fetch^!}$. 
    
    Then $\expr{N}=M\esubst{ C \bagsep U  }{x } $ and  the reduction is 

    \begin{prooftree}
        \AxiomC{$\headf{M} = {x}[i] \quad U_i = \banged{\bag{N}} $}
        \LeftLabel{\redlab{R:Fetch^!}}
        \UnaryInfC{\(
        M\ \esubst{ C \bagsep U  }{x } \red M \headlin{ N/ {x}[i] } \esubst{ C \bagsep U}{ x } 
        \)}
    \end{prooftree}   

    with $\expr{M}= M \headlin{ N/ {x}[i] } \esubst{ C \bagsep U}{ x }$.
    Below we assume $\lfv{\expr{N}}=\{x_1,\ldots, x_k\}$ and $\widetilde{x_i}=x_{i_1},\ldots, x_{i_{j_i}}$, where $j_i= \#(x_i, N)$, for $1\leq i\leq k$.

    On the one hand, we have: (the last rule is {\redlab{RS:Fetch^!}})
        {\small
        \begin{equation}\label{eq:unfetch1fail}
            \begin{aligned}
            \recencodopenf{\expr{N}}&= \recencodopenf{M\esubst{ C \bagsep U  }{x}}=  \recencodf{M\esubst{ C \bagsep U  }{x } \langle{ {\widetilde{x_1}}/ {x_1}}\rangle\cdots \langle{ {\widetilde{x_k}}/ {x_k}}\rangle}\widetilde{[x_1\leftarrow x_k]}
            \\
            &=  \recencodf{ M'\esubst{ C' \bagsep U  }{x } }\widetilde{[x_1\leftarrow x_k]}]\\
            &=   \sum_{C_i \in \perm{\recencodf{ C'  } }}  (\recencodf{ M' \langle  {\widetilde{y}}/  {x}  \rangle } \linexsub{C_i(1)/ {y}_1} \cdots \linexsub{C_i(k)/ {y}_k}\unexsub{U/ \unvar{x}} )\widetilde{[x_1\leftarrow x_k]} \\
            &=  \sum_{C_i \in \perm{\recencodf{ C'  } }}  (\recencodf{ M'' } \linexsub{C_i(1)/ {y}_1} \cdots \linexsub{C_i(k)/ {y}_k}\unexsub{U/ \unvar{x}} )\widetilde{[x_1\leftarrow x_k]}
            \\ &\red \sum_{C_i \in \perm{\recencodf{ C'  } }}  (\recencodf{ M''\headlin{ N / {x}[i]  } } \linexsub{C_i(2)/ {y}_2} \cdots \linexsub{C_i(k)/ {y}_k}\unexsub{U/ \unvar{x}} )\widetilde{[x_1\leftarrow x_k]}  =\expr{L}
            \end{aligned}
        \end{equation}}
        On the other hand, assuming for simplicity that $  \headf{M''} = {x}[i]$ and $U_i = N$, we have 
        {\small
        \begin{equation}\label{eq:unfetch2fail}
            \begin{aligned}
            \recencodopenf{\expr{M}}&=\recencodopenf{M \headlin{ N/ {x}[i] } \esubst{ C \bagsep U}{ x } }\\
               &=\sum_{C_i \in \perm{\recencodf{ C'  } }}  (\recencodf{ M''\headlin{ N / {x}[i]  } } \linexsub{C_i(2)/ {y}_2} \cdots \linexsub{C_i(k)/ {y}_k}\unexsub{U/\unvar{x}} )\widetilde{[x_1\leftarrow x_k]}=\expr{L}
            \end{aligned}
        \end{equation}}
        From \eqref{eq:unfetch1fail} and \eqref{eq:unfetch2fail}, one has the result.

        \item The rule applied is $ \redlab{R}\neq \redlab{R:Beta}$ and $ \redlab{R}\neq \redlab{R:Fetch}$. There are two possible cases:
            \begin{enumerate}
                 
                \item  $\redlab{R}=\redlab{R:Fail^{\ell}}$
                
                Then $\expr{N}=M\esubst{C \bagsep U}{x }$ and the reduction is 
                
                \begin{prooftree}    
                        \AxiomC{$\#( {x},M) \neq \size{C} $} 
                         \AxiomC{$\widetilde{z} = (\mlfv{M} \!\setminus x) \uplus \mlfv{C} $}
                        \LeftLabel{\redlab{R:Fail^{\ell}}}
                        \BinaryInfC{\(  M\esubst{C \bagsep U}{x } \red \sum_{\perm{C}} \fail^{\widetilde{z}} \)}
                    \end{prooftree}
                where $\expr{M}=  \sum_{\perm{C}} \fail^{\widetilde{y}}$. Below assume $\lfv{\expr{N}}=\{x_1,\ldots, x_n\}$.
                
                    On the one hand, we have:
     \begin{equation*}\label{eq:fail1fail}
                    \begin{aligned}
                    \recencodopenf{\expr{N}} &= \recencodopenf{M\esubst{C \bagsep U}{x }}= \recencodf{M\esubst{C \bagsep U}{x }\langle  {\widetilde{x_1}}/ {x_1}\rangle\cdots \langle  {\widetilde{x_n}}/ {x_n}\rangle } \widetilde{[x_1\leftarrow x_n]} \\
                     &= \recencodf{M'\esubst{C' \bagsep U}{x } } \widetilde{[x_1\leftarrow x_n]} \\
                    &=  \recencodf{M'\langle y_1, \cdots ,  {y_k} /  {x}  \rangle} [ {y_1}, \cdots ,  {y_k} \leftarrow  {x}] \esubst{C' \bagsep U}{x }\widetilde{[x_1\leftarrow x_n]}\\
                        & \red_{\redlab{RS:Fail^{\ell}}} \sum_{\perm{C}} \fail^{\widetilde{y},  {\widetilde{x_1}}, \cdots ,  {\widetilde{x_n}}}\widetilde{[x_1\leftarrow x_n]}    =\expr{L}\\
                    \end{aligned}
                \end{equation*}
                
                On the other hand, we have:
            \begin{equation*}\label{eq:fail2fail}
                    \begin{aligned}
                    \recencodopenf{\expr{M}}  &= \sum_{\perm{C}} \recencodopenf{\fail^{\widetilde{z}}}
                    = \sum_{\perm{C}} \fail^{\widetilde{y},  {\widetilde{x_1}}, \cdots ,  {\widetilde{x_n}}}\widetilde{[x_1\leftarrow x_n]}    =\expr{L}
                    \end{aligned}
                \end{equation*}
                
                Therefore, $\recencodopenf{\expr{N}} \red \recencodopenf{\expr{M}} $ and the result follows.

                \item  $\redlab{R}=\redlab{R:Fail^!}$
                
                Then $\expr{N}=M\esubst{C \bagsep U}{x }$ and the reduction is 
                
                \begin{prooftree} 
                    \AxiomC{$\#( {x},M) = \size{C} $}
                    \AxiomC{$U_i = \banged{\oneb}$}
                    \AxiomC{$\headf{M} = {x}[i]  $}
                    \LeftLabel{\redlab{R:Fail^!}}
                    \TrinaryInfC{\(  M \esubst{C \bagsep U}{x } \red  M \headlin{ \fail^{\emptyset} / {x}[i] } \esubst{ C \bagsep U}{ x }\)}
                \end{prooftree}
                
                where $\expr{M}=  M \headlin{ \fail^{\emptyset} / {x}[i] } \esubst{ C \bagsep U}{ x } $. 
                

                Below we assume $\lfv{\expr{N}}=\{x_1,\ldots, x_k\}$ and $\widetilde{x_i}=x_{i_1},\ldots, x_{i_{j_i}}$, where $j_i= \#(x_i, N)$, for $1\leq i\leq k$.
            
                On the one hand, we have: (the last rule applied was {\redlab{RS:Fail^!}} )
                   {\small  \begin{equation*}\label{eq:unfetch1failp2}
                        \begin{aligned}
                        \recencodopenf{\expr{N}}&= \recencodopenf{M\esubst{ C \bagsep U  }{x}}=  \recencodf{M\esubst{ C \bagsep U  }{x } \langle{ {\widetilde{x_1}}/ {x_1}}\rangle\cdots \langle{ {\widetilde{x_k}}/ {x_k}}\rangle}\widetilde{[x_1\leftarrow x_k]}
                        \\
                        &=  \recencodf{ M'\esubst{ C' \bagsep U  }{x } }\widetilde{[x_1\leftarrow x_k]}\\
                        &=   \sum_{C_i \in \perm{\recencodf{ C'  } }}  (\recencodf{ M' \langle  {\widetilde{y}}/  {x}  \rangle } \linexsub{C_i(1)/ {y}_1} \cdots \linexsub{C_i(k)/ {y}_k}\unexsub{U/\unvar{x} } )\widetilde{[x_1\leftarrow x_k]}\\
                        &=  \sum_{C_i \in \perm{\recencodf{ C'  } }}  (\recencodf{ M'' } \linexsub{C_i(1)/ {y}_1} \cdots \linexsub{C_i(k)/ {y}_k}\unexsub{U/ \unvar{x} } )\widetilde{[x_1\leftarrow x_k]} \\
                        &\red \sum_{C_i \in \perm{\recencodf{ C'  } }}  (\recencodf{ M''\headlin{  \fail^{\emptyset} / {x}[i]  } } \linexsub{C_i(2)/ {y}_2} \cdots \linexsub{C_i(k)/ {y}_k}\unexsub{U/\unvar{x}} )\widetilde{[x_1\leftarrow x_k]}\\
                        &=\expr{L}
                        \end{aligned}
                    \end{equation*}
                    }
                    We assume for simplicity that $  \headf{M''} = {x}[i]$.
                    On the other hand, we have:
                   {\small  \begin{equation*}\label{eq:unfetch2failp2}
                        \begin{aligned}
                           \recencodopenf{\expr{M}}&=\recencodopenf{M \headlin{ \fail^{\emptyset}/ {x}[i] } \esubst{ C \bagsep U}{ x } }\\
                           &=\sum_{C_i \in \perm{\recencodf{ C'  } }}  (\recencodf{ M''\headlin{ \fail^{\emptyset} / {x}[i]  } } \linexsub{C_i(2)/ {y}_2} \cdots \linexsub{C_i(k)/ {y}_k}\unexsub{U/\unvar{x}} )\widetilde{[x_1\leftarrow x_k]}  \\
                           &=\expr{L}
                        \end{aligned}
                    \end{equation*}
            }
                 From the $\recencodopenf{M}$ and  $\recencodopenf{N}$ above one has the result.

            \item  $\redlab{R}= \redlab{R:Cons_1}$.
            
            Then $\expr{N}=(\fail^{\widetilde{z}})\ C \bagsep U $ and the reduction is

            \begin{prooftree}
                    \AxiomC{$\widetilde{z} = \mlfv{C} $}
                \LeftLabel{$\redlab{R:Cons_1}$}
                \UnaryInfC{\(  (\fail^{\widetilde{y}})\ C \bagsep U \red {}  \sum_{\perm{C}} \fail^{\widetilde{y} \uplus \widetilde{z}} \)}
            \end{prooftree}
            

           and $\expr{M}'= \sum_{\perm{B}} \fail^{\widetilde{y} \uplus \widetilde{z}}$. Below we assume $\lfv{\expr{N}}=\{x_1,\ldots, x_n\}$.
           
                On the one hand, we have:  
            
            \begin{equation*}\label{eq:consume1fail}
                \begin{aligned}
                \recencodopenf{N} &= \recencodopenf{\fail^{\widetilde{y}}\ B}= \recencodf{ \fail^{\widetilde{y}}\  C \bagsep U \langle \widetilde{x_1}/x_1\rangle\cdots \langle \widetilde{x_n}/x_n\rangle } \widetilde{[x_1\leftarrow x_n]}\\
                 &= \recencodf{ \fail^{\widetilde{y'}}\ \ C' \bagsep U  } \widetilde{[x_1\leftarrow x_n]}
                = \recencodf{ \fail^{\widetilde{y'}}} \ \recencodf{ C' \bagsep U } \widetilde{[x_1\leftarrow x_n]}\\
                &= \fail^{\widetilde{y'}} \ \recencodf{ C' \bagsep U } \widetilde{[x_1\leftarrow x_n]}
                 \red_{\redlab{RS:Cons_1}} \sum_{\perm{B}} \fail^{\widetilde{y'} \cup \widetilde{z'}}  \widetilde{[x_1\leftarrow x_n]}=\expr{L}\\
                \end{aligned}
            \end{equation*}
            Where $\widetilde{y'} \cup \widetilde{z'} = \widetilde{x_1} , \cdots , \widetilde{x_n}$. On the other hand, we have:
            
            \begin{equation*}\label{eq:consume2fail}
                \begin{aligned}
                \recencodopenf{M}  
                = \sum_{\perm{B}} \recencodf{\fail^{\widetilde{y'} \uplus \widetilde{z'}}}\widetilde{[x_1\leftarrow x_n]}
                = \sum_{\perm{B}} \fail^{\widetilde{y'} \cup \widetilde{z'}}\widetilde{[x_1\leftarrow x_n]}=\expr{L}
                \end{aligned}
            \end{equation*}
            
            Therefore, $\recencodopenf{\expr{N}}\red \expr{L}= \recencodopenf{ \expr{M}}$, and the result follows.

            \item $\redlab{R}= \redlab{R:Cons_2}$
            
            Then $\expr{N}= \fail^{\widetilde{y}}\ \esubst{C \bagsep U}{z} $ and the reduction is 
            
            \begin{prooftree}
                \AxiomC{$ \#(z , \widetilde{y}) =  \size{C}\quad \widetilde{z} = \mlfv{C} $}
                \LeftLabel{$\redlab{R:Cons_2}$}
                \UnaryInfC{$\fail^{\widetilde{y}}\ \esubst{C \bagsep U}{z}  \red {}\displaystyle\sum_{\perm{C}} \fail^{(\widetilde{x} \setminus z) \uplus\widetilde{z}}$}
            \end{prooftree}
            
      and $\expr{M}=\sum_{\perm{C}} \fail^{(\widetilde{y} \setminus x) \uplus\widetilde{z}}$. 
            Below we assume $\lfv{\expr{N}}=\{x_1,\ldots, x_n\}$.
            
            
            On the one hand, we have:
            
            \begin{equation}\label{eq:consume3fail}
                \begin{aligned}
                \recencodopenf{\expr{N}} &= \recencodopenf{\fail^{\widetilde{y}}\ \esubst{C \bagsep U}{z}}
                = \recencodf{ \fail^{\widetilde{y}}\ \esubst{C \bagsep U}{z} \langle \widetilde{x_1}/x_1\rangle\cdots \langle \widetilde{x_n}/x_n\rangle } \widetilde{[x_1\leftarrow x_n]}\\
                &=  \sum_{C_i \in \perm{\recencodf{ C'  } }}\recencodf{ \fail^{\widetilde{y'}} \langle  {\widetilde{y}}/  {x}  \rangle } \linexsub{C_i(1)/ {y}_1} \cdots \linexsub{C_i(k)/ {y}_k}\unexsub{U/\unvar{x}} \widetilde{[x_1\leftarrow x_n]}\\
                & \red^*_{\redlab{RS{:}Cons_3}} \sum_{C_i \in \perm{\recencodf{ C'  } }}\recencodf{ \fail^{(\widetilde{y'} \setminus  {\widetilde{y}} ) \uplus\widetilde{z}}} \unexsub{U/\unvar{x}} \widetilde{[x_1\leftarrow x_n]}\\
                & \red^*_{\redlab{RS{:}Cons_4}} \sum_{C_i \in \perm{\recencodf{ C'  } }}\recencodf{ \fail^{(\widetilde{y'} \setminus  {\widetilde{y}} ) \uplus\widetilde{z}}} \widetilde{[x_1\leftarrow x_n]}\\
                \end{aligned}
            \end{equation}
            
            As $\widetilde{y}$ consists of free variables, we have that in $\fail^{\widetilde{y}}\ \esubst{C \bagsep U}{x} \langle \widetilde{x_1}/x_1\rangle\cdots \langle \widetilde{x_n}/x_n\rangle$ the substitutions also occur on $ \widetilde{y}$ resulting in a new $\widetilde{y'}$ where all $x_i$'s are replaced with their fresh components in $\widetilde{x_i}$. Similarly $\widetilde{y''}$ is $\widetilde{y'}$ with each $x$ replaced with a fresh $y_i$. On the other hand, we have:
            
            \begin{equation}\label{eq:consume4fail}
                \begin{aligned}
                \recencodopenf{M} = \recencodopenf{\sum_{\perm{C}} \fail^{(\widetilde{y} \setminus x) \uplus\widetilde{z}}}
                &= \sum_{C_i \in \perm{\recencodf{ C'  } }}\recencodf{ \fail^{(\widetilde{y'} \setminus  {\widetilde{y}} ) \uplus\widetilde{z}}} \widetilde{[x_1\leftarrow x_n]}
                \end{aligned}
            \end{equation}
            
            The reductions in \eqref{eq:consume3fail} and \eqref{eq:consume4fail} lead to identical expressions.
        \end{enumerate}
    \end{enumerate}
    
     As before, the reduction via rule $ \redlab{R} $ could occur inside a context (cf. Rules $\redlab{R:TCont}$ and $\redlab{R:ECont}$). We consider only the case when the contextual rule used is $\redlab{R:TCont}$. We have $\expr{N} = C[N]$. When we have $C[N] \red_{\redlab{R}} C[M] $ such that $N \red_{\redlab{R}} M$ we need to show that $\recencodopenf{ C[N]} \red^j \recencodopenf{ C[M] }$for some $j$ dependent on ${\redlab{R}}$. Firstly let us assume ${\redlab{R}} = \redlab{R:Cons_2}$  then we take $j = 1$. Let us take $C[\cdot]$ to be $[\cdot]B$ and $\lfv{NB} = \{ x_1, \cdots , x_k  \}$ then 
     \[
        \begin{aligned}
            \recencodopenf{ N B}  & = \recencodf{NB\linsub{\widetilde{x_{1}}}{x_1}\cdots \linsub{\widetilde{x_k}}{x_k}}\widetilde{[x_1\leftarrow x_k]} = \recencodf{N' B'}\widetilde{[x_1\leftarrow x_k]}  = \recencodf{N'}\recencodf{ B'}\widetilde{[x_1\leftarrow x_k]} \\
        \end{aligned}
     \]
    We take $N'B'= NB\linsub{\widetilde{x_{1}}}{x_1}\cdots \linsub{\widetilde{x_k}}{x_k}$, we have by the IH that $ \recencodf{N}\red \recencodf{M}$ and hence we can deduce that $\recencodf{N'}\red \recencodf{M'}$ where $M'B'= MB\linsub{\widetilde{x_{1}}}{x_1}\cdots \linsub{\widetilde{x_k}}{x_k}$. Finally we have
    \(            \recencodf{N'}\recencodf{ B'}\widetilde{[x_1\leftarrow x_k]} \red \recencodf{M'}\recencodf{ B'}\widetilde{[x_1\leftarrow x_k]}
        \)
     and hence $ \recencodopenf{C[N]} \red \recencodopenf{C[M]} $.
\end{proof}

\begin{theorem}[Operational Soundness]
\label{l:soundnessoneunres}
Let $\expr{N}$ be a well-formed $\lamrfailunres$ expression. 
Suppose $ \recencodopenf{\expr{N}}  \red \expr{L}$. Then, there exists $ \expr{N}' $ such that $ \expr{N}  \red_{\redlab{R}} \expr{N}'$ and 

\begin{enumerate}
 \item If $\redlab{R} = \redlab{R:Beta}$ then $\expr{ L } \red^{\leq 1} \recencodopenf{\expr{N}'}$;

    \item If $\redlab{R} \neq \redlab{R:Beta}$ then $\expr{ L } \red^*  \recencodopenf{\expr{N}''}$, for $ \expr{N}''$ such that  $\expr{N}' \pequiv \expr{N}''$.
\end{enumerate}
\end{theorem}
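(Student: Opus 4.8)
\textbf{Proof plan for Operational Soundness (Theorem~\ref{l:soundnessoneunres}).} The plan is to proceed by induction on the structure of the well-formed expression $\expr{N}$, following the structure of the encoding $\recencodopenf{\cdot}$ in \defref{def:enctolamrsharfailunres} and \figref{fig:auxencfailunres}. Since $\recencodopenf{\cdot}$ first shares all free linear variables and then applies $\recencodf{\cdot}$, and since $\recencodf{\cdot}$ acts homomorphically on most constructs, it suffices to analyse the outermost construct of $\expr{N}$ and the first $\spi$-free reduction $\recencodopenf{\expr{N}} \red \expr{L}$ it enables. When $\expr{N}$ is a sum $M_1 + \cdots + M_n$ the result follows immediately by the IH applied to the summand that reduces (the encoding commutes with $+$), so the core of the argument is for $\expr{N} = M$ a single term. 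First I would dispatch the cases where $M$ is a variable, an unrestricted variable $x[i]$, an abstraction, or $\fail^{\widetilde{x}}$: in those cases $\recencodopenf{M}$ is built from encodings of strictly smaller subterms and the reduction $\red$ either lies entirely inside one of those subencodings (so the IH applies, possibly after commuting the reduction past the sharing/substitution wrappers, using the fact that $\recencodf{\cdot}$ preserves head substitutions and linear substitutions, Proposition~\ref{prop:linhed_encfail}) or it is a contextual step that we track via the term-context and expression-context cases (\redlab{R:TCont}, \redlab{R:ECont}).

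The interesting cases are $M = (\lambda x.M')\,B$ and $M = M'\esubst{B}{x}$, mirroring the completeness proof (Theorem~\ref{l:app_completenessone}). For $M = (\lambda x.M')\,B$: expanding $\recencodopenf{M}$ as in equation~\eqref{eq:beta1fail} of the completeness proof, the \emph{only} enabled $\spi$-free reduction (up to the outer sharing wrappers and up to contextual reductions handled separately) is the \redlab{RS:Beta} step, yielding $\expr{L} = (\recencodf{M'^{\dagger}}[\widetilde{y}\leftarrow x]\esubst{\recencodf{B^{\dagger}}}{x})\,\widetilde{[x_1\leftarrow x_k]}$ where $\dagger$ denotes the linear renaming. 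Taking $\expr{N}' = M'\esubst{B}{x}$ (so $\redlab{R} = \redlab{R:Beta}$), we must show $\expr{L} \red^{\leq 1} \recencodopenf{\expr{N}'}$. Here the case split of \figref{fig:auxencfailunres} for $\recencodf{M'\esubst{C\bagsep U}{x}}$ is exactly the dichotomy $\#(x,M') = \size{C}$ versus $\#(x,M') = k \geq 0$ (with the first condition failing): in the former, $\recencodopenf{\expr{N}'}$ requires one extra \redlab{RS:Ex-Sub} step and we get $\expr{L}\red \recencodopenf{\expr{N}'}$; in the latter, $\expr{L} = \recencodopenf{\expr{N}'}$ on the nose, so $\red^{0}$. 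For $M = M'\esubst{B}{x}$ directly, the analysis of $\recencodopenf{M}$ follows the \redlab{RS:Ex-Sub}, \redlab{RS:Fetch}$^\ell$, \redlab{RS:Fetch}$^!$, \redlab{RS:Fail}$^\ell$, \redlab{RS:Fail}$^!$, \redlab{RS:Cons} branches, each of which is the mirror image of the corresponding completeness case, and reconstructs the source reduction $\redlab{R}$ accordingly; the $\pcong$-slack in part~(2) is precisely what absorbs the permutation-of-bag and substitution-reordering mismatches that show up when, e.g., a \redlab{RS:Cons} chain in the target is matched by a single \redlab{R:Cons} step in the source.

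The main obstacle will be verifying \emph{that the reduction $\expr{L}$ observed in the target is necessarily one of the anticipated ones} — i.e., ruling out spurious target reductions with no source counterpart. This is where soundness is genuinely harder than completeness: a single $\red$ step of $\recencodopenf{\expr{N}}$ might fire ``in the middle'' of the encoding of a bag or of a sharing construct, producing an intermediate process that is not itself of the form $\recencodopenf{\expr{N}''}$ for any $\expr{N}''$. The standard remedy, and the one I would adopt, is to characterise the shape of such intermediate targets (a small normal-form/stability lemma stating that every $\recencodopenf{\expr{N}}$-reachable process either \emph{is} an encoding or can be completed, via $\red^*$, to an encoding) and then use the confluence/diamond property of \spi together with the diamond property of $\lamrsharfailunres$ to push $\expr{L}$ forward to $\recencodopenf{\expr{N}''}$ with $\expr{N}'\pcong\expr{N}''$. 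Concretely I expect the proof to need: (i) Proposition~\ref{prop:linhed_encfail} to commute $\recencodf{\cdot}$ with substitutions; (ii) the precongruence identities of \defref{def:rsPrecongruencefailure} to identify the $\expr{N}''$ that ``cleans up'' reorderings of explicit substitutions and $\unexsub{\cdot}$-garbage; and (iii) careful bookkeeping of the outer $\widetilde{[x_1\leftarrow x_k]}$ wrappers, which are inert under all the relevant reductions and can be carried along unchanged throughout.
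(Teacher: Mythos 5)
Your plan coincides with the paper's proof: structural induction on $\expr{N}$, dispatching sums and the base cases, and for applications and explicit substitutions a case analysis on which reduction of the encoding can fire, with exactly the $\#(x,M')=\size{C}$ dichotomy determining whether the \redlab{R{:}Beta} case needs zero or one extra \redlab{RS{:}Ex\dash Sub} step to close the diagram. The only divergence is that the paper needs neither your proposed stability/normal-form lemma nor any appeal to confluence of \spi (which is not the target calculus of this encoding): the direct case analysis already shows that every reduction enabled in $\recencodopenf{\expr{N}}$ is one of the anticipated ones, and the $\red^{*}$ together with the $\pcong$ slack in part~(2) absorbs the remaining mismatches.
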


\begin{proof}
By induction on the structure of $\expr{N}$:

\begin{enumerate}

    \item Cases $\expr{N} =  {x}$, $\expr{N} =  x[i]$, $\fail^{\widetilde{y}}$ and $\expr{N} =  \lambda x. N$, are trivial, since no reductions can be performed.
    
    
    



    \item $\expr{N} = N B$:
    
    Suppose  $\llfv{NB} = \{  {x}_1, \cdots ,  {x}_n\}$. Then,

    \begin{equation}\label{eq:app_npfail}
    \begin{aligned}
        \recencodopenf{\expr{N}}=\recencodopenf{NB}  = \recencodf{NB\langle  {\widetilde{x_1}} /  {x}_1 \rangle \cdots \langle  {\widetilde{x_n}} /  {x}_n \rangle} \widetilde{[x_1\leftarrow x_n]}
         &= \recencodf{N' B'} \widetilde{[x_1\leftarrow x_n]}\\
        & = \recencodf{N'} \recencodf{B'} \widetilde{[x_1\leftarrow x_n]}
    \end{aligned}
    \end{equation}
    where $ {\widetilde{x_i}}= {x}_{i1},\ldots,  {x}_{ij_i}$, for $1\leq i \leq n$.
    By the reduction rules in \figref{fig:share-reductfailureunres} there are three possible reductions starting in $\expr{N}$:
    \begin{enumerate}
        \item $\recencodf{N'}\recencodf{B'}\widetilde{[x_1\leftarrow x_n]}$ reduces via a $\redlab{RS:Beta}$.
        
        In this case  $N=\lambda x. N_1$, and the encoding in (\ref{eq:app_npfail}) gives $N'= N\langle  {\widetilde{x_1}} /  {x}_1 \rangle \cdots \langle  {\widetilde{x_n}} /  {x}_n \rangle$, which  implies $N' =\lambda x. N_1^{'}$ and the following holds:
        \begin{equation*}
        \begin{aligned}
            \recencodf{N'}=\recencodf{(\lambda x. N'_1)} &= (\lambda x. \recencodf{N'_1 \langle  {\widetilde{y}} /  {x} \rangle}  {[\widetilde{y}} \leftarrow  {x}])
             = (\lambda x. \recencodf{N^{''}} [ {\widetilde{y}} \leftarrow  {x}])
        \end{aligned}
        \end{equation*}
        Thus, we have the following $\redlab{RS:Beta}$ reduction from \eqref{eq:app_npfail}:
        \begin{equation}\label{eq:sound.appfail}
            \begin{aligned}
                \recencodopenf{\expr{N}} &= \recencodf{N'} \recencodf{B'}\widetilde{[x_1\leftarrow x_n]}
                =(\lambda x. \recencodf{N''} [ {\widetilde{y}} \leftarrow  {x}] \recencodf{B'}) \widetilde{[x_1\leftarrow x_n]}\\
                &\red_{\redlab{RS:Beta}}  \recencodf{N^{''}} [ {\widetilde{y}} \leftarrow  {x}] \esubst{\recencodf{B'}}{x}  \widetilde{[x_1\leftarrow x_n]} =\expr{L}
            \end{aligned}
        \end{equation}

        Notice that the expression $\expr{N}$ can perform the following $\redlab{R:Beta}$ reduction:
        \[\expr{N}=(\lambda x. N_1) B\red_{\redlab{R:Beta}} N_1 \esubst{B}{x} \]

        Assuming $\expr{N'}=N_1 \esubst{B}{x}$ and we take $B = C \bagsep U$, there are two cases:
        
        \begin{enumerate}
            
            \item$\#( {x},M) = \size{C} = k$.

            On the one hand, 
            
            \begin{equation*}\label{eq:sound_appn1fail}
            \begin{aligned}
                \recencodopenf{\expr{N'}}&=\recencodopenf{N_1 \esubst{B}{x}}
                = \recencodf{N_1 \esubst{B}{x}\langle  {\widetilde{x_1}} /  {x}_1 \rangle \cdots \langle  {\widetilde{x_n}} /  {x}_n \rangle} \widetilde{[x_1\leftarrow x_n]}\\
                & = \recencodf{N_1' \esubst{B'}{x}}\widetilde{[x_1\leftarrow x_n]}\\
                & = \sum_{C_i \in \perm{\recencodf{ C' }}}\recencodf{ N_1' \langle  {y}_1 , \cdots ,  {y}_k / x  \rangle } \linexsub{C_i(1)/ {y}_1} \cdots \linexsub{C_i(k)/ {y}_k}\unexsub{U /\unvar{x}} \widetilde{[x_1\leftarrow x_n]}\\
                & = \sum_{C_i \in \perm{\recencodf{ C' }}}\recencodf{ N_1''} \linexsub{C_i(1)/ {y}_1} \cdots \linexsub{C_i(k)/ {y}_k}\unexsub{U /\unvar{x}} \widetilde{[x_1\leftarrow x_n]}\\
            \end{aligned}
            \end{equation*}
            On the other hand, via application of rule \redlab{RS:Ex\dash Sub}
            \begin{equation*}\label{eq:sound_appn2fail}
            \begin{aligned}
                \expr{L} &= \recencodf{N''} [ {\widetilde{y}} \leftarrow  {x}] \esubst{\recencodf{B'}}{x}  \widetilde{[x_1\leftarrow x_n]} \\
                &\red \sum_{C_i \in \perm{\recencodf{ C }}}\recencodf{ N_1''} \linexsub{C_i(1)/ {y}_1} \cdots \linexsub{C_i(k)/ {y}_k}\unexsub{U /\unvar{x}} \widetilde{[x_1\leftarrow x_n]} = \recencodopenf{\expr{N}'}
            \end{aligned}
            \end{equation*}
            
         and the result follows.
            
            \item Otherwise $\#( {x},N_1)\neq \size{C}$.

            In this case, 
            \begin{equation*}\label{eq:sound_appn3fail}
            \begin{aligned}
                \recencodopenf{\expr{N'}}&=\recencodopenf{N_1 \esubst{B}{x}}= \recencodf{N_1 \esubst{B}{x}\langle  {\widetilde{x_1}} /  {x}_1 \rangle \cdots \langle  {\widetilde{x_n}} /  {x}_n \rangle}\widetilde{[x_1\leftarrow x_n]}\\
                & = \recencodf{N_1' \esubst{B'}{x}}\widetilde{[x_1\leftarrow x_n]}
                 =  \recencodf{N^{''}} [\widetilde{y} \leftarrow x] \esubst{\recencodf{B'}}{x} \widetilde{[x_1\leftarrow x_n]} = \expr{L} 
            \end{aligned}
            \end{equation*}
            From (\ref{eq:sound.appfail}):  $\recencodopenf{\expr{N}}\red \expr{L}=\recencodopenf{\expr{N'}}$ and the result follows.

        \end{enumerate}
        
        \item $\recencodf{N'}\recencodf{B'}\widetilde{[x_1\leftarrow x_n]}$ reduces via a $\redlab{RS: Cons_1}$.

        In this case,  $N=\fail^{\widetilde{y}}$, and the encoding in (\ref{eq:app_npfail}) gives $N'= N\linsub{ {\widetilde{x_1}}}{ {x}_1}\ldots \linsub{ {\widetilde{x_n}}}{ {x}_n}$, which implies $N'
        =\fail^{\widetilde{y'}} $, we let $B = C \bagsep U$ and the following:
        \begin{equation*}\label{eq:sound.consumfail}
            \begin{aligned}
                \recencodopenf{\expr{N}} &= \recencodf{N'} \recencodf{B'}\widetilde{[x_1\leftarrow x_n]}= \recencodf{\fail^{\widetilde{y'}}} \recencodf{B'}\widetilde{[x_1\leftarrow x_n]}\\
                &= \fail^{\widetilde{y'}} \recencodf{B'}\widetilde{[x_1\leftarrow x_n]} \red \sum_{\perm{C}} \fail^{\widetilde{y'} \uplus \widetilde{z}}\widetilde{[x_1\leftarrow x_n]}, \text{ where } \widetilde{z} = \llfv{C'}.
            \end{aligned}
        \end{equation*}
         The expression $\expr{N}$ can perform the  reduction:
        
        \begin{equation*}\label{eq:sound.2consumfail}
            \expr{N}=\fail^{\widetilde{y}} \  B\red_{\redlab{R: Cons1}} \sum_{\perm{C}} \fail^{\widetilde{y}\uplus \widetilde{z}}, \text{  where } \widetilde{z} = \mlfv{C}
        \end{equation*}
        
        Thus, $\expr{L}=\recencodopenf{\expr{N'}}$ and so the result follows.
        
        \item Suppose that $\recencodf{N'} \red \recencodf{N''}$.  
        This case follows from the induction hypothesis.
    \end{enumerate}

    \item  $\expr{N} = N \esubst{B}{x}$:
    
    Suppose  $\llfv{N \esubst{B}{x}} = \{  {x}_1, \cdots ,  {x}_k\}$. Then,
    
    \begin{equation}\label{eq:sound_expsubfail}
    \begin{aligned}
    \recencodopenf{\expr{N}}= \recencodopenf{N \esubst{B}{x}}
         &= \recencodf{N \esubst{B}{x}\langle  {\widetilde{x_1}} /  {x}_1 \rangle \cdots \langle  {\widetilde{x_k}} /  {x}_k \rangle} \widetilde{[x_1\leftarrow x_k]}\\
         &= \recencodf{N' \esubst{B'}{x}} \widetilde{[x_1\leftarrow x_k]}
        \end{aligned}
    \end{equation}
    
    Let us consider the two possibilities of the encoding where we take $B = C \bagsep U$:
    
    \begin{enumerate}
        
        \item Where $ \#( {x},M) = \size{B} = k $
        
        Then we continue equation (\ref{eq:sound_expsubfail}) as follows
        
        \begin{equation}\label{eq:sound_expsub2fail}
            \begin{aligned}
                \recencodopenf{\expr{N}} &= \recencodf{N' \esubst{B'}{x}} [\widetilde{[x_1\leftarrow x_k]}\\
                &=  \sum_{C_i \in \perm{\recencodf{ C' }}}\recencodf{ N' \langle  {y}_1 , \cdots ,  {y}_n /  {x}  \rangle } \linexsub{C_i(1)/ {y}_1} \cdots \linexsub{C_i(n)/ {y}_n} \unexsub{U/\unvar{x}}\widetilde{[x_1\leftarrow x_k]} \\
                &=  \sum_{C_i \in \perm{\recencodf{ C' }}}\recencodf{ N'' } \linexsub{C_i(1)/ {y}_1} \cdots \linexsub{C_i(n)/ {y}_n} \unexsub{U/\unvar{x}} \widetilde{[x_1\leftarrow x_k]} \\
            \end{aligned}
        \end{equation}
        
        There are five possible reductions that can take place, these being $\redlab{RS{:}Fetch^{\ell}}$, $\redlab{RS{:} Fetch^!}$, $\redlab{RS{:}Fail^!}$ , $\redlab{RS:Cons_3}$ and when we apply the \redlab{RS:Cont} rules
        
        \begin{enumerate}
            
            \item Suppose that $\headf{N''} =  {y}_1$ and for simplicity we assume $C'$ has only one element $N_1$ then from (\ref{eq:sound_expsub2fail}) and buy letting $C' = \bag{N'_1}$ we have
                    
                    \begin{equation*}
                        \begin{aligned}
                            \recencodopenf{\expr{N}} 
                            & = \recencodf{ N'' } \linexsub{\recencod{N_1'}/ {y}_1} \unexsub{U/\unvar{x}} \widetilde{[x_1\leftarrow x_k]}\\
                            &\red \recencodf{N^{''}}\headlin{\recencod{N_1'}/ {y}_1}\unexsub{U/\unvar{x}}\widetilde{[x_1\leftarrow x_k]} = \expr{L}
                        \end{aligned}
                    \end{equation*}
Also,
\(\expr{N} 
= N\esubst{\bag{N_1} \bagsep U}{x}                            \red N\headlin{N_1/x}\esubst{ \oneb \bagsep U}{x} = \expr{N}'.\)
Then  $\expr{L}'=\recencodopenf{\expr{N'}}$ and the result follows.

        \item Suppose that $\headf{N''} = {x}[i]$ and then from (\ref{eq:sound_expsub2fail}) we have
                    
\begin{equation*}\label{eq:sound_expsub3failp2}
\begin{aligned}
\recencodopenf{\expr{N}} 
& = \recencodf{ N'' } \linexsub{C_i(1)/ {y}_1} \cdots \linexsub{C_i(n)/ {y}_n} \unexsub{U/\unvar{x}}\widetilde{[x_1\leftarrow x_k]}\\
&\red \recencodf{N^{''}}\headlin{ U_{i} / {x}[i] } \linexsub{C_i(1)/ {y}_1} \cdots \linexsub{C_i(n)/ {y}_n}\unexsub{U/\unvar{x}}\widetilde{[x_1\leftarrow x_k]} = \expr{L}
  \end{aligned}
  \end{equation*}
We also have that
                            \(\expr{N} 
                             = N\esubst{C \bagsep U}{x}
                            \red N\headlin{U_{ind}/\banged{x}}\esubst{ C \bagsep U}{x} = \expr{N}'.\)
                            
                    
                    Then,  $\expr{L}'=\recencodopenf{\expr{N'}}$ and so the result follows.

            \item Suppose that $N'' = \fail^{\widetilde{z'}}$ proceed similarly then from (\ref{eq:sound_expsub2fail})
            
                    \begin{equation*}\label{eq:sound_expsub99fail}
                        \begin{aligned}
                            \recencodopenf{\expr{N}} 
                            & = \sum_{C_i \in \perm{\recencodf{ C' }}}\fail^{\widetilde{z'}} \linexsub{C_i(1)/ {y}_1} \cdots \linexsub{C_i(n)/ {y}_n}\unexsub{U/\unvar{x}}\widetilde{[x_1\leftarrow x_k]}\\
                            &\red^* \sum_{C_i \in \perm{\recencodf{ C' }}}\fail^{(\widetilde{z'} \setminus  {y}_1, \cdots ,  {y}_n) \uplus\widetilde{y}}\unexsub{U/\unvar{x}} \widetilde{[x_1\leftarrow x_k]} \\
                            &\red^* \sum_{C_i \in \perm{\recencodf{ C' }}}\fail^{(\widetilde{z'} \setminus  {y}_1, \cdots ,  {y}_n) \uplus\widetilde{y}} \widetilde{[x_1\leftarrow x_k]} 
                            = \expr{L}'
                        \end{aligned}
                    \end{equation*}
                    
         where $\widetilde{y} = \llfv{C_i(1)} \uplus \cdots \uplus \llfv{C_i(n)}$.       We also have that
                    
                    \begin{equation*}\label{eq:sound_expsub11fail}
                        \begin{aligned}
                            \expr{N} 
                            & = \fail^{\widetilde{z}} \esubst{B}{x} 
                             \red \fail^{(\widetilde{z} \setminus x) \uplus\widetilde{y}} 
                            = \expr{N}',  \text{ where }  \widetilde{y} = \mfv{B}. 
                        \end{aligned}
                    \end{equation*}

                Then, $\expr{L}'=\recencodopenf{\expr{N'}}$ and so the result follows.
            
            \item  Suppose that $N'' \red N'''$.
                This case follows by the induction hypothesis
            
        \end{enumerate}

        \item Otherwise, we continue from equation (\ref{eq:sound_expsubfail}), where $\#(x,M) \not= k$, as follows
        
            \begin{equation*}
            \begin{aligned}
                \recencodopenf{\expr{N}} &= \recencodf{N' \esubst{B'}{x}}\widetilde{[x_1\leftarrow x_k]} \\
                &=  \recencodf{N'\langle  {y}_1. \cdots ,  {y}_k / x  \rangle} [ {y}_1. \cdots ,  {y}_k \leftarrow  {x}] \esubst{ \recencodf{B'} }{ x }\widetilde{[x_1\leftarrow x_k]} \\
                &=  \recencodf{N''} [ {y}_1. \cdots ,  {y}_k \leftarrow  {x}] \esubst{ \recencodf{B'} }{ x }\widetilde{[x_1\leftarrow x_k]}\\
            \end{aligned}
            \end{equation*}
            
            We can perform the reduction
                
            \begin{equation*}\label{eq:sound_expsubotherwise2}
            \begin{aligned}
                \recencodopenf{\expr{N}} &= \recencodf{N''} [ {y}_1. \cdots ,  {y}_k \leftarrow  {x}] \esubst{ \recencodf{B'} }{ x } \widetilde{[x_1\leftarrow x_k]} \\
                &\red \sum_{C_i \in \perm{C}}  \fail^{\widetilde{z'}}\widetilde{[x_1\leftarrow x_k]}, \text{ where } \widetilde{z'} = \lfv{N''} \uplus \lfv{C'}
                = \expr{L}'
            \end{aligned}
            \end{equation*}
        We also have that
            \begin{equation*}\label{eq:sound_expsubotherwise3}
                \begin{aligned}
                    \expr{N} 
                    & = N \esubst{C}{x}  \red \sum_{\perm{C}} \fail^{\widetilde{z}}  = \expr{N}' , \text{ where } \widetilde{z} = \mlfv{M} \uplus \mlfv{C}.
                \end{aligned}
            \end{equation*}
        Then, $\expr{L}'=\recencodopenf{\expr{N'}}$ and so the result follows.

    \end{enumerate}

        

    \item $\expr{N} = \expr{N}_1 + \expr{N}_2$: \\ Then this case holds by the induction hypothesis.
\end{enumerate}
\end{proof}

\subsection{Success Sensitiveness of  \texorpdfstring{$\recencodopenf{\cdot}$}{}}

We now consider success sensitiveness, a property that complements (and relies on) operational completeness and soundness. For the purposes of the proof, we consider the extension of $\lamrfailunres$ and $\lamrsharfailunres$ with dedicated constructs and predicates that specify success. 

\begin{definition}
We extend the syntax of terms for $\lamrfailunres$ and $\lamrsharfailunres$ with the same $\checkmark$ construct. 
In both cases, we assume $\checkmark$ is well formed. 
Also, we also define $\headf{\checkmark} = \checkmark$ and $\recencodf{\checkmark} = \checkmark$
\end{definition}

An expression $\expr{M}$ has success, denoted
\succp{\expr{M}}{\checkmark}, when there is a sequence of reductions from \expr{M} that leads to an expression that includes a summand that contains an occurrence of $\checkmark$ in head position.

\begin{definition}[Success in \lamrfailunres and \lamrsharfailunres]
\label{def:app_Suc3unres}
In $\lamrfailunres$ and $\lamrsharfailunres $, we define 
\succp{\expr{M}}{\checkmark} if and only if
there exist  $M_1 , \cdots , M_k$ such that 
$\expr{M} \red^*  M_1 + \cdots + M_k$ and
$\headf{M_j'} = \checkmark$, for some  $j \in \{1, \ldots, k\}$ and term $M_j'$ such that $M_j\pequiv  M_j'$.
\end{definition}

\begin{notation}

    We use the notation $\headsum{\expr{M}}$ to be that $\forall M_i, M_j \in \expr{M}$ we have that $\head{M_i} = \head{M_j}$ hence we say that $\headsum{\expr{M}} = \headf{M_i}$ for some $M_i \in \expr{M}$

\end{notation}

\begin{proposition}[Preservation of Head term]
\label{Prop:checkpres}
The head of a term is preserved when applying the encoding $\recencodf{\dash}$. That is to say:
\[ \forall M \in \lamrfailunres \quad \headf{M} = \checkmark \iff \headsum{\recencodopenf{M}} = \checkmark \]
\end{proposition}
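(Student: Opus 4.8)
The statement is a biconditional relating the head of a source term $M \in \lamrfailunres$ to the common head of all summands of its encoding $\recencodopenf{M}$. The plan is to prove it by structural induction on $M$, but before that I would first note that the claim implicitly requires that $\headsum{\recencodopenf{M}}$ is well-defined --- i.e., that all summands of $\recencodopenf{M}$ share the same head. This is a side fact that must be established simultaneously with the main equivalence, so I would actually prove the conjunction: "for all $M$, all summands of $\recencodopenf{M}$ have the same head, and that head equals $\checkmark$ iff $\headf{M}=\checkmark$." (In fact it is cleaner to prove the stronger statement that the head is \emph{preserved}, not just the $\checkmark$ case, since the inductive cases for application and explicit substitution need to know what the head of a subterm's encoding is, not merely whether it is $\checkmark$.)

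\textbf{Key steps.} First I would unfold $\recencodopenf{M}$ via Definition~\ref{def:enctolamrsharfailunres}: it is $\recencodf{M\linsub{\widetilde{x}_1}{x_1}\cdots\linsub{\widetilde{x}_k}{x_k}}$ wrapped in sharing constructs $[\widetilde{x}_i \leftarrow x_i]$. Since the linear substitutions $\linsub{\widetilde{x}_j}{x_j}$ only rename free linear variables and the sharing constructs $[\cdot]$ do not change the head unless the head is one of the shared variables (cf.\ the amended $\headf{M[\widetilde{x}\leftarrow x]}$ clause in \appref{app:ssec:lamshar}), and since $M$ is closed after the substitutions so the head is never a free linear variable, it suffices to analyze $\recencodf{M'}$ where $M' = M\linsub{\widetilde{x}_1}{x_1}\cdots$. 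Then I would do the structural induction on $M'$ using the definition of $\recencodf{\cdot}$ in \figref{fig:auxencfailunres}:
\begin{itemize}
\item Base cases $M' = x$, $x[i]$, $\fail^{\widetilde{x}}$, $\checkmark$: here $\recencodf{M'} = M'$ literally (with $\recencodf{\checkmark}=\checkmark$ by the stipulation), a single summand, and $\headf{M'}$ is itself; the equivalence is immediate, and in none of these is the head $\checkmark$ unless $M'=\checkmark$.
\item Case $M' = \lambda x.N$: $\headf{M'} = \lambda x.N \neq \checkmark$, and $\recencodf{M'} = \lambda x.(\ldots)[\ldots \leftarrow x]$, whose head is again an abstraction $\neq \checkmark$; both sides false.
\item Case $M' = N\,B$: $\headf{N\,B} = \headf{N}$; $\recencodf{N\,B} = \recencodf{N}\,\recencodf{B}$, but $\recencodf{N}$ may be a sum $\sum_i N_i$, so $\recencodf{N\,B} = \sum_i (N_i\,\recencodf{B})$, each with head $\headf{N_i}$. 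By IH applied to $N$, all $N_i$ share a head equal to (the head-preserving image of) $\headf{N}$, hence all summands of $\recencodf{N\,B}$ do too, and it is $\checkmark$ iff $\headf{N}=\checkmark$ iff $\headf{M'}=\checkmark$.
\item Case $M' = N\esubst{B}{x}$: split on whether $\#(x,N) = \size{C}$ (where $B = C\bagsep U$). In the matching case $\recencodf{M'}$ is a sum over permutations of $\recencodf{C}$ of terms $\recencodf{N\linsub{\widetilde{x}}{x}}\linexsub{C_i(1)/x_1}\cdots\unexsub{U/x^!}$; the head of each such term is, by the amended $\headf{\cdot}$ clauses for explicit linear/unrestricted substitutions (which pass through to $\headf{M}$), the head of $\recencodf{N\linsub{\widetilde{x}}{x}}$, to which the IH applies. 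In the non-matching case $\recencodf{M'}$ is a single sharing-plus-explicit-substitution term whose head is $\headf{(M'[\ldots])\esubst{\ldots}{}} = $ itself (by the $\headf{(M[\widetilde{x}\leftarrow x])\esubst{B}{x}}$ clause), which is $\neq \checkmark$ --- consistent with $\headf{N\esubst{B}{x}}$ in $\lamrfailunres$, which by Definition~\ref{def:headfailure} is either $\headf{N}$ (if sizes match --- but they don't here) or $\fail^\emptyset \neq \checkmark$; so both sides are false. Here I must double-check the alignment of the size conditions between the two calculi's $\headf{\cdot}$ definitions, which is the fiddliest bookkeeping.
\end{itemize}

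\textbf{Main obstacle.} The conceptual content is light, but the subtle point --- and the place where a naive induction breaks --- is that $\recencodf{N}$ is in general a \emph{sum}, so "the head of $\recencodf{M}$" is only meaningful once one knows all summands agree; thus the induction hypothesis must be the strengthened "all summands of $\recencodf{N}$ have a common head, namely [the appropriate image of] $\headf{N}$." Getting this invariant stated correctly and checking it is maintained through the explicit-substitution case (where permutations of the bag generate the summands, and one must verify that permuting resources among the shared variables never changes which variable is in head position, hence never changes the head) is where the real care is needed. A secondary nuisance is reconciling the two different definitions of $\headf{\cdot}$ --- the one for $\lamrfailunres$ in Definition~\ref{def:headfailure} and the amended one for $\lamrsharfailunres$ in \appref{app:ssec:lamshar} --- particularly the $\fail^\emptyset$ fallback when bag sizes mismatch; I would handle this by observing that $\fail^\emptyset \neq \checkmark$ on both sides, so the mismatch cases are uniformly "both false" and require no delicate argument.
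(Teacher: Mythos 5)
Your proposal is correct and follows essentially the same route as the paper's proof: a structural induction on $M$ that unfolds $\recencodopenf{\cdot}$ via the renaming-plus-sharing decomposition, reduces the head of each summand to the head of the encoded subterm (application and explicit substitution passing through to $\headf{N}$, with the latter split on whether $\#(x,M)=\size{C}$), and closes by the induction hypothesis. The paper's version is terser --- it treats only the cases where the head can actually be $\checkmark$ and leaves the well-definedness of $\headsum{\cdot}$ implicit --- so your strengthened induction hypothesis is a refinement of the same argument rather than a different approach.
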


\begin{proof}

By induction on the structure of $M$. We only need to consider terms of the following form.

\begin{enumerate}

    \item When $ M = \checkmark $ the case is immediate.
    
    \item When $ M = NB $ with $\lfv{NB} = \{x_1,\cdots,x_k\}$ and  $\#(x_i,M)=j_i$ we have that: 

        \[ 
            \begin{aligned}
                \headsum{\recencodopenf{NB}} &= \headsum{\recencodf{NB\linsub{\widetilde{x_{1}}}{x_1}\cdots \linsub{\widetilde{x_k}}{x_k}}[\widetilde{x_1}\leftarrow x_1]\cdots [\widetilde{x_k}\leftarrow x_k]}\\
                &= \headsum{\recencodf{NB}}
                 = \headsum{\recencodf{N}} 
            \end{aligned}
        \]
         and $\head{NB}= \head{N} $, by the IH we have $\head{N} = \checkmark \iff \headsum{\recencodf{N}} = \checkmark$.
         
    \item When $M = N \esubst{C \bagsep U}{x}$, we must have that $\#(x,M) = \size{C }$ for the head of this term to be $\checkmark$. Let $\lfv{N \esubst{C \bagsep U}{x}} = \{x_1,\cdots,x_k\}$ and  $\#(x_i,M)=j_i$. We have that: 

        \[
            \begin{aligned}
                \headsum{\recencodopenf{N \esubst{C \bagsep U}{x}}} &= \headsum{\recencodf{N \esubst{C \bagsep U}{x}\linsub{\widetilde{x_{1}}}{x_1}\cdots \linsub{\widetilde{x_k}}{x_k}}\widetilde{[x_1\leftarrow x_k]}}\\
                &= \headsum{\recencodf{N \esubst{C \bagsep U}{x}}}\\
                &= \headsum{\sum_{C_i \in \perm{\recencodf{ C }}}\recencodf{ N \langle \widetilde{x} / x  \rangle } \linexsub{C_i(1)/x_1} \cdots \linexsub{C_i(k)/x_k}\unexsub{ U/ \unvar{x}}}\\
                &= \headsum{\recencodf{ N \langle \widetilde{x} / x  \rangle } \linexsub{C_i(1)/x_1} \cdots \linexsub{C_i(k)/x_k}\unexsub{ U/ \unvar{x}}}\\
                &= \headsum{\recencodf{ N \langle \widetilde{x} / x  \rangle } } 
            \end{aligned}
        \]
        
        and $\head{N \esubst{B}{x}} = \head{N}$, by the IH  $\head{N} = \checkmark \iff \headsum{\recencodf{N}} = \checkmark$.
\end{enumerate}
\end{proof}

\begin{theorem}[Success Sensitivity]
\label{proof:app_successsensce}
Let  \expr{M} be a well-formed expression.
We have
$\expr{M} \Downarrow_{\checkmark}$ if and only if $\recencodopenf{\expr{M}} \Downarrow_{\checkmark}$.
\end{theorem}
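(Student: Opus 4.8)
The plan is to prove the biconditional $\expr{M} \Downarrow_{\checkmark} \iff \recencodopenf{\expr{M}} \Downarrow_{\checkmark}$ by combining the operational correspondence results (Theorems~\ref{l:app_completenessone} and \ref{l:soundnessoneunres}) with the preservation of head occurrences of $\checkmark$ under the encoding (Proposition~\ref{Prop:checkpres}). The two directions are handled separately, and both rely on unfolding \defref{def:app_Suc3unres}: $\expr{M} \Downarrow_{\checkmark}$ means there is a reduction sequence $\expr{M} \red^* M_1 + \cdots + M_k$ such that some summand $M_j$ is $\pcong$-equivalent to a term whose head is $\checkmark$.

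For the forward direction, suppose $\expr{M} \Downarrow_{\checkmark}$, so $\expr{M} \red^* \expr{N}$ where $\expr{N} = M_1 + \cdots + M_k$ and $\headf{M_j'} = \checkmark$ for some $M_j \pcong M_j'$. First I would apply operational completeness (Theorem~\ref{l:app_completenessone}) iteratively along this reduction sequence; since each single step $\expr{P} \red \expr{Q}$ yields $\recencodopenf{\expr{P}} \red^* \recencodopenf{\expr{Q}}$ (using the appropriate clause depending on the rule, possibly with zero steps but always reaching $\recencodopenf{\expr{Q}}$), we obtain $\recencodopenf{\expr{M}} \red^* \recencodopenf{\expr{N}}$. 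Since $\recencodopenf{\cdot}$ extends homomorphically to sums, $\recencodopenf{\expr{N}} = \recencodopenf{M_1} \oplus \cdots \oplus \recencodopenf{M_k}$ as \spi processes (or, at the $\lamrsharfailunres$ level, $\recencodopenf{M_1} + \cdots + \recencodopenf{M_k}$). Then Proposition~\ref{Prop:checkpres} applied to $M_j'$ gives $\headsum{\recencodopenf{M_j'}} = \checkmark$; since head occurrences of $\checkmark$ are stable under $\pcong$ (by \defref{def:rsPrecongruencefailure}, whose clauses never bury a head subterm under a prefix—this needs a short check), the summand $\recencodopenf{M_j}$ reduces, via the $\pcong$-witness, to a term with $\checkmark$ in head position, so $\recencodopenf{\expr{M}} \Downarrow_{\checkmark}$.

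For the backward direction, suppose $\recencodopenf{\expr{M}} \Downarrow_{\checkmark}$, i.e.\ $\recencodopenf{\expr{M}} \red^* \expr{L}$ with a summand of $\expr{L}$ having $\checkmark$ in head position (equivalently, by $\recencodf{\checkmark} = \checkmark$ and Proposition~\ref{Prop:checkpres}, an ``unguarded $\checkmark$''). Here I would apply operational soundness (Theorem~\ref{l:soundnessoneunres}) repeatedly: from $\recencodopenf{\expr{M}} \red^* \expr{L}$ we extract $\expr{M} \red^*_{\pcong} \expr{M}'$ and a further reduction $\expr{L} \red^* \recencodopenf{\expr{M}''}$ with $\expr{M}' \pcong \expr{M}''$, where the soundness theorem is invoked step by step (each individual target step is matched by $\leq 1$ or $\red^*$ source steps). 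Crucially, $\checkmark$ occurrences in head position are preserved under further reduction $\expr{L} \red^* \recencodopenf{\expr{M}''}$ (success is a monotone/persistent observable: once $\checkmark$ is unguarded it stays reachable), so $\recencodopenf{\expr{M}''} \Downarrow_{\checkmark}$; then Proposition~\ref{Prop:checkpres} in the converse direction yields $\headsum{\expr{M}''} = \checkmark$ (on the relevant summand), hence $\expr{M}'' \Downarrow_{\checkmark}$, and since $\expr{M}'' \pcong \expr{M}'$ and $\expr{M} \red^*_{\pcong} \expr{M}'$, we conclude $\expr{M} \Downarrow_{\checkmark}$.

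The main obstacle I anticipate is bookkeeping the interaction between $\pcong$ and the success predicate: I must confirm that none of the precongruence identities in \defref{def:rsPrecongruencefailure} can move a head-position $\checkmark$ into a guarded position (they are all about commuting/garbage-collecting substitutions and bags, so this should hold, but it requires a careful case analysis), and likewise that Proposition~\ref{Prop:checkpres}, which is stated for the $\recencodf{\cdot}$ encoding on closed terms, lifts cleanly to $\recencodopenf{\cdot}$ on arbitrary well-formed expressions and to sums via $\headsum{\cdot}$. A secondary subtlety is that soundness only guarantees reaching $\recencodopenf{\expr{M}''}$ after \emph{additional} target reductions $\expr{L} \red^* Q'$, so I need the persistence of the success observable under $\red^*$ in both \spi and $\lamrsharfailunres$ — a standard but necessary lemma — to transport the witness from $\expr{L}$ down to $\recencodopenf{\expr{M}''}$.
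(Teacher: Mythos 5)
Your proposal is correct and follows essentially the same route as the paper's proof: operational completeness (Theorem~\ref{l:app_completenessone}) for the forward direction, operational soundness (Theorem~\ref{l:soundnessoneunres}) for the converse, with Proposition~\ref{Prop:checkpres} transferring head occurrences of $\checkmark$ across the encoding and the homomorphic treatment of sums handling the summand-level witness. The subtleties you flag (stability of head-$\checkmark$ under $\pcong$ and under further reductions, and lifting the head-preservation property from $\recencodf{\cdot}$ to $\recencodopenf{\cdot}$) are exactly the points the paper dispatches with a brief remark that no rule, precongruence identity, or encoding clause creates or destroys a $\checkmark$ in head position.
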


\begin{proof}
By induction on the structure of expressions $\lamrfailunres$ and $\lamrsharfailunres$.

\begin{enumerate}
    
    \item Suppose that  $\expr{M} \Downarrow_{\checkmark} $. We will prove that $\recencodopenf{\expr{M}} \Downarrow_{\checkmark}$.

    By operational completeness (\thmref{l:app_completenessone}) we have that if $\expr{M}\red_{\redlab{R}} \expr{M'}$ then

    \begin{enumerate}
        \item If $\redlab{R} =  \redlab{R:Beta}$  then $ \recencodopenf{\expr{M}}  \red^{\leq 2}\recencodopenf{\expr{M}'}$;

        \item If $\redlab{R} =\redlab{R:Fetch}$   then   $ \recencodopenf{\expr{M}}  \red^+ \recencodopenf{\expr{M}''}$, for some $ \expr{M}''$ such that  $\expr{M}' \pequiv \expr{M}''$. 
        \item If $\redlab{R} \neq  \redlab{R:Beta}$ and $\redlab{R}\neq \redlab{R:Fetch}$  then $ \recencodopenf{\expr{M}}  \red\recencodopenf{\expr{M}'}$;
    \end{enumerate}
    Notice that  neither our  reduction rules  (in \defref{fig:share-reductfailureunres}), or our congruence $\pequiv$ (in \figref{def:rsPrecongruencefailure}),  or  our encoding ($\recencodopenf{\checkmark }=\checkmark$)  create or destroy a $\checkmark$ occurring in the head of term. By Proposition \ref{Prop:checkpres} the encoding preserves the head of a term being $\checkmark$. The encoding acts homomorphically over sums, therefore, if a $\checkmark$ appears as the head of a term in a sum, it will stay in the encoded sum. We can iterate the operational completeness lemma and obtain the result.

    \item Suppose that $\recencodopenf{\expr{M}} \Downarrow_{\checkmark}$. We will prove that $ \expr{M} \Downarrow_{\checkmark}$. 

    By operational soundness (\thmref{l:soundnessoneunres}) we have that if $ \recencodopenf{\expr{M}}  \red \expr{L}$ then there exist $ \expr{M}' $ such that $ \expr{M}  \red_{\redlab{R}} \expr{M}'$ and 

    \begin{enumerate}
        \item If $\redlab{R} = \redlab{R:Beta}$ then $\expr{ L } \red^{\leq 1} \recencodopenf{\expr{M}'}$;

    \item If $\redlab{R} \neq \redlab{R:Beta}$ then $\expr{ L } \red^*  \recencodopenf{\expr{M}''}$, for $ \expr{M}''$ such that  $\expr{M}' \pequiv \expr{M}''$.
    \end{enumerate}
    
   Since $\recencodopenf{\expr{M}}\red^* M_1+\ldots+M_k$, and $\headf{M_j'}=\checkmark$, for some $j$ and $M_j'$, s.t. $M_j\pequiv M_j'$. 
   
   Notice that if $\recencodopenf{\expr{M}}$ is itself a term headed with $\checkmark$, say $\headf{\recencodopenf{\expr{M}}}=\checkmark$, then $\expr{M}$ is itself headed with $\checkmark$, from Proposition \ref{Prop:checkpres}.
   In the case $\recencodopenf{\expr{M}}= M_1+\ldots+M_k$, $k\geq 2$, and $\checkmark$ occurs in the head of an $M_j$, the reasoning is similar.  $\expr{M}$ has one of the forms:
   \begin{enumerate}
       \item   $\expr{M}= N_1$, then $N_1$ must contain the subterm $ M\esubst{C \bagsep U}{x}$ and $\size{C}=\#(x,M)$.
       
       The encoding of $\expr{M}$ is
       
       $\recencodopenf{M\esubst{C \bagsep U}{x}}=\sum_{C_i \in \perm{\recencodf{ C }}}\recencodf{M\linsub{\widetilde{x}}{x}}\linexsub{C_i(1)/x_i}\ldots \linexsub{C_i(k)/x_i}\unexsub{U / \unvar{x} }$.
       We can apply Proposition~\ref{Prop:checkpres} and the result follows.
       
       \item $\expr{M}=N_1+\ldots+N_l$ for $l \geq 2$.
       
       This reasoning is similar and uses the fact  that the encoding distributes homomorphically over sums.
   \end{enumerate}
   
   In the case where $\recencodopenf{\expr{M}}\red^+ M_1+\ldots+M_k$, and $\headf{M_j'}=\checkmark$, for some $j$ and $M_j'$, such that $M_j\pequiv M_j'$, the reasoning is similar to the previous, since our reduction rules do not introduce/eliminate $\checkmark$ occurring in the head of terms. 
\end{enumerate}
\end{proof}


\section{Appendix to Subsection~\ref{ssec:second_enc}}\label{app:encodingtwo}

\subsection{Type Preservation}

\begin{lemma}\label{prop:app_auxunres}
 $ \piencodf{\sigma^{j}}_{(\tau_1, m)} = \piencodf{\sigma^{k}}_{(\tau_2, n)}$ and $ \piencodf{(\sigma^{j} , \eta)}_{(\tau_1, m)} = \piencodf{(\sigma^{k}, \eta)}_{(\tau_2, n)}$ hold, provided that  $\tau_1,\tau_2,n$ and $m$ are as follows:

        \begin{enumerate}
        \item If $j > k$ then take $\tau_1 $ to be an arbitrary type, $m = 0$,  take $\tau_2 $ to be $\sigma$ and $n = j-k$.
        
        \item If $j < k$ then take $\tau_1 $ to be $\sigma$, $m = k-j$,  take $\tau_2 $ to be an arbitrary type and $n = 0$. 
        
        \item Otherwise, if $j = k$ then take $m = n = 0$. In this case, $\tau_1. \tau_2 $ are unimportant.
    \end{enumerate}
    
\end{lemma}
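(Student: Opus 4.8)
\textbf{Proof plan for Lemma~\ref{prop:app_auxunres}.}
The plan is to prove the two equalities by unfolding the type encoding of Figure~\ref{fig:enc_types}, relying on the observation that the encoding of $\sigma^k$ at pivot $(\tau,i)$ depends on $i$ only through a linear count of ``how many copies of $\sigma$ remain to be matched'' and on $\tau$ only as a placeholder consumed after that count reaches zero. Concretely, I first establish a normal-form description of $\piencodf{\sigma^k}_{(\tau,i)}$: by induction on $k$, show that $\piencodf{\sigma^k}_{(\tau,i)}$ is a fixed context (built from the clause for $\sigma\wedge\pi$) applied $k$ times to the base case $\piencodf{\omega}_{(\tau,i)}$, and that $\piencodf{\omega}_{(\tau,i)}$ in turn is a fixed context applied $i$ times to $\piencodf{\omega}_{(\tau,0)}$, which is the only place $\tau$ disappears (being replaced by the constant $\overline{\with((\oplus\bot)\otimes(\with\oplus\bot))}$ when $i=0$). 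This makes precise the slogan ``$k$ contributes structure, $i$ contributes $\tau$-padding,'' and it reduces both desired identities to a single combinatorial fact: $\piencodf{\sigma^k}_{(\tau,i)}$ is determined by the pair $(k+i,\ i>0\Rightarrow\tau)$, i.e.\ two encodings agree iff they have the same total length $k+i$ and, whenever the trailing count is nonzero, the same trailing type.

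Next I would instantiate this fact in the three cases. In case~(1), $j>k$: on the left we have length $j+0=j$ with $m=0$, and the clause for $\piencodf{\omega}_{(\tau_1,0)}$ is used, so $\tau_1$ does not appear — any choice is fine; on the right we have length $k+(j-k)=j$, and the trailing type in the $n=j-k>0$ part is $\tau_2=\sigma$, but by the normal-form description the $\sigma$-copies coming from $\sigma^k$ are indistinguishable from $\sigma$-padding coming from $n$, so the two encodings coincide. Case~(2) is symmetric, swapping the roles of left and right. Case~(3), $j=k$ with $m=n=0$: both sides are literally $\piencodf{\sigma^j}_{(\tau_1,0)}$ and $\piencodf{\sigma^j}_{(\tau_2,0)}$, and since the pivot index is $0$ the base case $\piencodf{\omega}_{(-,0)}$ erases $\tau_1,\tau_2$, giving equality regardless of those types. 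The second equality, for tuple types $(\sigma^j,\eta)$ versus $(\sigma^k,\eta)$, follows immediately from the first by the clause $\piencodf{(\sigma^k,\eta)}_{(\sigma,i)}=\oplus((\piencodf{\sigma^k}_{(\sigma,i)})\otimes((!\piencodf{\eta})\otimes\onef))$, since $\eta$ is untouched and only the $\piencodf{\sigma^k}_{(\sigma,i)}$ component varies — and we have just shown that component is the same on both sides.

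The main obstacle I anticipate is the bookkeeping in the inductive normal-form lemma: the clauses for $\sigma\wedge\pi$ and for $\omega$ in Figure~\ref{fig:enc_types} are each given in two equivalent forms (one as a dualized $\with$-type, one as an $\oplus$-type), and one must consistently pick one presentation and verify that the $k$-fold and $i$-fold context iterations nest correctly — in particular that decrementing $i$ in $\piencodf{\omega}_{(\sigma,i-1)}$ really does behave like ``consuming one $\sigma$'' in the same way that peeling one conjunct off $\sigma\wedge\pi$ does. Once that alignment is checked, the three cases are routine substitutions. A secondary, purely notational, subtlety is that in cases~(1) and~(2) the statement leaves one of $\tau_1,\tau_2$ ``arbitrary''; I would make explicit in the proof that this is legitimate precisely because the corresponding pivot index is $0$, so that type is never read by the encoding, and hence the equality holds for every such choice.
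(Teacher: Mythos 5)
Your proposal is correct and follows essentially the same route as the paper: the paper proves case~(1) of the first equality by directly unfolding $\piencodf{\sigma^{k}}_{(\sigma,n)}$ and $\piencodf{\sigma^{j}}_{(\tau_1,0)}$ layer by layer and observing that each $\omega$-padding step at pivot $\sigma$ produces exactly the same $\oplus\piencodf{\sigma}$-layer as peeling one conjunct off $\sigma^k$, with both unfoldings bottoming out in the same $\tau$-free base case — precisely your ``determined by $(k+i,\ i>0\Rightarrow\tau)$'' invariant. Your explicit normal-form lemma and the derivation of the tuple-type equality as a corollary of the first are just a cleaner packaging of what the paper handles by direct computation plus ``analogous''.
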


\begin{proof}
We shall prove the case of $(1)$ for the first equality, and the case for the second equality and of $(2)$ are analogous. The case of $(3)$ follows  by the encoding on types in \defref{def:enc_sestypfailunres}. 

Hence take $j,k,\tau_1,\tau_2, m,n$ satisfying the conditions in (1): $j > k$, $\tau_1 $ to be an arbitrary type,  $m = 0$,  $\tau_2 =\sigma$ and $n = j-k$. 
    We want to show that $ \piencodf{\sigma^{j}}_{(\tau_1, 0)} = \piencodf{\sigma^{k}}_{(\sigma, n)} $. In fact, 
    \[
        \begin{aligned}
            \piencodf{\sigma^{k}}_{(\sigma, n)} &= \oplus(( \with \onef) \ampy ( \oplus  \with (( \oplus \piencodf{\sigma} ) \otimes (\piencodf{\sigma^{k-1}}_{(\sigma, n)}))))\\
            \piencodf{\sigma^{k-1}}_{(\sigma, n)} &= \oplus(( \with \onef) \ampy ( \oplus  \with (( \oplus \piencodf{\sigma} ) \otimes (\piencodf{\sigma^{k-2}}_{(\sigma, n)}))))\\
            \vdots\\
            \piencodf{\sigma^{1}}_{(\sigma, n)} &= \oplus(( \with \onef) \ampy ( \oplus  \with (( \oplus \piencodf{\sigma} ) \otimes (\piencodf{\omega}_{(\sigma, n)}))))
        \end{aligned}
    \]
    and
    \[
        \begin{aligned}
            \piencodf{\sigma^{j}}_{(\tau_1, 0)} &= \oplus(( \with \onef) \ampy ( \oplus  \with (( \oplus \piencodf{\sigma} ) \otimes (\piencodf{\sigma^{j-1}}_{(\tau_1, 0)}))))\\
            \piencodf{\sigma^{j-1}}_{(\tau_1, 0)} &= \oplus(( \with \onef) \ampy ( \oplus  \with (( \oplus \piencodf{\sigma} ) \otimes (\piencodf{\sigma^{j-2}}_{(\tau_1, 0)}))))\\
            \vdots\\
            \piencodf{\sigma^{j-k + 1}}_{(\tau_1, 0)} &= \oplus(( \with \onef) \ampy ( \oplus  \with (( \oplus \piencodf{\sigma} ) \otimes (\piencodf{\sigma^{j-k}}_{(\tau_1, 0)}))))
        \end{aligned}
    \]
    Notice that $n = j-k$, hence we wish to show that $ \piencodf{\sigma^{n}}_{(\tau_1, 0)} = \piencodf{\omega}_{(\sigma, n)} $.  Finally,
    
    \[
        \begin{aligned}
            \piencodf{\omega}_{(\sigma, n)} & = \oplus(( \with \onef) \ampy ( \oplus  \with (( \oplus \piencodf{\sigma} ) \otimes (\piencodf{\omega}_{(\sigma, n-1)})))) \\
            \piencodf{\omega}_{(\sigma, n-1)} & = \oplus(( \with \onef) \ampy ( \oplus  \with (( \oplus \piencodf{\sigma} ) \otimes (\piencodf{\omega}_{(\sigma, n-2)})))) \\
            \vdots\\
            \piencodf{\omega}_{(\sigma, 1)} & = \oplus(( \with \onef) \ampy ( \oplus  \with (( \oplus \piencodf{\sigma} ) \otimes (\piencodf{\omega}_{(\sigma, 0)})))) \\
            \piencodf{\omega}_{(\sigma, 0)} &= \oplus(( \with \onef) \ampy ( \oplus  \with \onef ) \\
        \end{aligned}
    \]
    and 
    \[
        \begin{aligned}
            \piencodf{\sigma^{n}}_{(\tau_1, 0)} &= \oplus(( \with \onef) \ampy ( \oplus  \with (( \oplus \piencodf{\sigma} ) \otimes (\piencodf{\sigma^{n-1}}_{(\tau_1, 0)}))))\\
            \piencodf{\sigma^{n-1}}_{(\tau_1, 0)} &= \oplus(( \with \onef) \ampy ( \oplus  \with (( \oplus \piencodf{\sigma} ) \otimes (\piencodf{\sigma^{n-2}}_{(\tau_1, 0)}))))\\
            \vdots\\
            \piencodf{\sigma^{1}}_{(\tau_1, 0)} &= \oplus(( \with \onef) \ampy ( \oplus  \with (( \oplus \piencodf{\sigma} ) \otimes (\piencodf{\omega}_{(\tau_1, 0)})))) \\
            \piencodf{\omega}_{(\tau_1, 0)} &= \oplus(( \with \onef) \ampy ( \oplus  \with \onef ) \\
        \end{aligned}
    \]
\end{proof}


        
        
    

\begin{lemma}\label{lem:relunbag-typeunres}

If 
$ \eta \relunbag \epsilon $ Then
\begin{enumerate}
    \item If $\piencodf{M}_u\vdash \piencodf{\Gamma} ; \piencodf{\Theta} , \banged{x} : \piencodf{\eta}$
    then 
    $\piencodf{M}_u\vdash \piencodf{\Gamma} ; \piencodf{\Theta}, \banged{x} : \piencodf{\epsilon}$.
    
    \item If $\piencodf{M}_u\vdash \piencodf{\Gamma}, u:\piencodf{(\sigma^{j} , \eta ) \rightarrow \tau} ; \piencodf{\Theta}$
    then 
    $\piencodf{M}_u\vdash \piencodf{\Gamma}, u:\piencodf{(\sigma^{j} , \epsilon ) \rightarrow \tau} ; \piencodf{\Theta}$.

\end{enumerate}

\end{lemma}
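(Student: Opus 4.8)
\textbf{Proof plan for Lemma~\ref{lem:relunbag-typeunres}.}
The plan is to proceed by induction on the derivation of $\eta \relunbag \epsilon$ — or more precisely, by induction on the length of the list type $\eta$, exploiting the fact (from Notation~\ref{not:ltypes}) that $\eta \relunbag \epsilon$ means $\epsilon = \epsilon' \concat \epsilon''$ with $\epsilon'$ of the same length as $\eta$ and $\epsilon'_i = \eta_i$ for all $i$. The two parts are really the same statement viewed through the two occurrences of $\relunbag$ in the typing rules \redlab{FS{:}app}/\redlab{FS{:}Esub^!} (resp. \redlab{F{:}app}/\redlab{F{:}ex\dash sub}), so I would prove them simultaneously, or prove Part~1 first and then note that Part~2 reduces to it after unfolding the encoding $\piencodf{(\sigma^j,\eta)\to\tau} = \with(\dual{\piencodf{(\sigma^j,\eta)}_{(\sigma,i)}} \ampy \piencodf{\tau})$ and $\piencodf{(\sigma^j,\eta)}_{(\sigma,i)} = \oplus((\piencodf{\sigma^j}_{(\sigma,i)}) \otimes ((!\piencodf{\eta})\otimes \onef))$: the dependence on $\eta$ in Part~2 is entirely concentrated in the factor $!\piencodf{\eta}$, so whatever subtyping-like reasoning justifies replacing $\piencodf{\eta}$ by $\piencodf{\epsilon}$ on a name $\unvar{x}$ in Part~1 transfers directly.

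The key observation driving the inductive step is the definition $\piencodf{\eta} = \with_{\eta_i \in \eta}\{\mathtt{l}_i ; \piencodf{\eta_i}\}$: the encoding of a list type is a branching type with one branch per component. Since $\epsilon = \epsilon' \concat \epsilon''$ and $\epsilon'$ agrees with $\eta$ componentwise, $\piencodf{\epsilon} = \with_{j}\{\mathtt{l}_j ; \piencodf{\epsilon_j}\}$ is a branching type whose first $|\eta|$ branches coincide exactly with those of $\piencodf{\eta}$, with extra branches appended. A name typed $\unvar{x} : \dual{\piencodf{\eta}}$ is used as a selection (the encoding of $x[j]$ does $\outsev{\unvar{x}}{x_i}.x_i.\mathtt{l}_j;\cdots$, cf. Fig.~\ref{fig:encfailunres}); every selection $\mathtt{l}_j$ that typechecks against $\dual{\piencodf{\eta}}$ — i.e., $j \le |\eta|$ — also typechecks against $\dual{\piencodf{\epsilon}}$, because the $j$-th branch is unchanged and the added branches are irrelevant to a selector. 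So the step amounts to: inspect where $\unvar{x}$ occurs in $\piencodf{M}_u$; it can only occur inside copies spawned from a server, and in each such copy it is consumed by $\redlab{T\oplus_i}$ (selection) for some $i \le |\eta|$; re-typing that rule instance against $\with_{j \in J}\{\mathtt{l}_j : \dual{\piencodf{\epsilon_j}}\}$ with $J \supseteq \{1,\dots,|\eta|\}$ succeeds since $i \in J$ and $\dual{\piencodf{\epsilon_i}} = \dual{\piencodf{\eta_i}}$. Formally this is a routine induction on the typing derivation of $\piencodf{M}_u$, with the only non-trivial rule being $\redlab{T\oplus_i}$ applied at $\unvar{x}$ (and $\redlab{T!}$/$\redlab{Tcopy}$ carrying $\unvar{x}$ through the unrestricted context unchanged).

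I would organize the write-up as: (i) unfold both conclusions to isolate the single assignment $\unvar{x} : \dual{\piencodf{\eta}}$ vs $\unvar{x}:\dual{\piencodf{\epsilon}}$; (ii) state the key sublemma that $\with_{i\in I}\{\mathtt{l}_i : A_i\}$ can be weakened to $\with_{i \in I \cup I'}\{\mathtt{l}_i : A_i\}$ on the \emph{offering} side — equivalently, $\oplus_{i \in I}\{\dots\}$ can absorb extra labels on the \emph{selecting} side — by a trivial induction on the typing of the selector; (iii) apply it at each occurrence of $\unvar{x}$ in a server body. The main obstacle I anticipate is not mathematical depth but bookkeeping: making precise that the only way $\unvar{x}$ is ever \emph{used} in an encoded term is via the $x[j]$-pattern $\outsev{\unvar{x}}{x_i}.x_i.\mathtt{l}_j;[x_i \leftrightarrow u]$ guarded behind a server $!\unvar{x}(x_i).\piencodf{U}_{x_i}$, so that the "extra branches don't matter" argument is genuinely exhaustive; this requires either a prior structural characterization of where unrestricted names appear in the image of $\piencodf{\cdot}$, or — more cheaply — simply running the induction on the \spi\ typing derivation and checking each typing rule, where every rule except $\redlab{T\oplus_i}$ (and the contextual passing in $\redlab{T!}$, $\redlab{T?}$, $\redlab{Tcopy}$, $\redlab{T\oplus}$) treats $\unvar{x}$ as an opaque entry of $\Theta$ and goes through verbatim.
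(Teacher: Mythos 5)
Your proposal is correct and matches the paper's argument in essence: the paper also reduces everything to the observation that the encoding of $x[j]$ consumes $\unvar{x}$ only through a selection $\mathtt{l}_j$ (Rule~$\redlab{T\oplus_i}$ after $\redlab{Tcopy}$), which still typechecks against $\dual{\piencodf{\epsilon}}$ because $\eta\relunbag\epsilon$ forces $\epsilon_j=\eta_j$ on the selected branch and extra branches are harmless, and it likewise dispatches Part~2 by noting the $\eta$-dependence sits entirely in the $!\piencodf{\eta}$ factor of the arrow encoding. The only difference is organizational — the paper runs the induction on the structure of $M$ case by case rather than on the $\spi$ typing derivation with an isolated branch-weakening sublemma — which does not change the substance.
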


\begin{proof}

\begin{enumerate}
    
    \item We consider the first case where if $\piencodf{M}_u\vdash \piencodf{\Gamma} ; \piencodf{\Theta} , \banged{x} : \piencodf{\eta}$
    then 
    $\piencodf{M}_u\vdash \piencodf{\Gamma} ; \piencodf{\Theta}, \banged{x} : \piencodf{\epsilon}$ and by \defref{def:enc_sestypfailunres},  $\piencodf{ \eta } = \&_{\eta_i \in \eta} \{ \mathtt{l}_i ; \piencodf{\eta_i} \}$. We now proceed by induction on the structure of $M$:
    \begin{enumerate}
        
        \item $M =   {x}$.
        \label{proof:relunbag-no}
       
        By \figref{fig:encfailunres},  $\piencodf{ {x}}_u =  {x}.\overline{\some} ; [ {x} \leftrightarrow u] $. We have the following derivation:
                
        \begin{prooftree}
            \AxiomC{}
            \LeftLabel{$\redlab{ (Tid)}$}
            \UnaryInfC{$ [ {x} \leftrightarrow u] \vdash  {x}:  \overline{A}  , u :  A; \piencodf{\Theta} , \banged{x} : \piencodf{\eta}$}
            \LeftLabel{$\redlab{T\with^{x}_{d})}$}
            \UnaryInfC{$  {x}.\overline{\some} ; [ {x} \leftrightarrow u] \vdash  {x}: \with  \overline{A} , u :  A ; \piencodf{\Theta} , \banged{x} : \piencodf{\eta} $}
        \end{prooftree}
        
        For some type A. Notice the derivation is independent of $\banged{x} : \piencodf{\eta}$ , hence holds when $ M =   {x}$. Note that we do not consider $M =  {y}$ where $y \not = x$, this is due to the case being trivial due to the typing of $y$ being independent on $x$.

        \item $ M =  {x}[ind]$.
        \label{proof:relunbag-yes}

            By \figref{fig:encfailunres},  $\piencodf{ {x}[ind]}_u = \outsev{\banged{x}}{{x_i}}. {x}_i.l_{ind}; [{x_i} \leftrightarrow u] $. We have the following derivation:
            
            \begin{prooftree}
                    \AxiomC{}
                    \LeftLabel{$\redlab{ (Tid)}$}
                    \UnaryInfC{$ [{x_i} \leftrightarrow u]  \vdash  u :  \piencodf{ \tau },  x_i:  \overline{\piencodf{\eta_{ind} }}  ; \banged{x}:\&_{\eta_i \in \eta} \{ \mathtt{l}_i ; \piencodf{\eta_i} \} , \piencodf{\Theta} $}
                \LeftLabel{\redlab{T\oplus_i}}
                \UnaryInfC{$  {x}_i.l_{ind}; [{x_i} \leftrightarrow u] \vdash  u :  \piencodf{ \tau }, {x}_i :  \oplus_{\eta_i \in \eta} \{ \mathtt{l}_i ; \dual{\piencodf{\eta_i}}  \} ; \banged{x}:\oplus_{\eta_i \in \eta} \{ \mathtt{l}_i ; \dual{\piencodf{\eta_i}} \} , \piencodf{\Theta}  $}
                \LeftLabel{\redlab{Tcopy}}
                \UnaryInfC{$ \outsev{\banged{x}}{{x_i}}. {x}_i.l_{ind}; [{x_i} \leftrightarrow u] \vdash  u :  \piencodf{ \tau }; \banged{x}:\oplus_{\eta_i \in \eta} \{ \mathtt{l}_i ; \dual{\piencodf{\eta_i}} \} , \piencodf{\Theta} $}
            \end{prooftree}
            
            On the other hand we have derivation:
            
            \begin{prooftree}
                    \AxiomC{}
                    \LeftLabel{$\redlab{ (Tid)}$}
                    \UnaryInfC{$ [{x_i} \leftrightarrow u]  \vdash  u :  \piencodf{ \tau },  x_i:  \overline{\piencodf{\epsilon_{ind} }}  ; \banged{x}:\&_{\epsilon_i \in \epsilon} \{ \mathtt{l}_i ; \piencodf{\epsilon_i} \} , \piencodf{\Theta} $}
                \LeftLabel{\redlab{T\oplus_i}}
                \UnaryInfC{$  {x}_i.l_{ind}; [{x_i} \leftrightarrow u] \vdash  u :  \piencodf{ \tau }, {x}_i :  \oplus_{\epsilon_i \in \epsilon} \{ \mathtt{l}_i ; \dual{\piencodf{\epsilon_i}}  \} ; \banged{x}:\oplus_{\epsilon_i \in \epsilon} \{ \mathtt{l}_i ; \dual{\piencodf{\epsilon_i}} \} , \piencodf{\Theta}  $}
                \LeftLabel{\redlab{Tcopy}}
                \UnaryInfC{$ \outsev{\banged{x}}{{x_i}}. {x}_i.l_{ind}; [{x_i} \leftrightarrow u] \vdash  u :  \piencodf{ \tau }; \banged{x}:\oplus_{\epsilon_i \in \epsilon} \{ \mathtt{l}_i ; \dual{\piencodf{\epsilon_i}} \} , \piencodf{\Theta} $}
            \end{prooftree}
            
            By $ \eta \relunbag \epsilon $ we have that $ \epsilon_{ind} = \eta_{ind} $. Similarly for the case of $ M =  {y}[ind]$ with $y \not =  x$ we use the argument that the typing of $y$ is independent on $x$.
        
        \item  $M = M' [  {\widetilde{y}} \leftarrow  {y} ] $.
        
        If $y = x$ the case proceeds similarly to (\ref{proof:relunbag-no}) otherwise we proceed by induction on $M'$.
            
        \item $ M =  \lambda x . (M'[ {\widetilde{x}} \leftarrow  {x}])$.
 
            From 
            \defref{def:enc_lamrsharpifailunres} it follows that
            
            $ \piencodf{\lambda x.M'[ {\widetilde{x}} \leftarrow x]}_u = u.\overline{\some}; u(x). x.\overline{\some}; x(\linvar{x}). x(\banged{x}). x. \close ; \piencodf{M'[ {\widetilde{x}}\leftarrow  {x}]}_u$.
            
            We give the final derivation in parts. The first part we name $\Pi_1$ derived by:
            
            \begin{prooftree}
                \AxiomC{$\piencodf{M'[ {\widetilde{x}} \leftarrow  {x}]}_u \vdash  u:\piencodf{\tau} , \piencodf{\Gamma'} , \linvar{x}: \overline{\piencodf{\sigma^k}_{(\sigma, i)}}  ; \piencodf{\Theta} , \banged{x}: \overline{\piencodf{\eta}}$}
                \LeftLabel{\redlab{T\bot}}
                \UnaryInfC{$ x. \close ; \piencodf{M'[ {\widetilde{x}} \leftarrow  {x}]}_u \vdash x{:}\bot, u:\piencodf{\tau} , \piencodf{\Gamma'} , \linvar{x}: \overline{\piencodf{\sigma^k}_{(\sigma, i)}}  ; \piencodf{\Theta} , \banged{x}: \overline{\piencodf{\eta}}$}
                \LeftLabel{\redlab{T?}}
                \UnaryInfC{$x. \close ; \piencodf{M'[ {\widetilde{x}} \leftarrow  {x}]}_u \vdash x{:}\bot, u:\piencodf{\tau} , \piencodf{\Gamma'} , \linvar{x}: \overline{\piencodf{\sigma^k}_{(\sigma, i)}} , \banged{x}: ? \overline{\piencodf{\eta}} ; \piencodf{\Theta} $}
                \LeftLabel{\redlab{T\ampy}}
                \UnaryInfC{$x(\banged{x}). x. \close ; \piencodf{M'[ {\widetilde{x}} \leftarrow  {x}]}_u \vdash x: (? \overline{\piencodf{\eta}}) \ampy (\bot) , u:\piencodf{\tau} , \piencodf{\Gamma'} , \linvar{x}: \overline{\piencodf{\sigma^k}_{(\sigma, i)}} ; \piencodf{\Theta} $}
                \LeftLabel{\redlab{T\ampy}}
                \UnaryInfC{$  x(\linvar{x}). x(\banged{x}). x. \close ; \piencodf{M'[ {\widetilde{x}} \leftarrow  {x}]}_u \vdash  x: \overline{\piencodf{\sigma^k}_{(\sigma, i)}} \ampy ((? \overline{\piencodf{\eta}}) \ampy (\bot)) , u:\piencodf{\tau} , \piencodf{\Gamma'} ; \piencodf{\Theta} $}
            \end{prooftree}
            
            We take $ P = x(\linvar{x}). x(\banged{x}). x. \close ; \piencodf{M'[ {\widetilde{x}} \leftarrow  {x}]}_u $ and continue the derivation:

            \begin{prooftree}
                \AxiomC{$ \Pi_1 $}
                \noLine
                \UnaryInfC{$ \vdots $}
                \noLine
                \UnaryInfC{$  P \vdash  x: \overline{\piencodf{\sigma^k}_{(\sigma, i)}} \ampy ((? \overline{\piencodf{\eta}}) \ampy (\bot)) , u:\piencodf{\tau} , \piencodf{\Gamma'} ; \piencodf{\Theta} $}
                \LeftLabel{\redlab{T\with_d^x}}
                \UnaryInfC{$ x.\overline{\some}; P \vdash x :\with(  \overline{\piencodf{\sigma^k}_{(\sigma, i)}} \ampy ((? \overline{\piencodf{\eta}}) \ampy (\bot))) , u:\piencodf{\tau} , \piencodf{\Gamma'} ; \piencodf{\Theta} $}
                \LeftLabel{\redlab{T\ampy}}
                \UnaryInfC{$u(x). x.\overline{\some}; P \vdash u: \with(  \overline{\piencodf{\sigma^k}_{(\sigma, i)}} \ampy ((? \overline{\piencodf{\eta}}) \ampy (\bot))) \ampy \piencodf{\tau} , \piencodf{\Gamma'} ; \piencodf{\Theta}  $}
                \LeftLabel{\redlab{T\with_d^x}}
                \UnaryInfC{$u.\overline{\some}; u(x). x.\overline{\some}; P \vdash u :\with (\with(  \overline{\piencodf{\sigma^k}_{(\sigma, i)}} \ampy ((? \overline{\piencodf{\eta}}) \ampy (\bot))) \ampy \piencodf{\tau}) , \piencodf{\Gamma'} ; \piencodf{\Theta} $}
            \end{prooftree}
            
            By Definition \ref{def:enc_sestypfailunres} we have that $ \piencodf{(\sigma^{k} , \eta )   \rightarrow \tau} = \with (\with(  \overline{\piencodf{\sigma^k}_{(\sigma, i)}} \ampy ((? \overline{\piencodf{\eta}}) \ampy (\bot))) \ampy \piencodf{\tau}) $. In this case we must have that the variable names for $ x $ from our hypothesis and $x$ from $ M $ must be distinct.

        \item $ M =  (M'\ B) $, or $ M =   (M[ {\widetilde{x}} \leftarrow  {x}])\esubst{ B }{ x }$, or  $ M =  M' \unexsub{U / \unvar{x}} $.
        
            The proof  follows similarly to that of (\ref{proof:relunbag-yes}).
       
        \item $ M =  M' \linexsub{N /  {x}} $
        
            Case follows by that of (\ref{proof:relunbag-no}) and applying induction hypothesis on $\piencodf{M'}_u$.

        
        
            

        \item When $ M =  \fail^{\widetilde{x}}$
            Case follows by that of (\ref{proof:relunbag-no}).

    \end{enumerate}

    \item If $\piencodf{M}_u\vdash \piencodf{\Gamma}, u:\piencodf{(\sigma^{j} , \eta ) \rightarrow \tau} ; \piencodf{\Theta}$
    then 
    $\piencodf{M}_u\vdash \piencodf{\Gamma}, u:\piencodf{(\sigma^{j} , \epsilon ) \rightarrow \tau} ; \piencodf{\Theta}$ follows from previous case along a similar argument.
\end{enumerate}

\end{proof}

\begin{theorem}[Type Preservation for $\piencodf{\cdot}_u$]
\label{t:preservationtwounres}
Let $B$ and $\expr{M}$ be a bag and an expression in $\lamrsharfailunres$, respectively.
\begin{enumerate}
\item If $\Theta ; \Gamma \wfdash B : (\sigma^{k} , \eta )$
then 
$\piencodf{B}_u \wfdash  \piencodf{\Gamma}, u : \piencodf{(\sigma^{k} , \eta )}_{(\sigma, i)} ; \piencodf{\Theta}$.

\item If $\Theta ; \Gamma \wfdash \expr{M} : \tau$
then 
$\piencodf{\expr{M}}_u \wfdash  \piencodf{\Gamma}, u :\piencodf{\tau} ; \piencodf{\Theta}$.
\end{enumerate}
\end{theorem}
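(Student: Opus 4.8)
The statement to prove is Theorem~\ref{t:preservationtwounres}: type preservation for the encoding $\piencodf{\cdot}_u$ from $\lamrsharfailunres$ into \spi. The plan is to proceed by simultaneous (mutual) induction on the well-formedness derivations $\Theta;\Gamma\wfdash B:(\sigma^k,\eta)$ and $\Theta;\Gamma\wfdash\expr{M}:\tau$, following the structure of the rules in \figref{app_fig:wfsh_rulesunres}. For each rule, I would inspect the corresponding clause of the encoding in \figref{fig:encfailunres}, read off the \spi process it produces, and then build a \spi typing derivation using the rules in \figref{fig:trulespi}, invoking the induction hypotheses on the immediate subexpressions. Throughout, the encoding on types from \figref{fig:enc_types} (Def.~\ref{def:enc_sestypfailunres}) tells us exactly which session types to expect at each free name: $\linvar{x}$ carries (the dual of) $\piencodf{\sigma^k}_{(\sigma,i)}$, $\unvar{x}$ carries $\dual{\piencodf{\eta}}$, and the result name $u$ carries $\piencodf{\tau}$.

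\textbf{Key steps, in order.} First I would dispatch the base cases: Rule~$\redlab{FS{:}var^\ell}$ (using $\redlab{T\with_d^x}$ on a forwarder, exactly as in the $\lamrfail$ case of~\cite{DBLP:conf/fscd/PaulusN021}); Rule~$\redlab{FS{:}var^!}$ for $x[i]$ (the genuinely new case: the process $\outsev{\unvar{x}}{x_i}.x_i.l_i;[x_i\leftrightarrow u]$ is typed by $\redlab{Tid}$, then $\redlab{T\oplus_i}$ to select the $i$-th branch of $\oplus_{\eta_i\in\eta}\{\mathtt{l}_i;\dual{\piencodf{\eta_i}}\}$, then $\redlab{Tcopy}$ against the server type $\dual{\piencodf{\eta}}$ in $\piencodf{\Theta}$ — essentially the derivation already worked out in case~(b) of Lemma~\ref{lem:relunbag-typeunres}); and Rules~$\redlab{FS{:}\oneb^\ell}$, $\redlab{FS{:}\oneb^!}$, $\redlab{FS{:}fail}$, $\redlab{FS{:}bag^!}$ for the small constructs. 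Next the structural cases: $\redlab{FS{:}abs\dash sh}$ for $\lambda x.M[\widetilde x\leftarrow x]$ (a chain $\redlab{T\with_d^x},\redlab{T\ampy},\redlab{T\with_d^x},\redlab{T\ampy},\redlab{T?},\redlab{T\bot}$ reconstructing $\piencodf{(\sigma^k,\eta)\to\tau}=\with(\with(\dual{\piencodf{\sigma^k}_{(\sigma,i)}}\ampy((?\dual{\piencodf{\eta}})\ampy\bot))\ampy\piencodf{\tau})$, again mirroring Lemma~\ref{lem:relunbag-typeunres}); $\redlab{FS{:}bag}$ and $\redlab{FS{:}\concat\dash bag^!}$ for $C\bagsep U$ and $U\concat V$ (using $\redlab{T\otimes}$, $\redlab{T!}$, $\redlab{T\onef}$, and $\redlab{T\with}$ for the branching on $U$); $\redlab{FS{:}bag^\ell}$ for $\bag{M}\cdot C$ (parallel composition $\redlab{T\otimes}$ plus the $y_i$-failure-trigger machinery typed with $\redlab{T\with^x}$); $\redlab{FS{:}share}$ for $M[x_1,\dots,x_n\leftarrow x]$ (the recursive clause: $\redlab{T\with_d^x}$, $\redlab{T\otimes}$, $\redlab{T\onef}/\redlab{T\bot}$, $\redlab{T\with_d^x}$, $\redlab{T\oplus^x_{\widetilde w}}$, $\redlab{T\ampy}$, then IH on the smaller sharing); $\redlab{FS{:}weak}$ for $M[\leftarrow x]$ (similar but shorter, with no $\linvar{x}(x_i)$ input); $\redlab{FS{:}Esub^\ell}$, $\redlab{FS{:}Esub^!}$, $\redlab{FS{:}Esub}$ for the three explicit substitutions (a restriction $\redlab{Tcut}$ — here a $(\nu x)$ combining $\redlab{T\otimes}/\redlab{T\ampy}$-typed halves, or $\redlab{T!}$ against $\redlab{T?}$ for the unrestricted one); $\redlab{FS{:}app}$ for $M\,(C\bagsep U)$ (each summand is $(\nu v)(\piencodf{M}_v\para v.\some_{\dots};\outact v x.([v\leftrightarrow u]\para\piencodf{C_i\bagsep U}_x))$, typed via $\redlab{T\oplus^x_{\widetilde w}}$, $\redlab{Tid}$, and a cut); and $\redlab{FS{:}sum}$ for $\expr M+\expr N$ (apply $\redlab{T\oplus}$ to the two IHs, using that by Def.~\ref{def:enc_sestypfailunres} the encoding of a type is always of the form $\with(\cdots)$, so $\piencodf{\Gamma},u:\piencodf\tau;\piencodf\Theta$ is of the shape $\with\Delta;\Theta$ required by $\redlab{T\oplus}$).

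\textbf{The main obstacle.} The crux is the mismatch $j\neq k$ allowed between a function's expected linear arity and the actual size of the bag in Rules~$\redlab{FS{:}app}$ and $\redlab{FS{:}Esub}$ (cf.\ the Remark after Def.~\ref{d:wellf_unres}). On the \spi side the two partners of the cut on $\linvar{x}$ must carry dual session types, yet $M$ is typed with $\piencodf{\sigma^j}_{(\sigma,i_1)}$-ish data while $B$ is typed with $\piencodf{\sigma^k}_{(\sigma,i_2)}$. The resolution — and this is precisely the point of threading the extra parameters $(\sigma,i)$ through the type encoding in \figref{fig:enc_types} — is to instantiate those parameters via Lemma~\ref{prop:app_auxunres}, which gives $\piencodf{\sigma^j}_{(\tau_1,m)}=\piencodf{\sigma^k}_{(\tau_2,n)}$ (and likewise for tuples) for the appropriate choice of $m,n$ depending on whether $j>k$, $j<k$, or $j=k$. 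So in the $\redlab{FS{:}app}$/$\redlab{FS{:}Esub}$ cases I would first apply the relevant IHs to get the two encodings typed with \emph{unspecified} instances of the parameters, then use Lemma~\ref{prop:app_auxunres} to align them to a common session type before cutting, and Lemma~\ref{lem:relunbag-typeunres} to absorb the $\eta\relunbag\epsilon$ side condition on the unrestricted part (replacing $\epsilon$ by $\eta$ in the relevant name's type). Getting the bookkeeping of these parameter instantiations exactly right — so that $\dual{(\cdot)}$ genuinely matches across every cut — is the delicate part; everything else is a routine, if lengthy, traversal of the encoding clauses.
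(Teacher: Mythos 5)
Your proposal is correct and follows essentially the same route as the paper's proof: a mutual induction on the well-formedness derivations of bags and expressions, building the \spi typing derivation clause by clause from the encoding, and resolving the $j\neq k$ arity mismatch in the $\redlab{FS{:}app}$/$\redlab{FS{:}Esub}$ cases by instantiating the parameters of $\piencodf{\sigma^k}_{(\sigma,i)}$ via Lemma~\ref{prop:app_auxunres} and absorbing the $\eta\relunbag\epsilon$ side condition via Lemma~\ref{lem:relunbag-typeunres}. The per-case rule chains you sketch (forwarder plus $\redlab{T\with_d^x}$ for variables; $\redlab{Tid}$, $\redlab{T\oplus_i}$, $\redlab{Tcopy}$ for $x[i]$; the cut-based treatment of applications and explicit substitutions; $\redlab{T\oplus}$ for sums) coincide with those in the paper's derivations.
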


\begin{proof}
The proof is by mutual induction on the typing derivation of $B$ and $\expr{M}$, with an analysis for the last rule applied.
Recall that the encoding of types ($\piencodf{-}$) has been given in 
\defref{def:enc_sestypfailunres}.
    
    \begin{enumerate}
        \item  We have the following derivation where we take $ B =  C \bagsep U$:
        
            \begin{prooftree}
                \AxiomC{\( \Theta ; \Gamma\wfdash C : \sigma^k\)}
                \AxiomC{\(  \Theta ;\cdot \wfdash  U : \eta \)}
            \LeftLabel{\redlab{FS{:}bag}}
            \BinaryInfC{\( \Theta ; \Gamma \wfdash C \bagsep U : (\sigma^{k} , \eta  ) \)}
            \end{prooftree}
        
        Our encoding gives: 
$ \piencodf{C \bagsep U}_u = x.\some_{\llfv{C}} ; \outact{x}{\linvar{x}} .( \piencodf{ C }_{\linvar{x}} \para \outact{x}{\banged{x}} .( !\banged{x} (x_i). \piencodf{ U }_{x_i} \para x.\overline{\close} ) )$.
        In addition, the encoding of $(\sigma^{k} , \eta  )$ is:
        $$\piencodf{ (\sigma^{k} , \eta  )  }_{(\sigma, i)} = \oplus( (\piencodf{\sigma^{k} }_{(\sigma, i)}) \otimes ((!\piencodf{\eta}) \otimes (\onef))  )\quad \text{(for some  $i \geq 0$ and  strict type $\sigma$)}$$
            
        And one can build the following type derivation (rules from \figref{fig:trulespi}):

        \begin{prooftree}
                \AxiomC{$ \piencodf{ C }_{\linvar{x}} \vdash \piencodf{\Gamma}, \linvar{x}:\piencodf{\sigma^{k} }_{(\sigma, i)} ; \piencodf{\Theta}$}
                
                    \AxiomC{$ \piencodf{ U }_{x_i} \vdash x_i: \piencodf{\eta}; \piencodf{\Theta}$}
                    \LeftLabel{\redlab{T!}}
                    \UnaryInfC{$!\banged{x} (x_i). \piencodf{ U }_{x_i} \vdash \banged{x}: !\piencodf{\eta} ; \piencodf{\Theta}$}
                    
                    \AxiomC{\mbox{\ }}
                    \LeftLabel{\redlab{T\onef}}
                    \UnaryInfC{$x.\dual{\close} \vdash x: \onef ; \piencodf{\Theta}$}
                \LeftLabel{\redlab{T\otimes}}
                \BinaryInfC{$ \outact{x}{\banged{x}} .( !\banged{x}. (x_i). \piencodf{ U }_{x_i} \para x.\overline{\close}  \vdash x: (!\piencodf{\eta}) \otimes (\onef) ; \piencodf{\Theta}$}
            \LeftLabel{\redlab{T\otimes}}
            \BinaryInfC{$\outact{x}{\linvar{x}} .( \piencodf{ C }_{\linvar{x}} \para \outact{x}{\banged{x}} .( !\banged{x}. (x_i). \piencodf{ U }_{x_i} \para x.\overline{\close} ) ) \vdash  \piencodf{\Gamma}, x:(\piencodf{\sigma^{k} }_{(\sigma, i)}) \otimes ((!\piencodf{\eta}) \otimes (\onef)) ; \piencodf{\Theta} $}
            \LeftLabel{\redlab{T\oplus^x_{\widetilde{w}}}}
            \UnaryInfC{$x.\some_{\llfv{C}} ; \outact{x}{\linvar{x}} .( \piencodf{ C }_{\linvar{x}} \para \outact{x}{\banged{x}} .( !\banged{x}. (x_i). \piencodf{ U }_{x_i} \para x.\overline{\close} ) ) \vdash \piencodf{\Gamma}, x{:}\piencodf{ (\sigma^{k} , \eta  )  }_{(\sigma, i)})$}
        \end{prooftree}
        
        Hence true provided both $ \piencodf{ C }_{\linvar{x}} \vdash \piencodf{\Gamma}, \linvar{x}:\piencodf{\sigma^{k} }_{(\sigma, i)} ; \piencodf{\Theta}$ and $ \piencodf{ U }_{x_i} \vdash x_i: \piencodf{\eta}; \piencodf{\Theta}$ hold.
        
        Let us consider the two cases:
        
        \begin{enumerate}
        
            \item For $ \piencodf{ C }_{\linvar{x}} \vdash \piencodf{\Gamma}, \linvar{x}:\piencodf{\sigma^{k} }_{(\sigma, i)} ; \piencodf{\Theta}$ to hold we must consider two cases on the shape of $C$:
            
            \begin{enumerate}
                \item When $C = \oneb$ we may type bags with the $\redlab{FS{:}\oneb^{\ell}}$ rule.
                
                That is, 
                
                    \begin{prooftree}
                        \AxiomC{\(  \)}
                        \LeftLabel{\redlab{FS{:}\oneb^{\ell}}}
                        \UnaryInfC{\( \Theta ; \dash \wfdash \oneb : \omega \)}
                    \end{prooftree}
                    
                Our encoding gives:         $ \piencodf{\oneb}_{\linvar{x}} = \linvar{x}.\some_{\emptyset} ; \linvar{x}(y_n). ( y_n.\overline{\some};y_n . \overline{\close} \para \linvar{x}.\some_{\emptyset} ; \linvar{x}. \overline{\none}). $
                
                and  the encoding of $\omega$ can be either:
                \begin{enumerate}
                \item  $\piencodf{\omega}_{(\sigma,0)} =  \overline{\with(( \oplus \bot )\otimes ( \with \oplus \bot ))}$; or
                \item $\piencodf{\omega}_{(\sigma, i)} =  \overline{   \with(( \oplus \bot) \otimes ( \with  \oplus (( \with  \overline{\piencodf{ \sigma }} )  \ampy (\overline{\piencodf{\omega}_{(\sigma, i - 1)}})))) }$
                \end{enumerate}

                And one can build the following type derivation (rules from \figref{fig:trulespi}):

                \begin{prooftree}
                        \AxiomC{\mbox{\ }}
                        \LeftLabel{\redlab{T\onef}}
                        \UnaryInfC{$y_n . \overline{\close} \vdash y_n: \onef; \piencodf{\Theta}$}
                        \LeftLabel{\redlab{T\with_d^x}}
                        \UnaryInfC{$y_n.\overline{\some};y_n . \overline{\close} \vdash  y_n :\with \onef; \piencodf{\Theta}$}

                        \AxiomC{}
                        \LeftLabel{\redlab{T\with^x}}
                        \UnaryInfC{$\linvar{x}.\dual{\none} \vdash \linvar{x} :\with A; \piencodf{\Theta}$}
                        \LeftLabel{\redlab{T\oplus^x_{\widetilde{w}}}}
                        \UnaryInfC{$\linvar{x}.\some_{\emptyset} ; \linvar{x}. \overline{\none} \vdash  \linvar{x}{:}\oplus \with A; \piencodf{\Theta}$}
                    
                    \LeftLabel{\redlab{T\mid}}
                    \BinaryInfC{$( y_n.\overline{\some};y_n . \overline{\close} \para \linvar{x}.\some_{\emptyset} ; \linvar{x}. \overline{\none}) \vdash y_n :\with \onef, \linvar{x}{:}\oplus \with A; \piencodf{\Theta}$}
                    \LeftLabel{\redlab{T\ampy}}
                    \UnaryInfC{$\linvar{x}(y_n). ( y_n.\overline{\some};y_n . \overline{\close} \para \linvar{x}.\some_{\emptyset} ; \linvar{x}. \overline{\none}) \vdash  \linvar{x}: (\with \onef) \ampy (\oplus \with A) ; \piencodf{\Theta}$}
                    \LeftLabel{\redlab{T\oplus^x_{\widetilde{w}}}}
                    \UnaryInfC{$\linvar{x}.\some_{\emptyset} ; \linvar{x}(y_n). ( y_n.\overline{\some};y_n . \overline{\close} \para \linvar{x}.\some_{\emptyset} ; \linvar{x}. \overline{\none}) \vdash  \linvar{x}{:}\oplus ((\with \onef) \ampy (\oplus \with A)); \piencodf{\Theta}$}
                \end{prooftree}
                
                Since $A$ is arbitrary,  we can take $A=\oneb$ for $\piencodf{\omega}_{(\sigma,0)} $ and  $A= \overline{(( \with  \overline{\piencodf{ \sigma }} )  \ampy (\overline{\piencodf{\omega}_{(\sigma, i - 1)}}))}$  for $\piencodf{\omega}_{(\sigma,i)} $, in both cases, the result follows.

                \item When $C = \bag{M}  \cdot C'$ we may type bags with the $\redlab{FS{:}bag^{\ell}}$ rule. 
                
                \begin{prooftree}
                    \AxiomC{\( \Theta ; \Gamma' \wfdash M : \sigma\)}
                    \AxiomC{\( \Theta ; \Delta \wfdash C' : \sigma^k\)}
                    \LeftLabel{\redlab{FS{:}bag^{\ell}}}
                    \BinaryInfC{\( \Theta ; \Gamma', \Delta \wfdash \bag{M}  \cdot C':\sigma^{k}\)}
                \end{prooftree}
                
                Where $ \Gamma = \Gamma' , \Delta$.
                To simplify the proof, we will consider $k=3$.

                By IH we have
                \begin{align*}
                    \piencodf{M}_{x_i} & \vdash \piencodf{\Gamma'}, x_i: \piencodf{\sigma}; \piencodf{\Theta}
                    \qquad 
                    \piencodf{C'}_{\linvar{x}} & \vdash \piencodf{\Delta}, \linvar{x}: \piencodf{\sigma\wedge \sigma}_{(\tau, j)}; \piencodf{\Theta}
                \end{align*}

                By \defref{def:enc_lamrsharpifailunres},
                \begin{equation}
                \begin{aligned}
                    \piencodf{\bag{M} \cdot C'}_{\linvar{x}} =&  \linvar{x}.\some_{\llfv{\bag{M} \cdot C} } ; \linvar{x}(y_i). \linvar{x}.\some_{y_i, \llfv{\bag{M} \cdot C}};\linvar{x}.\overline{\some} ; \outact{\linvar{x}}{x_i}.
                   \\
                   & (x_i.\some_{\llfv{M}} ; \piencodf{M}_{x_i} \para \piencodf{C'}_{\linvar{x}} \para y_i. \overline{\none})
                    \end{aligned}
                \end{equation}

                Let $\Pi_1$ be the derivation:

                \begin{prooftree}
                            \AxiomC{$\piencodf{M}_{x_i} \;{ \vdash} \piencodf{\Gamma'}, x_i: \piencodf{\sigma}; \piencodf{\Theta} $}
                            \LeftLabel{\redlab{T\oplus^x_{\widetilde{w}}}}
                            \UnaryInfC{$x_i.\some_{\llfv{M}} ; \piencodf{M}_{x_i} \vdash \piencodf{\Gamma'} ,x_i: \oplus \piencodf{\sigma} ; \piencodf{\Theta}$}
                            
                            \AxiomC{}
                            \LeftLabel{\redlab{T\with^x}}
                            \UnaryInfC{$ y_i. \overline{\none} \vdash y_i :\with \onef; \piencodf{\Theta}$}
                            
                        \LeftLabel{\redlab{T\para}}
                        \BinaryInfC{$\underbrace{x_i.\some_{\llfv{M}} ; \piencodf{M}_{x_i} \para y_i. \overline{\none}}_{P_1} \vdash \piencodf{\Gamma'} ,x_i: \oplus \piencodf{\sigma}, y_i :\with \onef ; \piencodf{\Theta}$}
                \end{prooftree}
                
                Let $ P_1 = (x_i.\some_{\llfv{M}} ; \piencodf{M}_{x_i} \para y_i. \overline{\none})$, in the the derivation $\Pi_2$ below:

                \begin{prooftree}
                        \AxiomC{$ \Pi_1$} 
                        
                        \AxiomC{$ \piencodf{C'}_{\linvar{x}}  \vdash  \piencodf{\Delta}, \linvar{x}: \piencodf{\sigma\wedge \sigma}_{(\tau, j)}; \piencodf{\Theta} $}
                        
                        \LeftLabel{\redlab{T\otimes}}
                    \BinaryInfC{$ \outact{\linvar{x}}{x_i}. (P_1 \para \piencodf{C'}_{\linvar{x}}) \vdash  \piencodf{\Gamma'}  ,  \piencodf{\Delta}, y_i :\with \onef, \linvar{x}: (\oplus \piencodf{\sigma})  \otimes (\piencodf{\sigma\wedge \sigma}_{(\tau, j)}) ; \piencodf{\Theta}$}
                    \LeftLabel{\redlab{T\with_d^x}}
                    \UnaryInfC{$\underbrace{\linvar{x}.\overline{\some} ; \outact{\linvar{x}}{x_i}. (P_1 \para \piencodf{C'}_{\linvar{x}}  )}_{P_2} \vdash \piencodf{\Gamma'}  ,  \piencodf{\Delta}, y_i :\with \onef, \linvar{x}: \with (( \oplus \piencodf{\sigma} ) \otimes (\piencodf{\sigma\wedge \sigma}_{(\tau, j)})) ; \piencodf{\Theta} $}
                \end{prooftree}
                
                Let $P_2 = (\linvar{x}.\overline{\some} ; \outact{\linvar{x}}{x_i}. (P_1 \para \piencodf{A}_{\linvar{x}} ))$ in the derivation below:

                \begin{prooftree}
                    \AxiomC{$ \Pi_2$} 
                    \noLine
                    \UnaryInfC{$\vdots$}
                    \noLine
                    \UnaryInfC{$P_2\vdash \piencodf{\Gamma}, y_i :\with \onef, \linvar{x}: \with (( \oplus \piencodf{\sigma} ) \otimes (\piencodf{\sigma\wedge \sigma}_{(\tau, j)})) ; \piencodf{\Theta} $}
                    \LeftLabel{\redlab{T\oplus^x_{\widetilde{w}}}}
                    \UnaryInfC{$\linvar{x}.\some_{y_i, \llfv{\bag{M} \cdot C'}};P_2  \vdash \piencodf{\Gamma}, y_i :\with \onef, \linvar{x}:\oplus  \with (( \oplus \piencodf{\sigma} ) \otimes (\piencodf{\sigma\wedge \sigma}_{(\tau, j)})); \piencodf{\Theta}$}
                    \LeftLabel{\redlab{T\ampy}}
                    \UnaryInfC{$\linvar{x}(y_i). \linvar{x}.\some_{y_i, \llfv{\bag{M} \cdot C'}};P_2  \vdash \piencodf{\Gamma}, \linvar{x}: ( \with \onef) \ampy ( \oplus  \with (( \oplus \piencodf{\sigma} ) \otimes (\piencodf{\sigma\wedge \sigma}_{(\tau, j)}))); \piencodf{\Theta} $}
                    \LeftLabel{\redlab{T\oplus^x_{\widetilde{w}}}}
                    \UnaryInfC{$\piencodf{\bag{M}\cdot C'}_{\linvar{x}} \vdash   \piencodf{\Gamma}, \linvar{x}: \oplus(( \with \onef) \ampy ( \oplus  \with (( \oplus \piencodf{\sigma} ) \otimes (\piencodf{\sigma\wedge \sigma}_{(\tau, j)})))); \piencodf{\Theta} $}
                \end{prooftree}
                
                From Definitions~\ref{def:duality} (duality) and \ref{def:enc_sestypfailunres}, we infer:
                \begin{equation*}
                    \begin{aligned}
                         \oplus(( \with \onef) \ampy ( \oplus  \with (( \oplus \piencodf{\sigma} ) \otimes (\piencodf{\sigma\wedge \sigma}_{(\tau, j)})))) &=\piencodf{\sigma\wedge \sigma \wedge \sigma}_{(\tau, j)}
                    \end{aligned}
                \end{equation*}
                Therefore, $\piencodf{\bag{M}\cdot C'}_{\linvar{x}} \vdash \piencodf{\Gamma}, \linvar{x}: \piencodf{\sigma\wedge \sigma \wedge \sigma}_{(\tau, j)} $ and the result follows.

            \end{enumerate}
            
             \item For $ \piencodf{ U }_{x_i} \vdash x_i: \piencodf{\eta}; \piencodf{\Theta}$ we consider $U$ to be a binary concatenation of 2 components, one being an empty unrestricted bag and the other being $\banged{\bag{M}}$. Hence we take $U = \banged{\oneb} \concat \banged{\bag{M}}$ with $\eta =  \sigma_1 \concat \sigma_2 $, $\piencodf{\eta_i} = \& \{ \mathtt{l}_1 ; \piencodf{ \sigma_1} , \mathtt{l}_2 ; \piencodf{\sigma_2} \}$ by \defref{def:enc_sestypfailunres} and finally by \defref{def:enc_lamrsharpifailunres} we have $\piencodf{ U }_{x_i} =  {x_i.\mathtt{case} \{ \mathtt{l}_{1} : \piencodf{\banged{\oneb}}_{x_i} , \mathtt{l}_{2} : \piencodf{\banged{\bag{M}}}_{x_i} \}}$, $\piencodf{\banged{\oneb}}_{x_i} =  x_i.\overline{\none} $ and $\piencodf{\banged{\bag{M}}}_{x_i} =  \piencodf{M}_{x_i} $, we can conclude $ \piencodf{ U }_{x_i} =  {x_i.\mathtt{case} \{ \mathtt{l}_{1} : x_i.\overline{\none} , \mathtt{l}_{2} : \piencodf{M}_{x_i} \}} $.

             Hence we have:
             \begin{prooftree}
                    \AxiomC{\(  \)}
                    \LeftLabel{\redlab{FS{:}bag^!}}
                    \UnaryInfC{\( \Theta ;  \dash  \wfdash \banged{\oneb} : \sigma_1 \)}
        
                    \AxiomC{\( \Theta ; \cdot \wfdash M : \sigma_2\)}
                    \LeftLabel{\redlab{FS{:}bag^!}}
                    \UnaryInfC{\( \Theta ; \cdot  \wfdash \banged{\bag{M}}:\sigma_2 \)}
                \LeftLabel{\redlab{FS{:}\concat-bag^{!}}}
                \BinaryInfC{\( \Theta ; \cdot  \wfdash \banged{\oneb} \concat \banged{\bag{M}} : \sigma_1 \concat \sigma_2 \)}
            \end{prooftree}
            
            By the induction hypothesis we have that $ \Theta ; \cdot \wfdash M : \sigma $ implies $ \piencodf{M}_{x_i} \wfdash x_i : \piencodf{\sigma} ;  \piencodf{\Theta}$

             \begin{prooftree}
                    \AxiomC{}
                    \LeftLabel{\redlab{T\with^x}}
                    \UnaryInfC{$ x_i.\dual{\none} \vdash x_i:  \piencodf{ \sigma_1}  ; \piencodf{\Theta} $}
                
                    \AxiomC{$ \piencodf{M}_{x_i} \vdash  x_i:  \piencodf{\sigma_2} ; \piencodf{\Theta} $}
                \LeftLabel{\redlab{T\with}}
                \BinaryInfC{$ {x_i.\mathtt{case} \{ \mathtt{l}_{1} : x_i.\none , \mathtt{l}_{2} : \piencodf{M}_{x_i} \}} \vdash  x_i: \& \{ \mathtt{l}_1 ; \piencodf{ \sigma_1} , \mathtt{l}_2 ; \piencodf{\sigma_2} \} ; \piencodf{\Theta} $}
            \end{prooftree}
            
            Therefore, ${x_i.\mathtt{case} \{ \mathtt{l}_{1} : x_i.\none , \mathtt{l}_{2} : \piencodf{M}_{x_i} \}} \vdash  x_i: \& \{ \mathtt{l}_1 ; \piencodf{ \sigma_1} , \mathtt{l}_2 ; \piencodf{\sigma_2} \} ; \piencodf{\Theta}  $ and the result follows.

        \end{enumerate}

        \item  The proof of type preservation for expressions, relies on the analysis of twelve cases:

        \begin{enumerate}
            
            \item {\bf Rule \redlab{FS{:}var^{\ell}}:}
            Then we have the following derivation:
            
            \begin{prooftree}
                \AxiomC{}
                \LeftLabel{\redlab{FS{:}var^{\ell}}}
                \UnaryInfC{\( \Theta ;  {x}: \tau \wfdash  {x} : \tau\)}
            \end{prooftree}
            
            By \defref{def:enc_sestypfailunres},  $\piencodf{ {x}:\tau}=  {x}:\with \overline{\piencodf{\tau }}$, and by \figref{fig:encfailunres},  $\piencodf{ {x}}_u =  {x}.\overline{\some} ; [ {x} \leftrightarrow u] $. The thesis holds thanks to the following derivation:
            
                \begin{prooftree}
                    \AxiomC{}
                    \LeftLabel{$\redlab{ (Tid)}$}
                    \UnaryInfC{$ [ {x} \leftrightarrow u] \vdash  {x}:  \overline{\piencodf{\tau }}  , u :  \piencodf{ \tau } ; \piencodf{\Theta} $}
                    \LeftLabel{$\redlab{T\with^{x}_{d})}$}
                    \UnaryInfC{$  {x}.\overline{\some} ; [ {x} \leftrightarrow u] \vdash  {x}: \with  \overline{\piencodf{ \tau }} , u :  \piencodf{ \tau } ; \piencodf{\Theta} $}
                \end{prooftree}

            \item {\bf Rule \redlab{FS{:}var^!}:}
            Then we have the following derivation provided $\eta_{ind} = \tau $:
            
            \begin{prooftree}
                \AxiomC{}
                \LeftLabel{\redlab{FS{:}var^{\ell}}}
                \UnaryInfC{\( \Theta , \banged{x}: \eta;  {x}: \eta_{ind}  \wfdash  {x} : \tau\)}
                \LeftLabel{\redlab{FS{:}var^!}}
                \UnaryInfC{\( \Theta , \banged{x}: \eta; \dash \wfdash  {x}[ind] : \tau\)}
            \end{prooftree}
            
            By \defref{def:enc_sestypfailunres},  $\piencodf{\Theta , \banged{x}: \eta}= \piencodf{\Theta} , \banged{x}: \dual{ \&_{\eta_i \in \eta} \{ \mathtt{l}_i ; \piencodf{\eta_i} \} }$, and by \figref{fig:encfailunres},  $\piencodf{ {x}[ind]}_u = \outsev{\banged{x}}{{x_i}}. {x}_i.l_{ind}; [{x_i} \leftrightarrow u] $. The thesis holds thanks to the following derivation:
            
            \begin{prooftree}
                    \AxiomC{}
                    \LeftLabel{$\redlab{ (Tid)}$}
                    \UnaryInfC{$ [{x_i} \leftrightarrow u]  \vdash  u :  \piencodf{ \tau },  x_i:  \overline{\piencodf{\eta_{ind} }}  ; \banged{x}:\&_{\eta_i \in \eta} \{ \mathtt{l}_i ; \piencodf{\eta_i} \} , \piencodf{\Theta} $}
                \LeftLabel{\redlab{T\oplus_i}}
                \UnaryInfC{$  {x}_i.l_{ind}; [{x_i} \leftrightarrow u] \vdash  u :  \piencodf{ \tau }, {x}_i :  \oplus_{\eta_i \in \eta} \{ \mathtt{l}_i ; \dual{\piencodf{\eta_i}}  \} ; \banged{x}:\oplus_{\eta_i \in \eta} \{ \mathtt{l}_i ; \dual{\piencodf{\eta_i}} \} , \piencodf{\Theta}  $}
                \LeftLabel{\redlab{Tcopy}}
                \UnaryInfC{$ \outsev{\banged{x}}{{x_i}}. {x}_i.l_{ind}; [{x_i} \leftrightarrow u] \vdash  u :  \piencodf{ \tau }; \banged{x}:\oplus_{\eta_i \in \eta} \{ \mathtt{l}_i ; \dual{\piencodf{\eta_i}} \} , \piencodf{\Theta} $}
            \end{prooftree}

            \item {\bf Rule \redlab{FS\!:\!weak}:}
                        Then we have the following derivation:
            
            \begin{prooftree}
                \AxiomC{\( \Theta ; \Gamma  \wfdash M : \tau\)}
                \LeftLabel{ \redlab{FS\!:\!weak}}
                \UnaryInfC{\( \Theta ; \Gamma ,  {x}: \omega \wfdash M[\leftarrow  {x}]: \tau \)}
            \end{prooftree}
            
            By \defref{def:enc_sestypfailunres},  $\piencodf{\Gamma ,  {x}: \omega}= \piencodf{\Gamma}, \linvar{x}: \overline{\piencodf{\omega }_{(\sigma, i_1)}}$, and by \figref{fig:encfailunres}, 
            
            $\piencodf{M[  \leftarrow  {x}]}_u = \linvar{x}. \overline{\some}. \outact{\linvar{x}}{y_i} . ( y_i . \some_{u,\llfv{M}} ;y_{i}.\close; \piencodf{M}_u \para \linvar{x}. \overline{\none}) $. 
            
             By IH, we have $\piencodf{M}_u\vdash  \piencodf{\Gamma }, u:\piencodf{\tau} ; \piencodf{\Theta }$. 
            The thesis holds thanks to the following derivation:
                        \begin{prooftree}
                    \AxiomC{$\piencodf{M}_u\vdash  \piencodf{\Gamma }, u:\piencodf{\tau} ; \piencodf{\Theta} $}
                    \LeftLabel{\redlab{T\bot}}
                    \UnaryInfC{$y_{i}.\close; \piencodf{M}_u \vdash y_i{:}\bot, \piencodf{\Gamma }, u:\piencodf{\tau} ; \piencodf{\Theta}$}
                    \LeftLabel{\redlab{T\oplus^x_{\widetilde{w}}}}
                    \UnaryInfC{$y_i . \some_{u,\llfv{M}} ;y_{i}.\close; \piencodf{M}_u \vdash  y_i{:}\oplus \bot , \piencodf{\Gamma }, u:\piencodf{\tau} ; \piencodf{\Theta}$}
                    
                    \AxiomC{}
                    \LeftLabel{\redlab{T\with^x}}
                    \UnaryInfC{$\linvar{x}. \overline{\none} \vdash \linvar{x} :\with A$}
                \LeftLabel{\redlab{T\otimes}}
                \BinaryInfC{$\outact{\linvar{x}}{y_i} . ( y_i . \some_{u,\llfv{M}} ;y_{i}.\close; \piencodf{M}_u \para \linvar{x}. \overline{\none}) \vdash  \linvar{x}: (\oplus \bot) \otimes (\with A) , \piencodf{\Gamma }, u:\piencodf{\tau} ; \piencodf{\Theta}$}
                \LeftLabel{\redlab{T\with_d^x}}
                \UnaryInfC{$\piencodf{M[  \leftarrow  {x}]}_u \vdash  \linvar{x} :\with ((\oplus \bot) \otimes (\with A))  , \piencodf{\Gamma }, u:\piencodf{\tau} ; \piencodf{\Theta} $}
             \end{prooftree}
            
             Since $A$ is arbitrary,  we can take $A=\oneb$ for $\piencodf{\omega}_{(\sigma,0)} $ and  $A= \overline{(( \with  \overline{\piencodf{ \sigma }} )  \ampy (\overline{\piencodf{\omega}_{(\sigma, i - 1)}}))}$  for $\piencodf{\omega}_{(\sigma,i)} $ where $i > 0$, in both cases, the result follows.

            \item {\bf Rule $\redlab{FS:abs \dash sh}$:}
        
            Then $\expr{M} = \lambda x . (M[ {\widetilde{x}} \leftarrow  {x}])$, and the derivation is:
            
            \begin{prooftree}
                \AxiomC{\( \Theta , \banged{x}:\eta ; \Gamma ,  {x}: \sigma^k \wfdash M[ {\widetilde{x}} \leftarrow  {x}] : \tau \quad  {x} \notin \dom{\Gamma} \)}
                \LeftLabel{\redlab{FS{:}abs\dash sh}}
                \UnaryInfC{\( \Theta ; \Gamma \wfdash \lambda x . (M[ {\widetilde{x}} \leftarrow  {x}])  : (\sigma^k, \eta )  \rightarrow \tau \)}
            \end{prooftree}
    
            By IH, we have $\piencodf{M[ {\widetilde{x}} \leftarrow  {x}]}_u \vdash  u:\piencodf{\tau} , \piencodf{\Gamma} , \linvar{x}: \overline{\piencodf{\sigma^k}_{(\sigma, i)}}  ; \piencodf{\Theta} , \banged{x}: \overline{\piencodf{\eta}} $, 
            From 
            \defref{def:enc_lamrsharpifailunres}, it follows 
            $ \piencodf{\lambda x.M[ {\widetilde{x}} \leftarrow x]}_u = u.\overline{\some}; u(x). x.\overline{\some}; x(\linvar{x}). x(\banged{x}). x. \close ; \piencodf{M[ {\widetilde{x}} \leftarrow  {x}]}_u $
            
            We give the final derivation in parts. The first part we name $\Pi_1$ derived by:
            \begin{prooftree}
                \AxiomC{$\piencodf{M[ {\widetilde{x}} \leftarrow  {x}]}_u \vdash  u:\piencodf{\tau} , \piencodf{\Gamma} , \linvar{x}: \overline{\piencodf{\sigma^k}_{(\sigma, i)}}  ; \piencodf{\Theta} , \banged{x}: \overline{\piencodf{\eta}}$}
                \LeftLabel{\redlab{T\bot}}
                \UnaryInfC{$ x. \close ; \piencodf{M[ {\widetilde{x}} \leftarrow  {x}]}_u \vdash x{:}\bot, u:\piencodf{\tau} , \piencodf{\Gamma} , \linvar{x}: \overline{\piencodf{\sigma^k}_{(\sigma, i)}}  ; \piencodf{\Theta} , \banged{x}: \overline{\piencodf{\eta}}$}
                \LeftLabel{\redlab{T?}}
                \UnaryInfC{$x. \close ; \piencodf{M[ {\widetilde{x}} \leftarrow  {x}]}_u \vdash x{:}\bot, u:\piencodf{\tau} , \piencodf{\Gamma} , \linvar{x}: \overline{\piencodf{\sigma^k}_{(\sigma, i)}} , \banged{x}: ? \overline{\piencodf{\eta}} ; \piencodf{\Theta} $}
                \LeftLabel{\redlab{T\ampy}}
                \UnaryInfC{$x(\banged{x}). x. \close ; \piencodf{M[ {\widetilde{x}} \leftarrow  {x}]}_u \vdash x: (? \overline{\piencodf{\eta}}) \ampy (\bot) , u:\piencodf{\tau} , \piencodf{\Gamma} , \linvar{x}: \overline{\piencodf{\sigma^k}_{(\sigma, i)}} ; \piencodf{\Theta} $}
                \LeftLabel{\redlab{T\ampy}}
                \UnaryInfC{$  x(\linvar{x}). x(\banged{x}). x. \close ; \piencodf{M[ {\widetilde{x}} \leftarrow  {x}]}_u \vdash  x: \overline{\piencodf{\sigma^k}_{(\sigma, i)}} \ampy ((? \overline{\piencodf{\eta}}) \ampy (\bot)) , u:\piencodf{\tau} , \piencodf{\Gamma} ; \piencodf{\Theta} $}
            \end{prooftree}
            We take $ P = x(\linvar{x}). x(\banged{x}). x. \close ; \piencodf{M[ {\widetilde{x}} \leftarrow  {x}]}_u $ and continue the derivation:

            \begin{prooftree}
                \AxiomC{$ \Pi_1 $}
                \noLine
                \UnaryInfC{$ \vdots $}
                \noLine
                \UnaryInfC{$  P \vdash  x: \overline{\piencodf{\sigma^k}_{(\sigma, i)}} \ampy ((? \overline{\piencodf{\eta}}) \ampy (\bot)) , u:\piencodf{\tau} , \piencodf{\Gamma} ; \piencodf{\Theta} $}
                \LeftLabel{\redlab{T\with_d^x}}
                \UnaryInfC{$ x.\overline{\some}; P \vdash x :\with(  \overline{\piencodf{\sigma^k}_{(\sigma, i)}} \ampy ((? \overline{\piencodf{\eta}}) \ampy (\bot))) , u:\piencodf{\tau} , \piencodf{\Gamma} ; \piencodf{\Theta} $}
                \LeftLabel{\redlab{T\ampy}}
                \UnaryInfC{$u(x). x.\overline{\some}; P \vdash u: \with(  \overline{\piencodf{\sigma^k}_{(\sigma, i)}} \ampy ((? \overline{\piencodf{\eta}}) \ampy (\bot))) \ampy \piencodf{\tau} , \piencodf{\Gamma} ; \piencodf{\Theta}  $}
                \LeftLabel{\redlab{T\with_d^x}}
                \UnaryInfC{$u.\overline{\some}; u(x). x.\overline{\some}; P \vdash u :\with (\with(  \overline{\piencodf{\sigma^k}_{(\sigma, i)}} \ampy ((? \overline{\piencodf{\eta}}) \ampy (\bot))) \ampy \piencodf{\tau}) , \piencodf{\Gamma} ; \piencodf{\Theta} $}
            \end{prooftree}
            
            By Definition \ref{def:enc_sestypfailunres} we have that $ \piencodf{(\sigma^{k} , \eta )   \rightarrow \tau} = \with (\with(  \overline{\piencodf{\sigma^k}_{(\sigma, i)}} \ampy ((? \overline{\piencodf{\eta}}) \ampy (\bot))) \ampy \piencodf{\tau}) $. Hence the case holds by $ \piencodf{\lambda x.M[\linvar{\widetilde{x}} \leftarrow x]}_u \vdash u:\piencodf{(\sigma^{k} , \eta )   \rightarrow \tau} , \piencodf{\Gamma} ; \piencodf{\Theta} $.

            \item {\bf Rule $\redlab{FS:app}$:}
            Then $\expr{M} = M\ B$, where $ B = C \bagsep U $ and the derivation is:
            
            \begin{prooftree}
                \AxiomC{\( \Theta ;\Gamma \wfdash M : (\sigma^{j} , \eta ) \rightarrow \tau \)}
                 \AxiomC{\(  \Theta ;\Delta \wfdash B : (\sigma^{k} , \epsilon )  \)}
                 \AxiomC{\( \eta \relunbag \epsilon \)}
                \LeftLabel{\redlab{FS{:}app}}
                \TrinaryInfC{\( \Theta ; \Gamma, \Delta \wfdash M\ B : \tau\)}
            \end{prooftree}

        
            By IH, we have both 
            
            \begin{itemize}
                \item  $\piencodf{M}_u\vdash \piencodf{\Gamma}, u:\piencodf{(\sigma^{j} , \eta ) \rightarrow \tau} ; \piencodf{\Theta} $
                \item $\piencodf{M}_u\vdash \piencodf{\Gamma}, u:\piencodf{(\sigma^{j} , \epsilon ) \rightarrow \tau} ; \piencodf{\Theta}$,   by Lemma \ref{lem:relunbag-typeunres}
                \item $\piencodf{B}_u\vdash \piencodf{\Delta}, u:\overline{\piencodf{(\sigma^{k} , \epsilon ) }_{(\tau_2, n)}}  ; \piencodf{\Theta} $, for some $\tau_2$ and some $n$.
            \end{itemize}
            
            Therefore, from the fact that $\expr{M}$ is well-formed and Definitions~\ref{def:enc_lamrsharpifailunres} and \ref{def:enc_sestypfailunres}, we  have:
            
            \begin{itemize}
                \item $\displaystyle{\piencodf{M (C \bagsep U)}_u = \bigoplus_{C_i \in \perm{C}} (\nu v)(\piencodf{M}_v \para v.\some_{u , \llfv{C}} ; \outact{v}{x} . ([v \leftrightarrow u] \para \piencodf{C_i \bagsep U}_x ) )} $;
                \item $\piencodf{(\sigma^{j} , \eta ) \rightarrow \tau}= \oplus( (\piencodf{\sigma^{k} }_{(\tau_1, m)}) \otimes ((!\piencodf{\eta})  $, for some $\tau_1$ and some $m$.
            \end{itemize}
            
            Also, since $\piencodf{B}_u\vdash \piencodf{\Delta}, u:\piencodf{(\sigma^{k} , \epsilon )}_{(\tau_2, n)}$, we have the following derivation $\Pi_i$:
         
          \begin{prooftree}
                    \AxiomC{$\piencodf{C_i \bagsep U}_x\vdash \piencodf{\Delta}, x:\piencodf{(\sigma^{k} , \epsilon )}_{(\tau_2, n)}  ; \piencodf{\Theta} $ }
                        
                    \AxiomC{\(\)}                                   
                    \LeftLabel{$ \redlab{Tid}$}
                    \UnaryInfC{$ [v \leftrightarrow u]                   \vdash v:  \overline{\piencodf{ \tau }} , u: \piencodf{ \tau }$}
                \LeftLabel{$\redlab{T \otimes}$}
                \BinaryInfC{$\outact{v}{x} . ([v \leftrightarrow u] \para \piencodf{C_i \bagsep U}_x ) \vdash \piencodf{ \Delta }, v:\piencodf{(\sigma^{k} , \epsilon )}_{(\tau_2, n)} \otimes \overline{ \piencodf{ \tau }} , u:\piencodf{ \tau }  ; \piencodf{\Theta} $}
                \LeftLabel{$\redlab{T\oplus^{v}_{w}}$}
                \UnaryInfC{$  v.\some_{u , \llfv{C}} ; \outact{v}{x} . ([v \leftrightarrow u] \para \piencodf{C_i \bagsep U}_x )  \vdash\piencodf{ \Delta }, v:\oplus (\piencodf{(\sigma^{k} , \epsilon )}_{(\tau_2, n)} \otimes  \overline{\piencodf{ \tau }}), u:\piencodf{ \tau } ; \piencodf{\Theta} $} 
            \end{prooftree}

            Notice that 
          \(
              \oplus (\piencodf{(\sigma^{k} , \epsilon )}_{(\tau_2, n)} \otimes  \overline{\piencodf{ \tau }}) =  \overline{\piencodf{(\sigma^{k} , \epsilon ) \rightarrow \tau}}.
            \)
            Therefore, by one application of $\redlab{Tcut}$ we obtain the derivations $\nabla_i$, for each $C_i \in \perm{C}$:
    
            \begin{prooftree}
            \AxiomC{\( \piencodf{M}_v\vdash \piencodf{\Gamma}, v:\piencodf{(\sigma^{j} , \epsilon ) \rightarrow \tau} ; \piencodf{\Theta} \)}
            \AxiomC{$\Pi_i$}
            \LeftLabel{\( \redlab{Tcut} \)}    
            \BinaryInfC{$ (\nu v)( \piencodf{ M}_v \para v.\some_{u , \llfv{C}} ; \outact{v}{x} . ([v \leftrightarrow u] \para \piencodf{B_i}_x ) ) \vdash \piencodf{ \Gamma } ,\piencodf{ \Delta } , u: \piencodf{ \tau }  ; \piencodf{\Theta} $}
            \end{prooftree}
            
            In order to apply \redlab{Tcut}, we must have that $\piencodf{\sigma^{j}}_{(\tau_1, m)} = \piencodf{\sigma^{k}}_{(\tau_2, n)}$, therefore, the choice of $\tau_1,\tau_2,n$ and $m$, will consider the different possibilities for $j$ and $k$, as in Proposition~\ref{prop:app_auxunres}.
            We can then conclude that $\piencodf{M B}_u \vdash \piencodf{ \Gamma}, \piencodf{ \Delta }, u:\piencodf{ \tau } ; \piencodf{\Theta} $:
            
            \begin{prooftree}
            \AxiomC{For each $C_i \in \perm{C} \qquad  \nabla_i$}
            \LeftLabel{$\redlab{T\with}$}
            \UnaryInfC{$\displaystyle{\bigoplus_{C_i \in \perm{C}} (\nu v)( \piencodf{ M}_v \para v.\some_{u , \lfv{B}} ; \outact{v}{x} . ([v \leftrightarrow u] \para \piencodf{B_i}_x ) ) \vdash \piencodf{ \Gamma } ,\piencodf{ \Delta } , u: \piencodf{ \tau }} ; \piencodf{\Theta} $}
            \end{prooftree}
            
            and the result follows.

            \item {\bf Rule $\redlab{FS:share}$:}
            Then $\expr{M} = M [  {x}_1, \dots  {x}_k \leftarrow x ]$ and the derivation is:
            \begin{prooftree}
                \AxiomC{\( \Theta ; \Delta ,  {x}_1: \sigma, \cdots,  {x}_k: \sigma \wfdash M : \tau \quad  {x} \notin \Delta \quad k \not = 0\)}
                \LeftLabel{ \redlab{FS:share}}
                \UnaryInfC{\( \Theta ; \Delta ,  {x}: \sigma_{k} \wfdash M[ {x}_1 , \cdots ,  {x}_k \leftarrow  {x}] : \tau \)}
            \end{prooftree}
    
            To simplify the proof we will consider $k=1$ (the case in which $k>1$ follows similarly). 
    
            By IH, we have $\piencodf{M}_u\vdash  \piencodf{\Delta ,  {x}_1:\sigma }, u:\piencodf{\tau} ; \piencodf{\Theta}$. 
            From 
            Definitions~\ref{def:enc_lamrsharpifailunres} and \ref{def:enc_sestypfailunres}, it follows 
            
            \begin{itemize}
             \item $\piencodf{ \Delta ,  {x}_1: \sigma} = \piencodf{\Delta}, \linvar{x}_1:\with\overline{\piencodf{\sigma}} $.
            \item 
            $
            \piencodf{M[ {x}_1, \leftarrow  {x}]}_u =
               \begin{array}[t]{l}
                  \linvar{x}.\overline{\some}. \outact{\linvar{x}}{y_1}. (y_1 . \some_{\emptyset} ;y_{1}.\close;0
                    \para \linvar{x}.\overline{\some};\linvar{x}.\some_{u, (\llfv{M} \setminus  {x}_1 )};
                  \\
                  \linvar{x}( {x}_1) .  \linvar{x}. \overline{\some}. \outact{\linvar{x}}{y_2} . ( y_2 . \some_{u,\llfv{M}} ;y_{2}.\close; \piencodf{M}_u \para \linvar{x}. \overline{\none}) )
               \end{array}
                $
           
            \end{itemize}

            We shall split the expression into two parts:
            \[
            \begin{aligned}
                N_1 &= \linvar{x}. \overline{\some}. \outact{\linvar{x}}{y_2} . ( y_2 . \some_{u,\llfv{M}} ;y_{2}.\close; \piencodf{M}_u \para \linvar{x}. \overline{\none}) \\
                N_2 &= \linvar{x}.\overline{\some}. \outact{\linvar{x}}{y_1}. (y_1 . \some_{\emptyset} ;y_{1}.\close;0 \para \linvar{x}.\overline{\some};\linvar{x}.\some_{u, (\llfv{M} \setminus  {x}_1 )};\linvar{x}( {x}_1) .N_1)
            \end{aligned}
            \]

            and we obtain the  derivation for term $N_1$ as follows where we omit $; \piencodf{\Theta}$:
            \begin{prooftree}
                    \AxiomC{$\piencodf{M}_u \vdash  \piencodf{\Delta ,  {x}_1:\sigma }, u:\piencodf{\tau} $}
                    \LeftLabel{\redlab{T\bot}}
                    \UnaryInfC{$y_{2}.\close; \piencodf{M}_u \vdash \piencodf{\Delta ,  {x}_1:\sigma }, u:\piencodf{\tau}, y_{2}{:}\bot $}
                    \LeftLabel{ \redlab{ T\oplus^x_{ \widetilde{w}}}}
                    \UnaryInfC{$ y_2 . \some_{u,\llfv{M}} ;y_{2}.\close; \piencodf{M}_u \vdash \piencodf{\Delta ,  {x}_1:\sigma }, u:\piencodf{\tau}, y_{2}{:}\oplus \bot $}
                    
                    \AxiomC{}
                    \LeftLabel{\redlab{T\with^x}}
                    \UnaryInfC{$\linvar{x}.\dual{\none} \vdash \linvar{x} :\with A $}
                \LeftLabel{\redlab{T\otimes}}
                \BinaryInfC{$  \outact{\linvar{x}}{y_2} . ( y_2 . \some_{u,\llfv{M}} ;y_{2}.\close; \piencodf{M}_u \para \linvar{x}. \overline{\none}) \vdash \piencodf{\Delta ,  {x}_1:\sigma }, u:\piencodf{\tau} , \linvar{x}: ( \oplus \bot )\otimes ( \with A ) $}
                \LeftLabel{\redlab{T\with_d^x}}
                \UnaryInfC{$\underbrace{ \linvar{x}. \overline{\some}. \outact{\linvar{x}}{y_2} . ( y_2 . \some_{u,\llfv{M}} ;y_{2}.\close; \piencodf{M}_u \para \linvar{x}. \overline{\none}) }_{N_1} \vdash \piencodf{\Delta ,  {x}_1:\sigma } , u:\piencodf{\tau} , \linvar{x}: \overline{\piencodf{\omega}_{(\sigma, i)}} $}
            \end{prooftree}
            
            Notice that the last rule applied \redlab{T\with_d^x} assigns $x: \with ((\oplus \bot) \otimes (\with A))$. Again, since $A$ is arbitrary, we can take $A= \oplus (( \with  \overline{\piencodf{ \sigma }} )  \ampy (\overline{\piencodf{\omega}_{(\sigma, i - 1)}}))$, obtaining $x:\overline{\piencodf{\omega}_{(\sigma,i)}}$.
            
            In order to obtain a type derivation for $N_2$, consider the derivation $\Pi_1$:
            \begin{prooftree}
                    \AxiomC{$N_1 \vdash \piencodf{\Delta},  {x}_1:\with\overline{\piencodf{\sigma}} , u:\piencodf{\tau}, \linvar{x}: \overline{\piencodf{\omega}_{(\sigma, i)}} $}
                    \LeftLabel{\redlab{T\ampy}}
                    \UnaryInfC{$\linvar{x}( {x}_1) .N_1   \vdash \piencodf{\Delta} ,  u:\piencodf{\tau}, \linvar{x}: ( \with\overline{\piencodf{\sigma}} ) \ampy (\overline{\piencodf{\omega}_{(\sigma, i)}}) $}
                    \LeftLabel{\redlab {T\oplus^x_{ \widetilde{w} }}}
                    \UnaryInfC{$ \linvar{x}.\some_{u, (\llfv{M} \setminus  {x}_1 )}; \linvar{x}( {x}_1) .N_1  \vdash \piencodf{\Delta} ,  u:\piencodf{\tau}, \linvar{x} {:}\oplus (( \with \overline{\piencodf{\sigma}} ) \ampy (\overline{\piencodf{\omega}_{(\sigma, i)}}))$}
                    \LeftLabel{\redlab{T\with_d^x}}
                    \UnaryInfC{$ \linvar{x}.\overline{\some};\linvar{x}.\some_{u, (\llfv{M} \setminus  {x}_1 )};\linvar{x}( {x}_1) .N_1  \vdash \piencodf{\Delta},  u:\piencodf{\tau} , \linvar{x} :\with \oplus (( \with\overline{\piencodf{\sigma}} ) \ampy ( \overline{\piencodf{\omega}_{(\sigma, i)}} ))$}
            \end{prooftree}
            
            We take $ P_1 = \linvar{x}.\overline{\some};\linvar{x}.\some_{u, (\llfv{M} \setminus  {x}_1 )};\linvar{x}( {x}_1) .N_1 $ and $\Gamma_1 =   \piencodf{\Delta},  u:\piencodf{\tau} $ and continue the derivation of $ N_2 $

            \begin{prooftree}
                    \AxiomC{}
                    \LeftLabel{\redlab{T\cdot}}
                    \UnaryInfC{$0\vdash \dash ; \piencodf{\Theta} $}
                    \LeftLabel{\redlab{T\bot}}
                    \UnaryInfC{$ y_{1}.\close;0 \vdash y_{1} : \bot ; \piencodf{\Theta} $}
                    \LeftLabel{\redlab{T\oplus^x_{\widetilde{w}}}}
                    \UnaryInfC{$ y_1 . \some_{\emptyset} ;y_{1}.\close;0 \vdash  y_1{:}\oplus \bot ; \piencodf{\Theta} $}
                    
                    \AxiomC{$\Pi_1 $}
                    \noLine
                    \UnaryInfC{$\vdots$}
                    \noLine
                    \UnaryInfC{$P_1\vdash \Gamma_1, \linvar{x} :\with \oplus (( \with\overline{ \piencodf{\sigma}} ) \ampy ( \overline{\piencodf{\omega}_{(\sigma, i)}} )) ; \piencodf{\Theta} $}
                \LeftLabel{\redlab{T\otimes}}
                \BinaryInfC{$\outact{\linvar{x}}{y_1}. (y_1 . \some_{\emptyset} ;y_{1}.\close;0 \para P_1) \vdash \Gamma_1 ,  \linvar{x} : (\oplus \bot)\otimes (\with \oplus (( \with\overline{\piencodf{\sigma}} ) \ampy ( \overline{\piencodf{\omega}_{(\sigma, i)}} )) ) ; \piencodf{\Theta} $}
                \LeftLabel{\redlab{T\with_d^x}}
                \UnaryInfC{$\underbrace{\linvar{x}.\overline{\some}. \outact{\linvar{x}}{y_1}. (y_1 . \some_{\emptyset} ;y_{1}.\close;0 \para P_1)}_{N_2} \vdash \Gamma_1 , \linvar{x} : \overline{ \piencodf{\sigma \wedge \omega}_{(\sigma, i)}}  ; \piencodf{\Theta}$}
            \end{prooftree}
            
            Hence the theorem holds for this case.

            \item {\bf Rule $\redlab{FS:ex \dash sub}$:}
            Then $\expr{M} = (M[ {\widetilde{x}} \leftarrow  {x}])\esubst{ B }{ x }$ and
            
            \begin{prooftree}
                    \AxiomC{\( \Theta , \banged{x} : \eta ; \Gamma ,  {x}: \sigma^{j} \wfdash M[ {\widetilde{x}} \leftarrow  {x}] : \tau  \)}
                    \AxiomC{\( \Theta ; \Delta \wfdash B : (\sigma^{k} , \epsilon ) \)}
                    \AxiomC{\( \eta \relunbag \epsilon \)}
                \LeftLabel{\redlab{FS{:}ex \dash sub}}    
                \TrinaryInfC{\( \Theta ; \Gamma, \Delta \wfdash (M[ {\widetilde{x}} \leftarrow  {x}])\esubst{ B }{ x }  : \tau \)}
            \end{prooftree}
        
            By Proposition~\ref{prop:app_auxunres} and IH we have:
            $$
            \begin{array}{rl} 
            \piencodf{ M[x_1, \cdots , x_k \leftarrow x]}_u&\vdash \piencodf{\Gamma}, \linvar{x}: \overline{ \piencodf{ \sigma^j }_{(\tau, n)}} , u:\piencodf{\tau} ; \piencodf{\Theta} , \banged{x} : \dual{\piencodf{\eta}} \\
            \piencodf{ M[x_1, \cdots , x_k \leftarrow x]}_u&\vdash \piencodf{\Gamma}, \linvar{x}: \overline{ \piencodf{ \sigma^j }_{(\tau, n)}} , u:\piencodf{\tau} ; \piencodf{\Theta} , \banged{x} : \dual{\piencodf{\epsilon}},\text{ by Lemma \ref{lem:relunbag-typeunres} }\\
            \piencodf{B}_x&\vdash \piencodf{\Delta}, x:\piencodf{ (\sigma^{k} , \epsilon ) }_{(\tau, m)}  ; \piencodf{\Theta}
            \end{array}
            $$
        
            From \defref{def:enc_lamrsharpifailunres}, we have 
            \begin{equation*}
               \piencodf{ M[ {\widetilde{x}} \leftarrow  {x}]\ \esubst{ B }{ x }}_u = \bigoplus_{C_i \in \perm{C}} (\nu x)( x.\overline{\some}; x(\linvar{x}). x(\banged{x}). x. \close ;\piencodf{ M[ {\widetilde{x}} \leftarrow  {x}]}_u \para \piencodf{ C_i \bagsep U}_x )  
            \end{equation*}

            Therefore, for each $B_i \in \perm{B} $, we obtain the following derivation $\Pi_i$:
            \begin{prooftree}
                \AxiomC{$ \piencodf{ M[ {\widetilde{x}} \leftarrow  {x}]}_u \vdash \piencodf{\Gamma}, \linvar{x}: \overline{ \piencodf{ \sigma^j }_{(\tau, n)}} , u:\piencodf{\tau} ; \piencodf{\Theta} , \banged{x} : \dual{\piencodf{\epsilon}} $}
                \LeftLabel{\redlab{T\bot}}
                \UnaryInfC{$ x.\close ;\piencodf{ M[ {\widetilde{x}} \leftarrow  {x}]}_u \vdash x{:}\bot, \piencodf{\Gamma}, \linvar{x}: \overline{ \piencodf{ \sigma^j }_{(\tau, n)}} , u:\piencodf{\tau} ; \piencodf{\Theta} , \banged{x} : \dual{\piencodf{\epsilon}}$}
                \LeftLabel{\redlab{T?}}
                \UnaryInfC{$ x.\close ;\piencodf{ M[ {\widetilde{x}} \leftarrow  {x}]}_u \vdash x{:}\bot, \piencodf{\Gamma}, \linvar{x}: \overline{ \piencodf{ \sigma^j }_{(\tau, n)}} , u:\piencodf{\tau} , \banged{x} :! \dual{\piencodf{\epsilon}}; \piencodf{\Theta} $}
                \LeftLabel{\redlab{T\ampy}}
                \UnaryInfC{$x(\banged{x}). x.\close ;\piencodf{ M[ {\widetilde{x}} \leftarrow  {x}]}_u \vdash x:  ( ! \dual{\piencodf{\epsilon}} ) \ampy \bot , \piencodf{\Gamma}, \linvar{x}: \overline{ \piencodf{ \sigma^j }_{(\tau, n)}} , u:\piencodf{\tau} ; \piencodf{\Theta} $}
                \LeftLabel{\redlab{T\ampy}}
                \UnaryInfC{$x(\linvar{x}). x(\banged{x}). x. \close ;\piencodf{ M[ {\widetilde{x}} \leftarrow  {x}]}_u \vdash x: \overline{ \piencodf{ \sigma^j }_{(\tau, n)}} \ampy ( ( ! \dual{\piencodf{\epsilon}} ) \ampy \bot) , \piencodf{\Gamma}, u:\piencodf{\tau} ; \piencodf{\Theta} $}
                \LeftLabel{\redlab{T\with_d^x}}
                \UnaryInfC{$x.\overline{\some}; x(\linvar{x}). x(\banged{x}). x. \close ;\piencodf{ M[ {\widetilde{x}} \leftarrow  {x}]}_u \vdash x : \overline{\piencodf{ (\sigma^{j} , \epsilon  )  }_{(\tau, n)}}  , \piencodf{\Gamma}, u:\piencodf{\tau} ; \piencodf{\Theta}  $}
             \end{prooftree}
            
            We take $ P_1 = x.\overline{\some}; x(\linvar{x}). x(\banged{x}). x. \close ;\piencodf{ M[ {\widetilde{x}} \leftarrow  {x}]}_u$ and continue the derivation of $ \Pi_i $
            \begin{prooftree}
                \AxiomC{$ P_1 \vdash  x : \overline{\piencodf{ (\sigma^{j} , \epsilon  )  }_{(\tau, n)}}  , \piencodf{\Gamma}, u:\piencodf{\tau} ; \piencodf{\Theta}  $}
                \AxiomC{$  \piencodf{ C_i \bagsep U}_x \vdash \piencodf{\Delta}, x:\piencodf{ (\sigma^{k} , \epsilon ) }_{(\tau, m)}  ; \piencodf{\Theta} $}
            \LeftLabel{$\redlab{Tcut}$}
            \BinaryInfC{$ (\nu x)( P_1 \para \piencodf{ C_i \bagsep U}_x )  \vdash \piencodf{ \Gamma} , \piencodf{ \Delta } , u: \piencodf{ \tau }  ; \piencodf{\Theta}  $}               
            \end{prooftree}
            
            We must have that $\piencodf{\sigma^{j}}_{(\tau, m)} = \piencodf{\sigma^{k}}_{(\tau, n)}$ which by our restrictions allows.
            Therefore, from $\Pi_i$ and multiple applications of $\redlab{T\with}$ it follows that
            
            \begin{prooftree}
                        \AxiomC{$\forall \bigoplus_{C_i \in \perm{C}} \hspace{1cm} \Pi_i$}
                        \LeftLabel{$\redlab{T\with}$}
            \UnaryInfC{$ \bigoplus_{C_i \in \perm{C}}  (\nu x)( P_1 \para \piencodf{ C_i \bagsep U}_x )  \vdash \piencodf{ \Gamma} , \piencodf{ \Delta } , u: \piencodf{ \tau }  ; \piencodf{\Theta} $}
            \end{prooftree}
            that is, $\piencodf{M[  {x}_1 \leftarrow  {x}]\ \esubst{ B }{ x }}\vdash \piencodf{\Gamma, \Delta}, u:\piencodf{\tau}  ; \piencodf{\Theta}$ and the result follows.

            \item {\bf Rule $\redlab{FS{:}ex \dash sub^{\ell}}$: } 
            Then $\expr{M} =  M \linexsub{N /  {x}}$ and
            \begin{prooftree}
                \AxiomC{\( \Theta ; \Gamma  ,  {x}:\sigma \wfdash M : \tau \quad  \Theta ; \Delta \wfdash N : \sigma \)}
                    \LeftLabel{\redlab{FS{:}ex \dash sub^{\ell}}}
                \UnaryInfC{\( \Theta ; \Gamma, \Delta \wfdash M \linexsub{N /  {x}} : \tau \)}
            \end{prooftree}
        
            By IH we have both 
            $$\piencodf{N}_{ {x}}\vdash \piencodf{\Delta},  {x}: \piencodf{\sigma} ; \piencodf{\Theta}$$ 
            $$ \piencodf{M}_u\vdash \piencodf{\Gamma},  {x}: \with\overline{\piencodf{\sigma}}, u:\piencodf{\tau} ; \piencodf{\Theta}$$
        
            From Definition \ref{def:enc_lamrsharpifailunres}, $\piencodf{M \linexsub{N /  {x}} }_u= (\nu  {x}) ( \piencodf{ M }_u \para    {x}.\some_{\llfv{N}};\piencodf{ N }_{ {x}} ) $ and 
            \begin{prooftree}
                    \AxiomC{\( \piencodf{ M }_u \vdash   \piencodf{\Gamma},  {x}: \with\overline{\piencodf{\sigma}}, u:\piencodf{\tau} ; \piencodf{\Theta}\)}
                    
                    \AxiomC{\( \piencodf{ N }_x  \vdash  \piencodf{\Delta},  {x}: \piencodf{\sigma} ; \piencodf{\Theta} \)}
                    \LeftLabel{$\redlab{T\oplus^x}$}
                    \UnaryInfC{\(  {x}.\some_{\llfv{N}};\piencodf{ N }_x \vdash  \piencodf{ \Delta } ,  {x}: : \oplus \piencodf{\sigma}\)}
                \LeftLabel{$\redlab{TCut}$}
                \BinaryInfC{\(  (\nu  {x}) ( \piencodf{ M }_u \para    {x}.\some_{\llfv{N}};\piencodf{ N }_{ {x}} ) \vdash  \piencodf{ \Gamma} , \piencodf{ \Delta } , u : \piencodf{ \tau }  \)}
            \end{prooftree}
    
            Observe that for the application of rule $\redlab{TCut}$ we used the fact that $\overline{\oplus\piencodf{\sigma}}=\with \overline{\piencodf{\sigma}}$. Therefore, $\piencodf{M \linexsub{N /  {x}} }_u\vdash \piencodf{ \Gamma} , \piencodf{ \Delta } , u : \piencodf{ \tau } $ and the result follows.

            \item {\bf Rule $\redlab{FS{:}ex \dash sub^!}$: }
            Then $\expr{M} =  M \unexsub{U / \unvar{x}}$ and
            \begin{prooftree}
                \AxiomC{\( \Theta , \banged{x} : \eta; \Gamma \wfdash M : \tau \quad  \Theta ; \dash \wfdash U : \eta \)}
                    \LeftLabel{\redlab{FS{:}ex \dash sub^!}}
                \UnaryInfC{\( \Theta ; \Gamma \wfdash M \unexsub{U / \unvar{x}}  : \tau \)}
            \end{prooftree}
        
            By IH we have both 
            $$
            \begin{array}{rl}
            \piencodf{U}_{x_i}&\vdash x_i : \piencodf{\eta}  ; \piencodf{\Theta}\\ 
             \piencodf{M}_u&\vdash \piencodf{\Gamma} , u:\piencodf{\tau} ; \banged{x}: \overline{\piencodf{\eta}} , \piencodf{\Theta}
             \end{array}
             $$
        
            From Definition \ref{def:enc_lamrsharpifailunres}, $ \piencodf{ M \unexsub{U / \unvar{x}}  }_u   =   (\nu \banged{x}) ( \piencodf{ M }_u \para   !\banged{x}. (x_i).\piencodf{ U }_{x_i} ) $ and 
            
            \begin{prooftree}
                    \AxiomC{$\piencodf{M}_u\vdash \piencodf{\Gamma} , u:\piencodf{\tau}; \banged{x}: \overline{\piencodf{\eta}} , \piencodf{\Theta}$}
                    \LeftLabel{\redlab{T?}}
                    \UnaryInfC{$\piencodf{M}_u \vdash \piencodf{\Gamma} , u:\piencodf{\tau}, \banged{x}: ? \overline{\piencodf{\eta}} ; \piencodf{\Theta}$}
                    
                    \AxiomC{$ \piencodf{ U }_{x_i} \vdash x_i : \piencodf{\eta} ; \piencodf{\Theta} $}
                    \LeftLabel{\redlab{T!}}
                    \UnaryInfC{$!\banged{x}. (x_i).\piencodf{ U }_{x_i} \vdash \banged{x}: !\piencodf{\eta} ; \piencodf{\Theta}  $}
                \LeftLabel{$\redlab{TCut}$}
                \BinaryInfC{\(   (\nu \banged{x}) ( \piencodf{ M }_u \para   !\banged{x}. (x_i).\piencodf{ U }_{x_i} ) \vdash \piencodf{\Gamma} , u:\piencodf{\tau} ; \piencodf{\Theta} \)}
            \end{prooftree}
    
            Observe that for the application of rule $\redlab{TCut}$ we used the fact that $\overline{ !\piencodf{\eta} }= ? \overline{\piencodf{\eta}} $. Therefore, $\piencodf{M \unexsub{U / \unvar{x}} }_u \vdash \piencodf{\Gamma} , u:\piencodf{\tau} ; \piencodf{\Theta} $ and the result follows.

            \item {\bf Rule $\redlab{FS:fail}$:}
            Then $\expr{M} = \fail^{\widetilde{x}}$ where $ \widetilde{x} = x_1, \cdots , x_n$ and
            
                \begin{prooftree}
                    \AxiomC{\( \dom{\Gamma} = \widetilde{x}\)}
                    \LeftLabel{\redlab{FS{:}fail}}
                    \UnaryInfC{\( \Theta ; \Gamma \wfdash  \fail^{\widetilde{x}} : \tau \)}
                \end{prooftree}
            
            From Definition \ref{def:enc_lamrsharpifailunres}, $\piencodf{\fail^{x_1, \cdots , x_n} }_u= u.\overline{\none} \para x_1.\overline{\none} \para \cdots \para x_k.\overline{\none} $ and

            \begin{prooftree}
                    \AxiomC{}
                    \LeftLabel{\redlab{T\with^u}}
                    \UnaryInfC{$u.\overline{\none} \vdash u : \piencodf{ \tau } ; \piencodf{\Theta} $}

                        \AxiomC{}
                        \LeftLabel{\redlab{T\with^{x_1}}}
                        \UnaryInfC{$x_1.\overline{\none} \vdash_1 : \with \overline{\piencodf{\sigma_1}} ; \piencodf{\Theta} $}
                        
                        \AxiomC{}
                        \LeftLabel{\redlab{T\with^{x_n}}}
                        \UnaryInfC{$x_n.\overline{\none} \vdash x_n : \with \overline{\piencodf{\sigma_n}} ; \piencodf{\Theta} $}
                        \UnaryInfC{$\vdots$}
                    \BinaryInfC{$x_1.\overline{\none} \para \cdots \para x_k.\overline{\none} \vdash  x_1 : \with \overline{\piencodf{\sigma_1}}, \cdots  ,x_n : \with \overline{\piencodf{\sigma_n}} ; \piencodf{\Theta} $}
                \LeftLabel{\redlab{T\para}}
                \BinaryInfC{$u.\overline{\none} \para x_1.\overline{\none} \para \cdots \para x_k.\overline{\none} \vdash x_1 : \with \overline{\piencodf{\sigma_1}}, \cdots  ,x_n : \with \overline{\piencodf{\sigma_n}}, u : \piencodf{ \tau } ; \piencodf{\Theta} $}
            \end{prooftree}
            
            Thus, $\piencodf{\fail^{x_1, \cdots , x_n} }_u\vdash  x_1 : \with \overline{\piencodf{\sigma_1}}, \cdots  ,x_n : \with \overline{\piencodf{\sigma_n}}, u : \piencodf{ \tau } ; \piencodf{\Theta} $ and the result follows.

            \item Rule $\redlab{FS:sum}$: 
            This case follows easily by IH.
        \end{enumerate}
    \end{enumerate}
\end{proof}

\subsection{Operational Correspondence: Completeness and Soundness}

\begin{proposition}
\label{prop:correctformfailunres}
Let  $N$ be a well-formed linearly closed $\lamrsharfailunres$-term with $\headf{N} = x$ ($x$ denoting either linear or unrestricted occurrence of $x$) such that $\llfv{N} = \emptyset$ and $N$ does not fail, that is, there is no $Q\in \lamrsharfailunres$ for which there is a  reduction $N  \red_{\redlab{RS:Fail}} Q$.
Then,
$$\piencodf{ N }_{u} \red^* \bigoplus_{i \in I}(\nu \widetilde{y})(\piencodf{ x }_{n} \para P_i) $$ for some index set $I$, names $\widetilde{y}$ and $n$, and processes $P_i$.
\end{proposition}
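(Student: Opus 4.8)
The statement asserts that the $\spi$-encoding of a non-failing, linearly closed $\lamrsharfailunres$-term whose head is a variable occurrence reduces to a non-deterministic sum of processes, each of which has the shape of the encoding of that variable running in parallel with some residual process, under a suitable restriction. My plan is to proceed by structural induction on $N$, following the shape dictated by $\headf{N}$. Since $\headf{N}=x$ (with $x$ either a linear occurrence or an unrestricted occurrence $x[j]$), the relevant cases are exactly those where the head-extraction rules of Definition~\ref{def:headfailure} (as amended for $\lamrsharfailunres$ in \appref{app:ssec:lamshar}) recurse into a subterm: namely $N=x$, $N=x[j]$, $N = M\,B$, $N = M\linexsub{L/y}$, $N = M\unexsub{U/\unvar{y}}$, $N = M[\widetilde{y}\leftarrow y]$, and $N = (M[\widetilde{y}\leftarrow y])\esubst{B}{x}$.

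\textbf{Base cases and easy inductive cases.} For $N = x$, $\piencodf{x}_u = x.\overline{\some};[x\leftrightarrow u]$ is already of the required form with $I$ a singleton, $\widetilde{y}$ empty, $n = u$, and $P_i = \mathbf{0}$ (up to the trivial decomposition). For $N = x[j]$, $\piencodf{x[j]}_u = \outsev{\unvar{x}}{x_i}.x_i.l_j;[x_i\leftrightarrow u]$; here no $\spi$-reduction is immediately enabled, but the statement still holds with the empty reduction sequence, reading $\piencodf{x[j]}_u$ itself as $\piencodf{x}_n$ for an appropriate $n$ — I will need to be slightly careful about whether the intended reading of $\piencodf{x}_n$ on the right-hand side includes the unrestricted-variable encoding; given the phrasing ``$x$ denoting either linear or unrestricted occurrence'' it does, so this is fine. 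For the cases $N = M\linexsub{L/y}$, $N = M\unexsub{U/\unvar{y}}$, $N = M[\widetilde{y}\leftarrow y]$, and $N = (M[\widetilde{y}\leftarrow y])\esubst{B}{x}$ where $y$ is not the head variable, the head of $M$ is still $x$; I apply the induction hypothesis to $M$, obtaining $\piencodf{M}_u \red^* \bigoplus_i (\nu\widetilde{z})(\piencodf{x}_n \para P_i)$, and then use the fact that $\piencodf{\cdot}_u$ places these constructs as parallel components / restrictions around $\piencodf{M}_u$ (Definition~\ref{def:enc_lamrsharpifailunres}), together with structural congruence (in particular the distribution law $(\nu x)(P \para (Q\oplus R)) \equiv (\nu x)(P\para Q)\oplus(\nu x)(P\para R)$ from \appref{appC}) to push the extra components inside the non-deterministic sum and absorb them into an enlarged $\widetilde{y}$ and $P_i$. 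For $N = M[\widetilde{y}\leftarrow y]$ with $y$ a shared variable \emph{not} equal to the head, the encoding prefixes $\piencodf{M}_u$ with input/branching actions on $\linvar{y}$; since $\headf{M}=x \neq y$, these prefixes do not block, and IH plus congruence again suffice.

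\textbf{The main case and the main obstacle.} The crux is $N = M\,B$ with $B = C\bagsep U$: here $\headf{N} = \headf{M} = x$, and $\piencodf{M B}_u = \bigoplus_{C_i\in\perm{C}}(\nu v)(\piencodf{M}_v \para v.\some_{u,\llfv{C}};\outact{v}{x}.([v\leftrightarrow u]\para \piencodf{C_i\bagsep U}_x))$. By IH, $\piencodf{M}_v \red^* \bigoplus_j(\nu\widetilde{z})(\piencodf{x}_m \para P_j)$. The subtlety is that the right-hand side already has the form we want only if the residual $v.\some_{u,\cdots};\outact{v}{x}.(\cdots)$ can be absorbed into the $P_j$'s \emph{without} forcing further reduction that changes the head. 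Here I must use the hypothesis that $N$ does not fail: this guarantees in particular that the linear-bag size matches (so no $\redlab{RS:Fail^\ell}$), and that when $\piencodf{x}_m$ is an unrestricted-variable encoding $\outsev{\unvar{x'}}{\cdots}$ the corresponding server is present, so no $x.\overline{\none}$ collapse is triggered; combined with the shape of $\piencodf{x}_m$, the outermost observable action of $\bigoplus_j(\nu\widetilde{z})(\piencodf{x}_m\para P_j)$ is still the $\some$/forwarder behaviour of $\piencodf{x}_m$, and the $v$-prefixed residual simply becomes part of $P_j$ after absorbing $v$ under a restriction. I then re-associate the restrictions and the two non-deterministic sums (over $\perm{C}$ and over $j$) into a single sum indexed by the product set, using commutativity/associativity of $\oplus$ and scope extrusion. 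The main obstacle I anticipate is precisely this bookkeeping: ensuring that no reduction step introduced by the IH or by combining components is a $\redlab{RS:Fail}$-style collapse in disguise at the process level (i.e.\ an $x.\overline{\none}$ synchronization), which is why the non-failing hypothesis must be threaded carefully through every inductive step — in particular one may need a sharpened IH recording that the $\piencodf{x}_n$ component is never the target of a pending failure signal. I would state and prove that sharpening alongside the main induction, or alternatively invoke the already-established type preservation (Theorem~\ref{t:preservationtwounres}) and subject reduction to rule out ill-typed collapses, whichever keeps the argument shortest.
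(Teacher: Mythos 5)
Your proposal matches the paper's proof: both proceed by structural induction on $N$, treating $N=x$ and $N=x[j]$ as immediate, and in the cases $N=M\,B$, $N=(M[\widetilde{y}\leftarrow y])\esubst{B}{y}$, $N=M\linexsub{L/y}$, $N=M\unexsub{U/\unvar{y}}$ reducing the encoding of the enclosing construct until $\piencodf{M}$ is exposed in parallel with residual processes, then applying the induction hypothesis and absorbing the residuals into the $P_i$ (the non-failing hypothesis is used exactly where you place it, to ensure $\size{\widetilde{y}}=\size{C}$ in the explicit-substitution case). The only divergences are cosmetic: the standalone sharing case $M[\widetilde{y}\leftarrow y]$ you include is vacuous under $\llfv{N}=\emptyset$, and the "sharpened IH" you contemplate is not needed in the paper's argument.
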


\begin{proof}
By induction on the structure of $N$.
\begin{enumerate}

    \item $N =  {x}$ or $N=x[j]$:
    
    These cases are trivial, and follow taking 
    $ I = \emptyset$ and $ \widetilde{y} = \emptyset$.
    
    

    \item $N = (M\ B)$:
    
    Then $\headf{M\ B} = \headf{M} = x$ then 
    \[ \piencodf{N}_u = \piencodf{M\ B}_u  = \bigoplus_{B_i \in \perm{B}} (\nu v)(\piencodf{M}_v \para v.\some_{u, \llfv{B}} ; \outact{v}{x} . ([v \leftrightarrow u] \para \piencodf{B^x_i} ) ) \]
   and the proof follows by induction on $\piencodf{M}_u$.
    
    \item  $N = (M[\widetilde{y} \leftarrow y])\esubst{ C \bagsep U }{ y }$:
    
    Then $\headf{(M[\widetilde{y} \leftarrow y])\esubst{ C \bagsep U }{ y }} = \headf{(M[\widetilde{y} \leftarrow y])} = x$. As $N  \red_{\redlab{R}} $ where $\redlab{R} \not = \redlab{RS:Fail} $ we must have that $\size{\widetilde{y}} = \size{C}$. Thus,

    \[
    \begin{aligned}
       \piencodf{N}_u &= \piencodf{(M[\widetilde{y} \leftarrow y])\esubst{ C \bagsep U }{ y }}_u  \\
       & =  \bigoplus_{C_i \in \perm{C}} (\nu y)( y.\overline{\some}; y(\linvar{y}). y(\banged{y}). y. \close ;\piencodf{ M[\widetilde{y} \leftarrow  {y}]}_u \para \piencodf{ C_i \bagsep U}_y )
       \\
       & =  \bigoplus_{C_i \in \perm{C}} (\nu y)( y.\overline{\some}; y(\linvar{y}). y(\banged{y}). y. \close ;\piencodf{ M[\widetilde{y} \leftarrow  {y}]}_u \para \\
       & \hspace{2cm} y.\some_{\llfv{C}} ; \outact{y}{\linvar{y}} .( \piencodf{ C_i }_{\linvar{y}} \para \outact{y}{\banged{y}} .( !\banged{y}. (y_i). \piencodf{ U }_{y_i} \para y.\overline{\close} ) ) )
       \\[4pt]
       & \red^*  \bigoplus_{C_i \in \perm{C}} (\nu  \linvar{y}, \banged{y})(  \piencodf{ M[\widetilde{y} \leftarrow \linvar{y}]}_u \para  \piencodf{ C_i }_{\linvar{y}} \para  !\banged{y}. (y_i). \piencodf{ U }_{y_i}  )
       \\[4pt]
       &=\bigoplus_{C_i \in \perm{C}} (\nu \linvar{y}, \banged{y})( \linvar{y}.\overline{\some}. \outact{\linvar{y}}{z_1}. (z_1 . \some_{\emptyset} ;z_{1}.\close;0 \para \linvar{y}.\overline{\some};\\
         &\hspace{1cm}\linvar{y}. \some_{u, (\llfv{M} \setminus  {y}_1 , \cdots ,  {y}_n )};\linvar{y}( {y}_1) . \cdots\linvar{y}.\overline{\some}. \outact{\linvar{y}}{z_n} . (z_n . \some_{\emptyset} ; z_{n}.\close;0 \\
        &\hspace{1cm}\para \linvar{y}.\overline{\some};\linvar{y}.\some_{u,(\llfv{M} \setminus  {y}_n )}; \linvar{y}( {y}_n).\linvar{y}.\overline{\some}; \outact{\linvar{y}}{z_{n+1}}. ( z_{n+1} . \some_{u, \llfv{M}} ; \\%
        &\hspace{1cm}z_{n+1}.\close; \piencodf{M}_u \para \linvar{y}.\overline{\none} ) ) \cdots ) \para \linvar{y}.\some_{\llfv{C}} ; \linvar{y}(z_1). \linvar{y}.\some_{z_1,\llfv{C}};\linvar{y}.\overline{\some} ; \\ 
              &\hspace{1cm}  \outact{\linvar{y}}{ {y}_1}. ( {y}_1.\some_{\llfv{C_i(1)}} ;\piencodf{C_i(1)}_{ {y}_1} \para  \cdots \linvar{y}.\some_{\llfv{C_i(n)}} ; \linvar{y}(z_n). \linvar{y}.\some_{z_n,\llfv{C_i(n)}};
               \\
              &\hspace{1cm} \linvar{y}.\overline{\some} ;\outact{\linvar{y}}{ {y}_n}. ( {y}_n.\some_{\llfv{C_i(n)}} ; \piencodf{C_i(n)}_{ {y}_n} \para\linvar{y}.\some_{\emptyset} ; \linvar{y}(z_{n+1}). ( z_{n+1}.\overline{\some}; \\
              &\hspace{1cm}z_{n+1} . \overline{\close} \para \linvar{y}.\some_{\emptyset} ; \linvar{y}. \overline{\none})  \para z_1. \overline{\none}) \para\cdots\para z_n. \overline{\none}) \para  !\banged{y}. (y_i). \piencodf{ U }_{y_i} )
              \\
       &\red^* \bigoplus_{C_i \in \perm{C}} (\nu  \widetilde{y}, \banged{y})(  \piencodf{M}_u \para  {y}_1.\some_{\llfv{C_i(1)}} ;\piencodf{C_i(1)}_{ {y}_1} \para   \cdots \para   {y}_n.\some_{\llfv{C_i(n)}} ; \\
       & \hspace{1cm}\piencodf{C_i(n)}_{ {y}_n} \para  !\banged{y}. (y_i). \piencodf{ U }_{y_i}  )\\
    \end{aligned}
    \]
    and the result follows by induction on $\piencodf{ M }_u $.
    
    \item $N = M \linexsub {N' / {y}}$ and $N = M \unexsub {u /\unvar{y}}$:
    

    
    These cases  follow easily by induction  on $\piencodf{M}_u$.
    
\end{enumerate}
\end{proof}

\subsubsection{Completeness}

Here again, because of the diamond property (Proposition \ref{app:lambda}), it suffices to consider a completeness result based on a single reduction step in $\lamrsharfailunres$:

\begin{notation}
    We use the notation $\llfv{M}.\overline{\none}$ and $\widetilde{x}.\overline{\none}$ where $\llfv{M}$ or $\widetilde{x}$ are equal to $ x_1 , \cdots , x_k$ to describe a process of the form $x_1.\overline{\none} \para \cdots \para x_k.\overline{\none} $
\end{notation}

\begin{theorem}[Well Formed Operational Completeness]
\label{l:app_completenesstwounres}
Let $\expr{N} $ and $ \expr{M} $ be well-formed, linearly closed $\lamrsharfailunres $ expressions. If $ \expr{N}\red \expr{M}$ then there exists $Q$ such that $\piencodf{\expr{N}}_u  \red^* Q \equiv \piencodf{\expr{M}}_u$.
\end{theorem}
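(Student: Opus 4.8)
The statement is an operational completeness result for the encoding $\piencodf{\cdot}_u$ from $\lamrsharfailunres$ into $\spi$: a single reduction step $\expr{N} \red \expr{M}$ is simulated (up to $\equiv$) by zero or more $\spi$-reductions from $\piencodf{\expr{N}}_u$. The plan is to proceed by induction on the derivation of $\expr{N} \red \expr{M}$, i.e.\ by case analysis on the last reduction rule applied, using the rules of \figref{fig:share-reductfailureunres}. For each rule one unfolds $\piencodf{\expr{N}}_u$ according to the encoding in \figref{fig:encfailunres}, identifies the redex(es) in the resulting process, fires the corresponding $\spi$-reductions from \figref{fig:redspi}, and checks that the outcome is structurally congruent to $\piencodf{\expr{M}}_u$. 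Since the encoding distributes homomorphically over sums and is defined compositionally on term contexts, the contextual rules $\redlab{RS:TCont}$ and $\redlab{RS:ECont}$ reduce to the inductive hypothesis together with the fact that $\equiv$ is a congruence on $\spi$ closed under parallel composition, restriction and $\oplus$; so the real work is in the non-contextual base cases.

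First I would handle $\redlab{RS:Beta}$: here $\piencodf{(\lambda x.(M[\widetilde{x}\leftarrow x]))B}_u$ unfolds to a non-deterministic sum over $\perm{C}$ of processes containing the cut between the encoding of the abstraction (which begins $u.\overline{\some};u(x).x.\overline{\some};x(\linvar{x}).x(\unvar{x}).x.\close;\cdots$) and the encoding of the bag $B = C\bagsep U$; one fires the $\some/\overline{\some}$ synchronisation, the name-passing communications for $x$, $\linvar{x}$, $\unvar{x}$, and the $\close/\overline{\close}$ step, arriving exactly at $\piencodf{(M[\widetilde{x}\leftarrow x])\esubst{B}{x}}_u$ as given in \figref{fig:encfailunres} (this is precisely the reduction chain illustrated in \figref{fig:opcorr}). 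Next $\redlab{RS:Ex-Sub}$: the encoding of $M[x_1,\dots,x_k\leftarrow x]\esubst{C\bagsep U}{x}$ is again a sum over $\perm{C}$; the relevant $\spi$-reductions push the $\some_{\widetilde w}$/$\overline{\some}$ confirmations and the $\outact{\linvar x}{x_i}$ outputs of the bag-encoding against the successive $\linvar x(x_i)$ inputs of the sharing-encoding, peeling off one shared variable per element of $C$, yielding the sum of encodings of $M\linexsub{C_i(1)/x_1}\cdots\linexsub{C_i(k)/x_k}\unexsub{U/\unvar x}$; the $y_i$-names and the trailing $\linvar x.\overline{\none}$ prefixes get consumed or left as inert garbage absorbed by $\equiv$. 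For $\redlab{RS:Fetch^\ell}$ one uses Proposition~\ref{prop:correctformfailunres}: since $\headf{M}=x$ and the term does not fail, $\piencodf{M}_u$ reduces to a sum of processes of the form $(\nu\widetilde y)(\piencodf{x}_n \para P_i)$, and the linear substitution encoding $(\nu x)(\cdots \para x.\some_{\llfv N};\piencodf N_x)$ synchronises the forwarder $[x\leftrightarrow n]$ coming from $\piencodf{x}_n = x.\overline{\some};[x\leftrightarrow n]$ with $\piencodf N$, implementing the head substitution; similarly $\redlab{RS:Fetch^!}$ uses the client/server reduction $!\unvar x(x_i).Q \para \outsev{\unvar x}{x_i}.P \red \cdots$ since $\piencodf{x[j]}_u = \outsev{\unvar x}{x_i}.x_i.l_j;[x_i\leftrightarrow u]$. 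The failure and consume rules ($\redlab{RS:Fail^\ell}$, $\redlab{RS:Fail^!}$, $\redlab{RS:Cons_1}$--$\redlab{RS:Cons_4}$) are handled by unfolding $\piencodf{\fail^{\widetilde x}}_u = u.\overline{\none} \para x_1.\overline{\none}\para\cdots\para x_k.\overline{\none}$, observing how the $x.\overline{\none}$ prefixes propagate failure to the dependency tuples via the rule $x.\overline{\none}\para x.\some_{(w_1,\dots,w_n)};Q \red w_1.\overline{\none}\para\cdots\para w_n.\overline{\none}$, and matching the resulting collection of $\overline{\none}$-signals against the free-variable bookkeeping in $\mlfv{\cdot}$ / $\llfv{\cdot}$; the $\sum_{\perm{C}}$ on the source side corresponds to the $\bigoplus$ in the encodings of bags and explicit substitutions, so the multiplicities line up.

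The main obstacle I expect is the bookkeeping of \emph{garbage} and of \emph{name annotations}: the $\spi$ processes reached after firing the natural redexes are not syntactically equal to $\piencodf{\expr{M}}_u$ but only equal up to $\equiv$, after discarding inert subprocesses (spent $y_i.\overline{\some};y_i.\overline{\close}$ / $y_i.\close;\mathbf 0$ pairs, $\linvar x.\overline{\none}$ stubs on exhausted linear channels, leftover $!\unvar x(x_i).\piencodf U_{x_i}$ servers after the unrestricted part is no longer referenced) and $\alpha$-renaming the freshly generated channel names to the canonical ones used by the encoding. Making the $\equiv$-reasoning precise — in particular the interaction between the non-deterministic sums $\bigoplus_{C_i\in\perm C}$ on both sides and the scope-extrusion laws $(\nu x)(P\para(Q\oplus R))\equiv(\nu x)(P\para Q)\oplus(\nu x)(P\para R)$ — is where the argument becomes delicate, and it is also where one must be careful that the number of $\spi$-steps is genuinely finite (the $\red^*$ in the statement) rather than requiring an unbounded unfolding; Proposition~\ref{prop:correctformfailunres} is the key lemma that bounds the $\redlab{RS:Fetch}$ cases. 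I would state a small auxiliary lemma collecting the admissible $\equiv$-identities for "administrative" reductions before entering the case analysis, to keep each case short.
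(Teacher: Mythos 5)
Your proposal follows essentially the same route as the paper's proof: induction on the last reduction rule applied, unfolding the encoding of \figref{fig:encfailunres}, firing the administrative $\spi$-reductions, invoking Proposition~\ref{prop:correctformfailunres} in the two fetch cases, matching the resulting process against $\piencodf{\expr{M}}_u$ up to congruence, and dispatching the contextual rules by the induction hypothesis. One small calibration: in the $\redlab{RS{:}Beta}$ case the target $\piencodf{(M[\widetilde{x}\leftarrow x])\esubst{B}{x}}_u$ still carries the prefix $x.\overline{\some};x(\linvar{x}).x(\unvar{x}).x.\close$, so only the $v$-synchronisations and the forwarder step should be fired there; the $\linvar{x}$, $\unvar{x}$ and $\close$ communications you list belong to the $\redlab{RS{:}Ex\dash Sub}$ case.
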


\begin{proof}

By induction on the reduction rule applied to infer $\expr{N}\red \expr{M}$.  
We have ten cases.

    \begin{enumerate}
        \item  {\bf Case $\redlab{RS:Beta}$: }
              
        Then  $ \expr{N}= (\lambda x . (M[ {\widetilde{x}} \leftarrow  {x}])) B  \red (M[ {\widetilde{x}} \leftarrow  {x}])\esubst{ B }{ x }  = \expr{M}$ , where $B = C \bagsep U$. Notice that
        \begin{equation*}\label{eq:compl_lsbeta1failunres}
        \begin{aligned}
        \piencodf{\expr{N}}_u =&  \bigoplus_{C_i \in \perm{C}} (\nu v)(\piencodf{\lambda x . (M[ {\widetilde{x}} \leftarrow  {x}])}_v \para v.\some_{u , \llfv{C}} ; \outact{v}{x} . ([v \leftrightarrow u] \para \piencodf{C_i \bagsep U}_x ) )\\
        =&  
            \bigoplus_{C_i \in \perm{C}} (\nu v) 
                 (v.\overline{\some}; v(x). x.\overline{\some}; x(\linvar{x}). x(\banged{x}). x. \close ; \piencodf{M[\widetilde{x} \leftarrow x]}_v \\
                 &\para v.\some_{u,\llfv{C}} ; \outact{v}{x} . ( \piencodf{C_i \bagsep U}_x \para [v \leftrightarrow u] ) )\\
        \red &  \bigoplus_{C_i \in \perm{C}} (\nu v)( v(x). x.\overline{\some}; x(\linvar{x}). x(\banged{x}). x. \close ; \piencodf{M[\widetilde{x} \leftarrow x]}_v \para \outact{v}{x} . ( \piencodf{ C_i \bagsep U}_x \\
        & \hspace{1cm}\para [v \leftrightarrow u] ) )\\
    \red &  \bigoplus_{C_i \in \perm{C}} (\nu v, x)( x.\overline{\some}; x(\linvar{x}). x(\banged{x}). x. \close ; \piencodf{M[\widetilde{x} \leftarrow x]}_v \para  \piencodf{ C_i \bagsep U}_x \para [v \leftrightarrow u] )\\ 
    \red&  \bigoplus_{C_i \in \perm{C}} (\nu x)( x.\overline{\some}; x(\linvar{x}). x(\banged{x}). x. \close ; \piencodf{M[\widetilde{x} \leftarrow x]}_v \para  \piencodf{ C_i \bagsep U}_x )=\piencodf{\expr{M}}_u \\
        \end{aligned}
        \end{equation*}
and the result follows.

        
        \item {\bf Case $ \redlab{RS:Ex \dash Sub}$:}
        
        Then $ \expr{N} =M[ {x}_1, \!\cdots\! ,  {x}_k \leftarrow  {x}]\esubst{ C \bagsep U }{ x }$, with $C = \bag{M_1}
            \cdots  \bag{M_k}$, $k\geq 0$ and $M \not= \fail^{\widetilde{y}}$.

        The reduction is $$\expr{N} = M[ {x}_1, \!\cdots\! ,  {x}_k \leftarrow  {x}]\esubst{ C \bagsep U }{ x } \red \sum_{C_i \in \perm{C}}M\linexsub{C_i(1)/ {x_1}} \cdots \linexsub{C_i(k)/ {x_k}} \unexsub{U /\unvar{x} } = \expr{M}.$$
        
        We detail the encodings of $\piencodf{\expr{N}}_u$ and $\piencodf{\expr{M}}_u$. To simplify the proof, we will consider $k=1$ (the case in which $k> 1$ is follows analogously, similarly the case of $k=0$ is contained within the proof of $k=1$). 
        
        On the one hand, we have:
        \begin{equation*}\label{eq:compl_lsbeta3failunres}
        \begin{aligned}
        \piencodf{\expr{N}}_u &= \piencodf{M[ {x}_1 \leftarrow  {x}]\esubst{ C \bagsep U }{ x }}_u\\
        &= \bigoplus_{C_i \in \perm{C}} (\nu x)( x.\overline{\some}; x(\linvar{x}). x(\banged{x}). x. \close ;\piencodf{ M[ {x}_1 \leftarrow  {x}]}_u \para \piencodf{ C_i \bagsep U}_x ) \\
        &= \bigoplus_{C_i \in \perm{C}} (\nu x)( x.\overline{\some}; x(\linvar{x}). x(\banged{x}). x. \close ; \piencodf{ M[ {x}_1 \leftarrow  {x}]}_u \para  x.\some_{\llfv{C}} ;\outact{x}{\linvar{x}} .\\
        &\qquad ( \piencodf{ C_i }_{\linvar{x}} \para \outact{x}{\banged{x}} .( !\banged{x}. (x_i). \piencodf{ U }_{x_i} \para x.\overline{\close}   ) ) ) \qquad (:= P_{\mathbb{N}}) \\[4pt]
        \end{aligned}
        \end{equation*}
        
      Note that
        \begin{equation*}
        \begin{aligned}
        P_{\mathbb{N}} \red^*& \bigoplus_{C_i \in \perm{C}} (\nu \linvar{x} , \banged{x} )( \piencodf{ M[ {x}_1 \leftarrow  {x}]}_u \para  \piencodf{ C_i }_{\linvar{x}} \para  !\banged{x}. (x_i). \piencodf{ U }_{x_i}    ) \\
        =&  \bigoplus_{C_i \in \perm{C}} (\nu \linvar{x} , \banged{x} )( \linvar{x}.\overline{\some}. \outact{\linvar{x}}{y_1}. (y_1 . \some_{\emptyset} ;y_{1}.\close;0 \para \linvar{x}.\overline{\some};\linvar{x}.\some_{u, (\llfv{M} \setminus  {x}_1 )};\\
        &\linvar{x}( {x}_1) . \linvar{x}.\overline{\some}; \outact{\linvar{x}}{y_{2}}.( y_{2} .   \some_{u,\llfv{M}} ;y_{2}.\close; \piencodf{M}_u \para \linvar{x}.\overline{\none} ) ) \para\linvar{x}.\some_{\llfv{B_i(1)}} ;\\
         &  \linvar{x}(y_1). \linvar{x}.\some_{y_1,\llfv{C_i(1)}};\linvar{x}.\overline{\some} ;   \outact{\linvar{x}}{ {x}_1}.( {x}_1.\some_{\llfv{C_i(1)}} ; \piencodf{C_i(1)}_{ {x}_1}  \para y_1. \overline{\none} \para \linvar{x}.\\
                &\some_{\emptyset} ;\linvar{x}(y_2). ( y_2.\overline{\some};y_2 . \overline{\close} \para \linvar{x}.\some_{\emptyset} ; \linvar{x}.\overline{\none}) ) \para !\banged{x}. (x_i). \piencodf{ U }_{x_i})
        \\
        & \red^* \bigoplus_{C_i \in \perm{C}} (\nu  {x}_1 , \banged{x} )(\piencodf{M}_u \para  {x}_1.\some_{\llfv{C_i(1)}} ; \piencodf{C_i(1)}_{ {x}_1} \para !\banged{x}. (x_i). \piencodf{ U }_{x_i} )= \piencodf{\expr{M}}_u 
        \\
        \end{aligned}
        \end{equation*}
        
and the result follows.
        
        \item {\bf Case $\redlab{RS{:}Fetch^{\ell}}$:}
        
        Then we have
        $\expr{N} = M \linexsub{N /  {x}}$ with $\headf{M} =  {x}$ and $\expr{N} \red   M \headlin{ N/ {x} } = \expr{M}$.
   Note that
            \begin{equation*}\label{eq:compl_lsbeta5failunres}
            \begin{aligned}
            \piencodf{\expr{N}}_u &= \piencodf{M \linexsub{N /  {x}}}_u\\
            &= (\nu  {x}) ( \piencodf{ M }_u \para   {x}.\some_{\llfv{N}};\piencodf{ N }_{ {x}}  ) \\
            &\red^* (\nu  {x}) ( \bigoplus_{i \in I}(\nu \widetilde{y})(\piencodf{  {x} }_{j} \para P_i) \para    {x}.\some_{\llfv{N}};\piencodf{ N }_{ {x}}  ) \qquad (*)   
            \\
            &= (\nu  {x}) ( \bigoplus_{i \in I}(\nu \widetilde{y})(\piencodf{  {x} }_{j} \para P_i)  \para    {x}.\some;\piencodf{ N }_{ {x}}  )  \\
            & \red (\nu  {x}) ( \bigoplus_{i \in I}(\nu \widetilde{y})([ {x} \leftrightarrow j ] \para P_i) \para   \piencodf{ N }_{ {x}}  )\\ &\red \bigoplus_{i \in I}(\nu \widetilde{y})(P_i \para   \piencodf{ N }_j  )    = \piencodf{\expr{M}}_u
            \end{aligned}
            \end{equation*}
       where the reductions denoted by $(*)$ are inferred via Proposition~\ref{prop:correctformfailunres}, and the result follows.
        

        \item {\bf Case $ \redlab{RS{:} Fetch^!}$: }
        
        Then,
        $\expr{N} = M \unexsub{U / \unvar{x}}$ with $\headf{M} = \banged{x}[k]$, $U_i = \banged{\bag{N}}$ and $\expr{N} \red  M \headlin{ N /\banged{x} }\unexsub{U / \unvar{x}} = \expr{M}$.
        Note that 
            \begin{equation*}\label{eq:compl_lsbeta5failunresunres}
            \begin{aligned}
            \piencodf{\expr{N}}_u &= \piencodf{M \unexsub{U / \unvar{x}}}_u
            = (\nu \banged{x}) ( \piencodf{ M }_u \para   !\banged{x}. (x_k).\piencodf{ U }_{x_k}  ) \\
            & \red^*(\nu \banged{x}) ( \bigoplus_{i \in I}(\nu \widetilde{y})(\piencodf{ \banged{x}[k] }_{j} \para P_i) \para   !\banged{x}. (x_k).\piencodf{ U }_{x_k}  ) \qquad (*)
            \\
            &  = (\nu \banged{x}) ( \bigoplus_{i \in I}(\nu \widetilde{y})(\outsev{\banged{x}}{{x_k}}. {x}_k.l_{i}; [{x_k} \leftrightarrow j] \para P_i) \para   !\banged{x}. (x_k).\piencodf{ U }_{x_k}  ) \qquad (*)
            \\
            & \red (\nu \banged{x}) ( \bigoplus_{i \in I}(\nu \widetilde{y})(  (\nu x_k)( x_k.l_{i}; [x_k \leftrightarrow j] \para \piencodf{ U }_{x_k}) \para P_i) \para   !\banged{x}. (x_k).\piencodf{ U }_{x_k}  )
            \\
            & = (\nu \banged{x}) ( \bigoplus_{i \in I}(\nu \widetilde{y})(  (\nu x_k)( x_k.l_{i}; [x_k \leftrightarrow j] \para x_k. case( i.\piencodf{U_i}_{x} )) \para P_i) \para   !\banged{x}. (x_k).\piencodf{ U }_{x_k}  )
            \\
            & \red (\nu \banged{x}) ( \bigoplus_{i \in I}(\nu \widetilde{y})(  \piencodf{\banged{\bag{N}}}_{j} ) \para P_i) \para   !\banged{x}. (x_k).\piencodf{ U }_{x_k}  )\\
            &= (\nu \banged{x}) ( \bigoplus_{i \in I}(\nu \widetilde{y})(  \piencodf{N}_{j} ) \para P_i) \para   !\banged{x}. (x_k).\piencodf{ U }_{x_k}  ) = \piencodf{\expr{M}}_u
            \end{aligned}
            \end{equation*}
       where the reductions denoted by $(*)$ are inferred via Proposition~\ref{prop:correctformfailunres}.
        

        \item {\bf Cases $\redlab{RS:TCont}$ and $\redlab{RS:ECont}$:}
         
         These cases follow by IH.
         

        \item {\bf Case $\redlab{RS{:}Fail^{\ell}}$:}
        
        Then, 
        $\expr{N} = M[ {x}_1, \!\cdots\! ,  {x}_k \leftarrow  {x}]\ \esubst{C \bagsep U}{ x } $ with $k \neq \size{C}$ and
        
        $\expr{N} \red  \sum_{C_i \in \perm{C}}  \fail^{\widetilde{y}} = \expr{M}$, where $\widetilde{y} = (\llfv{M} \setminus \{   {x}_1, \cdots ,  {x}_k \} ) \cup \llfv{C}$. 
        
        Let $ \size{C} = l$ and we assume that $k > l$ (proceed similarly for $k > l$). Hence $k = l + m$ for some $m \geq 1$, and 
        
        \begin{equation*}\label{eq:compl_fail1-failunres}
        \begin{aligned}
            \piencodf{N}_u =& \piencodf{M[ {x}_1, \!\cdots\! ,  {x}_k \leftarrow  {x}]\ \esubst{C \bagsep U}{ x } }_u\\
             =&  \bigoplus_{C_i \in \perm{C}} (\nu x)( x.\overline{\some}; x(\linvar{x}). x(\banged{x}). x. \close ;\piencodf{ M[\widetilde{x} \leftarrow  {x}]}_u \para \\
            &  x.\some_{\llfv{C}} ; \outact{x}{\linvar{x}} .( \piencodf{ C }_{\linvar{x}} \para \outact{x}{\banged{x}} .( !\banged{x}. (x_i). \piencodf{ U }_{x_i} \para x.\overline{\close} ) ) )
            \\
             \red^* &  \bigoplus_{C_i \in \perm{C}} (\nu \linvar{x} , \banged{x})(  \piencodf{ M[\widetilde{x} \leftarrow  {x}]}_u \para \piencodf{ C }_{\linvar{x}} \para !\banged{x}. (x_i). \piencodf{ U }_{x_i}  )\\
              =& \bigoplus_{C_i \in \perm{C}} (\nu \linvar{x} , \banged{x})( \linvar{x}.\overline{\some}. \outact{\linvar{x}}{y_1}. (y_1 . \some_{\emptyset} ;y_{1}.\close;0 \para \linvar{x}.\overline{\some};\\
              & \linvar{x}.\some_{u,(\llfv{M} \setminus  \widetilde{x} )}; \linvar{x}( {x}_1) . \cdots  \linvar{x}.\overline{\some}. \outact{\linvar{x}}{y_k} . (y_k . \some_{\emptyset} ; y_{k}.\close;0 \para \linvar{x}.\overline{\some}; \\
               &\linvar{x}.\some_{u,(\llfv{M} \setminus   {x}_k )};\linvar{x}( {x}_k) . \linvar{x}.\overline{\some}; \outact{\linvar{x}}{y_{k+1}}. ( y_{k+1} . \some_{u,\llfv{M} } ;y_{k+1}.\close; \piencodf{M}_u  \\
               & \para \linvar{x}.\overline{\none} )) \cdots ) \para\linvar{x}.\some_{\llfv{C}} ; \linvar{x}(y_1). \linvar{x}.\some_{y_1,\llfv{C}};\linvar{x}.\overline{\some} ; \outact{\linvar{x}}{ {x}_1}. ( {x}_1.\some_{\llfv{C_i(1)}} ; \\
               & \piencodf{C_i(1)}_{ {x}_1} \para y_1. \overline{\none} \para \cdots \linvar{x}.\some_{\llfv{C_i(l)}} ; \linvar{x}(y_l). \linvar{x}.\some_{y_l ,\llfv{C_i(l)}};\linvar{x}.\overline{\some} ; \outact{\linvar{x}}{ {x}_l}.  \\
               & ( {x}_l.\some_{\llfv{C_i(l)}} ;\piencodf{C_i(l)}_{ {x}_l} \para y_l. \overline{\none} \para \linvar{x}.\some_{\emptyset} ; \linvar{x}(y_{l+1}). ( y_{l+1}.\overline{\some};y_{l+1} . \overline{\close} \\
               &\para \linvar{x}.\some_{\emptyset} ;\linvar{x}. \overline{\none}) ) )\para !\banged{x}. (x_i). \piencodf{ U }_{x_i} ) \qquad 
            (:= P_\mathbb{N})\\
 \red^* & \bigoplus_{C_i \in \perm{C}} (\nu \linvar{x} , \banged{x}, y_1,  {x}_1, \cdots  y_l,  {x}_l)( y_1 . \some_{\emptyset} ;y_{1}.\close;0 \para \cdots \para y_l . \some_{\emptyset} ;y_{l}.\close;0 \\
    &  {x}.\overline{\some}. \outact{\linvar{x}}{y_{l+1}} . (y_{l+1} . \some_{\emptyset} ; y_{l+1}.\close;0 \para \linvar{x}.\overline{\some};\linvar{x}.\some_{u,(\llfv{M} \setminus  {x}_{l+1} , \cdots ,  {x}_k )};\\
                 & \linvar{x}( {x}_{l+1}) . \cdots \linvar{x}.\overline{\some}. \outact{\linvar{x}}{y_k} . (y_k . \some_{\emptyset} ; y_{k}.\close;0 \para \linvar{x}.\overline{\some};\linvar{x}.\some_{u,(\llfv{M} \setminus   {x}_k )}; \\
                & \linvar{x}( {x}_k) .\linvar{x}.\overline{\some}; \outact{\linvar{x}}{y_{k+1}}. ( y_{k+1} . \some_{u,\llfv{M} } ;y_{k+1}.\close; \piencodf{M}_u \para \linvar{x}.\overline{\none} )) \cdots ) \para \\
             &  {x}_1.\some_{\llfv{C_i(1)}} ; \piencodf{C_i(1)}_{ {x}_1} \para \cdots \para   {x}_l.\some_{\llfv{C_i(l)}} ; \piencodf{C_i(l)}_{ {x}_l} \para  y_1. \overline{\none} \para \cdots \para y_l. \overline{\none}\\
              & \linvar{x}.\some_{\emptyset} ; \linvar{x}(y_{l+1}). ( y_{l+1}.\overline{\some};y_{l+1} . \overline{\close} \para \linvar{x}.\some_{\emptyset} ; \linvar{x}. \overline{\none}) \para !\banged{x}. (x_i). \piencodf{ U }_{x_i}) 
                    \\
 \red & \bigoplus_{C_i \in \perm{C}} (\nu  \banged{x},   {x}_1, \cdots  {x}_l)(  u . \overline{\none} \para  {x}_1 . \overline{\none} \para  \cdots \para  {x}_{l} . \overline{\none} \para (\llfv{M} \setminus  {x}_{1} , \cdots ,  {x}_k ).  \\
  & \overline{\none}  \para  {x}_1.\some_{\llfv{C_i(1)}} ; \piencodf{C_i(1)}_{ {x}_1} \para \cdots \para   {x}_l.\some_{\llfv{C_i(l)}} ; \piencodf{C_i(l)}_{ {x}_l} \para !\banged{x}. (x_i). \piencodf{ U }_{x_i} ) 
            \\
 \red^* & \bigoplus_{C_i \in \perm{C}} (\nu  \banged{x}) (  u . \overline{\none} \para (\llfv{M} \setminus  {x}_{1} , \cdots ,  {x}_k ) . \overline{\none}  \para !\banged{x}. (x_i). \piencodf{ U }_{x_i} ) \\
 &\equiv \bigoplus_{C_i \in \perm{C}}   u . \overline{\none} \para (\llfv{M} \setminus  {x}_{1} , \cdots ,  {x}_k ) . \overline{\none} =  \piencodf{\expr{M}}_u
        \end{aligned}
 \end{equation*}
        
    and  the result follows.

       \item {\bf Case $\redlab{RS{:}Fail^!}$:}
       
        Then,
        $\expr{N} = M \unexsub{U /\unvar{x}}  $ with $\headf{M} =  {x}[i]$, $U_i = \banged{\oneb} $ and
        $\expr{N} \red   M \headlin{ \fail^{\emptyset} /\banged{x} } \unexsub{U /\unvar{x} } $, where $\widetilde{y} = \llfv{M}  $. 
        Notice that 
        \begin{equation*}\label{eq:compl_unfail-failunres}
        \begin{aligned}
            \piencodf{N}_u &= \piencodf{ M \unexsub{U /\unvar{x}} }_u  =  (\nu \banged{x}) ( \piencodf{ M }_u \para   !\banged{x}. (x_i).\piencodf{ U }_{x_i}  )  \\
            & \red^*(\nu \banged{x}) ( \bigoplus_{i \in I}(\nu \widetilde{y})(\piencodf{  {x}[i] }_{j} \para P_i) \para   !\banged{x}. (x_k).\piencodf{ U }_{x_k}  ) \qquad (*)
            \\
            &  = (\nu \banged{x}) ( \bigoplus_{i \in I}(\nu \widetilde{y})(\outsev{\banged{x}}{{x_k}}. {x}_k.l_{i}; [{x_k} \leftrightarrow j] \para P_i) \para   !\banged{x}. (x_k).\piencodf{ U }_{x_k}  ) \qquad (*)
            \\
            & \red (\nu \banged{x}) ( \bigoplus_{i \in I}(\nu \widetilde{y})(  (\nu x_k)( x_k.l_{i}; [x_k \leftrightarrow j] \para \piencodf{ U }_{x_k}) \para P_i) \para   !\banged{x}. (x_k).\piencodf{ U }_{x_k}  )
            \\
            & = (\nu \banged{x}) ( \bigoplus_{i \in I}(\nu \widetilde{y})(  (\nu x_k)( x_k.l_{i}; [x_k \leftrightarrow j] \para \choice{x_k}{U_i}{U}{i}{\piencodf{U_i}_{x}}) \para P_i) \para   !\banged{x}. (x_k).\piencodf{ U }_{x_k}  )
            \\
            & \red (\nu \banged{x}) ( \bigoplus_{i \in I}(\nu \widetilde{y})(  \piencodf{\banged{\oneb}}_{j}  \para P_i) \para   !\banged{x}. (x_k).\piencodf{ U }_{x_k}  )\\
          & = (\nu \banged{x}) ( \bigoplus_{i \in I}(\nu \widetilde{y}) ( j.\none \para P_i) \para   !\banged{x}. (x_k).\piencodf{ U }_{x_k}  ) = \piencodf{\expr{M}}_u \\
        \end{aligned}
        \end{equation*}
        and the result follows.
        
        

        \item {\bf Case $\redlab{RS:Cons_1}$:}
        
        Then,
        $\expr{N} = \fail^{\widetilde{x}}\ C \bagsep U$ and $\expr{N} \red \sum_{\perm{C}} \fail^{\widetilde{x} \uplus \widetilde{y}}  = \expr{M}$ where $ \widetilde{y} = \llfv{C}$. 
        Notice that 
        \begin{equation*}\label{eq:compl_cons1-failunres}
        \begin{aligned}
            \piencodf{N}_u &= \piencodf{ \fail^{\widetilde{x}}\ C \bagsep U }_u\\
            &= \bigoplus_{C_i \in \perm{C}} (\nu v)(\piencodf{\fail^{\widetilde{x}}}_v \para v.\some_{u , \llfv{C}} ; \outact{v}{x} . ([v \leftrightarrow u] \para \piencodf{C_i \bagsep U}_x ) ) \\
            &= \bigoplus_{C_i \in \perm{C}} (\nu v)( v . \overline{\none} \para \widetilde{x}. \overline{\none} \para v.\some_{u , \llfv{C}} ; \outact{v}{x} . ([v \leftrightarrow u] \para \piencodf{C_i \bagsep U}_x ) ) \\
            & \red \bigoplus_{C_i \in \perm{C}} u . \overline{\none} \para \widetilde{x}. \overline{\none} \para \widetilde{y}. \overline{\none}= \bigoplus_{C_i \in \perm{C}} u . \overline{\none} \para \widetilde{x}. \overline{\none} \para \widetilde{y}. \overline{\none} = \piencodf{\expr{M}}_u 
        \end{aligned}
        \end{equation*}
        and the result follows.
        

        \item {\bf Cases $\redlab{RS:Cons_2}$, $\redlab{RS:Cons_3}$ and \redlab{RS:Cons_4}:}
        
        These cases follow by IH similarly to the previous.

        
        
        
        

        
    \end{enumerate}
\end{proof}

\subsubsection{Soundness}
\begin{theorem}[Well Formed Weak Operational Soundness]
\label{l:app_soundnesstwounres}
Let $\expr{N}$ be a 
well-formed, linearly closed  $ \lamrsharfailunres$ expression. 
If $ \piencodf{\expr{N}}_u \red^* Q$
then there exist $Q'$  and $\expr{N}' $ such that 
$Q \red^* Q'$, $\expr{N}  \red_{\pequiv}^* \expr{N}'$ 
and 
$\piencodf{\expr{N}'}_u \equiv Q'$.
\end{theorem}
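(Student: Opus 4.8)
The statement is operational soundness for the encoding $\piencodf{\cdot}_u : \lamrsharfailunres \to \spi$, and the natural route is induction on the length of the reduction sequence $\piencodf{\expr{N}}_u \red^* Q$, with an inner analysis by the structure of $\expr{N}$ (mirroring the case split used for Proposition~\ref{prop:correctformfailunres} and for completeness in Theorem~\ref{l:app_completenesstwounres}). The base case ($Q = \piencodf{\expr{N}}_u$) is immediate: take $Q' = Q$ and $\expr{N}' = \expr{N}$. For the inductive step, the key observation is that the encoding of each $\lamrsharfailunres$ term construct begins with a fixed, deterministic ``administrative'' prefix of $\spi$ actions (the confirmation/receive/close handshakes on $x$, $\linvar{x}$, $\unvar{x}$, and the $\redlab{Tcut}$-style compositions $(\nu v)(\cdots)$ introduced by $\redlab{FS:app}$, $\redlab{FS:Esub}$, etc.), and these administrative reductions do not correspond to any source reduction but are \emph{confluent} and lead to a canonical form. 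So the plan is: given a partial run $\piencodf{\expr{N}}_u \red^* Q$, first push $Q$ forward deterministically (the ``$Q \red^* Q'$'' slack in the statement) until all pending administrative handshakes of the outermost redex are completed, putting it in a shape where the next $\spi$ reduction genuinely matches a $\lamrsharfailunres$ reduction rule.

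Concretely, I would organize the case analysis on $\headf{\expr{N}}$ / the outermost constructor exactly as in the completeness proof, reusing the computed encodings there. For an application $\expr{N} = (\lambda x.(M[\widetilde{x}\leftarrow x]))\,B$, the only reductions available from $\piencodf{\expr{N}}_u$ are the four-or-so administrative steps shown in the $\redlab{RS:Beta}$ case of Theorem~\ref{l:app_completenesstwounres}, all of which must be taken before anything else can fire; hence any $Q$ on that run reduces deterministically to $\piencodf{M[\widetilde{x}\leftarrow x]\esubst{B}{x}}_u$, and we set $\expr{N}' = M[\widetilde{x}\leftarrow x]\esubst{B}{x}$ with $\expr{N}\red_{\redlab{RS:Beta}}\expr{N}'$. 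Similarly for $\redlab{RS:Ex\dash Sub}$, $\redlab{RS:Fetch^\ell}$, $\redlab{RS:Fetch^!}$, $\redlab{RS:Fail^\ell}$, $\redlab{RS:Fail^!}$, and the $\redlab{RS:Cons}$ rules: in each case I would quote the explicit reduction chains already worked out in Theorem~\ref{l:app_completenesstwounres}, observe that those chains are the \emph{unique} maximal deterministic runs up to the point where a ``real'' step occurs, and read off $\expr{N}'$. When the next real step is an actual source reduction, I close the $\expr{N}\red^*\expr{N}'$ leg; when $\expr{N}$ itself is a sum $\expr{N}_1 + \expr{N}_2$ or the redex is under an evaluation context $C[\cdot]$ or $D[\cdot]$, I appeal to the induction hypothesis on the active summand/subterm together with the contextual rules $\redlab{RS:TCont}$, $\redlab{RS:ECont}$ and the fact that $\piencodf{\cdot}_u$ is homomorphic on sums and compositional on contexts.

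Two technical points need care. First, $\spi$ reduction is closed under structural congruence $\equiv$ (and the non-collapsing laws for $\oplus$), so at several points $Q$ will only be $\equiv$-equal, not syntactically equal, to the canonical administrative normal form; the statement already allows $\piencodf{\expr{N}'}_u \equiv Q'$, so I must be careful to carry the $\equiv$ through and verify it is preserved by the deterministic push-forward. Second, I must handle the case where $Q$ sits ``strictly inside'' an administrative block --- e.g.\ some but not all of the handshakes in a $\piencodf{C_i\bagsep U}_x$ have fired, or only some summands of a $\bigoplus_{C_i\in\perm{C}}$ have progressed; here I need a small lemma (or a careful diamond-style argument using Theorem~\ref{t:preservationtwounres} to keep everything well-typed, plus the fact that the administrative subnet is deterministic up to $\equiv$) saying any such intermediate $Q$ completes to the canonical form.

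\textbf{Main obstacle.} The hard part is precisely this bookkeeping around partial administrative runs and nondeterministic sums: $\piencodf{\expr{N}}_u$ can interleave the handshakes of several independent summands (and of nested cuts) in many orders, so ``$Q$ reduces to a canonical $Q'$'' is not literally a single deterministic path but a local-confluence claim about the administrative fragment of $\spi$. Establishing that fragment is confluent (modulo $\equiv$) and that its normal forms are exactly the $\piencodf{\expr{M}}_u$ images --- essentially a small standardization lemma for the encoding --- is where the real work lies; once that is in hand, the case analysis is a routine, if lengthy, traversal matching each $\spi$ communication step to the corresponding clause of Figure~\ref{fig:share-reductfailureunres}, with the explicit computations already supplied by Theorems~\ref{l:app_completenesstwounres} and Proposition~\ref{prop:correctformfailunres}.
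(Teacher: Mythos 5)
Your plan matches the paper's proof in essence: the paper argues by a double induction (structural induction on $\expr{N}$ with an inner induction on the number of reduction steps --- the opposite nesting order from yours, which is immaterial), explicitly enumerates the deterministic administrative reduction chains for each construct, shows that any intermediate $Q$ completes to the canonical image $\piencodf{\expr{N}'}_u$, and uses type preservation to guarantee that internal reductions of a subterm cannot consume the pending synchronization with the surrounding context. The "standardization/confluence of the administrative fragment" concern you isolate as the main obstacle is handled in the paper not by a separate lemma but by brute-force case enumeration of where $Q$ can sit in each chain (together with Proposition~\ref{prop:correctformfailunres} and the precongruence $\pcong$ to expose abstractions buried under explicit substitutions), so your proposal is a faithful blueprint of the actual argument.
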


\begin{proof}
By induction on the structure of $\expr{N} $ and then induction on the number of reductions of $\piencodf{\expr{N}} \red^* Q$.

\begin{enumerate}
    \item {\bf Base case:} $\expr{N} =  {x}$, $\expr{N} =  {x}[j]$, $\expr{N} = \fail^{\emptyset}$ and $\expr{N} = \lambda x . (M[ {\widetilde{x}} \leftarrow  {x}])$.
.
    
    No reductions can take place, and the result follows trivially.
    $Q =  \piencodf{\expr{N}}_u \red^0 \piencodf{\expr{N}}_u = Q'$ and $ {x} \red^0  {x} = \expr{N}'$.
    
    
    
    
    

    
    \item $\expr{N} =  M (C \bagsep U) $.

        Then, 
        $ \piencodf{M (C \bagsep U)}_u = \bigoplus_{C_i \in \perm{C}} (\nu v)(\piencodf{M}_v \para v.\some_{u , \llfv{C}} ; \outact{v}{x} . ([v \leftrightarrow u] \para \piencodf{C_i \bagsep U}_x ) )$, and we are able to perform the  reductions from $\piencodf{M (C \bagsep U)}_u$. 

        We now proceed by induction on $k$, with  $\piencodf{\expr{N}}_u \red^k Q$. There are two main cases:
        \begin{enumerate}
            \item When $k = 0$ the thesis follows easily:
            
            We have 
    $Q =  \piencodf{M (C \bagsep U)}_u \red^0 \piencodf{M (C \bagsep U)}_u = Q'$ and $M (C \bagsep U) \red^0 M (C \bagsep U) = \expr{N}'$.
    
            \item The interesting case is when $k \geq 1$.
            
            Then, for some process $R$ and $n, m$ such that $k = n+m$, we have the following:
            \[
            \begin{aligned}
               \piencodf{\expr{N}}_u & =  \bigoplus_{C_i \in \perm{C}} (\nu v)(\piencodf{M}_v \para v.\some_{u , \llfv{C}} ; \outact{v}{x} . ([v \leftrightarrow u] \para \piencodf{C_i \bagsep U}_x ) )\\
               & \red^m  \bigoplus_{C_i \in \perm{C}} (\nu v)(R \para v.\some_{u , \llfv{C}} ; \outact{v}{x} . ([v \leftrightarrow u] \para \piencodf{C_i \bagsep U}_x ) ) \red^n  Q\\
            \end{aligned}
            \]
            Thus, the first $m \geq 0$ reduction steps are  internal to $\piencodf{ M}_v$; type preservation in \spi ensures that, if they occur,  these reductions  do not discard the possibility of synchronizing with $v.\some$. Then, the first of the $n \geq 0$ reduction steps towards $Q$ is a synchronization between $R$ and $v.\some_{u, \llfv{C}}$.
            
            We consider two sub-cases, depending on the values of  $m$ and $n$:
            \begin{enumerate}
                \item $m = 0$ and $n \geq 1$:
                
Then $R = \piencodf{\expr{M}}_v$ as $\piencodf{\expr{M}}_v \red^0 \piencodf{\expr{M}}_v$. 
 Notice that there are two possibilities of having an unguarded:
                    
 

\begin{enumerate}
\item $M =  (\lambda x . (M'[ {\widetilde{x}} \leftarrow  {x}])) \linexsub{N_1 / y_1} \cdots \linexsub{N_p / y_p} \unexsub{U_1 / \unvar{z_1}} \cdots \unexsub{U_q / \unvar{z_q}}   \quad (p, q \geq 0)$

   \[
   \begin{aligned}
   \piencodf{M}_v &= \piencodf{ (\lambda x . (M'[ {\widetilde{x}} \leftarrow  {x}])) \linexsub{N_1 / y_1} \cdots \linexsub{N_p / y_p} \unexsub{U_1 / \unvar{z_1}} \cdots \unexsub{U_q / \unvar{z_q}} }_v \\
      &= (\nu y_1, \cdots , y_p , \banged{z}_1, \cdots ,\banged{z}_q) ( \piencodf{\lambda x . (M'[ {\widetilde{x}} \leftarrow  {x}])}_v \para y_1.\some_{\llfv{N_1}};\piencodf{ N_1 }_{y_1} \para \cdots \\
    & \hspace{.5cm}  \para y_p.\some_{\llfv{N_p}};\piencodf{ N_p }_{y_p} \para   !\banged{z}_1. (z_1).\piencodf{ U }_{z_1} \para \cdots  \para   !\banged{z}_q. (z_q).\piencodf{ U }_{z_q}          )\\
  &= (\nu \widetilde{y} ,\widetilde{z} ) ( \piencodf{\lambda x . (M'[ {\widetilde{x}} \leftarrow  {x}])}_v \para Q'' )\\
  &= (\nu \widetilde{y},\widetilde{z}) ( v.\overline{\some}; v(x). x.\overline{\some}; x(\linvar{x}). x(\banged{x}). x. \close ; \piencodf{M'[ {\widetilde{x}} \leftarrow  {x}]}_v \para Q'' )
    \end{aligned}
    \]
  \noindent where $\widetilde{y} = y_1 , \cdots , y_p$. $\widetilde{z} = \banged{z}_1, \cdots ,\banged{z}_q$ and
  
  $$
  \begin{aligned}
      Q''& = y_1.\some_{\llfv{N_1}};\piencodf{ N_1 }_{y_1} \para \cdots \para y_p.\some_{\llfv{N_p}};\piencodf{ N_p }_{y_p} \para   !\banged{z}_1. (z_1).\piencodf{ U }_{z_1} \para \cdots \\
      &\para   !\banged{z}_q. (z_q).\piencodf{ U }_{z_q}.
      \end{aligned}$$
With this shape for $M$, we then have the following:
 \[
 \begin{aligned}
 \piencodf{\expr{N}}_u & = \piencodf{(M\ B)}_u\\
 &= \bigoplus_{C_i \in \perm{C}} (\nu v)(\piencodf{M}_v \para v.\some_{u , \llfv{C}} ; \outact{v}{x} . ([v \leftrightarrow u] \para \piencodf{C_i \bagsep U}_x ) )\\
 & \red \bigoplus_{C_i \in \perm{C}} (\nu v , \widetilde{y},\widetilde{z})(   v(x). x.\overline{\some}; x(\linvar{x}). x(\banged{x}). x. \close ; \piencodf{M'[ {\widetilde{x}} \leftarrow  {x}]}_v \\
 & \hspace{.5cm} \para Q''  \para \outact{v}{x} . ([v \leftrightarrow u] \para \piencodf{C_i \bagsep U}_x ) ) & = Q_1 \\
& \red \bigoplus_{C_i \in \perm{C}} (\nu v , \widetilde{y},\widetilde{z}, x)( x.\overline{\some}; x(\linvar{x}). x(\banged{x}). x. \close ; \piencodf{M'[ {\widetilde{x}} \leftarrow  {x}]}_v  \\
 & \hspace{.5cm}\para Q''  \para  [v \leftrightarrow u] \para \piencodf{C_i \bagsep U}_x ) & = Q_2 \\
 & \red \bigoplus_{C_i \in \perm{C}} (\nu  \widetilde{y},\widetilde{z}, x)( x.\overline{\some}; x(\linvar{x}). x(\banged{x}). x. \close ; \piencodf{M'[ {\widetilde{x}} \leftarrow  {x}]}_u \para Q''\\
 & \hspace{.5cm}\para  \piencodf{C_i \bagsep U}_x ) & = Q_3 \\
\end{aligned}
     \]
We also have that 
                    \[
                    \begin{aligned}
                        \expr{N} &=(\lambda x . (M'[ {\widetilde{x}} \leftarrow  {x}])) \linexsub{N_1 / y_1} \cdots \linexsub{N_p / y_p} \unexsub{U_1 / \unvar{z_1}} \cdots \unexsub{U_q / \unvar{z_q}} (C \bagsep U) \\
                        &\pequiv (\lambda x . (M'[ {\widetilde{x}} \leftarrow  {x}]) (C \bagsep U)) \linexsub{N_1 / y_1} \cdots \linexsub{N_p / y_p} \unexsub{U_1 / \unvar{z_1}} \cdots \unexsub{U_q / \unvar{z_q}} \\
                      & \red   M'[ {\widetilde{x}} \leftarrow  {x}] \esubst{(C \bagsep U)}{x} \linexsub{N_1 / y_1} \cdots \linexsub{N_p / y_p} \unexsub{U_1 / \unvar{z_1}} \cdots \unexsub{U_q / \unvar{z_q}} = \expr{M}
                    \end{aligned}
                    \]
  Furthermore, we have:
  \[
                     \begin{aligned}
                          &\piencodf{\expr{M}}_u = \piencodf{M'[ {\widetilde{x}} \leftarrow  {x}] \esubst{(C \bagsep U)}{x} \linexsub{N_1 / y_1} \cdots \linexsub{N_p / y_p} \unexsub{U_1 / \unvar{z_1}} \cdots \unexsub{U_q / \unvar{z_q}}}_u \\
                          & = \bigoplus_{C_i \in \perm{C}} (\nu  \widetilde{y},\widetilde{z}, x)( x.\overline{\some}; x(\linvar{x}). x(\banged{x}). x. \close ; \piencodf{M'[ {\widetilde{x}} \leftarrow  {x}]}_u   \para  \piencodf{C_i \bagsep U}_x \para Q'' ) 
                    \end{aligned}
                    \]
                     
        We consider different possibilities for $n \geq 1$; in all  the cases, the result follows.                 
                         \smallskip

 \noindent  {\bf When $n = 1$:}
      We have $Q = Q_1$, $ \piencodf{\expr{N}}_u \red^1 Q_1$.
          We also have that 
          \begin{itemize}
          \item  $Q_1 \red^2 Q_3 = Q'$ , 
        \item $\expr{N} \red^1 M'[\widetilde{x} \leftarrow x]) \esubst{B}{x} = \expr{N}'$
        \item and $\piencodf{M'[\widetilde{x} \leftarrow x]) \esubst{B}{x}}_u = Q_3$.
        \end{itemize}
        
                                \smallskip

 \noindent  {\bf When $n = 2$:} the analysis is similar.


\noindent {\bf When $n \geq 3$:}
 We have $ \piencodf{\expr{N}}_u \red^3 Q_3 \red^l Q$, for $l \geq 0$. We also know that $\expr{N} \red \expr{M}$, $Q_3 = \piencodf{\expr{M}}_u$. By the IH, there exist $ Q' , \expr{N}'$ such that $Q \red^i Q'$, $\expr{M} \red_{\pequiv}^j \expr{N}'$ and $\piencodf{\expr{N}'}_u = Q'$ . Finally, $\piencodf{\expr{N}}_u \red^3 Q_3 \red^l Q \red^i Q'$ and $\expr{N} \rightarrow \expr{M}  \red_{\pequiv}^j \expr{N}'$.
                        
\item $M = \fail^{\widetilde{z}}$. 

Then,                     \(
                        \begin{aligned}
                            \piencodf{M}_v &= \piencodf{\fail^{\widetilde{z}}}_v = v.\overline{\none} \para \widetilde{z}.\overline{\none}.
                        \end{aligned}
                    \)
                    With this shape for $M$, we have:
                    
                    \[
                    \begin{aligned}
                        \piencodf{\expr{N}}_u & = \piencodf{(M\ (C \bagsep U))}_u\\ & = \bigoplus_{C_i \in \perm{C}} (\nu v)(\piencodf{M}_v \para v.\some_{u , \llfv{C}} ; \outact{v}{x} . ([v \leftrightarrow u] \para \piencodf{C_i \bagsep U}_x ) )\\
                        & = \bigoplus_{C_i \in \perm{C}} (\nu v)(v.\overline{\none} \para \widetilde{z}.\overline{\none} \para v.\some_{u , \llfv{C}} ; \outact{v}{x} . ([v \leftrightarrow u] \para \piencodf{C_i \bagsep U}_x ) )\\
                        & \red \bigoplus_{B_i \in \perm{B}}   u.\overline{\none} \para \widetilde{z}.\overline{\none}  \para \llfv{C_i}.\overline{\none} \\
                        \end{aligned}
                    \]

                    \end{enumerate}
                    
                    We also have that 
                    \(  \expr{N} = \fail^{\widetilde{x}}\ C \bagsep U \red  \sum_{\perm{C}} \fail^{\widetilde{x} \uplus \llfv{C}}  = \expr{M}.  \)
                    Furthermore,
                    \[
                     \begin{aligned}
                          \piencodf{\expr{M}}_u &= \piencodf{\sum_{\perm{C}} \fail^{\widetilde{z} \uplus \llfv{C}  } }_u 
                          = \bigoplus_{\perm{C}}\piencodf{ \fail^{\widetilde{z} \uplus \llfv{C} }}_u\\
                          &= \bigoplus_{\perm{C}}    u.\overline{\none} \para \widetilde{z}.\overline{\none}  \para  \llfv{C}.\overline{\none}.
                    \end{aligned}
                    \]

  \item When $m \geq 1$ and $ n \geq 0$, we distinguish two cases:
                   
 \begin{enumerate}
\item When $n = 0$:
                            
Then, $ \bigoplus_{C_i \in \perm{C}} (\nu v)(R \para v.\some_{u , \llfv{C}} ; \outact{v}{x} . ([v \leftrightarrow u] \para \piencodf{C_i \bagsep U}_x ) ) =  Q $ and $\piencodf{M}_u \red^m R$ where $m \geq 1$. Then by the IH there exist $R'$  and $\expr{M}' $ such that $R \red^i R'$, $M \red_{\pequiv}^j \expr{M}'$, and $\piencodf{\expr{M}'}_u = R'$.  Hence we have that 
        
    \[ 
    \begin{aligned}
  \piencodf{\expr{N}}_u & =  \bigoplus_{C_i \in \perm{C}} (\nu v)(\piencodf{M}_v \para v.\some_{u , \llfv{C}} ; \outact{v}{x} . ([v \leftrightarrow u] \para \piencodf{C_i \bagsep U}_x ) )\\
                            & \red^m  \bigoplus_{C_i \in \perm{C}} (\nu v)(R \para v.\some_{u , \llfv{C}} ; \outact{v}{x} . ([v \leftrightarrow u] \para \piencodf{C_i \bagsep U}_x ) )  = Q
                            \end{aligned}
                             \]
                            We also know that
                            \[ 
                            \begin{aligned}
                              Q & \red^i  \bigoplus_{C_i \in \perm{C}} (\nu v)(R' \para v.\some_{u , \llfv{C}} ; \outact{v}{x} . ([v \leftrightarrow u] \para \piencodf{C_i \bagsep U}_x ) ) = Q'\\
                            \end{aligned}
                            \]
                            and so the \lamrsharfailunres term can reduce as follows: $\expr{N} = (M\ ( C \bagsep U )) \red_{\pequiv}^j M'\ ( C \bagsep U ) = \expr{N}'$ and  $\piencodf{\expr{N}'}_u = Q'$.

                        \item When $n \geq 1$:
                        
                            Then  $R$ has an occurrence of an unguarded $v.\overline{\some}$ or $v.\overline{\none}$, hence it is of the form 
                            $ \piencodf{(\lambda x . (M'[ {\widetilde{x}} \leftarrow  {x}])) \linexsub{N_1 / y_1} \cdots \linexsub{N_p / y_p} \unexsub{U_1 / \unvar{z_1}} \cdots \unexsub{U_q / \unvar{z_q}} }_v $ or $ \piencodf{\fail^{\widetilde{x}}}_v. $ 
              This case follows by IH.
                    \end{enumerate}

            \end{enumerate}

        \end{enumerate}

        This concludes the analysis for the case $\expr{N} = (M \, ( C \bagsep U ))$.
        
        \item $\expr{N} = M[ {\widetilde{x}} \leftarrow  {x}]$.

    The sharing variable $ {x}$ is not free and the result follows by vacuity.
        
        \item $\expr{N} = M[ {\widetilde{x}} \leftarrow  {x}] \esubst{ C \bagsep U }{ x}$. Then we have
            
            \[
                \begin{aligned}
                    \piencodf{\expr{N}}_u &=\piencodf{ M[ {\widetilde{x}} \leftarrow  {x}] \esubst{ C \bagsep U }{ x} }_u\\
                    &= \bigoplus_{C_i \in \perm{C}} (\nu x)( x.\overline{\some}; x(\linvar{x}). x(\banged{x}). x. \close ;\piencodf{ M[ {\widetilde{x}} \leftarrow  {x}]}_u \para \piencodf{ C_i \bagsep U}_x )
                \end{aligned}
            \]

            Let us consider three cases.

            \begin{enumerate}
                \item When $\size{ {\widetilde{x}}} = \size{C}$.
                    Then let us consider the shape of the bag $ C$.
                    
  \begin{enumerate}
  \item When $C = \oneb$.
  
  We have the following
 \[
 \begin{aligned}
 \piencodf{\expr{N}}_u  &=  (\nu x)( x.\overline{\some}; x(\linvar{x}). x(\banged{x}). x. \close ;\piencodf{ M[ \leftarrow  {x}]}_u \para \piencodf{ \oneb \bagsep U}_x )\\
 &=  (\nu x)( x.\overline{\some}; x(\linvar{x}). x(\banged{x}). x. \close ;\piencodf{ M[ \leftarrow  {x}]}_u \para x.\some_{\llfv{C}} ; \outact{x}{\linvar{x}} . \\
 & \hspace{1cm}( \piencodf{ \oneb }_{\linvar{x}} \para\outact{x}{\banged{x}} .( !\banged{x}. (x_i). \piencodf{ U }_{x_i} \para x.\overline{\close} ) ) )\\
&\red (\nu x)(  x(\linvar{x}). x(\banged{x}). x. \close ;\piencodf{ M[ \leftarrow  {x}]}_u \para
 \outact{x}{\linvar{x}} .( \piencodf{ \oneb }_{\linvar{x}} \para \outact{x}{\banged{x}} . \\
 &\hspace{1cm} ( !\banged{x}. (x_i). \piencodf{ U }_{x_i}\para x.\overline{\close} ) ) ) & = Q_1  
                              \\
  &\red (\nu x,\linvar{x})(  x(\banged{x}). x. \close ;\piencodf{ M[ \leftarrow  {x}]}_u \para \piencodf{ \oneb }_{\linvar{x}} \para\outact{x}{\banged{x}} .\\
  & \hspace{1cm}( !\banged{x}. (x_i). \piencodf{ U }_{x_i} \para x.\overline{\close} ) ) & = Q_2  
                              \\
  &\red (\nu x,\linvar{x}, \banged{x})(  x. \close ;\piencodf{ M[ \leftarrow  {x}]}_u \para \piencodf{ \oneb }_{\linvar{x}} \para !\banged{x}. (x_i). \piencodf{ U }_{x_i} \para x.\overline{\close} ) & = Q_3   \\
 &\red (\nu \linvar{x}, \banged{x})(  \piencodf{ M[ \leftarrow  {x}]}_u \para \piencodf{ \oneb }_{\linvar{x}} \para !\banged{x}. (x_i). \piencodf{ U }_{x_i} ) & = Q_4\\
    & = (\nu \linvar{x}, \banged{x})( \linvar{x}. \overline{\some}. \outact{\linvar{x}}{y_i} . ( y_i . \some_{u,\llfv{M}} ;y_{i}.\close; \piencodf{M}_u \para \linvar{x}. \overline{\none}) \para \\
    & \qquad \linvar{x}.\some_{\emptyset} ; \linvar{x}(y_n). ( y_n.\overline{\some};y_n . \overline{\close} \para \linvar{x}.\some_{\emptyset} ; \linvar{x}. \overline{\none})  \para !\banged{x}. (x_i). \piencodf{ U }_{x_i} )
                            \\
      & \red (\nu \linvar{x}, \banged{x})(  \outact{\linvar{x}}{y_i} . ( y_i . \some_{u,\llfv{M}} ;y_{i}.\close; \piencodf{M}_u \para \linvar{x}. \overline{\none}) \para \\
                              & \qquad \linvar{x}(y_n). ( y_n.\overline{\some};y_n . \overline{\close} \para \linvar{x}.\some_{\emptyset} ; \linvar{x}. \overline{\none})  \para !\banged{x}. (x_i). \piencodf{ U }_{x_i} )  & = Q_5
                            \\
                            & \red (\nu \linvar{x}, \banged{x} , y_i)(   y_i . \some_{u,\llfv{M}} ;y_{i}.\close; \piencodf{M}_u \para \linvar{x}. \overline{\none} \para  y_i.\overline{\some};y_i . \overline{\close}  \\
 & \hspace{1cm}\para \linvar{x}.\some_{\emptyset} ;\linvar{x}. \overline{\none}  \para !\banged{x}. (x_i). \piencodf{ U }_{x_i} )  & = Q_6
                            \\
                            & \red (\nu \linvar{x}, \banged{x} , y_i)(  y_{i}.\close; \piencodf{M}_u \para \linvar{x}. \overline{\none} \para  y_i . \overline{\close} \para \linvar{x}.\some_{\emptyset} ; \linvar{x}. \overline{\none}\\
                            & \hspace{
                            1cm }\para !\banged{x}. (x_i). \piencodf{ U }_{x_i} )  & = Q_7
                            \\
                            & \red (\nu \linvar{x}, \banged{x} )(  \piencodf{M}_u \para \linvar{x}. \overline{\none} \para  \linvar{x}.\some_{\emptyset} ; \linvar{x}. \overline{\none}  \para !\banged{x}. (x_i). \piencodf{ U }_{x_i} )  & = Q_8
                            \\
                            & \red (\nu \banged{x})(  \piencodf{M}_u \para !\banged{x}. (x_i). \piencodf{ U }_{x_i} )  
                            =  \piencodf{M \unexsub{U / \unvar{x}}}_u
                            & = Q_9
                            \end{aligned}
                            \]
                            Notice how $Q_8$ has a choice however the $\linvar{x}$ name can be closed at any time so for simplicity we only perform communication across this name once all other names have completed their reductions.
                            
                        Now we proceed by induction on the number of reductions $\piencodf{\expr{N}}_u \red^k Q$.
                            
                            \begin{enumerate}
                                
                                \item When $k = 0$, the result follows trivially. Just take $\mathbb{N}=\mathbb{N}'$ and $\piencodf{\expr{N}}_u=Q=Q'$.
                                

                                \item When $k = 1$.
                                
                                    We have $Q = Q_1$, $ \piencodf{\expr{N}}_u \red^1 Q_1$
                                    We also have that $Q_1 \red^8 Q_9 = Q'$ , $\expr{N} \red M \unexsub{U / \unvar{x}} = M$ and $\piencodf{ M }_u = Q_9$
                                    
                                \item When $2 \leq  k \leq 8$.
                                
                                    Proceeds similarly to the previous case
                                
                                \item When $k \geq 9$.
                                
      We have $ \piencodf{\expr{N}}_u \red^9 Q_9 \red^l Q$, for $l \geq 0$. Since $Q_9 = \piencodf{ M }_u$ we apply the induction hypothesis we have that  there exist $ Q' , \expr{N}' \ s.t. \ Q \red^i Q' ,  M \red_{\pequiv}^j \expr{N}'$ and $\piencodf{\expr{N}'}_u = Q'$.                                    Then,  $ \piencodf{\expr{N}}_u \red^5 Q_5 \red^l Q \red^i Q'$ and by the contextual reduction rule it follows that $\expr{N} = (M[ \leftarrow x])\esubst{ 1 }{ x } \red_{\pequiv}^j  \expr{N}' $ and the case holds.

\end{enumerate}     
                        
\item When $C = \bag{N_1} \cdot \cdots \cdot \bag{N_l}$, for $l \geq 1$.
                    Then,
                    
 \[
   \begin{aligned}
   \piencodf{\expr{N}}_u &=\piencodf{ M[ {\widetilde{x}} \leftarrow  {x}] \esubst{ C \bagsep U }{x} }_u\\
   &= \bigoplus_{C_i \in \perm{C}} (\nu x)( x.\overline{\some}; x(\linvar{x}). x(\banged{x}). x. \close ;\piencodf{ M[ {\widetilde{x}} \leftarrow  {x}]}_u \para \piencodf{ C_i \bagsep U}_x ) \\
  &\red ^{4} \bigoplus_{C_i \in \perm{C}}(\nu \linvar{x}, \banged{x})(  \piencodf{ M[ {\widetilde{x}} \leftarrow  {x}]}_u \para \piencodf{ C_i }_{\linvar{x}} \para !\banged{x}. (x_i). \piencodf{ U }_{x_i} )\\
 &= 
  \bigoplus_{C_i \in \perm{C}} (\nu \linvar{x}, \banged{x})( \linvar{x}.\overline{\some}. \outact{\linvar{x}}{y_1}. (y_1 . \some_{\emptyset} ;y_{1}.\close;0 \para \linvar{x}.\overline{\some};\\
  &\qquad \linvar{x}.\some_{u, (\llfv{M} \setminus  {x}_1 , \cdots ,  {x}_l )}; 
     \linvar{x}( {x}_1) . \cdots \linvar{x}.\overline{\some}. \outact{\linvar{x}}{y_l} . (y_l . \some_{\emptyset} ; y_{l}.\close;0 \\
     &\qquad\para \linvar{x}.\overline{\some};\linvar{x}.\some_{u,(\llfv{M} \setminus  {x}_l )};\linvar{x}( {x}_l) . \linvar{x}.\overline{\some}; \outact{\linvar{x}}{y_{l+1}}. ( y_{l+1} . \some_{u,\llfv{M}} ; \\
    &\qquad y_{l+1}.\close;  \piencodf{M}_u \para \linvar{x}.\overline{\none} )) \cdots ) \para  \linvar{x}.\some_{\llfv{C}} ; \linvar{x}(y_1). \linvar{x}.\some_{y_1, \llfv{C} };\\
   & \qquad \linvar{x}.\overline{\some} ; \outact{\linvar{x}}{ {x}_1}. ( {x}_1.\some_{\llfv{C_i(1)}} ; \piencodf{C_{i}(1)}_{ {x}_1}  \para y_1. \overline{\none}\para \cdots  \linvar{x}.\some_{\llfv{C_{i}(l)}} ; \\
   &\qquad \linvar{x}(y_l). \linvar{x}. \some_{y_l, \llfv{C_{i}(l)}} ;\linvar{x}.\overline{\some} ; \outact{\linvar{x}}{ {x}_l}. ( {x}_l.\some_{\llfv{C_{i}(l)}} ; \piencodf{C_{i}(l)}_{ {x}_l}  \\
    & \qquad  \para y_l. \overline{\none}\para\linvar{x}.\some_{\emptyset} ; \linvar{x}(y_{l+1}). ( y_{l+1}.\overline{\some};y_{l+1} . \overline{\close} \para \linvar{x}.\some_{\emptyset} ; \linvar{x}. \overline{\none})
                                  )
                                  )
 \\
 &\qquad \para !\banged{x}. (x_i). \piencodf{ U }_{x_i} )\\ 
 & \red ^{5l}
  \bigoplus_{C_i \in \perm{C}} (\nu \linvar{x}, \banged{x} ,  {x}_1,y_1, \cdots ,  {x}_l,y_1)( y_1 . \some_{\emptyset} ;y_{1}.\close;0 \para  \cdots  y_l . \some_{\emptyset} ;\\
 &\hspace{1cm}  y_{l}.\close;0 \para  \linvar{x}.\overline{\some}; \outact{\linvar{x}}{y_{l+1}}. ( y_{l+1} . \some_{u,\llfv{M}} ;y_{l+1}.\close; \piencodf{M}_u \para \linvar{x}.\overline{\none} )
 \para \\
    & \hspace{1cm}  {x}_1.\some_{\llfv{C_{i}(1)}} ; \piencodf{C_{i}(1)}_{ {x}_1}  \para y_1. \overline{\none} \para \cdots    {x}_l.\some_{\llfv{C_{i}(l)}} ; \piencodf{C_{i}(l)}_{ {x}_l}  \para y_l. \overline{\none}\para\\
     & \hspace{1cm}\linvar{x}.\some_{\emptyset} ; \linvar{x}(y_{l+1}). ( y_{l+1}.\overline{\some};y_{l+1} . \overline{\close} \para \linvar{x}.\some_{\emptyset} ; \linvar{x}. \overline{\none})
  \para !\banged{x}. (x_i). \piencodf{ U }_{x_i} )
                               \\
     & \red ^{5}
  \bigoplus_{C_i \in \perm{C}} (\nu \banged{x} ,  {x}_1,y_1, \cdots ,  {x}_l,y_1)(  y_1 . \some_{\emptyset} ;y_{1}.\close;0 \para  \cdots  y_l . \some_{\emptyset} ; y_{l}.\close;0  \\
   & \qquad   \para \piencodf{M}_u \para {x}_1.\some_{\llfv{C_{i}(1)}} ; \piencodf{C_{i}(1)}_{ {x}_1}  \para y_1. \overline{\none} \para \cdots    {x}_l.\some_{\llfv{C_{i}(l)}} ; \piencodf{C_{i}(l)}_{ {x}_l} \\
   & \qquad \para y_l. \overline{\none} \para !\banged{x}. (x_i). \piencodf{ U }_{x_i})
                           \\
  & \red ^{l}  \bigoplus_{C_i \in \perm{C}} (\nu \banged{x} ,  {x}_1 \cdots ,  {x}_l)( \piencodf{M}_u \para  {x}_1.\some_{\llfv{C_{i}(1)}} ; \piencodf{C_{i}(1)}_{ {x}_1}  \para  \cdots  \\
  & \hspace{1cm}\para   {x}_l.\some_{\llfv{C_{i}(l)}} ;\piencodf{C_{i}(l)}_{ {x}_l} \para  !\banged{x}. (x_i). \piencodf{ U }_{x_i} )
                                 \\
  & = \piencodf{\sum_{C_i \in \perm{C}}M\linexsub{C_i(1)/ {x_1}} \cdots \linexsub{C_i(l)/ {x_l}} \unexsub{U /\unvar{x} }}_{u}= Q_{6l + 9}\\
  \end{aligned}
           \]

                            The proof follows by induction on the number of reductions $\piencodf{\expr{N}}_u \red^k Q$.
                            
\begin{enumerate}
\item When $k = 0$, the result follows trivially. Just take $\mathbb{N}=\mathbb{N}'$ and $\piencodf{\expr{N}}_u=Q=Q'$. 
                                
 \item When $1 \leq k \leq 6l + 9$.

 Let $Q_k$ such that $ \piencodf{\expr{N}}_u \red^k Q_k$.
                                    We also have that $Q_k \red^{6l + 9 - k} Q_{6l + 9} = Q'$ ,
                                    
                                    $\expr{N} \red^1 \sum_{C_i \in \perm{C}}M\linexsub{C_i(1)/ {x_1}} \cdots \linexsub{C_i(l)/ {x_l}} \unexsub{U /\unvar{x} } = \expr{N}'$ and 
                                    
                                    $\piencodf{\sum_{C_i \in \perm{C}}M\linexsub{C_i(1)/ {x_1}} \cdots \linexsub{C_i(l)/ {x_l}} \unexsub{U /\unvar{x} }}_u = Q_{6l + 9}$.
                                
\item When $k > 6l + 9$.

Then,  $ \piencodf{\expr{N}}_u \red^{6l + 9} Q_{6l + 9} \red^n Q$ for $n \geq 1$. Also, 

\(
\begin{aligned} 
&\expr{N} \red^1 \sum_{C_i \in \perm{C}}M\linexsub{C_i(1)/ {x_1}} \cdots \linexsub{C_i(l)/ {x_l}} \unexsub{U /\unvar{x} } \text { and } \\
&    Q_{6l + 9} = \piencodf{\sum_{C_i \in \perm{C}}M\linexsub{C_i(1)/ {x_1}} \cdots \linexsub{C_i(l)/ {x_l}} \unexsub{U /\unvar{x} }}_u.
    \end{aligned}
\)

By the induction hypothesis, there exist $ Q'$ and $\expr{N}'$ such that  
$\ Q \red^i Q'$, 

$\sum_{C_i \in \perm{C}}M\linexsub{C_i(1)/ {x_1}} \cdots \linexsub{C_i(l)/ {x_l}} \unexsub{U /\unvar{x} } \red_{\pequiv}^j \expr{N}'$ and $\piencodf{\expr{N}'}_u = Q'$. 

Finally, $\piencodf{\expr{N}}_u \red^{6l + 9} Q_{6l + 9} \red^n Q \red^i Q'$ and $$ \expr{N} \rightarrow \sum_{C_i \in \perm{C}}M\linexsub{C_i(1)/ {x_1}} \cdots \linexsub{C_i(l)/ {x_l}} \unexsub{U /\unvar{x} }  \red_{\pequiv}^j \expr{N}'. $$

                            \end{enumerate}

                    \end{enumerate}

                \item When $\size{\widetilde{x}} > \size{C}$.
                
                    Then we have 
                    $\expr{N} = M[ {x}_1, \cdots ,  {x}_k \leftarrow  {x}]\ \esubst{ C \bagsep U }{x}$ with $C = \bag{N_1}  \cdots  \bag{N_l} \quad k > l$. $\expr{N} \red  \sum_{C_i \in \perm{C}}  \fail^{\widetilde{z}} = \expr{M}$ and $ \widetilde{z} =  (\llfv{M} \setminus \{   {x}_1, \cdots ,  {x}_k \} ) \cup \llfv{C} $. On the one hand, we have:
                    Hence $k = l + m$ for some $m \geq 1$
                
                    \[
                    \begin{aligned}
                        \piencodf{N}_u &= \piencodf{M[ {x}_1, \cdots ,  {x}_k \leftarrow  {x}]\ \esubst{ C \bagsep U }{x}}_u \\
                        & = \bigoplus_{C_i \in \perm{C}} (\nu x)( x.\overline{\some}; x(\linvar{x}). x(\banged{x}). x. \close ;\piencodf{ M[ {x}_1, \cdots ,  {x}_k \leftarrow  {x}]}_u \para \piencodf{ C_i \bagsep U}_x ) \\
                              &\red ^{4} \bigoplus_{C_i \in \perm{C}}(\nu \linvar{x}, \banged{x})(  \piencodf{M[ {x}_1, \cdots ,  {x}_k \leftarrow  {x}]}_u \para \piencodf{ C_i }_{\linvar{x}} \para !\banged{x}. (x_i). \piencodf{ U }_{x_i} )\\
                              &=  \bigoplus_{C_i \in \perm{C}} (\nu \linvar{x}, \banged{x})(  \linvar{x}.\overline{\some}. \outact{\linvar{x}}{y_1}. (y_1 . \some_{\emptyset} ;y_{1}.\close;0 \para \linvar{x}.\overline{\some};\linvar{x}.\some_{u, (\llfv{M} \setminus  {x}_1 , \cdots ,  {x}_k )}; \\
  &\linvar{x}( {x}_1) . \cdots\linvar{x}.\overline{\some}. \outact{\linvar{x}}{y_k} . (y_k . \some_{\emptyset} ; y_{k}.\close;0 \para \linvar{x}.\overline{\some};\linvar{x}.\some_{u,(\llfv{M} \setminus  {x}_k )};\linvar{x}( {x}_k) . \\
 &\linvar{x}.\overline{\some}; \outact{\linvar{x}}{y_{k+1}}. ( y_{k+1} . \some_{u,\llfv{M}} ;y_{k+1}.\close; \piencodf{M}_u \para \linvar{x}.\overline{\none} )) \cdots ) \para \\
 & \linvar{x}.\some_{\llfv{C}} ; \linvar{x}(y_1). \linvar{x}.\some_{y_1, \llfv{C}} ;\linvar{x}.\overline{\some} ; \outact{\linvar{x}}{ {x}_1}. ( {x}_1.\some_{\llfv{C_i(1)}} ; \piencodf{C_i(1)}_{ {x}_1} \para y_1. \overline{\none} \para \cdots \\
 & \linvar{x}.\some_{\llfv{C_i(l)}} ; \linvar{x}(y_l). \linvar{x}.\some_{y_l, \llfv{C_i(l)} };\linvar{x}.\overline{\some} ; \outact{\linvar{x}}{ {x}_l}. ( {x}_l.\some_{\llfv{C_i(l)}} ; \piencodf{C_i(l)}_{ {x}_l} \para y_l. \overline{\none} \para \\
  & \linvar{x}.\some_{\emptyset} ; \linvar{x}(y_{l+1}). ( y_{l+1}.\overline{\some};y_{l+1} . \overline{\close} \para \linvar{x}.\some_{\emptyset} ; \linvar{x}. \overline{\none}) ) 
                               ) \para !\banged{x}. (x_i). \piencodf{ U }_{x_i})  \\
&\red^{5l} \bigoplus_{C_i \in \perm{C}} (\nu \linvar{x}, \banged{x} , y_1,  {x}_1, \cdots  y_l,  {x}_l)(   y_1 . \some_{\emptyset} ;y_{1}.\close;0 \para \cdots \para y_l . \some_{\emptyset} ;y_{l}.\close;0 \\
 &\linvar{x}.\overline{\some}. \outact{\linvar{x}}{y_{l+1}} . (y_{l+1} . \some_{\emptyset} ; y_{l+1}.\close;0 \para \linvar{x}.\overline{\some};\linvar{x}.\some_{u,(\llfv{M} \setminus  {x}_{l+1} , \cdots ,  {x}_k )};\linvar{x}( {x}_{l+1}) . \cdots \\
&\linvar{x}.\overline{\some}. \outact{\linvar{x}}{y_k} . (y_k . \some_{\emptyset} ; y_{k}.\close;0 \para \linvar{x}.\overline{\some};\linvar{x}.\some_{u,(\llfv{M} \setminus  {x}_k )};\linvar{x}( {x}_k) . \\
 &\linvar{x}.\overline{\some}; \outact{\linvar{x}}{y_{k+1}}. ( y_{k+1} . \some_{u,\llfv{M}} ;y_{k+1}.\close; \piencodf{M}_u \para \linvar{x}.\overline{\none} )) \cdots ) \para \\
 &    {x}_1.\some_{\llfv{C_i(1)}} ; \piencodf{C_i(1)}_{ {x}_1} \para \cdots \para   {x}_l.\some_{\llfv{C_i(l)}} ; \piencodf{C_i(l)}_{ {x}_l} \para  y_1. \overline{\none} \para \cdots \para y_l. \overline{\none}\\
& \linvar{x}.\some_{\emptyset} ; \linvar{x}(y_{l+1}). ( y_{l+1}.\overline{\some};y_{l+1} . \overline{\close} \para \linvar{x}.\some_{\emptyset} ; \linvar{x}. \overline{\none}) \para !\banged{x}. (x_i). \piencodf{ U }_{x_i}) \qquad (:= P_{\mathbb{N}})
                        \end{aligned}
                        \]
                        \[
                        \begin{aligned}
    \qquad  P_{\mathbb{N}}              
 & \red^{l+ 5} 
 \bigoplus_{C_i \in \perm{C}} (\nu \linvar{x}, \banged{x} ,  {x}_1, \cdots,  {x}_l) ( \linvar{x}.\some_{u,(\llfv{M} \setminus  {x}_{l+1} , \cdots ,  {x}_k )};\linvar{x}( {x}_{l+1}) . \cdots \\ 
 &\linvar{x}.\overline{\some}. \outact{\linvar{x}}{y_k} . (y_k . \some_{\emptyset} ; y_{k}.\close;0 \para \linvar{x}.\overline{\some};\linvar{x}.\some_{u,(\llfv{M} \setminus  {x}_k )};\linvar{x}( {x}_k) . \\
 &\linvar{x}.\overline{\some}; \outact{\linvar{x}}{y_{k+1}}. ( y_{k+1} . \some_{u,\llfv{M}} ;y_{k+1}.\close; \piencodf{M}_u \para \linvar{x}.\overline{\none} ) )  \para \\
 &    {x}_1.\some_{\llfv{C_i(1)}} ; \piencodf{C_i(1)}_{ {x}_1} \para \cdots \para   {x}_l.\some_{\llfv{C_i(l)}} ; \piencodf{C_i(l)}_{ {x}_l} \para   \linvar{x}. \overline{\none} \para !\banged{x}. (x_i). \piencodf{ U }_{x_i} ) \\
  & \red
  \bigoplus_{C_i \in \perm{C}} (\nu \banged{x} ,  {x}_1, \cdots,  {x}_l)(  u . \overline{\none} \para  {x}_1 . \overline{\none} \para  \cdots \para  {x}_{l} . \overline{\none} \para (\llfv{M} \setminus \{   {x}_1, \cdots ,  {x}_k \} ). \overline{\none} \para  \\
  &  {x}_1.\some_{\llfv{C_i(1)}} ; \piencodf{C_i(1)}_{ {x}_1} \para \cdots \para   {x}_l.\some_{\llfv{C_i(l)}} ; \piencodf{C_i(l)}_{ {x}_l}  ) \\ 
  & \red^{l}   \bigoplus_{C_i \in \perm{C}}  u . \overline{\none} \para (\llfv{M} \setminus \{  x_1, \cdots , x_k \} ). \overline{\none} \para \llfv{C}. \overline{\none} \\
 &= \piencodf{\sum_{C_i \in \perm{C}}  \fail^{\widetilde{z}}}_u = Q_{7l + 10}  \\
 \end{aligned}
 \]
The rest of the proof is by induction on the number of reductions $\piencodf{\expr{N}}_u \red^j Q$.
                            
                            \begin{enumerate}
                                \item When $j = 0$, the result follows trivially. Just take $\mathbb{N}=\mathbb{N}'$ and $\piencodf{\expr{N}}_u=Q=Q'$.
  \item When $1 \leq j \leq 7l + 10$.
                                    
Let $Q_j$ be such that $ \piencodf{\expr{N}}_u \red^j Q_j$.
By the steps above one has 

\(\begin{aligned}
  &Q_j \red^{7l + 10 - j} Q_{7l + 6} = Q',\\ &\expr{N} \red^1 \sum_{C_i \in \perm{C}}  \fail^{\widetilde{z}} = \expr{N}';\text{ and} \piencodf{\sum_{C_i \in \perm{C}}  \fail^{\widetilde{z}}}_u = Q_{7l + 10}.
\end{aligned}
\)
\item When $j > 7l + 10$.

In this case, we have 
$$ \piencodf{\expr{N}}_u \red^{7l + 10} Q_{7l + 10} \red^n Q,$$ for $n \geq 1$. 
We also know that 
$\expr{N} \red^1 \sum_{C_i \in \perm{C}}  \fail^{\widetilde{z}}$. However no further reductions can be performed.

                            \end{enumerate}

                \item When $\size{\widetilde{x}} < \size{C}$, the proof  proceeds similarly to the previous case.
                    
            \end{enumerate}

        \item  $\expr{N} =  M \linexsub{N' /  {x}}$. 
    
           In this case, 
            \(                \piencodf{M \linexsub{N' /  {x}}}_u =  (\nu  {x}) ( \piencodf{ M }_u \para  {x}.\some_{\llfv{N'}};\piencodf{ N' }_{ {x}} ).
            \)
            Therefore,
            \[
            \begin{aligned}
               \piencodf{\expr{N}}_u & =  (\nu  {x}) ( \piencodf{ M }_u \para  {x}.\some_{\llfv{N'}};\piencodf{ N' }_{ {x}} ) \red^m  (\nu  {x}) ( R \para  {x}.\some_{\llfv{N'}};\piencodf{ N' }_{ {x}}  ) \red^n  Q,\\
            \end{aligned}
            \]
            
            for some process $R$. Where $\red^n$ is a reduction that  initially synchronizes with $ {x}.\some_{\llfv{N'}}$ when $n \geq 1$, $n + m = k \geq 1$. Type preservation in \spi ensures reducing $\piencodf{ M}_v \red^m$ does not consume possible synchronizations with $ {x}.\some$, if they occur. Let us consider the the possible sizes of both $m$ and $n$.
            
            \begin{enumerate}
                \item For $m = 0$ and $n \geq 1$.
                
                    We have that $R = \piencodf{M}_u$ as $\piencodf{M}_u \red^0 \piencodf{M}_u$. 
                    
                    Notice that there are two possibilities of having an unguarded $x.\overline{\some}$ or $x.\overline{\none}$ without internal reductions: 
                    
                    \begin{enumerate}
                        \item $M = \fail^{ {x}, \widetilde{y}}$.
  \[
  \begin{aligned}
  \piencodf{\expr{N}}_u & =   (\nu  {x}) ( \piencodf{ M }_u \para  {x}.\some_{\llfv{N'}};\piencodf{ N' }_{ {x}} )\\
  &=   (\nu  {x}) ( \piencodf{ \fail^{ {x}, \widetilde{y}}}_u \para  {x}.\some_{\llfv{N'}};\piencodf{ N' }_{ {x}} ) \\
    & =   (\nu  {x}) ( u.\overline{\none} \para  {x}.\overline{\none} \para  \widetilde{y}.\overline{\none} \para  {x}.\some_{\llfv{N'}};\piencodf{ N' }_{ {x}} )\\
    &\red u.\overline{\none} \para \widetilde{y}.\overline{\none} \para \llfv{N'}.\overline{\none} \\
  \end{aligned}
    \]
  Notice that no further reductions can be performed.
  Thus,                          
 $$ \piencodf{\expr{N}}_u \red u.\overline{\none} \para \widetilde{y}.\overline{\none} \para \lfv{N'}.\overline{\none}  = Q'.$$
We also have that  $\expr{N} \red \fail^{ \widetilde{y} \cup \llfv{N'} } = \expr{N}'$ and $\piencodf{ \fail^{\widetilde{y} \cup \llfv{N'}} }_u = Q'$.

                        \item $\headf{M} =  {x}$.
                    
                            By the diamond property we will be reducing each non-deterministic choice of a process simultaneously.
                            Then we have the following
 \[
 \begin{aligned}
 \piencodf{\expr{N}}_u  & =  (\nu  {x}) ( \bigoplus_{i \in I}(\nu \widetilde{y})(\piencodf{  {x} }_{j} \para P_i) \para    {x}.\some_{\llfv{N'}};\piencodf{ N' }_{ {x}}  ) \\
 & =  (\nu  {x}) ( \bigoplus_{i \in I}(\nu \widetilde{y})(  {x}.\overline{\some} ; [ {x} \leftrightarrow j]  \para P_i) \para    {x}.\some_{\llfv{N'}};\piencodf{ N' }_{ {x}}  ) \\
 & \red   (\nu  {x}) ( \bigoplus_{i \in I}(\nu \widetilde{y})(  [ {x} \leftrightarrow j]  \para P_i) \para   \piencodf{ N' }_{ {x}}  )& = Q_1 \\
 & \red \bigoplus_{i \in I}(\nu \widetilde{y})(  \piencodf{ N' }_{j} \para P_i ) & = Q_2\\
 \end{aligned}
 \]
                            
In addition,
\(
\expr{N} =M \linexsub {N' / {x}}\red  M \headlin{ N' / {x} } = \expr{M}\).
Finally,
\[
\begin{aligned}                               \piencodf{\expr{M}}_u= \piencodf{M \headlin{ N'/  {x} }}_u &= \bigoplus_{i \in I}(\nu \widetilde{y})(  \piencodf{ N' }_{j} \para P_i ) & = Q_2. 
\end{aligned}
\]
                            
\begin{enumerate}
\item When $n = 1$:

Then, $Q = Q_1$ and  $ \piencodf{\expr{N}}_u \red^1 Q_1$. Also,

$Q_1 \red^1 Q_2 = Q'$, $\expr{N} \red^1 M \headlin{ N'/ {x}} = \expr{N}'$ and $\piencodf{M \headlin{ N'/ {x}}}_u = Q_2$.
\item When $n \geq 2$:

Then  $ \piencodf{\expr{N}}_u \red^2 Q_2 \red^l Q$, for $l \geq 0$.  Also, 
$\expr{N} \rightarrow \expr{M}$, $Q_2 = \piencodf{\expr{M}}_u$. By the induction hypothesis, there exist $ Q'$ and $\expr{N}'$ such that $ Q \red^i Q'$, $\expr{M} \red_{\pequiv}^j \expr{N}'$ and $\piencodf{\expr{N}'}_u = Q'$. Finally, $\piencodf{\expr{N}}_u \red^2 Q_2 \red^l Q \red^i Q'$ and $\expr{N} \rightarrow \expr{M}  \red_{\pequiv}^j \expr{N}'$.
                                
                            \end{enumerate}

                    \end{enumerate}
 \item  For $m \geq 1$ and $ n \geq 0$.
                    
            \begin{enumerate}
            \item When $n = 0$.
            
               Then $  (\nu  {x}) ( R \para  {x}.\some_{\llfv{N'}};\piencodf{ N' }_{ {x}} )  = Q$ and $\piencodf{M}_u \red^m R$ where $m \geq 1$. By the IH there exist $R'$  and $\expr{M}' $ such that $R \red^i R'$, $M \red_{\pequiv}^j \expr{M}'$ and $\piencodf{\expr{M}'}_u = R'$. Thus,
              \[ 
               \begin{aligned}
                   \piencodf{\expr{N}}_u & = (\nu  {x}) ( \piencodf{M}_u  \para  {x}.\some_{\llfv{N'}};\piencodf{ N' }_{ {x}}  ) \red^m  (\nu  {x}) ( R \para  {x}.\some_{\llfv{N'}};\piencodf{ N' }_{ {x}} )  = Q
                \end{aligned}
                 \]
                Also,
                \(
               Q  \red^i  (\nu  {x}) ( R' \para  {x}.\some_{\llfv{N'}};\piencodf{ N' }_{ {x}} )  = Q',
                \)
                and the term can reduce as follows: $\expr{N} = M \linexsub {N' / {x}} \red_{\pequiv}^j \sum_{M_i' \in \expr{M}'} M_i' \linexsub {N' / {x}} = \expr{N}'$ and  $\piencodf{\expr{N}'}_u = Q'$

            \item When $n \geq 1$.
                Then  $R$ has an occurrence of an unguarded $x.\overline{\some}$ or $x.\overline{\none}$, this case follows by IH.
                        
                    \end{enumerate}
            \end{enumerate}

            \item  $\expr{N} =  M \unexsub{U / \unvar{x}}$. 
    
            In this case, 
            \(
            \begin{aligned}
                \piencodf{M \unexsub{U / \unvar{x}}}_u &=   (\nu \banged{x}) ( \piencodf{ M }_u \para   !\banged{x}. (x_i).\piencodf{ U }_{x_i} ).
            \end{aligned}
            \)
            Then, 
            \[
            \begin{aligned}
               \piencodf{\expr{N}}_u & =   (\nu \banged{x}) ( \piencodf{ M }_u \para   !\banged{x}. (x_i).\piencodf{ U }_{x_i} )  \red^m  (\nu \banged{x}) ( R \para   !\banged{x}. (x_i).\piencodf{ U }_{x_i} ) \red^n  Q.
            \end{aligned}
            \]
            for some process $R$. Where $\red^n$ is a reduction initially synchronises with $!\banged{x}. (x_i)$ when $n \geq 1$, $n + m = k \geq 1$. Type preservation in \spi ensures reducing $\piencodf{ M}_v \red^m$ doesn't consume possible synchronisations with $!\banged{x}. (x_i)$ if they occur. Let us consider the the possible sizes of both $m$ and $n$.
            
            \begin{enumerate}
                \item For $m = 0$ and $n \geq 1$.
                
                   In this case,  $R = \piencodf{M}_u$ as $\piencodf{M}_u \red^0 \piencodf{M}_u$. 
                    
                    Notice that the only possibility of having an unguarded $ \outsev{\banged{x}}{{x_i}}$ without internal reductions is when   $\headf{M} =  {x}[ind].$
                           By the diamond property  we will be reducing each non-deterministic choice of a process simultaneously.
                            Then we have the following:
                            
                            \[
                            \begin{aligned}
                            \piencodf{\expr{N}}_u  & =  (\nu \banged{x}) ( \bigoplus_{i \in I}(\nu \widetilde{y})(\piencodf{  {x}[ind] }_{j} \para P_i) \para   !\banged{x}. (x_i).\piencodf{ U }_{x_i} ) \\
                            & =  (\nu \banged{x}) ( \bigoplus_{i \in I}(\nu \widetilde{y})( \outsev{\banged{x}}{{x_i}}. {x}_i.l_{ind}; [{x_i} \leftrightarrow j]  \para P_i) \para   !\banged{x}. (x_i).\piencodf{ U }_{x_i}  ) \\
                            & \red  (\nu \banged{x}) ( \bigoplus_{i \in I}(\nu \widetilde{y})( (\nu x_i) ({x}_i.l_{ind}; [{x_i} \leftrightarrow j] \para \piencodf{ U }_{x_i} ) \para P_i) \para   !\banged{x}. (x_i).\piencodf{ U }_{x_i}  ) &= Q_1 \\
                            & =  (\nu \banged{x}) ( \bigoplus_{i \in I}(\nu \widetilde{y})( (\nu x_i) ({x}_i.l_{ind}; [{x_i} \leftrightarrow j] \para x_i. case( ind.\piencodf{U_{ind}}_{x_i} ) ) \para P_i) \\
                            &\para   !\banged{x}. (x_i).\piencodf{ U }_{x_i}  ) \\
                            & \red  (\nu \banged{x}) ( \bigoplus_{i \in I}(\nu \widetilde{y})( (\nu x_i) ( [{x_i} \leftrightarrow j] \para \piencodf{U_{ind}}_{x_i} ) \para P_i) \para   !\banged{x}. (x_i).\piencodf{ U }_{x_i}  ) &= Q_2\\
                            & \red  (\nu \banged{x}) ( \bigoplus_{i \in I}(\nu \widetilde{y})( \piencodf{U_{ind}}_{j}  \para P_i) \para   !\banged{x}. (x_i).\piencodf{ U }_{x_i}  ) &= Q_3 \\
                            \end{aligned}
                            \]

                We consider the two cases of the form of $U_{ind}$ and show that the choice of $U_{ind}$ is inconsequential
            
                \begin{itemize}
                \item When $ U_i = \banged{\bag{N}}$:
                
                In this case, 
                \(
                \begin{aligned}
                \expr{N} &=M \unexsub{U / \unvar{x}}\red M \headlin{ N /\banged{x} }\unexsub{U / \unvar{x}} = \expr{M}.
                \end{aligned}
                \)
                 and 
                \[
                 \begin{aligned}
                 \piencodf{\expr{M}}_u= \piencodf{M \headlin{ N /\banged{x} }\unexsub{U / \unvar{x}}}_u &=  (\nu \banged{x}) ( \bigoplus_{i \in I}(\nu \widetilde{y})( \piencodf{\bag{N}}_{j}  \para P_i) \para   !\banged{x}. (x_i).\piencodf{ U }_{x_i}  ) & = Q_3 
                                            \end{aligned}
                 \]
                
                \item When $ U_i = \banged{\oneb} $:
                  
                  In this case,
                        \(
                        \begin{aligned}
                            \expr{N} &=M \unexsub{U / \unvar{x}} \red M \headlin{ \fail^{\emptyset} /\banged{x} } \unexsub{U /\unvar{x} } = \expr{M}.
                        \end{aligned}
                        \)
                        
                        Notice that $\piencodf{\banged{\oneb}}_{j} =  j.\none$ and that $\piencodf{\fail^{\emptyset}}_j = j.\overline{\none}$. In addition,
 
                            \[
                            \begin{aligned}
                               \piencodf{\expr{M}}_u&= \piencodf{M \headlin{ \fail^{\emptyset} /\banged{x} } \unexsub{U /\unvar{x} }}_u\\
                               &=  (\nu \banged{x}) ( \bigoplus_{i \in I}(\nu \widetilde{y})( \piencodf{\fail^{\emptyset}}_{j}  \para P_i) \para   !\banged{x}. (x_i).\piencodf{ U }_{x_i}  ) \\
                               & = (\nu \banged{x}) ( \bigoplus_{i \in I}(\nu \widetilde{y})( \piencodf{\banged{\oneb}}_{j}  \para P_i) \para   !\banged{x}. (x_i).\piencodf{ U }_{x_i}  ) & = Q_3 
                            \end{aligned}
                            \]
                \end{itemize}
                
                Both choices give an $\expr{M}$ that are equivalent to $Q_3$.

    \begin{enumerate}
    \item When $n \leq 2$.
    
   In this case, $Q = Q_n$ and  $ \piencodf{\expr{N}}_u \red^n Q_n$.
                                 
Also, $Q_n \red^{3-n} Q_3 = Q'$, $\expr{N} \red^1 \expr{M} = \expr{N}'$ and $\piencodf{\expr{M} }_u = Q_2$.
                                 
     \item When $n \geq 3$.
     
     We have $ \piencodf{\expr{N}}_u \red^3 Q_3 \red^l Q$ for $l \geq 0$. We also know that $\expr{N} \rightarrow \expr{M}$, $Q_3 = \piencodf{\expr{M}}_u$. By the IH, there exist $ Q$ and $\expr{N}'$ such that $Q \red^i Q'$, $\expr{M} \red_{\pequiv}^j \expr{N}'$ and $\piencodf{\expr{N}'}_u = Q'$. Finally, $\piencodf{\expr{N}}_u \red^2 Q_3 \red^l Q \red^i Q'$ and $\expr{N} \rightarrow \expr{M}  \red_{\pequiv}^j \expr{N}' $.

                    \end{enumerate}
 \item For $m \geq 1$ and $ n \geq 0$.
                    
            \begin{enumerate}
            \item When $n = 0$.
            
               Then $   (\nu \banged{x}) ( R \para   !\banged{x}. (x_i).\piencodf{ U }_{x_i} )  = Q$ and $\piencodf{M}_u \red^m R$ where $m \geq 1$. By the IH there exist $R'$  and $\expr{M}' $ such that $R \red^i R'$, $M \red_{\pequiv}^j \expr{M}'$ and $\piencodf{\expr{M}'}_u = R'$. 
              Hence,
              \[
               \begin{aligned}
                   \piencodf{\expr{N}}_u & =  (\nu \banged{x}) ( \piencodf{ M }_u \para   !\banged{x}. (x_i).\piencodf{ U }_{x_i} ) \red^m   (\nu \banged{x}) ( R \para   !\banged{x}. (x_i).\piencodf{ U }_{x_i} )  = Q.
                \end{aligned}
                 \]
                In addition,
                \(
                   Q  \red^i   (\nu \banged{x}) ( R' \para   !\banged{x}. (x_i).\piencodf{ U }_{x_i} ) = Q\), and the term can reduce as follows: $ \expr{N} = M \unexsub{U / \unvar{x}} \red_{\pequiv}^j \sum_{M_i' \in \expr{M}'} M_i' \unexsub{U / \unvar{x}} = \expr{N}'$ and  $\piencodf{\expr{N}'}_u = Q'$.

            \item When $n \geq 1$.
            
            Then $R$ has an occurrence of an unguarded $\outsev{\banged{x}}{{x_i}}$, and the case follows by IH.
\end{enumerate}
            \end{enumerate}
               \end{enumerate}
\end{proof}

\subsection{Success Sensitiveness of $\piencodf{\cdot}_u$}


We say that a process occurs \emph{guarded} when it occurs behind a prefix (input, output, closing of channels, servers, server request, choice an selection and non-deterministic session behaviour). Formally,

\begin{definition}
 A process $P\in \spi$ is {\em guarded} if  $\alpha.P$ , $ \alpha;P$ or $ \choice{x}{i}{I}{i}{P} $, where $ \alpha = \overline{x}(y), x(y), x.\overline{\close}, x.\close,$ $ x.\overline{\some}, x.\some_{(w_1, \cdots, w_n)} , \case{x}{i} , !x(y) ,  \outsev{x}{y}$. 
We say it occurs \emph{unguarded} if it is not guarded for any prefix.
\end{definition}

\begin{proposition}[Preservation of Success]
\label{Prop:checkprespiunres}
For all $M\in \lamrsharfailunres$, the following hold:
\begin{enumerate}
    \item $ \headf{M} = \checkmark \implies \piencodf{M} = P \para \checkmark \oplus Q $
    \item $ \piencodf{M}_u =  P \para \checkmark \oplus Q \implies \headf{M} = \checkmark$
\end{enumerate}

\end{proposition}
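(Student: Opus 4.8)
The proof proceeds by structural induction on $M$, exploiting the homomorphic treatment of sums and the fact that the encoding $\piencodf{\cdot}_u$ places the translation of the head position at the ``outside'' of the resulting process (up to the non-deterministic sums coming from bag permutations and the leading prefixes of the encoded context).

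For part (1), the plan is to follow the same case analysis as in Proposition~\ref{Prop:checkpres}: the only terms $M$ with $\headf{M}=\checkmark$ are (i)~$M=\checkmark$ itself, (ii)~$M = N\,B$ with $\headf{N}=\checkmark$, (iii)~$M = N\linexsub{N'/x}$, $N\unexsub{U/\unvar x}$, $N[\widetilde y\leftarrow y]$ or $(N[\widetilde y\leftarrow y])\esubst{B}{x}$ with $\headf{N}=\checkmark$ (and, in the explicit-substitution case, with $\#(x,N)=\size{B}$, so that we are in the first branch of the corresponding encoding clause). In the base case $M=\checkmark$, by the convention $\piencodf{\checkmark}_u=\checkmark$ we are immediately done with $P=\zero$, $Q=\zero$. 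In each inductive case the encoding clause for $M$ is built by composing the encoding of the subterm $N$ in head position (via $\para$, $\nu$, a $\bigoplus$ over permutations, and some leading prefixes that do \emph{not} guard the image of $N$'s head), so by the IH $\piencodf{N}_v = P' \para \checkmark \oplus Q'$ for the relevant name $v$, and the displayed structure is preserved by parallel composition, restriction, and $\bigoplus$; one reorganizes using structural congruence (commutativity/associativity of $\para$ and $\oplus$, scope extrusion, and the distributive law $(\nu x)(P\para(Q\oplus R))\equiv(\nu x)(P\para Q)\oplus(\nu x)(P\para R)$) to expose $\checkmark$ unguarded in one summand. The only subtlety is checking that none of the prefixes introduced by the context clauses ($u.\overline{\some}$, $v.\some_{\dots}$, $\outact{v}{x}$, $x.\overline{\some}$, $x(\linvar x)$, etc.) sits \emph{between} the top of the process and the occurrence of $\checkmark$; this is exactly the observation that these clauses put $\piencodf{N}_v$ in a parallel component, not under a prefix.

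For part (2), the converse, I would again induct on $M$ and argue contrapositively on the shape of $M$: if $\headf{M}\neq\checkmark$, inspect each production of the grammar of $\lamrsharfailunres$-terms and observe that $\piencodf{M}_u$ then has every occurrence of $\checkmark$ strictly under a prefix (or there is no $\checkmark$ at all). Concretely: for $M=x$, $x[j]$, $\lambda x.(N[\widetilde x\leftarrow x])$, $\fail^{\widetilde x}$, sharing without the head reaching $\checkmark$, or an explicit substitution with a size mismatch, the encoding's outermost operator is a prefix (e.g. $x.\overline{\some};\dots$, $\outsev{\unvar x}{x_i}.\dots$, $u.\overline{\some};\dots$, $u.\overline{\none}\para\dots$, or a $\bigoplus$ of prefixed processes) and $\checkmark$ cannot appear in the subterm $N$ in head position because $\headf{N}\neq\checkmark$ by the IH; for the remaining compositional cases ($N\,B$, $N\linexsub{N'/x}$, $N\unexsub{U/\unvar x}$, $(N[\widetilde x\leftarrow x])\esubst{B}{x}$) the image of $N$ is the only parallel component that could host an unguarded $\checkmark$, and by IH it does not. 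Since $\checkmark$ never occurs in bags, arguments, or $\fail$-terms, and the sum clause $\piencodf{\expr M+\expr N}_u=\piencodf{\expr M}_u\oplus\piencodf{\expr N}_u$ only adds $\oplus$ (which, being an unguarding operator, does not hide $\checkmark$ but also does not create one), this exhausts the cases.

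The main obstacle I anticipate is purely bookkeeping rather than conceptual: making precise, uniformly across all the context clauses in Figure~\ref{fig:encfailunres}, the claim ``the head's image is never guarded by the surrounding prefixes,'' and getting the structural-congruence rearrangement right in the presence of the $\bigoplus_{C_i\in\perm{C}}$ and the restrictions $(\nu\cdot)$. For the latter I would rely on the distributivity law for $\oplus$ under restriction/parallel from Definition~\ref{def:rsPrecongruencefailure}'s $\spi$ counterpart (the structural congruence in Appendix~\ref{appC}), so that a $\checkmark$ appearing in one permutation-summand can be pulled to the top level as a genuine $\oplus$-summand of the whole process; this is the step where one must be careful that the encoded head sits in a parallel component of the relevant summand and not behind $\nu$-bound communications that would need to fire first. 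Once this normal-form lemma about the encoding's outer shape is stated, both directions are immediate inductions.
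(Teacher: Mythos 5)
Your overall strategy—structural induction on $M$, driven by the observation that the encoding places the image of the head subterm in an unguarded parallel position—is the same as the paper's, and your treatment of the cases $\checkmark$, $N\,B$, $N\linexsub{N'/x}$ and $N\unexsub{U/\unvar{x}}$ matches the paper's proof essentially verbatim. The paper's case analysis stops there: it considers only those four shapes in both directions.

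The genuine gap is in your part~(1) case analysis, where you add $N[\widetilde y\leftarrow y]$ and $(N[\widetilde y\leftarrow y])\esubst{B}{x}$ and assert that the surrounding clauses introduce ``some leading prefixes that do \emph{not} guard the image of $N$'s head.'' That assertion is false for precisely these two clauses of Figure~\ref{fig:encfailunres}. In $\piencodf{M[x_1,\dots,x_n\leftarrow x]}_u$ the continuation $\piencodf{M[x_2,\dots,x_n\leftarrow x]}_u$ sits behind the prefixes $\linvar{x}.\overline{\some}$, $\outact{\linvar{x}}{y_1}$, $\linvar{x}.\some_{\dots}$ and the input $\linvar{x}(x_1)$; in $\piencodf{M[\,\leftarrow x]}_u$ the body $\piencodf{M}_u$ is behind $y_i.\some_{\dots};y_i.\close$; and in $\piencodf{(M[\widetilde x\leftarrow x])\esubst{C\bagsep U}{x}}_u$ the whole of $\piencodf{M[\widetilde x\leftarrow x]}_u$ is behind $x.\overline{\some};x(\linvar{x}).x(\unvar{x}).x.\close$. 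So for, say, $M=\checkmark[\,\leftarrow x]$ one has $\headf{M}=\checkmark$ but no unguarded $\checkmark$ in $\piencodf{M}_u$: no amount of structural congruence (which cannot move a process out from under a prefix) will expose it, only reduction can. Your ``normal-form lemma about the encoding's outer shape'' therefore does not hold uniformly across the clauses, and the step you dismiss as bookkeeping is where the argument breaks. The fix is to restrict the case analysis as the paper does: the case $(N[\widetilde y\leftarrow y])\esubst{B}{x}$ is vacuous for part~(1) because $\headf{(N[\widetilde y\leftarrow y])\esubst{B}{x}}$ is by Definition the term itself, hence never $\checkmark$, and the remaining shapes must either be excluded by an auxiliary invariant on where the proposition is applied or handled by weakening the statement to hold only up to reduction. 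Part~(2) of your proposal is unaffected, since guarding the body only makes the implication easier.
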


\begin{proof}

Proof of both cases by induction on the structure of $M$. 

\begin{enumerate} 
\item We only need to consider terms of the following form:

    \begin{enumerate}

        \item  $ M = \checkmark $:
        
        This case is immediate.
        
        \item $M = N\ (C \bagsep U)$:
        
        Then, $\headf{N \ (C \bagsep U)} = \headf{N}$. If $\headf{N} = \checkmark$, then  $$ \piencodf{M (C \bagsep U)}_u = \bigoplus_{C_i \in \perm{C}} (\nu v)(\piencodf{M}_v \para v.\some_{u , \llfv{C}} ; \outact{v}{x} . ([v \leftrightarrow u] \para \piencodf{C_i \bagsep U}_x ) ).$$
        By the IH,  $\checkmark$ is unguarded in $\piencodf{N}_u$.

        \item $M = M' \linexsub {N /x}$
        
        Then we have that $\headf{M' \linexsub{N /  {x}}}  = \headf{M'} = \checkmark$. Then $\piencodf{ M' \linexsub{N /  {x}}  }_u   =   (\nu  {x}) ( \piencodf{ M' }_u \para    {x}.\some_{\llfv{N}};\piencodf{ N }_{ {x}}  )$ and by the IH $\checkmark$ is unguarded in $\piencodf{M'}_u$.
        
        \item $M = M' \unexsub{U / \unvar{x}}$
        
        Then we have that $\headf{M' \unexsub{U / \unvar{x}}}  = \headf{M'} = \checkmark$. Then $\piencodf{ M' \unexsub{U / \unvar{x}}  }_u   =   (\nu \banged{x}) ( \piencodf{ M' }_u \para   !\banged{x}. (x_i).\piencodf{ U }_{x_i} ) $ and by the IH  $\checkmark$ is unguarded in $\piencodf{M'}_u$.

    \end{enumerate}

   \item We only need to consider terms of the following form:    
         
    \begin{enumerate}
    
        \item {\bf Case $M = \checkmark$:}
        
        Then, 
        $\piencodf{\checkmark}_u = \checkmark$ 
        which is an unguarded occurrence of $\checkmark$ and that $\headf{\checkmark} = \checkmark$.
        
        \item {\bf Case $M = N (C \bagsep U)$:}
        
        Then, $\piencodf{N (C \bagsep U)}_u = \bigoplus_{C_i \in \perm{C}} (\nu v)(\piencodf{N}_v \para v.\some_{u , \llfv{C}} ; \outact{v}{x} . ([v \leftrightarrow u] \para \piencodf{C_i \bagsep U}_x ) )$. The only occurrence of an unguarded $\checkmark$ can occur is within $\piencodf{N}_v$. By the IH, $\headf{N} = \checkmark$ and finally $\headf{N \ B} = \headf{N}$.

        \item {\bf Case $M = M' \linexsub{N /  {x}}$:} 
        
        Then,  $\piencodf{ M' \linexsub{N /  {x}}  }_u   =   (\nu  {x}) ( \piencodf{ M' }_u \para    {x}.\some_{\llfv{N}};\piencodf{ N }_{ {x}}  ) $, an unguarded occurrence of $\checkmark$ can only occur within $\piencodf{ M' }_u $. By the IH,  $\headf{M'} = \checkmark$ and hence $\headf{ M' \linexsub{N /  {x}}}  = \headf{M'}$.
        
        \item {\bf Case $M = M' \unexsub{U / \unvar{x}}$:}
        This case is analogous to the previous.
    \end{enumerate}  
\end{enumerate}
\end{proof}

\begin{theorem}[Success Sensitivity]
\label{proof:successsenscetwounres}
The encoding $\piencodf{-}_u:\lamrsharfailunres \rightarrow \spi$ is success sensitive on well formed linearly closed expression if
for any expression we have $\succp{\expr{M}}{\checkmark}$ iff $\succp{\piencodf{\expr{M}}_u}{\checkmark}$.
\end{theorem}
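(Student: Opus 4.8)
The plan is to derive success sensitivity from the operational correspondence results---operational completeness (\thmref{l:app_completenesstwounres}) and operational soundness (\thmref{l:app_soundnesstwounres})---together with Proposition~\ref{Prop:checkprespiunres}, which transfers the head-$\checkmark$ property of $\lamrsharfailunres$ terms to unguarded occurrences of $\checkmark$ in \spi processes and back. Both implications are proved by induction on the length of the relevant reduction sequence, iterating the single-step correspondence lemmas.

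For the forward direction, assume $\succp{\expr{M}}{\checkmark}$, so by \defref{def:Suc3} there is a reduction $\expr{M} \red^* M_1 + \cdots + M_k$ and an index $j$ with $M_j \pcong M_j'$ and $\headf{M_j'} = \checkmark$. I would first record an auxiliary fact: each identity of the precongruence $\pcong$ (Def.~\ref{def:rsPrecongruencefailure}) preserves the head of a term, so $M_j \pcong M_j'$ and $\headf{M_j'} = \checkmark$ give $\headf{M_j} = \checkmark$; this is a routine case analysis against the clauses defining $\headf{\cdot}$ for $\lamrsharfailunres$. Iterating \thmref{l:app_completenesstwounres} along $\expr{M} \red^* M_1 + \cdots + M_k$ yields a process $Q$ with $\piencodf{\expr{M}}_u \red^* Q \equiv \piencodf{M_1 + \cdots + M_k}_u$; since the encoding is homomorphic over sums (\figref{fig:encfailunres}), $\piencodf{M_1 + \cdots + M_k}_u = \bigoplus_{i} \piencodf{M_i}_u$. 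By Proposition~\ref{Prop:checkprespiunres}(1) applied to $M_j$, the process $\piencodf{M_j}_u$ has the shape $P \para \checkmark \oplus Q'$, i.e. it contains an unguarded $\checkmark$; since neither $\oplus$ nor $\equiv$ guards a subprocess, $Q$ contains an unguarded $\checkmark$, and hence $\succp{\piencodf{\expr{M}}_u}{\checkmark}$.

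For the backward direction, assume $\succp{\piencodf{\expr{M}}_u}{\checkmark}$: there is $P'$ with $\piencodf{\expr{M}}_u \red^* P'$ and $P'$ containing an unguarded $\checkmark$. Iterating \thmref{l:app_soundnesstwounres} along this reduction sequence gives $Q', \expr{N}'$ with $P' \red^* Q'$, $\expr{M} \red^* \expr{N}'$ and $\piencodf{\expr{N}'}_u \equiv Q'$. I then use the observation that an unguarded occurrence of $\checkmark$ is never destroyed by \spi reduction---the constant $\checkmark$ is not a communication prefix, so it never appears in a redex, and contextual closure preserves unguarded occurrences---nor by $\equiv$; hence $Q'$, and therefore $\piencodf{\expr{N}'}_u$, contains an unguarded $\checkmark$. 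Writing $\expr{N}' = N_1 + \cdots + N_k$ and using homomorphism over sums once more, the unguarded $\checkmark$ must occur in some $\piencodf{N_j}_u$, so Proposition~\ref{Prop:checkprespiunres}(2) gives $\headf{N_j} = \checkmark$. Taking $M_j' := N_j$ (with $\pcong$ reflexive), the witnessing reduction $\expr{M} \red^* \expr{N}'$ shows $\succp{\expr{M}}{\checkmark}$.

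The main obstacle I anticipate lies not in the high-level structure but in two bookkeeping points that must be discharged with care: (i) the lemma that $\pcong$ on the $\lamrsharfailunres$ side and $\equiv$ on the \spi side preserve head, respectively unguarded, occurrences of $\checkmark$, which requires inspecting every identity in Def.~\ref{def:rsPrecongruencefailure} and every structural-congruence rule; and (ii) the robustness of the two success predicates under the way $\piencodf{\cdot}_u$ nests $\checkmark$ inside parallel compositions and non-deterministic choices, so that being ``unguarded'' is genuinely invariant along the simulations. Once these are in place, the inductions on reduction length close both directions uniformly.
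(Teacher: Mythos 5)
Your proof takes essentially the same route as the paper: both directions iterate the operational completeness and soundness theorems along the witnessing reduction sequence and then invoke Proposition~\ref{Prop:checkprespiunres} together with the homomorphic treatment of sums to move between head occurrences of $\checkmark$ and unguarded occurrences of $\checkmark$. The two bookkeeping points you flag (preservation of head-$\checkmark$ under $\pcong$, and persistence of unguarded $\checkmark$ under \spi reduction and $\equiv$) are handled only implicitly in the paper's proof, so making them explicit is a welcome refinement rather than a divergence.
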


\begin{proof}
We proceed with the proof in two parts.

\begin{enumerate}
    
    \item Suppose that  $\expr{M} \Downarrow_{\checkmark} $. We will prove that $\piencodf{\expr{M}} \Downarrow_{\checkmark}$.

    By \defref{def:app_Suc3unres}, there exists  $ \expr{M}' = M_1 , \cdots , M_k$ such that $\expr{M} \red^* \expr{M}'$ and
    $\headf{M_j'} = \checkmark$, for some  $j \in \{1, \ldots, k\}$ and term $M_j'$ such that $M_j\pequiv  M_j'$.
    By completeness, there exists $ Q$ such that $\piencodf{\expr{M}}_u  \red^* Q = \piencodf{\expr{M'}}_u$.
    
    We wish to show that there exists $Q'$such that $Q \red^* Q'$ and $Q'$ has an unguarded occurrence of $\checkmark$.
    
    From $Q = \piencodf{\expr{M}'}_u$ and due to compositionality and the homomorphic preservation of non-determinism we have that
    \(
        \begin{aligned}
            Q &= \piencodf{M_1}_u \oplus \cdots \oplus \piencodf{M_k}_u\\
        \end{aligned}
    \).
    
    By Proposition \ref{Prop:checkprespiunres} (1) we have that $\headf{M_j} = \checkmark \implies \piencodf{M_j}_u =  P \para \checkmark \oplus Q'$. Hence $Q$ reduces to a process that has an unguarded occurence of $\checkmark$.

    \item Suppose that $\piencodf{\expr{M}}_u \Downarrow_{\checkmark}$. We will prove that $ \expr{M} \Downarrow_{\checkmark}$.

    By operational soundness (Lemma~\ref{l:app_soundnesstwounres}) we have that if $ \piencodf{\expr{N}}_u \red^* Q$
    then there exist $Q'$  and $\expr{N}' $ such that 
    $Q \red^* Q'$, $\expr{N}  \red_{\pequiv}^* \expr{N}'$ 
    and 
    $\piencodf{\expr{N}'}_u = Q'$.
    
   Since $\piencodf{\expr{M}}_u \red^* P_1 \oplus \ldots \oplus P_k$, and $P_j'= P_j'' \para \checkmark$, for some $j$ and $P_j'$, such that $P_j \equiv P_j'$. 
   
   Notice that if $\piencodf{\expr{M}}_u$ is itself a term with unguarded $\checkmark$, say $\piencodf{\expr{M}}_u=P \para \checkmark$, then $\expr{M}$ is itself headed with $\checkmark$, from Proposition \ref{Prop:checkprespiunres} (2).
   
   In the case $\piencodf{\expr{M}}_u= P_1 \oplus \ldots \oplus P_k$, $k\geq 2$, and $\checkmark$ occurs unguarded in an $P_j$, The encoding acts homomorphically over sums and the reasoning is similar. We have that $P_j = P_j' \para \checkmark$ we apply Proposition \ref{Prop:checkprespiunres} (2).

\end{enumerate}
\end{proof}

\fi

\end{document}